\newtheorem{theorem}{Theorem}[section]
\newtheorem{corollary}[theorem]{Corollary}
\newtheorem{lemma}[theorem]{Lemma}
\newtheorem{fact}[theorem]{Fact}
\theoremstyle{definition}
\newtheorem{definition}[theorem]{Definition}
\newtheorem{remark}[theorem]{Remark}
\newenvironment{fminipage}%
  {\begin{Sbox}\begin{minipage}}%
  {\end{minipage}\end{Sbox}\fbox{\TheSbox}}
\let\originalleft\left
\let\originalright\right
\renewcommand{\left}{\mathopen{}\mathclose\bgroup\originalleft}
  \renewcommand{\right}{\aftergroup\egroup\originalright}
\newcommand{\E}{\mbox{{\bf E}}}
\def\defeq{\stackrel{\mathrm{def}}{=}}
\def\union{\cup}
\def\abs#1{\left|#1  \right|}
\def\norm#1{\left\| #1 \right\|}
\def\calC{\mathcal{C}}
\def\calE{\mathcal{E}}
\def\calG{\mathcal{G}}
\def\calP{\mathcal{P}}
\def\calH{\mathcal H}
\newcommand\ppsi{\boldsymbol{\mathit{\psi}}}
\newcommand\ddelta{\boldsymbol{\delta}}
\newcommand\ddeltatil{\boldsymbol{\widetilde{\mathit{\delta}}}}
\newcommand\ddeltahat{\boldsymbol{\widehat{\mathit{\delta}}}}
\newcommand\bb{\boldsymbol{\mathit{b}}}
\newcommand\dd{\boldsymbol{\mathit{d}}}
\newcommand\ff{\boldsymbol{\mathit{f}}}
\renewcommand\gg{\boldsymbol{\mathit{g}}}
\newcommand\rr{\boldsymbol{\mathit{r}}}
\renewcommand\ss{\boldsymbol{\mathit{s}}}
\newcommand\yy{\boldsymbol{\mathit{y}}}
\newcommand\zz{\boldsymbol{\mathit{z}}}
\newcommand\xx{\boldsymbol{\mathit{x}}}
\newcommand\ddbar{\overline{\boldsymbol{\mathit{d}}}}
\renewcommand\AA{\boldsymbol{\mathit{A}}}
\newcommand\BB{\boldsymbol{\mathit{B}}}
\newcommand\DD{\boldsymbol{\mathit{D}}}
\newcommand\II{\boldsymbol{\mathit{I}}}
\newcommand\MM{\boldsymbol{\mathit{M}}}
\newcommand\LL{\boldsymbol{\mathit{L}}}
\newcommand\PP{\boldsymbol{\mathit{P}}}
\newcommand\QQ{\boldsymbol{\mathit{Q}}}
\newcommand\XX{\boldsymbol{\mathit{X}}}
\newcommand\fftil{\boldsymbol{\widetilde{\mathit{f}}}}
\newcommand\ffhat{\boldsymbol{\widehat{\mathit{f}}}}
\newcommand\gghat{\boldsymbol{\widehat{\mathit{g}}}}
\newcommand\ggtil{\boldsymbol{\widetilde{\mathit{g}}}}
\newcommand\ppsitil{\boldsymbol{\widetilde{\mathit{\psi}}}}
\newcommand\nhat{{\widehat{{n}}}}\newcommand\mhat{{\widehat{{m}}}}
\newcommand\uhat{{\widehat{{u}}}}
\newcommand\vhat{{\widehat{{v}}}}
\newcommand\Gbar{{\overline{{G}}}}
\newcommand\Ghat{{\widehat{{G}}}}
\newcommand\Hbar{{\overline{{H}}}}
\newcommand\Hhat{{\widehat{{H}}}}
\newcommand\Vhat{{\widehat{{V}}}}
\newcommand\ehat{{\widehat{{e}}}}
\newcommand\Pcal{{\mathcal{{P}}}}
\newcommand\Otil{\widetilde{O}}
\newenvironment{tight_enumerate}{
\begin{enumerate}
 \setlength{\itemsep}{2pt}
 \setlength{\parskip}{1pt}
}{\end{enumerate}}
\newenvironment{tight_itemize}{
\begin{itemize}
 \setlength{\itemsep}{2pt}
 \setlength{\parskip}{1pt}
}{\end{itemize}}
\newcommand\rea{\mathbb R}
\newcommand\nfrac[2]{\nicefrac{#1}{#2}}
\newcommand{\vone}{\boldsymbol{\mathbf{1}}}
\newcommand{\vzero}{\boldsymbol{\mathbf{0}}}
\newcommand\DDelta{\boldsymbol{\mathit{\Delta}}}
\newcommand{\str}{\mathsf{Str}}
\newcommand{\etal}{\emph{et al.}}
\newcommand\map[2]{\mathcal{M}_{#1 \rightarrow #2}}
\newcommand\obj{\calE\xspace}
\newcommand{\circapprox}{\preceq^{\text{cycle}}}
\newcommand{\fftilde}{\boldsymbol{\widetilde{f}}}
\newcommand{\Eliminate}{\textsc{Eliminate}}
\newcommand\poly{{\textrm{poly}}}  %usage \poly(n)
\newcommand{\RemoveLoops}{\textsc{RemoveLoops}}
\newcommand{\SolveLoops}{\textsc{SolveLoops}}
\newcommand\kap[2]{\kappa_{#1 \to #2}}
\newcommand\nbucket{\log\frac{\norm{\rr^{\calG}}_{\infty}}{\delta}}
\newcommand\Ehat{{\widehat{{E}}}}
\newcommand\Etil{{\widetilde{{E}}}}
\newcommand\ebar{{\overline{{e}}}}
\newcommand\Ebar{{\overline{{E}}}}
\begin{document}

\title{Flows in Almost Linear Time via Adaptive Preconditioning}

\author{
  Rasmus Kyng\footnote{
Emails:
\texttt{\{rjkyng,richard.peng\}@gmail.com},
\texttt{sachdeva@cs.toronto.edu},
\texttt{di.wang@cc.gatech.edu}
}
\footnote{Supported by ONR grant N00014-18-1-2562.}
\\
  Harvard
  \and
  Richard Peng\footnotemark[1]
  \footnote{Supported in part by the National Science Foundation under Grant No. 1718533.}\\
  Georgia Tech / \\MSR Redmond
  \and
  Sushant Sachdeva\footnotemark[1] \footnote{Supported by the Natural Sciences and Engineering Research Council of Canada (NSERC), and a Connaught New Researcher	award.}\\
  UToronto
  \and
  Di Wang\footnotemark[1] \footnotemark[3]\\
  Georgia Tech
}

\maketitle
\thispagestyle{empty}

\begin{abstract}
  We present algorithms for solving a large class of flow and regression
  problems on unit weighted graphs to $(1 + 1 / poly(n))$ accuracy in
  almost-linear time.
  These problems include $\ell_p$-norm minimizing flow for $p$ large
  ($p \in [\omega(1), o(\log^{2/3} n) ]$), and their duals,
  $\ell_p$-norm semi-supervised learning for $p$ close to $1$. 

As $p$ tends to infinity, $\ell_p$-norm flow and its dual tend to max-flow
and min-cut respectively. Using this connection and our algorithms,
we give an alternate approach for approximating undirected max-flow,
and the first almost-linear time approximations of discretizations of
total variation minimization objectives.

This algorithm demonstrates that many tools previous viewed as limited
to linear systems are in fact applicable to a much wider range of convex
objectives.
It is based on the the routing-based solver for Laplacian
linear systems by Spielman and Teng (STOC '04, SIMAX  '14),
but require several new tools: adaptive non-linear preconditioning,
tree-routing based ultra-sparsification for mixed
$\ell_2$ and $\ell_p$ norm objectives, and decomposing graphs
into uniform expanders.

\end{abstract}

\newpage
\thispagestyle{empty}
\tableofcontents

\newpage
\setcounter{page}{1}

%!TEX root = main.tex

\section{Introduction}

Graphs are among the most ubiquitous representations of data,
and efficiently computing on graphs is a task central to operations
research, machine learning, and network science.
Among graph algorithms, network flows have been extensively studied
~\cite{EdmondsK72,Karzanov73,EvenT75,GoldbergT88,GoldbergR98,Schrijver02,
Hochbaum08,ChristianoKMST10,HochbaumO13,Orlin13,GoldbergT14},
and have wide ranges of applications~\cite{KolmogorovBR07,LiSBG13,PengZZ13}.
Over the past decade, the `Laplacian paradigm' of designing graph algorithms
spurred a revolution in the best run-time upper bounds for many
fundamental graph optimization problems.
Many of these new graph algorithms incorporated numerical primitives:
even for the $s$-$t$ shortest path problem in graphs with negative edge
weights, the current best running times~\cite{CohenMTV17}
are from invoking linear system solvers.

This incorporation of numerical routines~\cite{DaitchS08,ChristianoKMST10}
in turn led to a dependence on $\epsilon$, the approximation accuracy.
While maximum flow and transshipment problems on undirected graphs can now be
approximated in nearly-linear time
\cite{KelnerLOS14,Sherman13,BeckerKKL17,Peng16,
  Sherman17a,Sherman17b}
(and the distributed setting has also been studied
\cite{GhaffariKKLP15, BeckerKKL17}),
% $n^{o(1)}$ depth / distributed
% rounds)
% \rasmus{what is the statement here? which papers? esp. unsure for claims
% of parallel and distributed? for distributed in CONGEST, the upper and
% lower bounds are $\Theta(\sqrt{n} + \text{diam})$ rounds, I think. I'm
% not sure if there are $n^{o(1)}$ depth algos?!},
these algorithms are \emph{low accuracy} in that their running times have
factors of $1 / \epsilon$ or higher.
This is in contrast to \emph{high accuracy} solvers for linear systems and convex programs,
which with $polylog(n)$ overhead give $1 / poly(n)$-approximate solutions.
Prior to our result, such \emph{high accuracy} runtime bounds for problems beyond linear systems
all utilize second order methods~\cite{DaitchS08,Madry13,LeeS14,CohenMTV17,AllenZhuLOW17,BubeckCLL18}
from convex optimization.

The main contribution of this paper is giving almost-linear time, \emph{high accuracy}
solutions to a significantly wider range of graph optimization problems that can
be viewed as interpolations between maximum flow, shortest paths, and graph-structured
linear systems.
Our unified formulation of these problems is based on the following unified formulation
of flow/path problems as norm minimization over a demand vector
$\bb \in \rea_{\ge 0}^{V}$:
\begin{equation}
\min_{\text{flow $\ff$ with residue $\bb$}} \norm{\ff}_{\odot}.
\label{eq:Flow}
\end{equation}
In particular, when $\norm{\cdot}_{\odot}$ is the $\ell_{\infty}$-norm,
this formulation is equivalent finding the flow of minimum congestion,
which is in turn equivalent to computing maximum flows and bipartite matchings
in unit capacitated graphs~\cite{Madry11:thesis}.
Our main result is that for any $p \geq 2,$ given weights
$\rr \in \rea_{\ge 0}^E,$ a ``gradient'' $\gg \in \rea^E$, and a
demand vector $\bb \in \rea^V$ (with $\bb^{\top} \vone = 0$), we can
solve
\begin{align}
  \label{eq:intro:problem}
  \min_{\text{flow $\ff$ with residue $\bb$}} \sum_e \gg_e \ff_e + \rr_e \ff_e^2 +
  \abs{\ff_e}^p,
\end{align}
to $\nfrac{1}{\poly(n)}$ additive error in time
$2^{O(p^{3/2})} m^{1+O(\frac{1}{\sqrt{p}})}$.
We will formally state this result as Theorem~\ref{cor:smoothedpflows}
at the start of Section~\ref{subsec:apps},
and discuss several of its applications in flows, semi-supervised learning,
and total variation minimization.

We believe that our algorithm represents a new approach to designing
\emph{high accuracy} solvers for graph-structured optimization problems.
A brief survey of relevant works is in Section~\ref{subsec:Related}:
previous \emph{high accuracy} algorithms treat linear systems
as the separation between graph theoretic and numerical components:
the outer loop adjusts the numerics,
while the inner loop quickly solves the resulting linear systems using the
underlying graph structures.
Our result, in contrast, directly invoke analogs of linear system solving
primitives to the non-linear (but still convex) objective functions,
and no longer has this clear separation between graph theoretic and
numerical components.

We will overview key components of our approach, as well as how they are combined,
in Section~\ref{subsec:Overview}.
Discussions of possible avenues for addressing shortcomings of our result,
namely the exponential dependence on $p$, the restriction to unweighted graphs,
and gap between $\ell_{\sqrt{\log{n}}}$-norm flow and $\ell_{\infty}$ are
in Section~\ref{subsec:open}.

\subsection{Main Results and Applications}
\label{subsec:apps}
% \todo{avoid main problem appearing twice w number?}

The formal formulation of our problem relies on the following objects
defined on a graph $G = (V, E)$ with $n$ vertices and $m$ edges:
\begin{tight_enumerate}
\item edge-vertex incidence matrix $\BB$,
\item a vector $\bb$ indicating the required residues on vertices
(satisfying $\vone^T\bb=0$), and
\item a norm $p$ as finding a flow $\ff$
with demands $\bb$ that minimize a specified norm $\|\cdot\|$.
\end{tight_enumerate}
The normed flow problem that we solve can then be formulated as:
\begin{align}
  \min_{\BB^{\top} \ff = \bb} \sum_e \gg_e \ff_e + \rr_e \ff_e^2 +
  \abs{\ff_e}^p,
\tag{\ref{eq:intro:problem}}
\end{align}
 Using
$\norm{\ff}_{2,\rr} = \sqrt{ \sum_e \rr_e \ff_e^2 }$ to denote the
$\rr$-weighted 2-norm, the objective can also be viewed as
$\gg^{\top} \ff + \norm{\ff}_{2,\rr}^2 + \norm{\ff}_{p}^p$.
Let $\text{val}(\ff)$ denote value of a flow $\ff$
according to the above objective,
and let 
$\text{OPT}$ denote value of the optimal solution to Problem~\eqref{eq:intro:problem}.
Our main technical result is the following statement which we prove as corollary
of our main technical theorem in
Section~\ref{subsec:recursive}.
\begin{restatable}[Smoothed $\ell_p$-norm flows]{theorem}{smoothedpflows}
  \label{cor:smoothedpflows}
  For any $p \geq 2,$ given weights $\rr \in \rea_{\ge 0}^E,$ a
  ``gradient'' $\gg \in \rea^E$, a demand vector $\bb \in \rea^V$
  (with $\bb^{\top} \vone = 0$), and an initial solution $\ff^{(0)}$
  such that all parameters are bounded by $2^{\poly(\log n)},$
  % $\log \nfrac{1}{\delta}, \log \norm{\rr^{\calG}}, \log
  % \norm{\ff^{(0)}} \le \Otil(1),$
  we can compute a flow $\fftilde$ satisfying demands $\bb,$
  \emph{i.e.}, $\BB^{G\top} \fftilde = \bb,$ such that
  % \[\norm{\fftilde}^{p}_p \le \left( 1+ \frac{1}{\poly(m)} \right)
  %   \min_{\ff : \BB^{G\top} \ff = \bb}
  %   \norm{\ff}^{p}_{p}.\] in
  \[
  \text{val}(\fftil) - \text{OPT} \leq \frac{1}{\poly(m)} \left(
    \text{val}(\ff^{(0)}) - \text{OPT}\right) + \frac{1}{\poly(m)}
\]
in  $2^{O(p^{\nfrac{3}{2}})} m^{1+O(\frac{1}{\sqrt{p}})}$ time, where
  $m$ denotes the number of edges in $G.$
\end{restatable}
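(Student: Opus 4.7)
The plan is to derive Theorem~\ref{cor:smoothedpflows} from the main technical theorem of Section~\ref{subsec:recursive} via an outer loop of \emph{iterative refinement}. I will assume that the main technical theorem delivers, in time $2^{O(p^{\nfrac{3}{2}})} m^{1+O(\nfrac{1}{\sqrt{p}})}$, a constant-factor multiplicative approximation to a single instance of the smoothed $\ell_p$-flow objective~\eqref{eq:intro:problem} (possibly restricted to circulations). The route from a constant-factor solver to the claimed $\nfrac{1}{\poly(m)}$ accuracy is then to chain a subpolynomial number of calls to this approximate solver on carefully chosen residual subproblems, all of which are themselves instances of~\eqref{eq:intro:problem}.

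Given a feasible iterate $\ff^{(k)}$ with $\BB^\top \ff^{(k)} = \bb$, I parameterize improvements as circulations $\DDelta$ (so $\BB^\top \DDelta = \vzero$), which preserve feasibility. A Bregman-divergence sandwich for $|\cdot|^p$ in the style of recent $\ell_p$-regression solvers shows that, up to a multiplicative factor $2^{O(p)}$, the progress $\text{val}(\ff^{(k)}) - \text{val}(\ff^{(k)} + \DDelta)$ is captured by a \emph{residual} smoothed $\ell_p$ objective in $\DDelta$ with zero demand, updated linear term $\widetilde{\gg}_e = \gg_e + 2\rr_e \ff^{(k)}_e + p|\ff^{(k)}_e|^{p-1}\sgn{\ff^{(k)}_e}$, and updated quadratic weights $\widetilde{\rr}_e = \rr_e + c_p |\ff^{(k)}_e|^{p-2}$ for a suitable $c_p$. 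Feeding this residual subproblem to the main technical theorem yields an update $\DDelta$ whose residual value is within a constant factor of optimal, and the sandwich converts this into the per-iteration geometric contraction
\[
\text{val}(\ff^{(k+1)}) - \opt \;\leq\; \left(1 - 2^{-O(p)}\right)\bigl(\text{val}(\ff^{(k)}) - \opt\bigr),
\]
with $\ff^{(k+1)} = \ff^{(k)} + \DDelta$. Iterating $T = 2^{O(p)} \poly(\log n)$ times drives the initial suboptimality (at most $2^{\poly(\log n)}$ by the parameter-boundedness hypothesis) below $\nfrac{1}{\poly(m)}$, and the additive $\nfrac{1}{\poly(m)}$ in the statement absorbs numerical truncation and the rounding of intermediate iterates. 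Because $T$ is subpolynomial in $m$, multiplying by the per-call cost preserves the overall runtime $2^{O(p^{\nfrac{3}{2}})} m^{1+O(\nfrac{1}{\sqrt{p}})}$.

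The main obstacle I anticipate is \emph{parameter boundedness along the iteration}: the residual gradient $\widetilde{\gg}$ and weights $\widetilde{\rr}$ grow with powers of $|\ff^{(k)}|$, so I must certify that they stay inside the $2^{\poly(\log n)}$ regime required by the main technical theorem. This needs a uniform a priori $\ell_\infty$ bound on every iterate, which I would derive from the monotonicity of $\text{val}$ along the refinement together with the initial parameter bounds and an $\ell_p$-to-$\ell_\infty$ inequality. A second delicate piece is establishing the Bregman sandwich for the \emph{mixed} $\ell_2 + \ell_p$ potential: a single choice of $c_p$ must simultaneously upper- and lower-bound the $p$-th-power increment by a combination of the quadratic and pure $\ell_p$ residuals up to a factor $2^{O(p)}$, without letting the $\rr$-weighted $\ell_2$ term be over-counted on either side; this extends, but is not quite identical to, the pure $\ell_p$ analysis of prior iterative-refinement frameworks.
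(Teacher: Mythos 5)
Your plan is correct, but it is structured differently from the paper's actual derivation, and the difference is worth spelling out. The paper's Theorem~\ref{thm:RecPrecon} (Recursive Preconditioning) already takes a smoothed $p$-norm \emph{flow} instance $\calG$, a residue vector $\bb$, and an initial feasible solution $\ff^{(0)}$, and guarantees that after one call the gap to the optimum is cut in half plus an additive $\delta\,s^{\calG}$. That is, the residual construction, the Bregman sandwich (Lemma~\ref{lem:iterative-refinement:approximation}), and the constant-fraction-per-step contraction you describe are all \emph{internal} to the inner loop of $\textsc{RecursivePreconditioning}$. Consequently the paper's proof of Theorem~\ref{cor:smoothedpflows} is a one-liner: flip the sign to get a smoothed $p$-norm instance, then call $\textsc{RecursivePreconditioning}(\calG,\bb,\ff^{(t-1)},\kappa,1/\poly(n))$ for $t=1,\ldots,O(\log n)$, feeding each output as the next initial solution, so the gap shrinks by a factor $\tfrac12$ per call and the $1/\poly(n)$ additive terms sum geometrically.

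You instead assume the main technical theorem gives only a constant-factor approximate solver for smoothed $p$-norm \emph{circulation} problems, and then you wrap it in an explicit outer iterative-refinement loop that constructs the residual instance at each step. This is a valid and essentially equivalent route: for circulations the optimum is nonnegative, so constant-factor approximation is exactly what Theorem~\ref{thm:RecPrecon} degenerates to when $\bb=\vzero,\ff^{(0)}=\vzero$; and the residual construction you sketch is exactly Corollary~\ref{cor:iterative-refinement:flow}. What it buys you is independence from the precise interface of $\textsc{RecursivePreconditioning}$, at the cost of re-deriving machinery the paper has already packaged. Interestingly, your outer loop is essentially the proof the paper gives for the \emph{other} corollary, Theorem~\ref{cor:pflows}: there the instance has $\gg=\rr=\vzero$ and all the smoothing lives only in the residuals, so the outer residual loop is genuinely necessary; for Theorem~\ref{cor:smoothedpflows} it is redundant.

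Two of your flagged concerns are real and are handled in the paper where you would expect. The mixed $\ell_2^2+\ell_p^p$ Bregman sandwich is exactly Lemma~\ref{lem:iterative-refinement:approximation} (with constants $2^{-p}$ and $2^{2p}$), and Theorem~\ref{thm:iterative-refinement:residual} shows how the linear term and weights update, matching your $\widetilde{\gg}$ and $\widetilde{\rr}$. Parameter boundedness across iterations is addressed at the end of the proof of Theorem~\ref{thm:RecPrecon} via the $\Otil(1)$ bounds on $\log\norm{\gg^{\calG}},\log\norm{\rr^{\calG}},\log\norm{\ff^{(0)}}$, which propagate through the recursion; your proposed $\ell_p$-to-$\ell_\infty$ argument is the right flavor, and the paper carries out a version of it (e.g.\ the bound $\norm{\ff^{\star\calH_1}}\le\sqrt{m}\norm{\gg^{\calH_1}}^{1/(p-1)}$ inside the RecPrecon analysis). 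One minor quantitative remark: your outer loop runs $2^{O(p)}\poly(\log n)$ times with contraction $1-2^{-O(p)}$ per step, whereas the paper's outer loop runs $O(\log n)$ times with contraction $\tfrac12$ per step; the $2^{O(p)}$ disparity is absorbed into the $2^{O(p^{3/2})}$ prefactor either way, so both give the stated running time.
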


\subsubsection{$\ell_{p}$-Norm Flows}
\label{subsubsec:pFlows}

From this, we also get a (slightly simpler) statement
about $\ell_p$-norm flows. 
\begin{restatable}[$\ell_p$-norm flows]{theorem}{pflows}
\label{cor:pflows}
  For any $p \ge 2,$ given an unweighted graph $G(V, E)$ % with unit edge-capacities
  and demands $\bb$, using the routine  $\textsc{pFlows}(\calG, \bb)$ (Algorithm~\ref{alg:pFlows})
  we can compute a flow $\fftilde$ satisfying
  satisfying $\bb,$ \emph{i.e.}, $\BB^{G\top} \fftilde = \bb,$ such that
  \[\norm{\fftilde}^{p}_p \le \left( 1+ \frac{1}{\poly(m)} \right)
    \min_{\ff : \BB^{G\top} \ff = \bb}
    \norm{\ff}^{p}_{p}.\]
 in  $2^{O(p^{\nfrac{3}{2}})} m^{1+O(\frac{1}{\sqrt{p}})},$ time, where
  $m$ denotes the number of edges in $G.$
\end{restatable}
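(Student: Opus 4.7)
The plan is to reduce $\ell_p$-norm flow directly to Theorem~\ref{cor:smoothedpflows} by turning off the smoothing, i.e.\ taking $\gg=\vzero$ and $\rr=\vzero$, so that the objective of Problem~\eqref{eq:intro:problem} collapses to $\sum_e|\ff_e|^p=\norm{\ff}_p^p$. Concretely, I would have $\textsc{pFlows}(\calG,\bb)$ (i) compute a feasible initial flow $\ff^{(0)}$ by routing $\bb$ along a spanning tree of $G$ in $O(m)$ time, and (ii) invoke the smoothed solver with parameters $(\vzero,\vzero,\bb,\ff^{(0)})$. All parameters are polynomially bounded, hence comfortably inside the $2^{\poly(\log n)}$ budget of Theorem~\ref{cor:smoothedpflows}, and the runtime is inherited.

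The returned flow $\fftilde$ satisfies $\BB^{G\top}\fftilde=\bb$ and, writing $\opt=\min_{\BB^{G\top}\ff=\bb}\norm{\ff}_p^p$,
\[
\norm{\fftilde}_p^p-\opt\;\le\;\tfrac{1}{\poly(m)}\bigl(\norm{\ff^{(0)}}_p^p-\opt\bigr)+\tfrac{1}{\poly(m)}\;\le\;\tfrac{1}{\poly(m)},
\]
where the second inequality uses $\norm{\ff^{(0)}}_p^p\le n\norm{\bb}_1^p=\poly(m)$. To turn this into the multiplicative guarantee $\norm{\fftilde}_p^p\le(1+1/\poly(m))\opt$, I would use a polynomial lower bound on $\opt$. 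Any feasible $\ff$ has $\norm{\ff}_1\ge\tfrac{1}{2}\norm{\bb}_1$, since each unit of supply must traverse at least one edge, so by the power-mean inequality $\norm{\ff}_p^p\ge m^{1-p}\norm{\ff}_1^p\ge(\norm{\bb}_1/2)^p\,m^{1-p}$. Rescaling $\bb$ and $\ff^{(0)}$ by a $\poly(m)$ factor $M$ before the call (and the returned flow by $1/M$ afterwards) ensures $\opt\ge 1$, turning the $1/\poly(m)$ additive error into a $1/\poly(m)$ relative error.

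The main obstacle is precisely this additive-to-multiplicative step in the regime where demands admit many disjoint routing paths and $\opt$ is as small as $2^{-p}m^{-(p-1)}$. Fortunately, the scaling factor needed to push $\opt$ above $1$ is at most $M=\poly(n)$, so $M^p$ and $M^p\norm{\ff^{(0)}}_p^p$ remain bounded by $n^{O(p)}\le 2^{\poly(\log n)}$ across the allowed range $p=O(\log^{2/3}n)$, still respecting the parameter bound of Theorem~\ref{cor:smoothedpflows}. A secondary concern is whether a single call of the smoothed solver achieves enough accuracy: if the polynomial inside Theorem~\ref{cor:smoothedpflows}'s error bound is fixed, then I would iterate the solver $O(\log m)$ times, feeding each output as the next $\ff^{(0)}$, so geometric contraction drives the gap below $\opt/\poly(m)$; the extra $\log m$ factor is absorbable into $m^{O(1/\sqrt{p})}$.
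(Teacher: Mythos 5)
Your proposal is correct and follows the same high-level strategy as the paper (set $\gg=\rr=\vzero$ so the smoothed objective collapses to $\norm{\ff}_p^p$, start from a feasible flow, and run preconditioned iterative refinement), but it differs in three details, all of which are valid alternatives.

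First, for the initial flow $\ff^{(0)}$ you route $\bb$ along a spanning tree, which gives $\norm{\ff^{(0)}}_\infty \le \norm{\bb}_1/2$ and hence $\norm{\ff^{(0)}}_p^p \le n\norm{\bb}_1^p$. The paper instead computes a $2$-approximate $\ell_2$-minimizer via a Laplacian solver, which yields the sharper relation $\norm{\ff^{(0)}}_p \le 2m^{1/2-1/p}\norm{\ff^\star}_p$ and so $\norm{\ff^{(0)}}_p^p \le 2^pm^{p/2-1}\opt$ directly, without a separate argument lower-bounding $\opt$. Your tree-routing bound plus the power-mean lower bound $\opt \ge (\norm{\bb}_1/2)^pm^{1-p}$ together give $\norm{\ff^{(0)}}_p^p \le n\cdot 2^p m^{p-1}\opt$, a somewhat weaker but still $2^{O(p)}m^{O(p)}$ ratio, which suffices. (Note however that your intermediate claim ``$n\norm{\bb}_1^p = \poly(m)$'' is not correct without normalization; it's the ratio to $\opt$, not the absolute value, that one should control, as your own rescaling step implicitly acknowledges.)

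Second, the paper converts the additive error to a multiplicative one not by rescaling the instance but by calling $\textsc{RecursivePreconditioning}$ directly with $\delta$ set to $\frac{\norm{\ff^{(0)}}_p^p}{2^pm^{p/2}m^c}$, which the relation above shows is at most $\opt/m^{c+1}$. Your rescaling trick (multiply $\bb,\ff^{(0)}$ by $M=\Theta(m^{(p-1)/p}/\norm{\bb}_1)$ so the new optimum is $\ge 1$, then divide the answer back) is mathematically equivalent and arguably cleaner, since it reuses Theorem~\ref{cor:smoothedpflows} as a black box rather than opening up $\textsc{RecursivePreconditioning}$. You do need to check the parameter budget after scaling --- you did, and $M^p = 2^{O(p\log m)} \le 2^{\poly(\log n)}$ holds in the relevant regime.

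Third, your ``iterate $O(\log m)$ times'' is a safe overestimate: with a fixed polynomial contraction per call and a starting gap ratio of $2^{O(p)}m^{O(p)}$, the number of outer iterations required is $\Theta(p)$, which is $o(\log m)$ for $p=O(\log^{2/3}n)$, so the $\log m$ overhead is indeed absorbed into $m^{O(1/\sqrt p)}$. The paper instead iterates $\textsc{RecursivePreconditioning}$ directly $\Theta\bigl((c+p)2^{4p}\log m\bigr)$ times, halving the gap each call, which is the same total work packaged differently. In short: your route is correct, slightly more elementary and modular, and loses only constant-factor sharpness in the exponent relative to the paper's choices.
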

This corollary is also proven in Section~\ref{subsec:recursive}.

Picking $\gg,\rr = \vzero$ gives us an
$2^{O(p^{3/2})} m^{1+O(\frac{1}{\sqrt{p}})}$ time high-accuracy
algorithm for $\ell_{p}$-norm minimizing flows on unit weighted undirected
graphs ($p \geq 2$).  For large $p$, e.g. $p = \sqrt{\log n}$ this is
an $m^{1+o(1)}$ time algorithm, and to our knowledge the first almost
linear time high-accuracy algorithm for a flow problem other than
Laplacian solvers ($\ell_2$) or shortest-paths ($\ell_1$).
% %
% When directly invoked, it gives faster algorithms for solving
% $p$ norm problems on graphs.

\subsubsection{Semi-Supervised Learning on Graphs.}
\label{subsubsec:pLaplacians}

Semi-supervised learning on graphs in machine learning is often based
on solving an optimization problem where voltages (labels) are fixed
at some vertices in a graph the voltages at remaning nodes are chosen
so that some overall objective is minimized (e.g. $\ell_{p}$-norm of the
vector of voltage differences across
edges)~\cite{AlamgirL11,KyngRSS15,ElalaouiCRWJ16}.
Formally, given a graph $G=(V,E)$ and a labelled subset of the nodes
$T \subset V$ with labels $\ss_{T} \in \rea^T$, we can write the
problem as 
\begin{equation}
\min_{\xx \in \Re^{V} \mid \xx_{T} = \ss_{T}}
\sum_{u \sim v} \abs{\xx_u - \xx_{v}}^{p}.
\label{eq:semisupervised}
\end{equation}
By converting this problem to its dual, we get an almost linear time algorithm for
solving it to high accuracy, provided the initial voltage problem uses $p$ close to 1:
In this case, voltage solutions are ``cut-like''.
Given $p < 2$, we get a solver that computes a $(1+1/\poly(m))$
multiplicative accuracy solution
in time $2^{O((\frac{1}{p-1})^{\nfrac{3}{2}})}
m^{1+O(\sqrt{p-1})}$.
For $p = 1+\frac{1}{\sqrt{\log n}}$, this is time is bounded by $m^{1+o(1)}$.

Converting the dual of Problem~\eqref{eq:semisupervised} into a form
solvable by our algorithms requires a small transformation, which we
describe in Appendix~\ref{sec:semisupervised}.

% We can also do a more general version where there is some $\ell_2^2$
% error allowed on some subset of edges.
% This version is equivalent to introducing $\ell_2^2$ as well as $\ell_{p}^{p}$
% edge weights on the graph, and works because residual problems in
% $h_{p}(\rr, \ff)$ behave similarly.
% However, because we can only handle unit $p$, we do need unit weighted
% $p$/$q$-norm terms on all edges.
% \todo{this takes a bit clarifying}

% This includes $\ell_{p}$-norm flows, and just as importantly,
% its extensions to machine learning, including
% Lipschitz learning~\cite{AlamgirL11,KyngRSS15,ElalaouiCRWJ16}.
% The Lipschitz learning problem in its most general form,
% but with unit edge weights, can be described as
% \[
% \min_{\xx \in \Re^{V} \mid \xx_{T} = \ss_{T}}
% \sum_{u \sim v} \abs{\xx_u - \xx_{v}}^{p},
% \]
% where $T$ is a set of terminals with signal $\ss$,
% and $\ww_{u}$ are weights at vertices.
% Note that the boundary version with $\xx_{\Omega} = \ss_{\Omega}$
% is a special case of this where $\ww_{u}$ is taken to be very large values
% on the boundary, and $0$ everywhere else.

% Its dual is given by:
% \[
% \max_{\ff: \left( \BB^{\top} \ff \right)_{V \setminus S} = 0}
% \gg^{\top} \ff
% - \sum_{u \sim v} \abs{\ff_{uv}}^{q}
% \]
% where the gradient $\gg$ is given by $\bb = \BB_{:, T} \ss_{T}$.

% This formulation causes the problem of having vertices with
% unknown residues.
% We can fix this by merging all those vertices into a single vertex:
% as the total net in/out is $0$, that vertex must also have $0$ residue.
% Then we're simply solving for a circulation now.

\subsubsection{Use as Oracle in Conjunction with Multiplicative Weight Updates}
\label{subsubsec:Oracle}

The mixed $\ell_{2}^2$ and $\ell_p^p$ objective in our Problem
\eqref{eq:intro:problem} is useful for building oracles to use in
multiplicative weight update algorithms based on flows, as they appear
in \cite{ChristianoKMST10, AdilKPS19}.
Assume we are looking to solve some problem to $(1+\epsilon)$-accuracy
measured as multiplicative error, and let us assume
$\frac{1}{\poly(m)} < \epsilon < 0.5$.
Specifically we can solve for the following objective subject to
certain linear constraints.
\begin{equation}
\label{eq:mwuoracle}
\sum_{e} \rr_e \ff_e^2 + \frac{\epsilon \norm{\rr}_1 }{m}
\abs{\ff_e}^{p}.
\end{equation}
This gives an oracle for several problems.
Algorithms based on oracle solutions to this type of objective work by
noting that any $\ff$ with $\norm{\ff}_{\infty} \leq 1$ gives an
objective value at most
\[
\sum_{e} \rr_e \ff_e^2 + \frac{\epsilon \norm{\rr}_1 }{m} \abs{\ff_e}^{p}
\leq
\left( 1 + \epsilon \right) \norm{\rr}_1.
\]
Since such a flow must exist in the context where the oracle is
applied, the optimum flow must also meet this bound.
Now, if we compute a $( 1 + 0.01\epsilon)$ approximately optimal solution to
this problem, it must satisfy
\[
\sum_{e} \rr_e \ff_e^2 + \frac{\epsilon \norm{\rr}_1 }{m} \abs{\ff_e}^{p}
\leq
\left( 1 + 1.1\epsilon \right) \norm{\rr}_1.
\]
By Cauchy-Schwarz, we get
$\sum_{e} \rr_e \abs{\ff_e}  \leq 
\sqrt{ \norm{\rr}_1 \sum_{e} \rr_e \ff_e^2 } \leq \left( 1 +
  1.1\epsilon \right) \norm{\rr}_1$, which tells us the oracle is
``good-on-average'' according to the weights $\rr$.
%, i.e. it gives a separating hyperplane.
The objective value also implies for every edge that
\[
\frac{\epsilon \norm{\rr}_1 }{m} \abs{\ff_e}^{p}
\leq
\left( 1 + 1.1\epsilon \right) \norm{\rr}_1
\leq
2 \norm{\rr}_1,
\]
which simplifies to:
\begin{equation}
\label{eq:mwuwidth}
\abs{\ff_e}
\leq
\left( m / \epsilon \right)^{1/p}
\leq
m^{o\left(1\right)}
\end{equation}
when we set $p = \log^{0.1}n$.
This is the width of the oracle, and together these conditions
demonstrate that the oracle suffices for a multiplicative weights algorithm and bounds
the number of calls to the oracle by 
$m^{o\left(1\right)} \poly(1/\epsilon)$.

This oracle has multiple uses:
\begin{description}
\item[Approximate undirected maximum flow.] Using the oracle, we can approximate maximum flow using~\cite{ChristianoKMST10},
giving an algorithm for undirected maximum flow that is not 
based on oblivious routings unlike other fast algorithms for
approximate maximum flow~\cite{Sherman13,KelnerLOS14,Peng16}.
Our algorithm obtains almost-linear time, albeit
only for unit weighted graphs.
% \item[Even better high accuracy $\ell_{p}$-norm flow.] 
% We can use this as an oracle for unit weight $\ell_{p}$-norm
% flows inside the algorithm~\cite{AdilKPS18:prelim},
% and obtain high accuracy algorithm for unit weight $\ell_{p}$-norm flows
% with a running time of about $2^{p} m^{1 +
%   \frac{1}{p} +o(1) } \log(1/\epsilon)$.
% Thus we get a slight improved running time over our main
% algorithm\todo{}, and can get almost linear time algorithms even for $p=(\log n)^{0.99}$.
\item[Isotropic total variation denoising.] Using our algorithm, we can give the first almost linear time,
  low accuracy algorithm for total variation denoising on unit
  weighted graphs~\cite{RudinOF92,ZhuWC10}.
While there has been significant advances in image processing
since the introduction of this objective, it still remains
a representative objective in pixel vision tasks.
The total variation objectives can be viewed as variants
of semi-supervised learning on graphs: Given a ``signal''
vector $\ss$ which corresponds to noisy observations of pixels of an image, we want
to find a denoised version of $\ss$, which we refer to as $\xx$.
The denoised output $\xx$ should minimize an objective that measures
both the between pixels in $\xx$ that are close
to each other in the image (which should be small), and the difference
between $\xx$ and $\ss$ (which should also be small).
% The isotropic version of this problem is where the edges
% related to a single pixel are `grouped', leading to an
% objective of the form of
The most popular version of this problem, known as isotropic total
variation denoising, allows the input to specify a collection of
groups of pixels with connections inside each group $i$ given by a set
of edges $E_i$, and 
asks that 1) the denoised pixels are close in an $\ell_2$ sense to
the measured signal, 2) in each group, the standard deviation between
denoised pixels is not too high.
These goals are expressed in the objective
% objective of the form of
\[
\sum_{u} \left( \xx_u - \ss_{u}\right)^2
+ \sum_{i} \sqrt{\sum_{e \in E_{i}} \left(\xx_{u} - \xx_{v}\right)^2}.
\]
The dual of this problem is grouped flows, which is finding $\ff$
such that $\BB^{\top} \ff = \dd$ and for edge sets $E_i$,
\[
\norm{\ff_{E_i}}_2^2 \leq 1.
\]
Our oracle gives the first routine for approximate isotropic $TV$ denoising
that runs in almost linear time.
The previous best running time was
about $m^{4/3}$~\cite{ChinMMP13}.
% \todo{should we say something about isotropic being
% significantly harder than anisotropic?}
\end{description}

% \subsection{Open Questions}
% \todo{maybe at the end of overview?
% should we even include these?}

% \begin{enumerate}
% \item Transshipment?
% Flows are `easy' because things are easy to eliminate.
% \item Arbitrary weights on $p$-terms:
% we are probably at the limit of anything based on one tree. 
% \item Does this help for the max-flow problem?
% $1.001 \neq 1$....
% \end{enumerate}

%%% Local Variables:
%%% mode: latex
%%% TeX-master: "main"
%%% End:

\subsection{Related Work}
\label{subsec:Related}

Network flow problems have a long history of motivating broader
developments in algorithms, including the introduction of
strongly polynomial time as a benchmark of algorithmic
efficiency~\cite{Edmonds65,EdmondsK72},
the development of tree data
structures~\cite{GalilN79,SleatorT83,SleatorT85}, and
randomized graph algorithms and graph approximations~\cite{KargerS96,BenczurK96}.
For general capacities, the best strongly polynomial time algorithms
run in about quadratic time due to the flow decomposition barrier~\cite{EdmondsK72,GalilN79,GoldbergT88,HochbaumO13,Orlin13},
which says that the there exists graphs where the
path decomposition of an optimum flow must have quadratic size.

The flow decomposition barrier suggest that sub-quadratic time algorithms
for network flows should decompose solutions numericallly, and this
has indeed been the case in the development of
such algorithms~\cite{GoldbergR98,GoldbergT14}.
These numerical approaches recently culminated in nearly-linear time algorithms
for undirected maximum flow and transshipment
(the $\ell_1$ case of Problem~(\ref{eq:Flow})),
yielding nearly-linear time
algorithms~\cite{ChristianoKMST10,Sherman13,KelnerLOS14,Peng16,Sherman17a,Sherman17b}. 
Much of these progress were motivated by the development of
nearly-linear time high-accuracy solvers for Laplacian linear
systems~\cite{SpielmanTengSolver:journal, KoutisMP12,
	KelnerOSZ13, LeeS13, KyngS16}, whose duals, electrical flows
are the $\ell_{\infty}$ case of Problem~(\ref{eq:Flow}).
Such solvers can in turn be used to give the current best
high accuracy flow algorithms.
For graphs with polynomially bounded capacities,
the current best running time is $\Otil(m \sqrt{n})$
due to Lee and Sidford~\cite{LeeS14}.
On sparse graphs, this bound still does not break the long-standing
$O(n^{1.5})$ barrier dating back to the early 70s~\cite{HopcroftK73,Karzanov73,EvenT75}.
Recently Madry~\cite{Madry13,Madry16} broke this barrier on unit
capacitated graphs, obtaining $\Otil(m^{10/7})$ running time.

Our result has in common with all previous results on almost-linear time
optimization problems on graphs~\cite{KelnerLOS14,Sherman13,BeckerKKL17,Peng16,
Sherman17a,Sherman17b} in that it is based on white-box modifications
of a linear system solver.
In particular, our high level algorithmic strategy in creating edge
and vertex reductions is identical to the first nearly-linear time
solver by Spielman and Teng~\cite{SpielmanTengSolver:journal}.
Much of this similarity is due to the lack of understanding of more
general versions of key components: some possibilities for simplifying
the result will be discussed in Section~\ref{subsec:open}.
On the other hand, our algorithms differ from previous adaptations
of solvers in that it obtains high accuracy
\footnote{
The nearly-linear time matrix scaling algorithm~\cite{CohenMTV17}
has a  linear dependence on the condition number $\kappa$,
while convex optimization methods for matrix scaling have dependencies
of $\log \kappa$ instead.
}.
This requires us to tailor the scheme to the residual problems
from the $p$-norm iterative methods, and results in us taking a more
numerical approach, instead of the more routing and path embedding-based
approaches utilized in similar adaptations of  Spielman and
Teng~\cite{SpielmanTengSolver:journal}
to cuts~\cite{Madry10},
flows~\cite{Sherman13,KelnerLOS14}, and shortest paths~\cite{BeckerKKL17}.

The development of high-accuracy algorithms for $p$-norm minimization
that are faster than interior point methods (IPMs)~\cite{NesterovN94}
was pioneered by the recent work of Bubeck~\etal~\cite{BubeckCLL18}
which introduced the $\gamma$-functions that were also used in
\cite{AdilKPS19}.  However, the methods in~\cite{BubeckCLL18} are conceptually
similar to interior point methods (IPMs)~\cite{NesterovN94}
(as in they are \emph{homotopy} methods).
Their runtime for large $p$ behaves essentially like IPMs,
requiring about $m^{\nfrac{3}{2}-o(1)}$ time for solving $p$-norm flow
problems, whereas the limiting behavior of our result is about
$m^{1 + o(1)}$.

\subsection{Overview}
\label{subsec:Overview}
% \todo{maybe throughout we should connnect to lemmas}

At a high level, our approach can be viewed as solving a graph optimization
problem as a linear system.
This is done by combining the numerical methods for
$\ell_p$-norms by Adil~\etal~\cite{AdilKPS19} with the recursive
preconditioning of graph structured linear systems by Spielman and
Teng~\cite{SpielmanTengSolver:journal}.
Many conceptual obstacles arise in trying to realize this vision,
preventing us from adopting later Laplacian linear solvers that have
greatly simplified the result of Spielman and Teng.
The main one is the lack of concentration theory for the smoothed $p$-norm
objectives integral to our algorithms: these concentration arguments are
at the core of all subsequent improvements to Laplacian
solver algorithms~\cite{KoutisMP11, KelnerOSZ13,LeePengSpielman, KyngS16}.

Our starting point is a recent result involving a subset of the
authors~\cite{AdilKPS19} that significantly generalized the phenomenon
of \emph{high-accuracy} numerical methods.
In particular, this method is applicable to general $\ell_{p}$-norm optimization
problems, for all $p$ that are bounded away from $1$ and $\infty$.
It also opens up a major question: can we develop an appropriate notion
of \emph{preconditioning}, the other central ingredient of fast
solvers for linear systems, applicable to $\ell_{p}$-norms?
We resolve this question in the affirmative, and develop a theory of
preconditioning that works for a wide class of non-linear
problems in Section~\ref{sec:numerical}.
In particular, we show that the second and p\textsuperscript{th} order
terms from the main formulation in Equation~\ref{eq:intro:problem}
form a class of functions that's closed under taking residual problems.
We will formally define these as smoothed $\ell_{p}$-norms in
Section~\ref{subsec:SmoothNorm}.

The crux of our problem then becomes designing preconditioners that
interact well with these smoothed $\ell_{p}$-norms.
Here it's worth noting that earlier works on preconditioning for non-linear
(maximum) flow problems all relied on \emph{oblivious routing} which gives
rise to linear preconditioners.
Such an approach encounters a significant obstacle with $\ell_{p}$ norms:
consider changing a single coordinate from, say $1$, to $(1+\delta)$:
\begin{tight_itemize}
\item If the update $\delta$ is much smaller than $1$ in absolute value, the change in the
objective from $1^p$ to $(1+\delta)^p$ is dominated by terms that
scale as $\delta$ and $\delta^2$.
\item However, if the update is much
larger than $1$, the change is dominated by a $\delta^p$ term.
\end{tight_itemize}
This means that good preconditioning across small and large updates is
inherently highly dependent on the current coordinate value.

This example captures the core difficulties of our
preconditioned iterative methods for smoothed $\ell_p$-norm problems,
which heavily rely on both the second and pth power terms the objective functions.
It means our graph theoretic components must simultaneously control terms of
different degrees (namely scaling as $\delta$, $\delta^2$, and $\delta^p$)
related to the flows on graphs.
Here our key idea is that unit-weighted graphs have
``multi-objective low-stretch trees'' that simultaneously preserve
the $\delta^{2}$ and $\delta^{p}$ terms, while the linear (gradient)
terms can be preserved exactly when routing along these trees.
Here a major difficulty is that the tree depends on the second order
derivatives of the current solution point, and thus must continuously
change as the algorithm proceeds.
Additionally, after rerouting graph edges along the tree, we need to
sparsify the graph according to weights defined by the same second derivatives at the
current solution, which makes the adaptive aspect of the algorithm
even more important.
% This adaptive is made even worse by the need to sparsify the routed
% edges on the tree.
We defer the construction of our adaptive preconditioner to
Section~\ref{sec:graph}, after first
formally defining our objective functions in Section~\ref{sec:prelims},
and introducing numerical methods based on them in Section~\ref{sec:numerical}.

%
%Our sparsification algorithm is based on the approach from
%Spielman-Teng~\cite{SpielmanTengSolver:journal}, that relies on
%expander decompositions. Importantly, our approach require stronger
%expander decomposition results proved
%recently~\cite{SaranurakW18:prelim}.  Additionally, a uniformity
%condition we enforce on the gradient to control the cycle-space
%sampling errors is related to Lewis weight-based sampling studied
%in~\cite{CohenP15}.

%%% Local Variables:
%%% mode: latex
%%% TeX-master: "main"
%%% End:

\subsection{Open Questions}
\label{subsec:open}

We expect that our algorithm can be greatly simplified and adapted to
non-unit weight graphs in ways similar to the sampling based solvers
for Laplacian linear systems~\cite{KoutisMP10,KelnerOSZ13,KyngS16}.
The current understanding of concentration theory
for $\ell_{p}$ norms rely heavily on tools from functional analysis~\cite{CohenP15}:
generalizing these tools to smoothed $\ell_{p}$-norm objectives is
beyond the scope of this paper.

A major limitation of our result is the restriction to unit capacitated graphs.
We believe this limitation is inherent to our approach of constructing preconditioners
from trees: for general weights, there are cases where no tree can simultaneously
have small stretch w.r.t. $\ell_{2}$-norm and $\ell_{p}$-norm weights.
We believe that by developing a more complete theory of elimination and
sparsification for these objectives, it will be possible to sparsify
non-unit weight graphs, and develop solvers following the patterns of
non-tree based Laplacian solvers~\cite{PengS14, LeePengSpielman, KyngS16}.

We also believe that the overall algorithmic approach established here
is applicable far beyond the class of objective functions studied in this paper.
Here a direct question is whether the dependency on $p$ can be improved
to handling $\ell_{m}$ flows, which in unit weighted graphs imply
maximum flows.
The exponential dependence on $p$ has already been shown to be improvable
to about $\Otil(p^2)$~\cite{Sachdeva:communication}.
For even larger values of $p$, a natural approach is to use homotopy methods
that modify the $p$ values gradually.
Here it is also plausible that our techniques, or their possible generalizations
to weighted cases, can be used as algorithmic building blocks.

%%% Local Variables:
%%% mode: latex
%%% TeX-master: "main"
%%% End:

%!TEX root = main.tex
\section{Preliminaries}
\label{sec:prelims}

\subsection{Smoothed $\ell_{p}$-norm functions}
\label{subsec:SmoothNorm}
We consider $p$-norms smoothed by the addition of a
quadratic term. First we define such a smoothed $p\textsuperscript{th}$-power on $\rea.$
\begin{definition}[Smoothed $p\textsuperscript{th}$-power]
  Given $r, x \in \rea, r \ge 0$ define the $r$-smoothed $s$-weighted
  $p\textsuperscript{th}$-power of $x$ to be
  \[h_{p}(r, s, x) = r x^2 + s\abs{x}^p.
  \]
\end{definition}
\noindent This definition can be naturally extended to vectors to
obtained smoothed $\ell_{p}$-norms.
\begin{definition}[Smoothed $\ell_{p}$-norm]
  Given vectors $\xx \in \rea^m, \rr \in \rea^{m}_{\ge 0},$ and a
  positive scalar $s \in \rea_{\ge 0},$   define the
  $\rr$-smooth $s$-weighted $p$-norm of $\xx$ to be
  \[h_{p}(\rr, s, \xx) = \sum_{i=1}^m h_{p}(\rr_i, s, \xx_i) =
    \sum_{i=1}^{m} ( \rr_i \xx_i^2 + s\abs{\xx_i}^{p}).
  \]
\end{definition}

\subsection{Flow Problems and Approximation}
\label{sec:overview}
We will consider problems where we seek to find flows minimizing
smoothed $p$-norms. We first define these problem instances.
\begin{definition}[Smoothed $p$-norm instance]
  A \emph{smoothed $p$-norm instance} is a tuple
  $\calG,$
  \[\calG\defeq (V^{\calG}, E^{\calG}, \gg^{\calG}, \rr^{\calG}, s^{\calG}),\]
  where $V^{\calG}$ is a set of vertices, $E^{\calG}$ is a set of
  undirected edges on $V^{\calG},$ the edges are accompanied
  by a gradient, specified by $\gg^{\calG} \in \rea^{E^{\calG}},$ 
  the edges have $\ell_2^{2}$-resistances
  % and $\ell_p^{p}$-resistances
  given by
  $\rr^{\calG}
  % \ss^{\calG}
  \in \rea^{E^{\calG}}_{\ge 0},$ and $s \in \rea_{\ge 0}$ gives the
  $p$-norm scaling.
  % respectively.
\end{definition}
\begin{definition}[Flows, residues, and circulations]
  Given a smoothed $p$-norm instance $\calG,$ a vector
  $\ff \in \rea^{E^{\calG}}$ is said to be a flow on $\calG$. A flow
  vector $\ff$ satisfies residues $\bb \in \rea^{V^{\calG}}$ if
  $\left( \BB^{\calG} \right)^{\top} \ff = \bb,$ where
  $\BB^{\calG} \in \rea^{E^{\calG} \times V^{\calG} }$ is the
  edge-vertex incidence matrix of the graph $(V^{\calG},E^{\calG}),$
  \emph{i.e.},
  $\left( \BB^{\calG} \right)^{\top}_{(u,v)} = \vone_{u} - \vone_{v}.$

  A flow $\ff$ with residue $\vzero$ is called a circulation on $\calG$.
\end{definition}
Note that our underlying instance and the edges are
undirected. However, for every undirected edge $e=(u,v) \in E$, we
assign an arbitrary fixed direction to the edge, say $u \to v,$ and
interpret $\ff_e \ge 0$ as flow in the direction of the edge from $u$
to $v,$ and $\ff_e < 0$ as flow in the reverse direction. For
convenience, we assume that for any edge $(u,v) \in E,$ we have
$\ff_{(u,v)} = - \ff_{(v,u)}.$
\begin{definition}[Objective, $\obj^{\calG}$]
  Given a smoothed $p$-norm instance $\calG,$ and a flow $\ff$ on
  $\calG,$ the associated objective function, or the energy, of $\ff$
  is given by
  \[\obj^{\calG}(\ff) = \left( \gg^{\calG} \right)^{\top} \ff -
  h_p(\rr, s, \ff).\]
\end{definition}

% \begin{definition}[Smoothed $p$-norm circulation problem]
%   Given a smoothed $p$-norm instance $\calG,$ the \emph{smoothed
%     $p$-norm circulation problem} finds a circulation
%   $\ff \in \rea^{E^{\calG}}$ that maximizes $\obj^{\calG} (\ff),$
%   \emph{i.e.},
%   % graph $G$ along with gradient
%   % $\gg$ and a Hessian functional $h_{p}(\rr, \ss, \cdot)$, the cycle
%   % projection problem/objective is given by:
%   \[
%     \max_{\ff: (\BB^{\calG})^{\top} \ff = \vzero} \obj^{\calG} \left( \ff \right).
%   \]
% \end{definition}
\begin{definition}[Smoothed $p$-norm flow / circulation problem]
  Given a smoothed $p$-norm instance $\calG$ and a residue vector
  $\bb \in \rea^{E^{\calG}},$ the \emph{smoothed $p$-norm flow
    problem} $(\calG, \bb)$, finds a flow $\ff \in \rea^{E^{\calG}}$
  with residues $\bb$ that maximizes $\obj^{\calG} (\ff),$
  \emph{i.e.},
  % graph $G$ along with gradient
  % $\gg$ and a Hessian functional $h_{p}(\rr, \ss, \cdot)$, the cycle
  % projection problem/objective is given by:
  \[
    \max_{\ff: (\BB^{\calG})^{\top} \ff = \bb} \obj^{\calG} \left( \ff \right).
  \]
  If $\bb = \vzero,$ we call it a \emph{smoothed $p$-norm circulation
    problem}.
\end{definition}
Note that the optimal objective of a smoothed $p$-norm circulation
problem is always non-negative, whereas for a smoothed $p$-norm flow
problem, it could be negative.
\subsection{Approximating Smoothed $p$-norm Instances}
Since we work with objective functions that are non-standard (and not
even homogeneous), we need to carefully define a new notion of
approximation for these instances.
\begin{definition}[$\calH \preceq_{\kappa} \calG$]
  For two smoothed $p$-norm instances, $\calG, \calH,$ we write
  $\calH \preceq_{\kappa} \calG$ if there is a linear map
  $\map{\calH}{\calG}: \rea^{E^{\calH}} \rightarrow \rea^{E^{\calG}}$
  such that for every flow $\ff^{\calH}$ on $\calH,$ we have that
  $\ff^{\calG} = \map{\calH}{\calG} (\ff^{\calH})$ is a flow on
  $\calG$ such that
  \begin{enumerate}
  \item $\ff^{\calG}$ has the same residues as $\ff^{\calH}$ \emph{i.e.},
    $(\BB^{\calG})^{\top} \ff^{\calG} = (\BB^{\calH})^{\top}
    \ff^{\calH},$ and
  \item has energy bounded by:
    \[
      \frac{1}{\kappa} \obj^{\calH} \left( \ff^{\calH} \right) \leq
      \obj^{\calG} \left( \frac{1}{\kappa} \ff^{\calG} \right).
    \]
  \end{enumerate}
\end{definition}
For some of our transformations on graphs, we will be able to prove
approximation guarantees only for circulations. Thus, we define the
following notion restricted to circulations.
\begin{definition}[$\calH \circapprox_{\kappa} \calG$]
  For two smoothed $p$-norm instances, $\calG, \calH,$ we write
  $\calH \circapprox_{\kappa} \calG$ if there is a linear map
  $\map{\calH}{\calG} : \rea^{E^{\calH}} \rightarrow \rea^{E^{\calG}}$
  such that for any circulation $\ff^{\calH}$ on $\calH$, \emph{i.e.},
  $(\BB^{\calH})^{\top}\ff^{\calH} = \vzero,$ the flow
  $\ff^{\calG} = \map{\calH}{\calG}(\ff^{\calH})$ is a circulation,
  \emph{i.e.},
  $(\BB^{\calG})^{\top} \ff^{\calG} = \vzero, $
  and satisfies
  % \[
  %   \frac{1}{\kappa} \obj^{\calH} \left( \ff^{\calH} \right) \leq
  %   \obj^{\calG} \left( \frac{1}{\kappa} \map{\calH}{\calG} \left(
  %       \ff^{\calH} \right) \right).
  % \]
  \[
    \frac{1}{\kappa} \obj^{\calH} \left( \ff^{\calH} \right) \leq
    \obj^{\calG} \left( \frac{1}{\kappa} 
      \ff^{\calG} \right).
  \]
  Observe that $\calH \preceq_{\kappa} \calG$ implies
  $\calH \circapprox_{\kappa} \calG.$
\end{definition}
These definitions satisfy most properties that we want from
comparisons.
\begin{restatable}[Reflexivity]{lemma}{identity}
  \label{lem:approximations:identity}
  For every smoothed $p$-norm instance $\calG,$ and every
  $\kappa \ge 1$, $\calG \preceq_{\kappa} \calG$ and
  $\calG \circapprox_{\kappa} \calG$ with the identity map.
\end{restatable}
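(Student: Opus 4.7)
The plan is to take the linear map $\map{\calG}{\calG}$ to be the identity on $\rea^{E^\calG}$ and verify both conditions of $\preceq_\kappa$ directly. Condition (1) on residues is immediate: since $\ff^\calG = \ff^\calH$ as vectors in $\rea^{E^\calG}$, we trivially have $(\BB^\calG)^\top \ff^\calG = (\BB^\calG)^\top \ff^\calH$. So the entire content of the lemma is condition (2), namely the pointwise inequality
\[
  \frac{1}{\kappa}\, \obj^\calG(\ff) \;\le\; \obj^\calG\!\left( \frac{1}{\kappa}\,\ff \right)
\]
for every flow $\ff$ on $\calG$ and every $\kappa \ge 1$.

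To verify this, I would simply substitute the definition $\obj^\calG(\ff) = (\gg^\calG)^\top \ff - h_p(\rr^\calG,s^\calG,\ff)$ and $h_p(\rr,s,\ff) = \sum_i \rr_i \ff_i^2 + s|\ff_i|^p$, and compute the difference of the two sides. The linear term $(\gg^\calG)^\top \ff$ scales exactly like $1/\kappa$ under both operations, so it cancels. What remains is
\[
  \obj^\calG\!\left(\tfrac{1}{\kappa}\ff\right) - \tfrac{1}{\kappa}\obj^\calG(\ff)
  = \left(\tfrac{1}{\kappa} - \tfrac{1}{\kappa^2}\right)\sum_i \rr^\calG_i \ff_i^2
  + \left(\tfrac{1}{\kappa} - \tfrac{1}{\kappa^p}\right) s^\calG \sum_i |\ff_i|^p,
\]
where I used that the quadratic part of $h_p$ scales by $1/\kappa^2$ and the $p$-th-power part scales by $1/\kappa^p$, whereas the full objective is divided by $\kappa$ on the right-hand side of the target inequality (so the signs flip once when moving the subtracted $h_p$ terms).

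Since $\kappa \ge 1$ and $p \ge 2 \ge 1$, we have $1/\kappa \ge 1/\kappa^2$ and $1/\kappa \ge 1/\kappa^p$, and by definition $\rr^\calG_i \ge 0$, $s^\calG \ge 0$, and of course $\ff_i^2, |\ff_i|^p \ge 0$. Therefore both summands are nonnegative, giving the desired inequality and establishing $\calG \preceq_\kappa \calG$. The $\circapprox_\kappa$ statement is then a free consequence: the final remark in the definition of $\circapprox_\kappa$ already notes that $\preceq_\kappa$ implies $\circapprox_\kappa$, so no extra work is needed. There is no real obstacle here — the only thing worth double-checking is the sign bookkeeping when pulling the $h_p$ terms through the $1/\kappa$ factor, which is why I want to write out the difference explicitly rather than argue by homogeneity.
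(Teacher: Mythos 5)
Your proof is correct and takes essentially the same approach as the paper, which verifies the same inequality via the identity map and bounds $h_p(\rr, s, \kappa^{-1}\ff)$ by $\kappa^{-1} h_p(\rr, s, \ff)$. The only cosmetic difference is that the paper invokes Lemma~\ref{lem:iterative-refinement:rescaling} (the $\min/\max\{\lambda^2,\lambda^p\}$ rescaling bound) to handle the $h_p$ term, whereas you inline that computation termwise, which is equivalent.
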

It behaves well under composition.
\begin{restatable}[Composition]{lemma}{composition}
  \label{lem:approximations:composition}
  Given two smoothed $p$-norm instances, $\calG_1, \calG_2,$ such that
  $\calG_1 \preceq_{\kappa_1} \calG_2$ with the map
  $\map{\calG_1}{\calG_2}$ and $\calG_2 \preceq_{\kappa_2} \calG_3$
  with the map $\map{\calG_2}{\calG_3}$, then
  $\calG_1 \preceq_{\kappa_1 \kappa_2} \calG_3$ with the map
  $ \map{\calG_1}{\calG_3} = \map{\calG_2}{\calG_3} \circ
  \map{\calG_1}{\calG_2}.$

  Similarly, for any $\calG_1, \calG_2,$ if
  $\calG_1 \circapprox_{\kappa_1} \calG_2$ with the map
  $\map{\calG_1}{\calG_2}$ and
  $\calG_2 \circapprox_{\kappa_2} \calG_3$ with the map
  $\map{\calG_2}{\calG_3}$, then
  $\calG_1 \circapprox_{\kappa_1 \kappa_2} \calG_3$ with the map
  $ \map{\calG_1}{\calG_3} = \map{\calG_2}{\calG_3} \circ
  \map{\calG_1}{\calG_2}.$
\end{restatable}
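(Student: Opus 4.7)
The plan is to verify the composition directly by chaining the two approximation inequalities, using linearity of the maps as the essential glue. First I would note that $\map{\calG_1}{\calG_3} = \map{\calG_2}{\calG_3} \circ \map{\calG_1}{\calG_2}$ is linear since both factors are, so it is a valid candidate for the witnessing map. Given any flow $\ff^{\calG_1}$ on $\calG_1$, define $\ff^{\calG_2} = \map{\calG_1}{\calG_2}(\ff^{\calG_1})$ and $\ff^{\calG_3} = \map{\calG_2}{\calG_3}(\ff^{\calG_2})$. The residue condition is immediate: $\calG_1 \preceq_{\kappa_1} \calG_2$ gives $(\BB^{\calG_2})^\top \ff^{\calG_2} = (\BB^{\calG_1})^\top \ff^{\calG_1}$, and then $\calG_2 \preceq_{\kappa_2} \calG_3$ gives $(\BB^{\calG_3})^\top \ff^{\calG_3} = (\BB^{\calG_2})^\top \ff^{\calG_2}$, so the residues agree.

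For the energy bound, the key step is to apply the second approximation not to $\ff^{\calG_2}$ itself but to the scaled flow $\frac{1}{\kappa_1} \ff^{\calG_2}$. By linearity of $\map{\calG_2}{\calG_3}$, we have $\map{\calG_2}{\calG_3}\!\left(\frac{1}{\kappa_1} \ff^{\calG_2}\right) = \frac{1}{\kappa_1} \ff^{\calG_3}$, and the definition of $\calG_2 \preceq_{\kappa_2} \calG_3$ applied to this flow yields
\[
\frac{1}{\kappa_2} \obj^{\calG_2}\!\left(\tfrac{1}{\kappa_1} \ff^{\calG_2}\right) \;\le\; \obj^{\calG_3}\!\left(\tfrac{1}{\kappa_1 \kappa_2} \ff^{\calG_3}\right).
\]
Chaining with the inequality $\frac{1}{\kappa_1} \obj^{\calG_1}(\ff^{\calG_1}) \le \obj^{\calG_2}\!\left(\tfrac{1}{\kappa_1}\ff^{\calG_2}\right)$ from the first approximation, after dividing by $\kappa_2$ and combining, gives exactly $\frac{1}{\kappa_1 \kappa_2} \obj^{\calG_1}(\ff^{\calG_1}) \le \obj^{\calG_3}\!\left(\tfrac{1}{\kappa_1 \kappa_2} \ff^{\calG_3}\right)$, which is the required bound.

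For the $\circapprox$ version, the same argument works verbatim, except one must additionally check that the intermediate flow is a circulation so that the second hypothesis is actually applicable. This is immediate: if $\ff^{\calG_1}$ is a circulation, the residue-preservation of $\map{\calG_1}{\calG_2}$ makes $\ff^{\calG_2}$ a circulation, and scaling by $\frac{1}{\kappa_1}$ preserves this, so $\calG_2 \circapprox_{\kappa_2} \calG_3$ applies and yields the same chain of inequalities.

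The only place where any care is required is recognizing that the definition quantifies the energy inequality at a specific scaling of the input flow, so we cannot directly compose the two stated inequalities at face value; we must invoke the second approximation on a rescaled flow and use linearity to identify $\map{\calG_2}{\calG_3}\!\left(\tfrac{1}{\kappa_1}\ff^{\calG_2}\right)$ with $\tfrac{1}{\kappa_1}\ff^{\calG_3}$. This is the only conceptually nontrivial point; everything else is bookkeeping.
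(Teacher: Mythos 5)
Your proof is correct and follows essentially the same route as the paper's: chain the residue conditions directly, then for the energy bound apply the $\calG_2 \preceq_{\kappa_2} \calG_3$ guarantee to the rescaled flow $\tfrac{1}{\kappa_1}\ff^{\calG_2}$ and use linearity of $\map{\calG_2}{\calG_3}$ to identify its image with $\tfrac{1}{\kappa_1}\ff^{\calG_3}$, before chaining with the first guarantee. Your remark that one must rescale before applying the second approximation is exactly the small subtlety the paper also handles (via the "Using linearity" step), and your explicit note that $\ff^{\calG_2}$ remains a circulation in the $\circapprox$ case is a correct, if slightly more careful, rendering of the paper's closing sentence that the same proof works there.
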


The most important property of this is that this notion of
approximation is also additive, \emph{i.e.}, it works well with graph
decompositions.
\begin{definition}[Union of two instances]
  Consider smoothed $p$-norm instances, $\calG_1, \calG_2,$ with the
  same set of vertices, \emph{i.e.}  $V^{\calG_1} = V^{\calG_2}.$
  Define $\calG = \calG_1 \cup \calG_2$ as the instance on the same
  set of vertices obtained by taking a disjoint union of the edges
  (potentially resulting in multi-edges). Formally,
  \[ \calG = (V^{\calG_1}, E^{\calG_1} \cup E^{\calG_2}, (\gg^{\calG_1},
  \gg^{\calG_2}), (\rr^{\calG_1}, \rr^{\calG_2}), (\ss^{\calG_1},
  \ss^{\calG_2})).\]
\end{definition}
\begin{restatable}[$\preceq_{\kappa}$ under union]{lemma}{union}
  \label{lem:Composition}
  Consider four smoothed $p$-norm instances,
  $\calG_1, \calG_2, \calH_1, \calH_2,$ on the same set of vertices,
  \emph{i.e.} $V^{\calG_1} = V^{\calG_2} = V^{\calH_1} = V^{\calH_2},$
  such that for $i=1,2,$ $\calH_i \preceq_{\kappa} \calG_i$ with the
  map $\map{\calH_i}{\calG_i}.$ Let
  $\calG \defeq \calG_1 \cup \calG_2,$ and
  $\calH \defeq \calH_1 \cup \calH_2.$
%   but with disjoint sets of edges
%   $E^{\calG_1} \cap E^{\calG_2} = \emptyset.$ 
  Then, $\calH \preceq_{\kappa} \calG$ with the map
  \[
    \map{\calH}{\calG} \left( \ff^{\calH} =
      (\ff^{\calH_1},\ff^{\calH_2}) \right) \defeq \left(
      \map{\calH_1}{\calG_1} \left( \ff^{\calH_1} \right),
      \map{\calH_2}{\calG_2} \left( \ff^{\calH_2} \right) \right),
  \]
  where $(\ff^{H_1}, \ff^{H_2})$ is the decomposition of
  $\ff^{H}$ onto the supports of $H_1$ and $H_2$.
\end{restatable}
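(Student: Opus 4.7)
The plan is to verify the two defining conditions of $\preceq_{\kappa}$ for the composite map $\map{\calH}{\calG}$ directly, exploiting the fact that both the residue operator and the objective $\obj$ are additive over the disjoint-edge union that defines $\calG$ and $\calH$. The map $\map{\calH}{\calG}$ as written is linear (it is a block-diagonal composition of the two linear maps $\map{\calH_i}{\calG_i}$), so the only content is residue preservation and the energy inequality.

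First I would fix an arbitrary flow $\ff^{\calH} \in \rea^{E^{\calH}}$ and decompose it uniquely as $(\ff^{\calH_1}, \ff^{\calH_2})$ along the partition $E^{\calH} = E^{\calH_1} \sqcup E^{\calH_2}$. Since the incidence matrix $\BB^{\calH}$ decomposes as $\BB^{\calH_1}$ stacked with $\BB^{\calH_2}$ (same vertex set, disjoint edges), we have $(\BB^{\calH})^{\top} \ff^{\calH} = (\BB^{\calH_1})^{\top} \ff^{\calH_1} + (\BB^{\calH_2})^{\top} \ff^{\calH_2}$. By hypothesis, for each $i$ the flow $\ff^{\calG_i} = \map{\calH_i}{\calG_i}(\ff^{\calH_i})$ satisfies $(\BB^{\calG_i})^{\top} \ff^{\calG_i} = (\BB^{\calH_i})^{\top} \ff^{\calH_i}$. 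Summing the two identities yields $(\BB^{\calG})^{\top} \ff^{\calG} = (\BB^{\calH})^{\top} \ff^{\calH}$, which is condition (1).

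Next, for the energy bound, I would use that both $(\gg^{\calG})^{\top}\ff$ and $h_p(\rr^{\calG}, s^{\calG}, \ff)$ split as sums over $E^{\calG_1}$ and $E^{\calG_2}$, giving
\begin{equation*}
\obj^{\calG}\bigl(\tfrac{1}{\kappa}\ff^{\calG}\bigr) = \obj^{\calG_1}\bigl(\tfrac{1}{\kappa}\ff^{\calG_1}\bigr) + \obj^{\calG_2}\bigl(\tfrac{1}{\kappa}\ff^{\calG_2}\bigr),
\end{equation*}
and similarly $\obj^{\calH}(\ff^{\calH}) = \obj^{\calH_1}(\ff^{\calH_1}) + \obj^{\calH_2}(\ff^{\calH_2})$. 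Applying the per-piece hypothesis $\tfrac{1}{\kappa}\obj^{\calH_i}(\ff^{\calH_i}) \leq \obj^{\calG_i}(\tfrac{1}{\kappa}\ff^{\calG_i})$ and summing over $i=1,2$ gives condition (2). Note the same $\kappa$ on both sides is what allows the two inequalities to combine cleanly; if the two hypotheses used different constants $\kappa_1, \kappa_2$, we would only get an inequality with $\max(\kappa_1,\kappa_2)$, which is why the statement uses a single $\kappa$.

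There is essentially no obstacle here: the lemma is a bookkeeping fact reflecting that smoothed $\ell_p$-norm energies are separable over edges and that incidence matrices concatenate across disjoint edge sets. The only thing to be mildly careful about is the notational consistency of the scalar $s$ versus the concatenated $(\ss^{\calG_1},\ss^{\calG_2})$ appearing in the union definition; this is handled by reading $h_p$ with an edgewise weight, after which the additive decomposition is immediate.
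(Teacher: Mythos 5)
Your proof is correct and takes essentially the same route as the paper's: decompose $\ff^{\calH}$ along the disjoint edge partition, use additivity of both the residue operator and the objective over that partition, and then apply the per-piece hypothesis and sum. The extra observations (about a single $\kappa$ being needed and the edgewise reading of the scalar $s$) are accurate but not part of the paper's argument.
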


This notion of approximation also behaves nicely with scaling of
$\ell_2$ and $\ell_p$ resistances.
\begin{restatable}{lemma}{PerturbResistances}
  \label{lem:PerturbResistances}
  For all $\kappa \ge 1,$ and for all pairs of smoothed $p$-norm
  instances, $\calG, \calH,$ on the same underlying graphs,
  \emph{i.e.}, $(V^{\calG}, E^{\calG}) = (V^{\calH}, E^{\calH}),$ such
  that,
  \begin{tight_enumerate}
  \item the gradients are identical, $\gg^{\calG} = \gg^{\calH},$
  \item the $\ell_{2}^2$ resistances are off by at most $\kappa,$ \emph{i.e.},
    $\rr_e^{\calG} \leq \kappa \rr_{e}^{\calH}$ for all edges $e,$ and
  \item the $p$-norm scaling  is off by at most
    $\kappa^{p - 1},$ \emph{i.e.},
    $s^{\calG} \leq \kappa^{p - 1} s^{\calH},$
  \end{tight_enumerate}
  then $\calH \preceq_{\kappa} \calG$ with the identity map.
\end{restatable}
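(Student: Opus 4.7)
The plan is to unwind the definitions and check the two required conditions for the identity map $\map{\calH}{\calG} = \mathrm{Id}$, namely preservation of residues and the energy inequality
\[
  \frac{1}{\kappa} \obj^{\calH}\!\left(\ff^{\calH}\right) \;\leq\; \obj^{\calG}\!\left(\tfrac{1}{\kappa}\ff^{\calG}\right).
\]
Residue preservation is immediate because $\calG$ and $\calH$ share the same underlying graph, so $\BB^{\calG} = \BB^{\calH}$ and hence $(\BB^{\calG})^{\top}\ff = (\BB^{\calH})^{\top}\ff$ for $\ff^{\calG} = \ff^{\calH} = \ff$.

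For the energy inequality, write $\obj^{\calG}(\ff) = (\gg^{\calG})^{\top}\ff - h_p(\rr^{\calG}, s^{\calG}, \ff)$ and similarly for $\calH$. Since $\gg^{\calG} = \gg^{\calH}$, the linear parts on the two sides match exactly: $\tfrac{1}{\kappa}(\gg^{\calH})^{\top}\ff = (\gg^{\calG})^{\top}(\ff/\kappa)$. So the whole claim reduces to the pointwise penalty inequality
\[
  h_p\!\left(\rr^{\calG}, s^{\calG}, \tfrac{1}{\kappa}\ff\right) \;\leq\; \tfrac{1}{\kappa}\, h_p\!\left(\rr^{\calH}, s^{\calH}, \ff\right),
\]
which (because $h_p$ is separable over edges) further reduces to the scalar inequality on each edge $e$:
\[
  \frac{\rr_e^{\calG}}{\kappa^2}\, \ff_e^2 \;+\; \frac{s^{\calG}}{\kappa^p}\, \abs{\ff_e}^p \;\leq\; \frac{\rr_e^{\calH}}{\kappa}\, \ff_e^2 \;+\; \frac{s^{\calH}}{\kappa}\, \abs{\ff_e}^p.
\]

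I would then verify this term-by-term. The quadratic coefficient on the left is at most the one on the right exactly when $\rr_e^{\calG} \leq \kappa\, \rr_e^{\calH}$, which is hypothesis (2). The $p\textsuperscript{th}$-power coefficient satisfies the desired bound exactly when $s^{\calG}/\kappa^{p} \leq s^{\calH}/\kappa$, i.e. $s^{\calG} \leq \kappa^{p-1} s^{\calH}$, which is hypothesis (3). Both terms being individually non-negative ($\rr^{\calG},\rr^{\calH} \geq 0$ and $s^{\calG},s^{\calH} \geq 0$), the edge-wise inequality holds, summing gives the penalty inequality, and combining with the matched linear parts yields the energy inequality, completing the proof.

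There is essentially no obstacle here; the only mild subtlety worth flagging in the write-up is the sign convention (the $h_p$ term enters $\obj^{\calG}$ with a minus sign, so showing $\obj^{\calH}/\kappa \leq \obj^{\calG}(\ff/\kappa)$ amounts to an \emph{upper} bound on the rescaled penalty, which is the right direction given that dividing $\ff$ by $\kappa \geq 1$ shrinks the penalty more sharply for the $p\textsuperscript{th}$-power term than for the quadratic term — exactly what the $\kappa^{p-1}$ slack in hypothesis (3) accounts for).
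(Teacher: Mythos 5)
Your proof is correct and matches the paper's argument: both use the identity map, observe residues are preserved since the underlying graphs agree, note the linear gradient terms cancel exactly, and verify the penalty inequality coefficient-by-coefficient on each edge, with hypotheses (2) and (3) supplying precisely the bounds $\kappa^{-2}\rr_e^{\calG} \le \kappa^{-1}\rr_e^{\calH}$ and $\kappa^{-p}s^{\calG} \le \kappa^{-1}s^{\calH}$. The paper states this more compactly in a single two-line chain of inequalities, but the content is identical.
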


% \begin{restatable}{lemma}{perturbation}
%   \label{lem:perturbation}
%   Given two smoothed $p$-norm instances, $\calG, \calH,$ on the same
%   underlying graphs, \emph{i.e.}, $V^{\calG} = V^{\calH}$ and
%   $E^{\calG} = E^{\calH},$ such that,
%   \begin{tight_enumerate}
%   \item the gradients are identical, $\gg^{\calG} = \gg^{\calH},$
%   \item the $\ell_{2}^2$ resistance off by at most $\kappa$:
%     $\rr_e^{\calG} \leq \kappa \rr_{e}^{\calH}$ for all edges $e,$ and
%   \item the $\ell_{p}^p$ resistance off by at most $\kappa^{p - 1}$:
%     $\ss_e^{\calG} \leq \kappa^{p - 1} \ss_{e}^{\calH}$ for all edges,
%   \end{tight_enumerate}
%   then $\calH \preceq_{\kappa} \calG$ with the identity mapping function.
% \end{restatable}

\subsection{Orthogonal Decompositions of Flows}
\label{sec:CyclePotential}

At the core of our graph decomposition and sparsification
procedures is a decomposition of the gradient $\gg$
of $\calG$ into its cycle space and potential flow space.
We denote such a splitting using
\begin{equation}
\gg^{\calG}
= 
\gghat^{\calG} + \BB^{\calG} \ppsi^{\calG},
\text{ s.t. }~{\BB^{\calG}}^{\top} \gghat^{\calG}=\vzero.
\end{equation}
Here $\gghat$ is a circulation, while
$\BB \ppsi$ gives a potential induced edge value.
We will omit the superscripts when the context is clear.

The following minimization based formulation of this
splitting of $\gg$ is critical to our method of bounding
the overall progress of our algorithm
\begin{fact}
\label{fact:Projection}
The projection of $\gg$ onto the cycle space is obtained
by minimizing the energy added to a potential flow to $\gg$.
Specifically,
\[
\norm{\gghat}_2^2
=
\min_{\xx} \norm{\gg + \BB \xx }_2^2.
\]
\end{fact}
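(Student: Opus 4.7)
The plan is to prove this equality via the orthogonal decomposition already introduced, using the Pythagorean theorem in the ambient space $\rea^{E^{\calG}}$. The key observation is that the cycle space $\ker(\BB^\top)$ and the potential flow space $\mathrm{range}(\BB)$ are orthogonal complements with respect to the standard inner product on $\rea^{E^{\calG}}$: indeed, for any circulation $\cc$ with $\BB^\top \cc = \vzero$ and any potential vector $\xx$, we have $\langle \cc, \BB \xx \rangle = (\BB^\top \cc)^\top \xx = 0$.

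First, I would substitute the decomposition $\gg = \gghat + \BB \ppsi$ into the right-hand side, so that for an arbitrary $\xx$,
\[
\gg + \BB \xx = \gghat + \BB(\ppsi + \xx).
\]
Since $\gghat$ lies in the cycle space and $\BB(\ppsi + \xx)$ lies in the potential flow space, these two vectors are orthogonal. Applying the Pythagorean theorem gives
\[
\norm{\gg + \BB \xx}_2^2 = \norm{\gghat}_2^2 + \norm{\BB(\ppsi + \xx)}_2^2.
\]

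Then I would observe that the second term is non-negative and minimized over $\xx$ by choosing $\xx$ so that $\BB(\ppsi + \xx) = \vzero$ — for instance $\xx = -\ppsi$ always works, yielding value $0$. Hence
\[
\min_{\xx} \norm{\gg + \BB \xx}_2^2 = \norm{\gghat}_2^2,
\]
as claimed. There is no real obstacle here; the statement is a direct consequence of orthogonality of the cycle and potential spaces, and the main thing to be careful about is just that $\BB$ may have a non-trivial kernel (constant potentials), which does not affect the argument since we only need existence of a minimizing $\xx$, not uniqueness.
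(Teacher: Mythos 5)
Your proof is correct, and it is the natural one: the cycle space $\ker(\BB^\top)$ and the potential-flow space $\mathrm{range}(\BB)$ are orthogonal complements, so substituting $\gg = \gghat + \BB\ppsi$ and applying the Pythagorean theorem immediately gives the claim, with the minimum attained at $\xx = -\ppsi$. The paper states this as a bare \textbf{Fact} with no accompanying proof, so there is nothing to compare against; your argument is exactly the proof one would supply, and the remark about $\BB$ having a nontrivial kernel (constant potentials) not mattering because only existence of a minimizer is needed is a sensible sanity check.
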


\begin{lemma}
\label{lem:EnergyDecrease}
Given a graph/gradient instance $\calG$, consider
$\calH$ formed from a subset of its edges.
The projections of $\gg^{\calG}$ and $\gg^{\calH}$ onto
their respective cycle spaces, $\gghat^{\calG}$
and $\gghat^{\calH}$ satsify:
\[
\norm{\gghat^{\calH}}_2^2
\leq
\norm{\gghat^{\calG}}_2^2
\leq
\norm{\gg^{\calG}}_2^2.
\]
\end{lemma}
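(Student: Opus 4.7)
The plan is to deduce both inequalities directly from the variational characterization in Fact~\ref{fact:Projection}, using restriction of potentials from $\calG$ to $\calH$ as a feasible point for the $\calH$-minimization.

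For the right inequality $\norm{\gghat^{\calG}}_2^2 \leq \norm{\gg^{\calG}}_2^2$, I would simply take $\xx = \vzero$ in
\[
\norm{\gghat^{\calG}}_2^2 = \min_{\xx} \norm{\gg^{\calG} + \BB^{\calG} \xx}_2^2,
\]
which yields the claim immediately.

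For the left inequality, the strategy is to exhibit a specific choice of $\xx$ in the analogous minimization over $\calH$ that already achieves a norm at most $\norm{\gghat^{\calG}}_2^2$. First I would write the orthogonal decomposition on $\calG$ as
\[
\gg^{\calG} = \gghat^{\calG} + \BB^{\calG} \ppsi^{\calG}.
\]
Since $\calH$ is obtained by keeping a subset of the edges of $\calG$ (with vertex set $V^{\calH} \subseteq V^{\calG}$ extended arbitrarily if needed), restricting both sides to the edges of $E^{\calH}$ gives $\gg^{\calH} = \gghat^{\calG}\big|_{E^{\calH}} + \BB^{\calH} \ppsi^{\calG}\big|_{V^{\calH}}$, since $\BB^{\calH}$ is the restriction of $\BB^{\calG}$ to the rows indexed by $E^{\calH}$ (and columns by $V^{\calH}$). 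Setting $\xx = -\ppsi^{\calG}\big|_{V^{\calH}}$ in the variational characterization for $\calH$ therefore yields
\[
\norm{\gghat^{\calH}}_2^2 \le \norm{\gg^{\calH} + \BB^{\calH} \xx}_2^2 = \norm{\gghat^{\calG}\big|_{E^{\calH}}}_2^2 \le \norm{\gghat^{\calG}}_2^2,
\]
where the last step uses that dropping coordinates can only decrease the squared $\ell_2$-norm. Chaining with the first inequality gives the lemma.

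There is no substantive obstacle; the only minor care needed is in the bookkeeping around restriction of the potential $\ppsi^{\calG}$ to the vertex set of $\calH$ so that $\BB^{\calH}$ actually acts on it correctly, but this is notational rather than conceptual. The entire argument is essentially the standard fact that orthogonal projection onto a subspace obtained by first restricting coordinates and then projecting can only decrease norms compared to projecting directly.
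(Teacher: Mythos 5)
Your proof is correct. The paper does not actually supply an explicit argument for this lemma, stating it immediately after Fact~\ref{fact:Projection} and evidently regarding it as an immediate consequence; the route you take — picking $\xx = \vzero$ for the right inequality and plugging the restriction of $\ppsi^{\calG}$ to $V^{\calH}$ into the variational characterization for $\calH$, then using that coordinate restriction cannot increase the squared $\ell_2$ norm — is exactly the argument Fact~\ref{fact:Projection} is set up to enable, so you have filled the gap with what is almost certainly the intended proof.
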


%%% Local Variables:
%%% mode: latex
%%% TeX-master: "main"
%%% End:

%!TEX root = main.tex
\section{Numerical Methods}
\label{sec:numerical}

The general idea of (preconditioned) numerical methods, which
are at the core of solvers for graph-structured linear
systems~\cite{SpielmanTengSolver:journal} is to repeatedly
update a current solution in ways that multiplicative reduce
the difference in objective value to optimum.
In the setting of flows, suppose we currently have some tentative
solution $\ff$ to the minimization problem
\begin{align}
\label{eq:pnormmin}
\min_{\BB \ff = \bb} \norm{\ff}_p^p
\end{align}
by performing the step
\[
\ff \leftarrow \ff+\ddelta,
\]
with the goal of improving the objective value substantially.

The work of Adil \etal~\cite{AdilKPS19} proved that $\ell_p$-norm
minimization problems could be iteratively refined.
While that result hinted at a much more general theory of
numerical iterative methods for minimizing convex objectives,
this topic is very much under development.
In this section, we will develop the tools necessary for
preconditioning $\ell_{p}$-norm based functions,
and formalize the requirements for preconditioners
necessary for recursive preconditioning algorithms.

\subsection{Iterative Refinement}
\label{subsec:IterativeRefinement}

The following key Lemma from~\cite{AdilKPS19} allows us to approximate
the change in the smoothed
$p$-norm of $\xx + \ddelta$ relative to the norm of $\xx,$ in
terms of another smoothed $p$-norm of $\ddelta.$

\begin{restatable}[\cite{AdilKPS19}]{lemma}{RestateIterativeRefinementApprox}
	\label{lem:iterative-refinement:approximation}
	For all $\rr, \xx, \ddelta \in \rea^{m},$ with
	$\rr \in \rea_{\ge 0}^{m},$ and $s \ge 0,$ we have
	\[2^{-p} \cdot h_p(\rr + |\xx|^{p-2}, s, \ddelta) \le h_p(\rr, s,
	\xx+ \ddelta) - h_p(\rr, s, \xx) - \ddelta^{\top} \nabla_{\xx}
	h_{p}(\rr, s, \xx) \le 2^{2p} \cdot h_p(\rr + |\xx|^{p-2}, s,
	\ddelta).\]
\end{restatable}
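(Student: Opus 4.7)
The plan is to reduce to the scalar case by separability, handle the quadratic piece exactly, and handle the pure $p$-th power piece by a case analysis based on the relative magnitude of $|\xx_i|$ and $|\ddelta_i|$.

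Since $h_p(\rr,s,\xx)=\sum_i(\rr_i\xx_i^2+s|\xx_i|^p)$ is a sum of scalar functions, the claim, the gradient term, and the target both decompose coordinatewise. So it suffices to prove, for each index $i$ separately, the scalar inequality
\[
2^{-p}\bigl((r+|x|^{p-2})\delta^2+s|\delta|^p\bigr)\;\le\;f(x+\delta)-f(x)-\delta f'(x)\;\le\;2^{2p}\bigl((r+|x|^{p-2})\delta^2+s|\delta|^p\bigr),
\]
where $f(t)=rt^2+s|t|^p$, $r\ge 0$, $s\ge 0$, $p\ge 2$. Summing over $i$ then gives the lemma.

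The quadratic piece $t\mapsto rt^2$ is exact: its Bregman divergence from $x$ to $x+\delta$ equals $r\delta^2$, which already matches the $r\delta^2$ part of the target on the nose. So the work is really to prove the two-sided bound for the function $\phi(t)=|t|^p$, namely
\[
2^{-p}\bigl(|x|^{p-2}\delta^2+|\delta|^p\bigr)\;\le\;\phi(x+\delta)-\phi(x)-p\,|x|^{p-2}x\,\delta\;\le\;2^{2p}\bigl(|x|^{p-2}\delta^2+|\delta|^p\bigr),
\]
and then multiply by $s$ and add to the quadratic contribution (loose constants absorb easily since $2^{-p}\le 1\le 2^{2p}$). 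Note the left-hand-side Bregman divergence of $\phi$ is nonnegative by convexity of $\phi$, which handles the case $x=0$ for free (both sides equal $|\delta|^p$ up to constants).

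For the $\phi$ inequality I would split into two regimes by comparing $|\delta|$ with $|x|$. In the small-step regime $|\delta|\le|x|/2$, Taylor's theorem with integral remainder gives
\[
\phi(x+\delta)-\phi(x)-p|x|^{p-2}x\delta=p(p-1)\int_0^1(1-\tau)\,|x+\tau\delta|^{p-2}\,d\tau\cdot\delta^2,
\]
and since $|x+\tau\delta|\in[|x|/2,3|x|/2]$ throughout, the divergence is $\Theta_p(|x|^{p-2}\delta^2)$ and dominates $|\delta|^p\le(|x|/2)^{p-2}\delta^2$. In the large-step regime $|\delta|>|x|/2$, I use $|x+\delta|^p\le 2^{p-1}(|x|^p+|\delta|^p)\le 2^p|\delta|^p$ and $|x|^{p-2}\delta^2\le 4|\delta|^p$ to control the upper bound; for the lower bound I write $|x+\delta|^p\ge(|\delta|-|x|)^p$ when signs align adversarially and expand, or argue by convexity that the divergence is at least a constant fraction of $\phi(2\delta)-\phi(|x|/2)\asymp|\delta|^p$ after a case split on $\sgn(x)$ versus $\sgn(\delta)$. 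The main obstacle is this sign-split at the transition $|\delta|\approx|x|$: the crude worst-case estimates produce constants that grow like $p^{O(1)}$ or $2^{O(p)}$, and the slackness of the stated $2^{-p}$ versus $2^{2p}$ window is precisely what absorbs those constants, so I would be careful to pick a clean universal case split (for instance $|\delta|\le|x|/2$ vs $|\delta|\ge|x|/2$) and then verify the constants by substitution at the two extremes $\delta=\pm|x|/2$ and $|\delta|=|x|$, rather than trying to get the tightest possible bound.
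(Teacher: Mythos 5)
You take a genuinely different route from the paper. Both reductions agree on the easy parts: separability reduces to scalars, and the Bregman divergence of the quadratic term $rt^2$ is exactly $r\delta^2$. But the paper then normalizes by $x$, writing $\delta' = \delta/x$ and reducing the $|t|^p$ piece to the univariate inequality $2^{-p}(\delta'^2+|\delta'|^p) \le |1+\delta'|^p - 1 - p\delta' \le p2^{p-1}(\delta'^2+|\delta'|^p)$; it proves the upper bound by a four-way case split on $\delta'$ (at $0$ and $\pm1$) via the mean-value theorem, and proves the lower bound by defining $h(\delta') = |1+\delta'|^p - 1 - p\delta' - 2^{-p}(\delta'^2 + |\delta'|^p)$ and showing $h\ge 0$ from $h(0)=h'(0)=0$ together with $h''\ge 0$ on $\delta'\ge 0$ and $h'\le 0$ on $\delta'\le 0$. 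You stay unnormalized and split on $|\delta|$ versus $|x|/2$, with Taylor's integral remainder in the small-step regime and triangle-inequality / $2^{p-1}$-convexity estimates in the large-step regime. Your small-step piece is clean and correct, and the large-step upper bound survives with some bookkeeping of constants (the slack $2^{2p}$ is generous but not infinite — crude $3^p$-type estimates are already close to tight at $p=2$, so you do need the $2^{p-1}(a^p+b^p)$ form of the power-mean bound rather than the naive $(|x|+|\delta|)^p$).

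The genuine gap is in your large-step \emph{lower} bound. The estimate $|x+\delta|^p\ge(|\delta|-|x|)^p$ only has content when $|\delta|\ge|x|$; in the slab $|x|/2<|\delta|<|x|$ with $\sgn(x)\ne\sgn(\delta)$ — which is exactly the ``transition'' region you flag as the obstacle — it degenerates, and the alternative you propose (a loose convexity argument plus ``verify the constants by substitution at the two extremes'') does not establish the inequality on the whole interval: endpoint checking is not a proof without an accompanying monotonicity claim, and you have not supplied one. The inequality does hold there, but your outline does not close it. The paper's normalization + derivative-sign argument handles this region uniformly and is the mechanism you are missing; a repair in your framework would be to observe that $D(\delta)-R(\delta)$, where $D$ is the Bregman divergence of $|t|^p$ at $x$ and $R(\delta)=2^{-p}(|x|^{p-2}\delta^2+|\delta|^p)$, vanishes to second order at $\delta=0$ and has the right second-derivative sign, then argue monotonically out from $0$ — but at that point you have reproduced the paper's $h$-analysis in unnormalized form.
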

The above lemma gives us the following theorem about iteratively
refining smoothed $\ell_p$-norm minimization problems.
While the lemma was essentially proven in ~\cite{AdilKPS19}, they used
slightly different definitions, and for completeness we prove the
lemma in Appendix~\ref{sec:numerical-proofs}.

The following theorem also essentially appeared in \cite{AdilKPS19},
but again as slightly different definitions were used in that paper,
we prove the theorem in Appendix~\ref{sec:numerical-proofs} for completeness.
\begin{restatable}[\cite{AdilKPS19}]{theorem}{Residual}
	\label{thm:iterative-refinement:residual}
	Given the following optimization problem,
	\begin{align} \tag{P1}
	\label{eq:iterative-refinement:original}
	\begin{array}{rl}
	\max_{\xx} \quad & \obj_1(\xx) \defeq \gg^{\top}\xx - h_p(\rr,
	s, \xx) \\
	\text{s.t.} \quad & \AA\xx = \bb
	\end{array}
	\end{align}
	and an initial feasible solution $\xx_0,$ we can construct the
	following residual problem:
	\begin{align} \tag{R1}
	\label{eq:iterative-refinement:residual}
	\begin{array}{rl}
	\max_{\ddelta} \quad & \obj_2(\ddelta) \defeq
	(\gg')^{\top}\ddelta - h_p(\rr', s, \ddelta) \\
	\text{s.t.} \quad & \AA\ddelta = \vzero,
	\end{array}
	\end{align}
	where
	$\gg' = 2^{p}\left( \gg - \nabla_{\xx}h(\rr, s, \xx)|_{\xx = \xx_0}
	\right),$ and $\rr' = \rr + s \abs{\xx_0}^{p-2}.$
	
	There exists a feasible solution $\widetilde{\ddelta}$ to the
	residual problem \ref{eq:iterative-refinement:residual} that
	achieves an objective of
	$\obj_{2} \left( \widetilde{\ddelta} \right) \ge
	2^{p}(\obj_1(\xx^{\star}) - \obj_1(\xx_0)),$ where $\xx^{\star}$ is
	an optimal solution to
	problem~\ref{eq:iterative-refinement:original}.
	
	Moreover, given any feasible solution $\ddelta$ to Program
	\ref{eq:iterative-refinement:residual}, the vector
	$\xx_1 \defeq \xx_0 + {2^{-3p}}\ddelta$ is a feasible
	solution to the Program~\ref{eq:iterative-refinement:original} and
	obtains the objective
	\[\obj_1(\xx_1) \ge \obj_1(\xx_0) + {2^{-4p}} \obj_2(\ddelta).\]
\end{restatable}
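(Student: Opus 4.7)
My plan is to prove the two assertions in the theorem by directly applying Lemma~\ref{lem:iterative-refinement:approximation} (the smoothed $p$-norm Bregman-divergence estimate) to two concrete choices of step: once with $\ddelta = \xx^{\star} - \xx_0$ for the existence claim, and once with $\ddelta$ scaled down by $2^{-3p}$ for the progress claim. In both cases feasibility is immediate because $\AA \ddelta = \vzero$ corresponds exactly to $\AA \xx = \bb$ after shifting by $\xx_0$.

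For the first statement, I would take $\widetilde{\ddelta} = \xx^{\star} - \xx_0$. Then
\[
\obj_1(\xx^{\star}) - \obj_1(\xx_0) = \gg^{\top}\widetilde{\ddelta} - \bigl( h_p(\rr, s, \xx_0 + \widetilde{\ddelta}) - h_p(\rr, s, \xx_0) \bigr).
\]
The \emph{lower} bound half of Lemma~\ref{lem:iterative-refinement:approximation} gives
\[
h_p(\rr, s, \xx_0 + \widetilde{\ddelta}) - h_p(\rr, s, \xx_0) \ge \widetilde{\ddelta}^{\top} \nabla_{\xx} h_p(\rr, s, \xx_0) + 2^{-p} \, h_p(\rr', s, \widetilde{\ddelta}),
\]
and substituting this yields the inequality $\obj_1(\xx^{\star}) - \obj_1(\xx_0) \le (\gg - \nabla h_p)^{\top}\widetilde{\ddelta} - 2^{-p} h_p(\rr', s, \widetilde{\ddelta})$. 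Multiplying through by $2^p$ and using the definition $\gg' = 2^p(\gg - \nabla h_p)$ gives exactly $2^p(\obj_1(\xx^{\star}) - \obj_1(\xx_0)) \le \obj_2(\widetilde{\ddelta})$, as required.

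For the second statement, with $\xx_1 = \xx_0 + 2^{-3p}\ddelta$ I would expand
\[
\obj_1(\xx_1) - \obj_1(\xx_0) = 2^{-3p} \gg^{\top}\ddelta - \bigl( h_p(\rr, s, \xx_0 + 2^{-3p}\ddelta) - h_p(\rr, s, \xx_0) \bigr),
\]
then apply the \emph{upper} bound half of Lemma~\ref{lem:iterative-refinement:approximation} to the $h_p$ difference, with the scaled increment $2^{-3p}\ddelta$ in place of $\ddelta$. The key scaling fact I will use is that for any $c \in [0,1]$ and $p \ge 2$ one has $h_p(\rr', s, c\ddelta) \le c^2 h_p(\rr', s, \ddelta)$, since $c^p \le c^2$ on $[0,1]$. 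Applied with $c = 2^{-3p}$ this gives $2^{2p} h_p(\rr', s, 2^{-3p}\ddelta) \le 2^{-4p} h_p(\rr', s, \ddelta)$. Combining the two estimates yields
\[
\obj_1(\xx_1) - \obj_1(\xx_0) \ge 2^{-3p}(\gg - \nabla h_p)^{\top} \ddelta - 2^{-4p} h_p(\rr', s, \ddelta) = 2^{-4p}\,\obj_2(\ddelta),
\]
which is the desired bound.

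The only nontrivial step is checking that the chosen step size $2^{-3p}$ is small enough to absorb the multiplicative slack $2^{2p}$ in the upper approximation bound, while still being large enough that when we invert the scaling we recover a positive fraction ($2^{-4p}$) of $\obj_2(\ddelta)$. The sub-quadratic homogeneity $h_p(\rr', s, c\ddelta)\le c^2 h_p(\rr', s, \ddelta)$ for $c\le 1$, $p\ge 2$ is the precise reason this balances: a step-size contraction of $c$ shrinks the nonlinear penalty by $c^2$ while shrinking the linear gain by only $c$, so picking $c = 2^{-3p}$ makes the penalty dominated by $c \cdot 2^{-p}$ relative to the gain. No homotopy or optimization beyond this single tight calculation is needed, so I expect no substantive obstacle beyond bookkeeping of the constants.
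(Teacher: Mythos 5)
Your proposal is correct and matches the paper's proof essentially line by line: the paper also sets $\widetilde{\ddelta}=\xx^\star-\xx_0$ and applies the lower half of Lemma~\ref{lem:iterative-refinement:approximation}, and for the second claim applies the upper half together with the scaling bound $h_p(\rr',s,c\ddelta)\le c^2 h_p(\rr',s,\ddelta)$ for $c\in[0,1]$ (recorded there as Lemma~\ref{lem:iterative-refinement:rescaling}) with $c=2^{-3p}$ to obtain $2^{2p}\cdot 2^{-6p}=2^{-4p}$. No substantive differences.
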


Importantly, we can apply the above theorem to smoothed $p$-norm
flow/circulation problems.
\begin{restatable}[Iterative refinement for smoothed $p$-norm
	flow/circulation problems]{corollary}{IterativeRefinementFlow}
	\label{cor:iterative-refinement:flow}
	Given any smoothed $p$-norm flow problem $(\calG, \bb)$ with optimal
	objective $\obj^\star(\calG),$ and any initial circulation $\ff_0,$
	we can build, in $O(\abs{E^{\calG}})$ time, a smoothed $p$-norm
	circulation problem $\calH$ with the same underlying graph
	$(V^{\calH}, E^{\calH}) = (V^{\calG}, E^{\calG}),$ such that
	$\obj^{\star}(\calH) \ge 2^{p} (\obj^{\star}(\calG) -
	\obj^{\calG}(\ff_0))$ and for any circulation $\ff^{\calH}$ on
	$\calH,$ the flow $\ff_1 \defeq \ff_0 + 2^{-3p} \ff^{\calH}$
	satisfies residues $\bb$ on $\calG$ and has an objective
	\[\obj^{\calG}(\ff_1) \ge \obj^{\calG}(\ff_0) + 2^{-4p} \obj^{\calH}(\ff^{\calH}).\]
\end{restatable}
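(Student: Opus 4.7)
The plan is to apply Theorem~\ref{thm:iterative-refinement:residual} directly, viewing the smoothed $p$-norm flow problem $(\calG, \bb)$ as an instance of Program~\ref{eq:iterative-refinement:original}. Specifically, I would set $\AA = (\BB^{\calG})^{\top}$, $\gg = \gg^{\calG}$, $\rr = \rr^{\calG}$, $s = s^{\calG}$, and $\xx_0 = \ff_0$, so that the objective $\obj^{\calG}(\ff) = (\gg^{\calG})^{\top}\ff - h_p(\rr^{\calG}, s^{\calG}, \ff)$ matches $\obj_1$ and the residue constraint matches $\AA\xx = \bb$. (Here I note the mild typo in the statement: $\ff_0$ must be a flow with residues $\bb$, not a circulation, since otherwise $\ff_1$ could not have residues $\bb$.)

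Theorem~\ref{thm:iterative-refinement:residual} then produces a residual problem with gradient $\gg' = 2^{p}(\gg^{\calG} - \nabla_{\xx} h_p(\rr^{\calG}, s^{\calG}, \xx)|_{\xx = \ff_0})$ and weights $\rr' = \rr^{\calG} + s^{\calG}|\ff_0|^{p-2}$. I would then define $\calH$ as the smoothed $p$-norm instance on the same underlying graph $(V^{\calG}, E^{\calG})$, with gradient $\gg^{\calH} = \gg'$, resistances $\rr^{\calH} = \rr'$, and scaling $s^{\calH} = s^{\calG}$. Because $\calH$ shares $\calG$'s edge-vertex incidence matrix, the residual constraint $\AA\ddelta = \vzero$ is exactly the circulation constraint $(\BB^{\calH})^{\top}\ddelta = \vzero$, and the residual objective $\obj_2(\ddelta)$ coincides with $\obj^{\calH}(\ddelta)$. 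Hence the residual program is literally the smoothed $p$-norm circulation problem on $\calH$.

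From this identification, the two bounds of the corollary follow immediately from the two conclusions of Theorem~\ref{thm:iterative-refinement:residual}. The existence of $\widetilde{\ddelta}$ with $\obj_2(\widetilde{\ddelta}) \ge 2^{p}(\obj_1(\xx^{\star}) - \obj_1(\xx_0))$ gives $\obj^{\star}(\calH) \ge 2^{p}(\obj^{\star}(\calG) - \obj^{\calG}(\ff_0))$. For the second bound, given any circulation $\ff^{\calH}$ on $\calH$, the flow $\ff_1 \defeq \ff_0 + 2^{-3p}\ff^{\calH}$ satisfies $(\BB^{\calG})^{\top}\ff_1 = \bb$ (adding a circulation preserves residues) and achieves $\obj^{\calG}(\ff_1) \ge \obj^{\calG}(\ff_0) + 2^{-4p}\obj^{\calH}(\ff^{\calH})$ by the theorem's second conclusion. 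Finally, $\gg'$ and $\rr'$ can each be computed entry-wise in a single pass over the edges, giving an $O(|E^{\calG}|)$ construction time.

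There is no real obstacle here: the corollary is a bookkeeping translation of the abstract convex-optimization statement of Theorem~\ref{thm:iterative-refinement:residual} into the graph-flow language set up in Section~\ref{sec:prelims}. The only thing to verify carefully is that the linear constraint $\AA \ddelta = \vzero$ coming out of the residual construction, which encodes \emph{preserving the residues $\bb$}, coincides with the circulation constraint on $\calH$ precisely because $\calH$ is built on the same graph as $\calG$.
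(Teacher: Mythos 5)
Your proof is correct and matches the intended (implicit) derivation in the paper: the corollary is a direct instantiation of Theorem~\ref{thm:iterative-refinement:residual} with $\AA = (\BB^{\calG})^{\top}$, $\gg = \gg^{\calG}$, $\rr = \rr^{\calG}$, $s = s^{\calG}$, and $\xx_0 = \ff_0$, after which the residual program's constraint $\AA\ddelta = \vzero$ is exactly the circulation constraint on $\calH$ and the residual objective $\obj_2$ coincides with $\obj^{\calH}$. Your observation that $\ff_0$ must be a flow satisfying residues $\bb$ rather than a circulation is also correct, and is consistent with how the corollary is invoked in \textsc{RecursivePreconditioning} and \textsc{pFlows}, where the initial point is an approximate $\ell_2$-minimizing flow with residues $\bb$.
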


% $
% \min_{\BB \delta = \vzero} f_{\ff}(\ddelta)
% $\todo{$f_{\ff}(\ddelta)$ reads a little awkward, and out of context (raised by reviewer), maybe give a different name, and briefly say it's some function approximating the change in objective after update by $\ddelta$, and point to Eq.\ref{eq:stepProblem}}
This means if we find even a crude approximate minimizer $\ddeltatil$
of this update problem, we can move to a new point $\ff' =\ff+\ddeltatil$, 
so that the gap to the optimum in the original optimization
problem~\eqref{eq:pnormmin} will decrease by a constant factor
(depending only on $p$) from $\norm{\ff}_p^p-\text{OPT}$
to  $\norm{\ff'}_p^p-\text{OPT} \leq (1-2^{-O(p)}) (
  \norm{\ff}_p^p-\text{OPT} )$.
In other words, we have a kind of iterative refinement: crude
solutions to an update problem directly give constant factor progress in the
original objective.

Note that $\norm{\ff}_p^p = \sum_i \ff_i^p$.
This will help us understand the
objective function of the update problem coordinate-wise.
Our update problem objective function is motivated by the following observations.
Our function differs slightly from the function used in \cite{AdilKPS19}, which in turn was based on functions from~\cite{BubeckCLL18}, but our function still uses a few special properties of the \cite{BubeckCLL18} functions. 
Suppose $p \geq 2$ is an even integer (only to avoid writing absolute values), then
\[
  \ff_i^p 
  + p \ff_i^{p-1} \ddelta_i
  +2^{-O(p)}\underbrace{ ( \ff_i^{p-2} \ddelta_i^2 + \ddelta_i^p)
  }_{\text{write as } h_p(\ff_i^{p-2}, \ddelta_i)}
  \leq 
   (\ff_i + \ddelta_i)^p
 \leq 
 \ff_i^p 
 + p \ff_i^{p-1} \ddelta_i
  +2^{O(p)}\underbrace{( \ff_i^{p-2} \ddelta_i^2 +
    \ddelta_i^p)}_{h_p(\ff_i^{p-2}, \ddelta_i)}
% \substack{h_p(\ff_i^{p-2}, \ddelta_i) \\ \text{``second and pth order"}}}
\]
% \todo{should get the math right}
% \[
%   2^{-p} \cdot h_p(\rr + |\ff|^{p-2}, s, \ddelta) \le h_p(\rr, s,
%     \ff+ \ddelta) - h_p(\rr, s, \ff) - \ddelta^{\top} \nabla_{\ff}
%     h_{p}(\rr, s, \ff) \le 2^{2p} \cdot h_p(\rr + |\ff|^{p-2}, s,
%     \ddelta)
% \]
Of course, the exact expansion gives
\begin{equation}
\label{eq:pexpand}
(\ff_i + \ddelta_i)^p =\ff_i^p + p \ff_i^{p-1} \ddelta_i +
\frac{p(p-1)}{2} \ff_i^{p-2} \ddelta_i^2 + \frac{p(p-1)(p-2)}{6}
\ff_i^{p-3} \ddelta_i^3 
+\ldots
+\ddelta_i^p
\end{equation}
So essentially we can approximate this expansion using
only the zeroth, first, second, and \emph{last} term in the expansion.
We use $\gg(\ff)$ to denote the vector with $\gg_i(\ff) = p \ff_i^{p-1}$
(i.e. the gradient), and let $\ff^{p-2}$ denote the entrywise powered
vector, and define $h_p(\ff^{p-2}, \ddelta) = \sum_i h_p(\ff_i^{p-2},
\ddelta_i)$.
Thus we have 
\[
\norm{\ff}_p^p + \gg(\ff)^\top \ddelta + 2^{-O(p)} h_p(\ff^{p-2}, \ddelta)
\leq
\norm{\ff+\ddelta}_p^p
\leq
\norm{\ff}_p^p + \gg(\ff)^\top \ddelta + 2^{O(p)} h_p(\ff^{p-2}, \ddelta)
\]
Note that for any scalar $0 < \lambda < 1$,
\[
\lambda^p h_p(\ff^{p-2}, \ddelta) \leq h_p(\ff^{p-2}, \lambda\ddelta) \leq \lambda^2 h_p(\ff^{p-2}, \ddelta)
\]
Together, these observations are enough to ensure that if we have
$\ddeltatil$ which is a constant factor approximate solution  to the
follow optimization problem, which we define as our \emph{update problem}
\begin{equation}
\label{eq:stepProblem}
\min_{\BB \ddelta = \vzero}  \gg(\ff)^\top \ddelta + h_p(\ff^{p-2},
\ddelta) 
\end{equation}
then we can find a $\lambda$ s.t. 
$\norm{\ff +\lambda \ddeltatil }_p^p-\text{OPT} \leq (1-2^{-O(p)}) (
  \norm{\ff}_p^p-\text{OPT} )$.

But what have we gained? Why is Problem~\eqref{eq:stepProblem} more tractable
than the one we started with?

A key reason is that unlike the exact expansion of an update as given
by Equation~\eqref{eq:pexpand}, all the higher order terms in the
objective function of \eqref{eq:stepProblem} are coordinate-wise even
functions, i.e. flipping the sign of a coordinate does not change the value of the function.
\cite{AdilKPS19} used a different but still even function instead of our
$h_p$.
This symmetrization allowed them to develop a multiplicative weights update algorithm motivated by~\cite{ChristianoKMST10} for their version of
Problem~\eqref{eq:stepProblem}, reducing the problem to solving a
sequence of linear equations.

For our choice of $h_p$, it is particularly simple to show another
very important property: 
Consider solving Problem~\eqref{eq:stepProblem} by \emph{again}
applying iterative refinement to this problem.
% Suppose we define 
% $f_{\ff}(\ddelta) = \gg(\ff)^\top \ddelta + h_p(\ff^{p-2},\ddelta)$,
% and consider solving $\min_{\BB \delta = \vzero} f_{\ff}(\ddelta)$ by
% iterative refinement.
That is, at some intermediate step with $\delta$ being the current solution, we aim to find an update $\ddeltahat$ s.t. $\BB \ddeltahat =
\vzero$ and $\gg(\ff)^\top (\ddelta+\ddeltahat) +
h_p(\ff^{p-2},\ddelta+\ddeltahat) $ is smaller than $\gg(\ff)^\top (\ddelta) +
h_p(\ff^{p-2},\ddelta) $.
Then by expanding the two non-linear terms of $h_p(\ff_i^{p-2},\ddelta_i)$, 
i.e. $(\ddelta_i + \ddeltahat)^p$ \emph{and} $(\ddelta_i
+ \ddeltahat)^2$,  
similar to Equation~\eqref{eq:pexpand}, we get
a sequence of terms with powers of $\ddelta_i$ ranging from $2$ to $p$.
If we approximate this sequence again using only the $\ddeltahat_i^2$ and
$\ddeltahat_i^p$ terms, we get another update problem.
This update problem is an instance of 
Problem~\eqref{eq:intro:problem}. 
And in general, we can set up iterative refinement update problems for
instances of Problem~\eqref{eq:intro:problem}, and get back another
problem of the that class (after our approximation based on dropping intermediate terms).
%  and $f_{\ff}(\ddelta+\ddeltahat)$ is substantially smaller than
% $f_{\ff}(\ddelta)$.
% Then by expanding the two non-linear terms of $h_p(\ff_i^{p-2},\ddelta_i)$, 
% i.e. $(\ddelta_i + \ddelta_i')^p$ \emph{and} $(\ddelta_i
% + \ddelta_i')^2$,  
% similar to Equation~\eqref{eq:pexpand}, we get
% a sequence of terms with powers of $\ddelta_i$ ranging from $2$ to $p$.
% If we approximate this sequence again using only the $\ddeltahat_i^2$ and
% $\ddeltahat_i^p$ terms, we get another update problem of the
% form~\eqref{eq:stepProblem} (with different parameters), this time for
% choosing $\ddeltahat$.
% And again, solving the refinement problem will allow us to make
% constant factor progress.
Thus, the problem class~\eqref{eq:intro:problem} is closed under
repeatedly creating iterative update problems.
% , as we show in
% Lemma~\todo{which lemma}.
This observation is central because it allows us to develop recursive
algorithms.

%!TEX root = main.tex

\subsection{Vertex Elimination}
\label{subsec:elimination}

Following the template of the Spielman-Teng Laplacian solver, we
recursively solve a problem on $m$ edges by solving
about $\kappa$ problems on graphs with $n -1 + m / \kappa$
edges.
These ultra-sparse graphs allow us to eliminate degree 1 and 2
vertices and obtain a smaller problem.
Because our update problem~(Problem~\eqref{eq:stepProblem}) corresponds to a flow-circulation problem
with some objective, we are able to understand elimination on these
objectives in a relatively simple way: the flow on degree 1 and 2
vertices is easily related to flow in a smaller graph created by elimination.
Unlike Spielman-Teng, every recursive call must rely on a new
graph sparsifier, because the ``graph'' we sparsify  depends heavily on
the current solution that we are seeking to update: we have to
simultaneously preserve linear, quadratic and $p$-th order terms,
whose weights depend on the current solution.

A critical component of this schema is the mapping of flows
back and forth between the original graph and the new
graph so a good solution on a smaller graph can be transformed into a
good solution on the larger graph.
These mappings are direct analogs of eliminating degrees $1$
and $2$ vertices.
In Appendix~\ref{sec:Elimination}, we generalize these processes
to smoothed $\ell_{p}$-norm objectives, proving the following statements:
\begin{restatable}[Eliminating vertices with degree 1 and 2]{theorem}{elimination}
	\label{thm:elimination}
	Given a smoothed $p$-norm instance $\calG,$ the algorithm
	$\Eliminate(\calG)$ returns another smoothed $p$-norm instance
	$\calG',$ along with the map $\map{\calG'}{\calG}$ in
	$O(\abs{V^{\calG}} + \abs{E^{\calG}})$ time, such that the graph
	$G' = (V^{\calG'},E^{\calG'})$ is obtained from the graph
	$G = (V^{\calG},E^{\calG})$ by first repeatedly removing vertices
	with non-selfloop degree\footnote{By non-selfloop degree, we mean
		that self-loops do not count towards the degree of a vertex.} 1 in
	$G$, and then replacing every path $u \leadsto v$ in $G$ where all
	internal path vertices have non-selfloop degree exactly 2 in $G,$
	with a new edge $(u, v).$
	
	Moreover,
	\[\calG' \circapprox_{n^{\frac{1}{p-1}}} \calG \circapprox_1
	\calG',\] where $n = \abs{V^{\calG}},$ and the map
	$\map{\calG'}{\calG}$ can be applied in
	$O(\abs{V^{\calG}} + \abs{E^{\calG}})$ time.
\end{restatable}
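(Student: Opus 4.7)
The plan is to analyze $\Eliminate$ as two stages performed in sequence—repeated removal of non-selfloop degree-$1$ vertices, then contraction of maximal chains of non-selfloop degree-$2$ internal vertices—and then to compose the resulting approximations via Lemma~\ref{lem:approximations:composition}. Write $\calG_1$ for the instance after Stage 1 and $\calG'$ for the final output. For Stage 1, if $v$ has a unique non-selfloop incident edge $e$, then $(\BB^\calG)^\top \ff = \vzero$ forces $\ff_e = 0$ in every circulation, so dropping $v$ and $e$ induces a linear bijection on circulation spaces that preserves $\obj^{\calG}$ exactly; iterating through cascades of newly-created degree-$1$ vertices gives $\calG \circapprox_1 \calG_1 \circapprox_1 \calG$. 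Self-loops at a removed vertex are independent one-edge circulations and can be reattached to an arbitrary surviving vertex without affecting any $\circapprox$ comparison.

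For Stage 2, along a maximal chain $v_0, v_1, \ldots, v_k$ whose internal vertices all have non-selfloop degree $2$, conservation at each internal vertex forces a common signed flow: there exist orientation signs $\sigma_i \in \{\pm 1\}$, matching each edge $e_i$ to the traversal from $v_0$ to $v_k$, such that $\ff_{e_i} = \sigma_i f$ on every $\calG_1$-circulation. I would replace the chain by a single edge $(v_0, v_k)$ with $\gg' = \sum_{i=1}^k \sigma_i \gg_i$, $\rr' = \sum_{i=1}^k \rr_i$, and the same scalar $s$. The map $\map{\calG'}{\calG_1}$ spreads $f$ back as $\sigma_i f$ along the chain (identity elsewhere), and its inverse reads off the common value $f$ from each chain of any $\calG_1$-circulation. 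A direct per-chain computation gives
\[
  \obj^{\calG'}(f) - \obj^{\calG_1}(\ff) = s(k-1)|f|^p \ge 0,
\]
which yields $\calG_1 \circapprox_1 \calG'$ immediately, and
\[
  \kappa\,\obj^{\calG_1}(\ff/\kappa) - \obj^{\calG'}(f) = \left(\sum_{i=1}^k \rr_i\right) f^2 \left(1-\frac{1}{\kappa}\right) + s|f|^p\left(1-\frac{k}{\kappa^{p-1}}\right),
\]
which is non-negative whenever $\kappa \ge 1$ and $\kappa^{p-1} \ge k$. Since every chain has $k \le n$, a single choice $\kappa = n^{1/(p-1)}$ works for all chains simultaneously, so $\calG' \circapprox_{n^{1/(p-1)}} \calG_1$; composing with the Stage 1 equivalences via Lemma~\ref{lem:approximations:composition} yields the claimed approximations.

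The main technical care goes into bookkeeping the orientation signs $\sigma_i$ so that the linear gradient $\gg'$ telescopes correctly along arbitrarily oriented chains, and confirming that a single $\kappa$ handles all chains at once—which works because the per-chain constraint is monotone in $\kappa$ and the longest chain dominates. The $O(|V^\calG|+|E^\calG|)$ running time follows from standard queue-based leaf stripping in Stage 1 and a single linear pass enumerating maximal degree-$2$ chains in Stage 2; applying $\map{\calG'}{\calG}$ extends its input with zeros on Stage-$1$-removed edges and replicates the chain value with signs along Stage-$2$-contracted chains, so the total work is linear in the input size.
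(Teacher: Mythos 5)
Your proof is correct and follows essentially the same approach as the paper: decompose into a degree-$1$-stripping stage (exact, since circulations are zero on pendant edges) and a path-contraction stage, observe that the gradient and $\ell_2^2$ terms are preserved exactly while the $\ell_p^p$ term is shrunk by a factor of at most $n$, and cover that gap with a scaling $\kappa = n^{1/(p-1)}$. The only cosmetic differences are that the paper assumes consistent edge orientations along each chain whereas you carry explicit signs $\sigma_i$, and you do the algebra per chain rather than summed over all of $E^{\calG'}$; both are equivalent.
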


\begin{restatable}[Eliminating Self-loops]{lemma}{selfLoops}
	\label{lem:elimination:selfLoops}
	There is an algorithm {\RemoveLoops} such that, given a smoothed
	$p$-norm instance $\calG$ with self-loops in $E^{\calG},$ in
	$O(\abs{V^{\calG}} + \abs{E^{\calG}})$ time, it returns instances
	$\calG_1, \calG_2,$ such that $\calG = \calG_1 \cup \calG_2,$ where
	$\calG_1$ is obtained from $\calG$ by eliminating all self-loops
	from $E^{\calG},$ and $\calG_2$ is an instance consisting of just
	the self-loops from $\calG.$ Thus, any flow $\ff^{\calG_2}$ on
	$\calG_2$ is a circulation.
	
	Moreover, there is an algorithm $\SolveLoops$ that, given $\calG_2,$
	for any $\delta \le \nfrac{1}{p},$ in time
	$O(|E^{\calG_2}| \log \nfrac{1}{\delta}),$ finds a circulation
	$\fftilde{}^{\calG_2}$ on $\calG_2$ such that
	\[
	\obj^{\calG_2}(\fftilde{}^{\calG_2}) \ge (1-\delta)
	\max_{\ff^{\calG} : (\BB)^{\calG} \ff^{\calG} = \vzero}
	\obj^{\calG_2}(\ff^{\calG_2}).
	\]
\end{restatable}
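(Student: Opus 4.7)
The plan is to handle the two algorithms separately; both are essentially routine once one uncovers the right structure.

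For \RemoveLoops, I would simply scan through $E^{\calG}$ once, placing each edge whose two endpoints coincide into $E^{\calG_2}$ and every other edge into $E^{\calG_1}$, while keeping the associated $\gg, \rr, s$ entries with each edge. Since $V^{\calG_1} = V^{\calG_2} = V^{\calG}$ and the two edge sets partition $E^{\calG}$, the definition of union instances gives $\calG = \calG_1 \cup \calG_2$ immediately. The fact that any $\ff^{\calG_2}$ is a circulation is a one-line observation: the column of $\BB^{\calG_2}$ corresponding to a self-loop $(u,u)$ is $\vone_u - \vone_u = \vzero$, so $(\BB^{\calG_2})^{\top} \ff^{\calG_2} = \vzero$ identically.

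For \SolveLoops, the crucial structural observation is that because the incidence matrix $\BB^{\calG_2}$ is identically zero, the residue constraint imposes no coupling between edges, and the objective $\obj^{\calG_2}(\ff) = (\gg^{\calG_2})^{\top} \ff - h_p(\rr^{\calG_2}, s, \ff)$ decomposes as a sum over self-loops. Thus maximizing $\obj^{\calG_2}$ reduces to $|E^{\calG_2}|$ independent scalar problems, one for each edge $e$:
\[
\max_{f \in \rea}\ \phi_e(f) \defeq \gg_e f - \rr_e f^2 - s |f|^p.
\]
Each $\phi_e$ is strictly concave, $\phi_e(0) = 0$, and (WLOG $\gg_e \ge 0$, negating $f$ otherwise) the maximizer satisfies $f_e^{\star} \ge 0$ and $\phi_e'(f_e^{\star}) = \gg_e - 2\rr_e f_e^{\star} - s p (f_e^{\star})^{p-1} = 0$. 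This gives the explicit upper bound $f_e^{\star} \le \min\bigl( \gg_e/(2\rr_e),\ (\gg_e/(sp))^{1/(p-1)} \bigr)$, yielding a bounded search window $[0, M_e]$. I would then run binary search on the unique zero of the strictly decreasing $\phi_e'$, halving the window each step; after $O(\log 1/\delta)$ iterations one obtains $\widetilde{f}_e$ with $|\widetilde{f}_e - f_e^{\star}| \le \epsilon \cdot f_e^{\star}$ for $\epsilon$ as small as we wish. Summing over all self-loops gives the claimed running time.

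The main technical step is the approximation analysis: converting an additive error in $f_e$ into a multiplicative $(1-\delta)$ error in $\phi_e$. Here the convenient identity, obtained by substituting the optimality condition into $\phi_e(f_e^{\star})$, is
\[
\phi_e(f_e^{\star}) = \rr_e (f_e^{\star})^2 + s(p-1) (f_e^{\star})^p.
\]
Combined with Taylor's theorem and $\phi_e'(f_e^{\star}) = 0$,
\[
\phi_e(f_e^{\star}) - \phi_e(\widetilde{f}_e) = -\tfrac{1}{2}\phi_e''(\xi)(\widetilde{f}_e - f_e^{\star})^2 \le \bigl(\rr_e + \tfrac{1}{2} s p(p-1) (2 f_e^{\star})^{p-2}\bigr) \epsilon^2 (f_e^{\star})^2,
\]
for some $\xi$ between $\widetilde{f}_e$ and $f_e^{\star}$. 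Dividing through by the explicit form of $\phi_e(f_e^{\star})$ shows that choosing $\epsilon$ polynomially small in $\delta, 1/p$ suffices to guarantee $\phi_e(\widetilde{f}_e) \ge (1-\delta)\phi_e(f_e^{\star})$, under the hypothesis $\delta \le 1/p$. Since this target precision on $\epsilon$ requires only $O(\log 1/\delta)$ bisection steps, the total cost is $O(|E^{\calG_2}| \log 1/\delta)$ as claimed. The only genuine obstacle is bookkeeping the degenerate cases $\gg_e = 0$ (where $f_e^{\star} = 0$ and the approximation guarantee is vacuous) and $\rr_e = 0$ (where the $\ell_p$ term alone provides strong concavity near $f_e^{\star}$), but both are handled by the same formula above.
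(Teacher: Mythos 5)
Your overall plan is correct and closely mirrors the paper's own proof: \RemoveLoops\ is a one-pass edge partition, \SolveLoops\ exploits the coordinate-wise decoupling of the self-loop objective, and the core of the work is a per-edge binary search whose window is bounded via the first-order optimality condition $\gg_e = 2\rr_e f_e^\star + sp(f_e^\star)^{p-1}$. Your identity $\phi_e(f_e^\star) = \rr_e (f_e^\star)^2 + s(p-1)(f_e^\star)^p$ is correct and pleasant, and replaces the paper's lower bound $\max_{f_e}\phi_e(f_e) \ge \frac14 z\gg_e$. However, there are two genuine gaps in the way you close the argument, both concentrated in the precision analysis.

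First, the Taylor bound $|\xi|^{p-2} \le (2f_e^\star)^{p-2}$ is too loose: it injects a $2^{p-2}$ factor into the second term of your remainder, so ``dividing through by $\phi_e(f_e^\star)$'' actually forces $\epsilon^2 \le O(\delta/(p\,2^{p}))$, which is exponentially, not polynomially, small in $p$. With that $\epsilon$ the bisection count is $O(p + \log(1/\delta))$, not the claimed $O(\log(1/\delta))$ (note $\delta \le 1/p$ only gives $\log(1/\delta) \ge \log p$, not $\ge p$). The fix is to observe that since you already need $\epsilon \le 1/p$, the intermediate point satisfies $\xi \le (1+\epsilon)f_e^\star$, so $|\xi|^{p-2} \le (1+1/p)^{p-2}(f_e^\star)^{p-2} \le e\,(f_e^\star)^{p-2}$; then $\epsilon = \Theta(\sqrt{\delta/p})$ suffices and the iteration count collapses back to $O(\log(p/\delta)) = O(\log(1/\delta))$. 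The paper's proof sidesteps this issue entirely by using the mean-value theorem with the \emph{first} derivative on the interval $[(1-\delta)f_e^\star, (1+\delta)f_e^\star]$, where the worst-case derivative is bounded by $O(\gg_e)$ with no exponential-in-$p$ constants; your second-order route is workable but has a sharper edge to it.

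Second, and more substantively, you assert the bisection achieves \emph{relative} accuracy $|\widetilde{f}_e - f_e^\star| \le \epsilon f_e^\star$ after $O(\log 1/\delta)$ halvings of $[0, M_e]$ ``for $\epsilon$ as small as we wish,'' but a bisection only controls the \emph{absolute} error: after $k$ steps the window has width $M_e/2^k$. To convert that into relative error you need a lower bound on $f_e^\star$ in terms of $M_e$, and this is exactly the argument the paper makes (and you omit): it shows that $\phi_e'(f) \ge 0$ for all $f \le z/2$, where $z$ is their analogue of $M_e$, whence $f_e^\star \ge z/2$, so the initial window is already within a factor of $2$ of $f_e^\star$ and $O(\log(1/\epsilon))$ bisections give relative error $\epsilon$. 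Without that lower bound, $M_e/f_e^\star$ could a priori be large and the iteration count would blow up. This step is not ``bookkeeping''; it is the piece of the argument that lets the stated running time hold.
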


We remark that only the map from the
smaller graph to the larger has to be constructive.

\subsection{Recursive Preconditioning}
\label{subsec:recursive}
We can now present our main recursive preconditioning
algorithm, {\textsc{RecursivePreconditioning}}
 (Algorithm~\ref{alg:RecursivePreconditioning}).
 Earlier work on preconditioning for non-linear (maximum) flow
 problems all relied on \emph{oblivious routing} which gives rise to
 linear preconditioners.
 These inherently cannot work well for high-accuracy iterative
 refinement, and the issue is not merely linearity:
 Consider Problem~\eqref{eq:stepProblem}: the optimal $\ddelta$ is
 highly dependent on the current $\ff$, and when a coordinate
 $\ddelta_i$ is large compared to the current $\abs{\ff_i}$, the function depends on it as
 $\ddelta_i^p$, while when $\ddelta_i$ is small compared to
 $\abs{\ff_i}$, it behaves as $\ddelta_i^2$.
 Thus the behavior is highly dependent on the current solution.
 This necessitates adaptive and non-linear preconditioners.
 
 To develop adaptive preconditioners, we employ recursive chains of alternating calls to
 non-linear iterative refinement and a new type of
 (ultra-)sparsification that is more general and
 stronger, allowing us to simultanously
 preserve multiple different properties of our problem.
 And crucially, every time our solution is updated, our preconditioners
 change.
 The central theorem governing the combinatorial components of our algorithm,
 which is the main result proven in Section~\ref{sec:graph}, is:
 \begin{restatable}[Ultra-Sparsification]{theorem}{UltraSparsify}
 	\label{thm:ultrasparsify}
 	Given any instance $\calG =(V^{\calG}, E^{\calG}, \gg^{\calG}, \rr^{\calG},s^{\calG})$ with $n$ nodes, $m$ edges, and parameters $\kappa, \delta$ where %$\max_e\rr^{\calG}_e$ and 
 	$\log \frac{1}{\delta}$ and $\log \norm{\rr^{\calG}}_{\infty}$ are both $O(\log^c n)$ for some constant $c$ and $\kappa<m$
 	, \textsc{UltraSparsify} computes in $\Otil(m)$ running time another instance $\calH=(V^{\calH}, E^{\calH}, \gg^{\calH}, \rr^{\calH},s^{\calH}=s^{\calG})$ along with flow mapping functions $\map{\calH}{\calG},\map{\calG}{\calH}$ such that $V^{\calH}=V^{\calG}$, and with high probability we have 
 	\begin{enumerate}
 		\item $E^H$ consists of a spanning tree in the graph $(V^{\calG}, E^{\calG})$, up to $m-n+1$ self-loops and at most $\Otil(\frac{m}{\kappa})$ other non self-loop edges. 
 		%\item $\ss^{\calH}$ is uniform, and $\calH$ is poly bounded given $\calG$ is polybounded \todo{formalize}
 		%the values in $\ss^{\calH}$ and $\rr^H$ all have absolute value in $[m^{-c},m^{c}]$ for some constant $c$ if values in $\ss^{\calG},\rr^{\calG}$ are of absolute value in $[m^{-c'},m^{c'}]$ for some constant $c'$.
 		\item With $\kap{\calG}{\calH}=\Otil(\kappa m^{3/(p-1)})$ for any flow $\ff^{\calG}$ of $\calG$ we have
 		\[
 		\obj_{\calH}(\frac{\map{\calG}{\calH}(\ff^{\calG})}{\kap{\calG}{\calH}})\geq \frac{1}{\kap{\calG}{\calH}}\obj_{\calG}(\ff^{\calG})-\delta\norm{\ff^{\calG}}_2\norm{\gg^{\calG}}_2,
 		\]
 		and with $\kap{\calH}{\calG}=\Otil(m^{2/(p-1)})$, for any flow solution $\ff^{\calH}$ of $\calH$ we have
 		\[
 		\obj_{\calG}(\frac{\map{\calH}{\calG}(\ff^{\calH})}{\kap{\calH}{\calG}})\geq \frac{1}{\kap{\calH}{\calG}}\obj_{\calH}(\ff^{\calH})-\delta(\norm{\ff^{\calH}}_2\norm{\gg^{\calG}}_2+\norm{\ff^{\calH}}_2^2).
 		\]
 		The flow mappings $\map{\calH}{\calG},\map{\calG}{\calH}$ preserve residue of flow, and can be applied in $\Otil(m)$ time. 
 	\end{enumerate}
 \end{restatable}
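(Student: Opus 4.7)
The plan is to follow the Spielman--Teng ultra-sparsification template but adapted to smoothed $\ell_p$-norm objectives, keeping track of both the $\ell_2^2$ part (governed by $\rr^{\calG}$) and the $\ell_p^p$ part (governed by $s^{\calG}$), and with the extra complication that the gradient $\gg^{\calG}$ is a linear term whose routing introduces additive error. First I would build a multi-objective low-stretch spanning tree $T$ for the underlying graph $(V^{\calG},E^{\calG})$ that simultaneously achieves small average $\ell_2^2$-stretch and small average $\ell_p^p$-stretch; since the graph is unit-weighted this is the point where the $m^{O(1/(p-1))}$ factors enter the two $\kappa$'s. Using Fact~\ref{fact:Projection} and Lemma~\ref{lem:EnergyDecrease} I would decompose $\gg^{\calG}=\gghat^{\calG}+\BB^{\calG}\ppsi^{\calG}$; the potential part $\BB^{\calG}\ppsi^{\calG}$ will be reproduced exactly in $\calH$ by assigning the same potentials to tree-vertices, while $\gghat^{\calG}$ contributes only the additive error in the approximation guarantees.

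Next I would construct $\calH$ in two stages. Stage one: route every off-tree edge $e=(u,v)$ through its unique tree path $P_e$ in $T$, creating a new instance $\calH_0$ whose non-tree edges have been converted to self-loops encoding the cycles $e\cup P_e$; the resistances along $P_e$ are increased to absorb $\rr_e^{\calG}$ and $s^{\calG}|\cdot|^p$ contributions, with the price paid in the stretch bounds. Stage two: apply importance sampling (leverage-score / stretch-weighted) to the off-tree edges so that only $\Otil(m/\kappa)$ survive as real edges, while the remaining contributions are again collapsed into self-loops. This gives exactly the edge count stated: the tree ($n-1$ edges), $m-n+1$ self-loops, and $\Otil(m/\kappa)$ non-tree edges. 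The flow maps are defined in the natural way: $\map{\calG}{\calH}$ decomposes a flow along the tree and routes each off-tree component through $P_e$ so that residues are preserved; $\map{\calH}{\calG}$ reverses this by sending a tree path back to the corresponding off-tree edge, and scales sampled edges by the appropriate reweighting factor. Both maps are linear and residue-preserving by construction.

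For the two inequalities, I would analyze each term of $\obj=\gg^\top\ff-h_p(\rr,s,\ff)$ separately using Lemma~\ref{lem:PerturbResistances}: the quadratic and $p$-power parts are controlled by the multi-objective stretch bounds from the tree together with the sampling concentration bound, giving the multiplicative factors $\Otil(m^{2/(p-1)})$ and $\Otil(\kappa m^{3/(p-1)})$, while the linear term $\gg^\top\ff$ is handled via the cycle/potential split: the potential piece $\BB\ppsi$ is mapped exactly (no loss), and the unpreserved cycle piece $\gghat$ is bounded by Cauchy--Schwarz against $\ff$, producing the error terms $\delta\norm{\ff^{\calG}}_2\norm{\gg^{\calG}}_2$ and $\delta\norm{\ff^{\calH}}_2\norm{\gg^{\calG}}_2+\delta\norm{\ff^{\calH}}_2^2$; the extra $\delta\norm{\ff^{\calH}}_2^2$ in the second bound appears because we must round the tree potentials to $\log(1/\delta)$ bits to obtain an $\Otil(m)$-time algorithm, introducing a controllable quadratic slack. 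Finally, combining sparsifier construction, tree construction, and the potential rounding yields the overall $\Otil(m)$ runtime and high-probability guarantee.

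The main obstacle I anticipate is the multi-objective tree embedding: a single tree must have small average stretch with respect to both the $\rr^{\calG}$-weighted $\ell_2^2$ geometry and the $s^{\calG}$-weighted $\ell_p^p$ geometry, and these can be in tension. Because the graph is unit-weighted one can argue that a standard low-stretch tree already controls both simultaneously (this is why the hypothesis rules out weighted graphs, as mentioned in Section~\ref{subsec:open}), but making this quantitative and showing that the $p$-th power stretch is only $\Otil(m^{2/(p-1)})$ -- and that sampling concentrates the non-linear $p$-th power terms around their expectation -- is the delicate point and where the $p$-dependent exponents in the two $\kappa$'s come from.
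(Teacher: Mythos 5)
Your high-level template (low-stretch tree, route off-tree edges onto it, sparsify, track the cycle/potential split of $\gg$ to isolate additive error) matches the paper's, but the proposal is missing several of the ideas the paper actually uses, and at least one step would fail outright.

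First, you have omitted the \emph{portal vertices}. In the paper, \textsc{FindPortal} (Lemma~\ref{lem:FindPortal}, built on Theorem~\ref{thm:STtreedecomp} from Spielman--Teng) picks $m/\kappa$ vertices so that every tree edge sees both bounded total stretch and a bounded \emph{count} of rerouted edges $K_{e'}\le 10\kappa m_{r,i}/m$. Your stage one routes every off-tree edge along its entire tree path $P_e$; without portals, a single tree edge can be crossed by $\Omega(m)$ rerouted edges, and then the $\ell_p^p$ cost on that edge (which enters as $|{\sum \ff_{\ebar}}|^p$ and is only controlled through Jensen's inequality with factor $K_{e'}^{p-1}$, as in Lemma~\ref{lem:treeroute}) cannot be bounded by $m^{O(1/(p-1))}$. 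This is also the reason the paper does not need a ``multi-objective low-stretch tree'': there is no separate $\ell_p$-stretch requirement on the tree. The $\ell_p^p$ blowup comes entirely from the congestion count $K_{e'}$ and the uniform scalar $s$ via Jensen, while the tree is low-stretch only w.r.t.\ the $\rr$-weighted $\ell_2^2$ geometry.

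Second, the sampling step is qualitatively wrong. You propose leverage-score / stretch-weighted importance sampling, but the paper argues explicitly that this is \emph{not an option}: non-uniform sampling would require scaling $s_e$ by $1/p_e$ for each surviving edge, destroying the invariant that the $\ell_p^p$ scalar is uniform, which is needed so the output is again a smoothed $p$-norm instance amenable to recursion. The paper therefore decomposes the portal graph into \emph{uniform expanders} (\textsc{Decompose}, Theorem~\ref{thm:Decompose}) -- expanders whose cycle-space-projected gradient is $\alpha$-uniform -- and then samples \emph{uniformly} inside each expander (\textsc{SampleAndFixGradient}, Theorem~\ref{thm:sampAndFixGrad}). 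The expander property and the uniformity of $\gghat$ are both essential for uniform sampling to preserve the $\ell_2^2$ energy (matrix Chernoff / Cheeger) and the gradient inner product (controlled $\ell_\infty$ norm of $\gghat$). Your proposal has no expander decomposition and no uniformity condition, so even if you adopted uniform sampling the concentration argument would be unavailable.

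Third, your accounting of the additive error terms is off. The $\delta\norm{\ff}_2\norm{\gg}_2$ term comes from case 3(b) of Theorem~\ref{thm:Decompose}, where an expander's projected gradient has tiny norm and is zeroed out before sampling (so the sampled graph is $1$-uniform); not from potential rounding. The $\delta\norm{\ff}_2^2$ term comes from rounding the original $\ell_2^2$-resistances $\rr_e$ down to powers of $2$ and to $0$ when small (Lemma~\ref{lemma:rounding}), which is needed to bucket off-tree edges by uniform resistance before decomposing into expanders. Also, most off-tree edges do not become self-loops after tree-portal routing; they become genuine multi-edges between pairs of portal vertices (case $3$ of Definition~\ref{defn:tree-portal}), which is exactly what produces a dense portal graph to decompose and sample.

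In short, the portal construction, the uniform-expander decomposition, and uniform sampling with a single re-scaled $s$ are not implementation details -- they are the mechanisms that make the approximation bounds hold and keep the problem in the recursive class -- and none of them appear in your proposal.
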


 % Problem~\eqref{eq:intro:problem}, and the issue is not merely linearity:
 % The Problem~\eqref{eq:intro:problem}
 % \todo{ linear per level preconditioning, vs adaptive recursive algorithm object as non-linear operator etc.}
 % Earlier work on preconditioning for non-linear (maximum) flow
 % problems all relied on \emph{oblivious routing} which gives rise to
 % linear preconditioners.
 % These inherently cannot work well for the
 % problem~\eqref{eq:stepProblem}, and the issue is not merely linearity:
 % the $h_p(\cdot\,,\cdot)$ function has two parameters, and a preconditioner
 % for choosing the second input will be highly dependent on the former
 % input. \todo{Maybe use the $\ff,\gg,\ddelta$ explictly instead of $1$-st, $2$-nd input etc.?}
 % This necessitates adaptive and non-linear preconditioners.
 % To develop these, we employ recursive chains of alternating calls to
 % non-linear iterative refinement and a new type \todo{much more general/stronger??} of (ultra-)sparsification.

\begin{algorithm}
  \caption{Recursive Preconditioning Algorithm for $p$-smoothed
    flow/circulation problem}
  \label{alg:RecursivePreconditioning}
  \begin{algorithmic}[1]
    \Procedure{RecursivePreconditioning}{$\calG, \bb, \ff^{(0)}, \kappa, \delta$}

    \State $m \leftarrow \abs{E^{\calG}}.$ If $m \le \Otil(\kappa),$
    solve $\calG$ using the algorithm from~\cite{AdilKPS19}

    % \todo{Need to add enough details and a lemma?}
 % \State $\cc^{(0)} \leftarrow  \min_{\BB \vzero=\dd} \norm{\cc}^2_2.$
 % \todo{fix init problem}

    \State $T \leftarrow  \Otil(2^{3p} \kappa m^{\frac{6}{p-1}})$
    % O_p(\kappa_1 \kappa_2 \log \left(\frac{1}{\eps}\right))$
    % \todo{Not sure of the number of iterations}

    % \State $\lambda \leftarrow 2^{-4p}\kappa_{2}$ \todo{Not sure of the step size}
    \For{$t = 0$ to $T$} 
    % \State $\DDelta \leftarrow \kappa$-approximate solution to the
    % residual problem
    % returned by $\textsc{CrudeHStep}(??)$.

    \State Construct the residual smoothed $p$-norm circulation
    problem $\calH_1$ for $(\calG, \bb)$ with the current solution
    $\ff^{(t)},$ given by
    Corollary~\ref{cor:iterative-refinement:flow}.

    \State $\delta' \leftarrow   \min\{1,
    \norm{\gg^{\calH_1}}^{-\frac{p}{p-1}}\} \cdot {\delta} / (4 T m).$
    
    \State 
    $\calH_2, \map{\calH_2}{\calH_1}, \kappa_{\calH_2 \to \calH_1}
    \leftarrow \textsc{UltraSparsify} (\calH_1, \kappa, \delta')$
    \Comment $\calH_2$ is an ultrasparsifier for $\calH_1$

    \State $\calH_3, \map{\calH_3}{\calH_2} \leftarrow \Eliminate(\calH_2)$ \Comment Gaussian elimination to remove degree 1,2 vertices
    
      \State $\calH_4, \calH_{\text{loops}} \leftarrow \RemoveLoops(\calH_3)$
      \Comment Remove self-loops
      
      \State
      $\DDelta^{\calH_\text{loops}} \leftarrow
      \SolveLoops(\calH_{\text{loops}}, \nfrac{1}{p})$
      \Comment Solve the self-loop instance

      \State
      $\DDelta^{\calH_4} \leftarrow \textsc{RecursivePreconditioning}(\calH_4,
      \vzero, \vzero, \kappa, \delta/T)$
    \Comment Recurse on smaller instance 

    \State
    $\DDelta^{\calH_3} \leftarrow \DDelta^{\calH_4} +
    \DDelta^{\calH_{\text{loops}}}$ \Comment Adding solution for
    $\calH^{\text{loops}}$ to obtain solution for $\calH_3$
    
    \State $\DDelta^{\calH_2} \leftarrow \abs{V^{\calH_2}}^{-\frac{1}{p-1}}
    \cdot \map{\calH_3}{\calH_2}(\DDelta^{\calH_3}).$
    \Comment Undo elimination to map solution back to $\calH_2$

    % Map that back through elimination (need to scale down), and add back the self-loop
    % solution.

    \State $\DDelta^{\calH_1} \leftarrow \kappa^{-1}_{\calH_2 \to
      \calH_1} \map{\calH_2}{\calH_1}(\DDelta^{\calH_2})$
    \Comment Map it back to the residual problem

    \State $\ff^{(t+1)} \leftarrow \ff^{(t)} + 2^{-3p}
    \DDelta^{\calH_1}$
    \Comment Update the current flow solution

    % \If{$\obj^{\calG_1}(\ff^{(t+1)}) \le \obj^{\calG_1}(\ff^{(t)})$}
    % \Return $\ff^{(t)}$
    % \EndIf
    
    % \State
    % $G',\hh',\rr',s',\pi,\Union_{i} V_i \leftarrow
    % \textsc{Sparsify}(G,\hh,\rr,s)$.

    % \State
    % $G'',\hh'',\rr'',s'' \leftarrow
    % \textsc{Eliminate}(G',\hh',\rr',s')$.

    % \State
    % $\DDelta_{G'' } \leftarrow \textsc{HStep}(G'', \hh'', \rr'',s'',
    % 1/2)$.
    % % \State $\DDelta_{G'} \leftarrow $ lift of $\DDelta_{G''}$ from
    % % $G''$ to $G'$.
    % \State
    % $\DDelta_{G'} \leftarrow
    % \textsc{UnEliminate}(G',G'',\DDelta_{G''})$.

    % \State
    % $\DDelta_G \leftarrow \textsc{MapSolution}(G,G',\pi,\Union_{i}
    % V_i, \DDelta_{G'})$ \todo{lift requires the full tuples, $\rr$
    %   etc? maybe need a better notation}

    \EndFor

    \State \Return $\ff^{(T)}$

    \EndProcedure
 \end{algorithmic}
\end{algorithm}

Our key theorem about the performance of the algorithm is then:

\begin{theorem}[Recursive Preconditioning]
  \label{thm:RecPrecon}
  For all $p \ge 2,$ say we are given a smoothed $p$-norm instance
  $\calG,$ residues $\bb,$ initial solution $\ff^{(0)},$ and
  $\delta \le 1$ such that
  $\log \nfrac{1}{\delta}, \log \norm{\gg^{\calG}}, \log
  \norm{\rr^{\calG}}, \log \norm{\ff^{(0)}} \le \Otil(1).$ We can pick
  $\kappa = \widetilde{\Theta}(m^{\frac{1}{\sqrt{p-1}}})$ so that the
  procedure
  $\textsc{RecursivePreconditioning}(\calG, \bb, \ff^{(0)}, \kappa, \delta)$
  runs in time
  $2^{O(p^{\nfrac{3}{2}})} m^{1+O(\frac{1}{\sqrt{p-1}})},$ and returns
  a flow $\ff$ on $\calG$ such that $\ff$ satisfies residues $\bb,$
  and
  \[
    \obj^{\star}(\calG) - \obj^{\calG}(\ff^{(T)}) \le \frac{1}{2}
    (\obj^{\star}(\calG) - \obj^{\calG}(\ff^{(0)})) + 
    \delta s^{\calG}.
  \]
\end{theorem}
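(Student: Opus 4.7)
The plan is to prove this by induction on the edge count $m = \abs{E^{\calG}}$, with the base case being the direct call to the \cite{AdilKPS19} solver when $m \le \Otil(\kappa)$. The inductive hypothesis will be that every recursive call on a strictly smaller instance $\calH_4$ achieves the claimed halving of the optimality gap (up to the additive $\delta s^{\calG}$ slack), and the main work is analyzing a single outer iteration $t \to t+1$ and then composing $T$ of them.

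For a single iteration, the chain of transformations suggests tracking a single ``composite contraction factor'' $K = 2^{3p} \cdot n^{1/(p-1)} \cdot \kap{\calH_2}{\calH_1}$, where the three factors come respectively from: (i) iterative refinement (Corollary~\ref{cor:iterative-refinement:flow}), which turns the gap on $(\calG, \bb)$ into an instance $\calH_1$ whose optimum is at least $2^p$ times this gap; (ii) vertex elimination (Theorem~\ref{thm:elimination}), which distorts circulation objectives by a factor of $n^{1/(p-1)}$; and (iii) ultrasparsification (Theorem~\ref{thm:ultrasparsify}), which costs $\kap{\calH_2}{\calH_1} = \Otil(m^{2/(p-1)})$. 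First I would show that an exact optimum of $\calH_4$ pulls back to a circulation of $\calH_1$ with objective at least $\opt(\calH_1)/K$, by composing the three maps $\map{\calH_4}{\calH_3},\map{\calH_3}{\calH_2},\map{\calH_2}{\calH_1}$ along with the self-loop solver (Lemma~\ref{lem:elimination:selfLoops}); since \textsc{SolveLoops} produces a $(1-1/p)$-approximate circulation for the self-loop part and the inductive hypothesis gives a constant-factor approximation for $\calH_4$, the recursive call returns $\DDelta^{\calH_1}$ whose objective is at least $\Omega(1) \cdot \opt(\calH_1)/K$. Plugging back via the second half of Corollary~\ref{cor:iterative-refinement:flow}, one iteration of the for loop shrinks the gap on $\calG$ by a multiplicative factor of $1 - 2^{-O(p)}/K$, so $T = \widetilde{O}(2^{3p} \kappa m^{6/(p-1)})$ iterations suffice to halve it; this matches the $T$ used in the algorithm.

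The main technical obstacle will be controlling the additive error terms of the form $\delta'(\norm{\ff^{\calH}}_2 \norm{\gg^{\calG}}_2 + \norm{\ff^{\calH}}_2^2)$ that Theorem~\ref{thm:ultrasparsify} introduces at every level, and the analogous bounded-approximation slack from \textsc{SolveLoops}. Here I would use the polylogarithmic bound on $\log \norm{\gg^{\calG}}, \log \norm{\rr^{\calG}}, \log \norm{\ff^{(0)}}$ together with Lemma~\ref{lem:EnergyDecrease} and the energy-based characterization in Fact~\ref{fact:Projection} to bound the norms of all intermediate flows $\DDelta^{\calH_i}$ by $2^{\poly(\log n)}$. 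The choice $\delta' = \min\{1,\norm{\gg^{\calH_1}}^{-p/(p-1)}\}\cdot \delta/(4Tm)$ was engineered precisely so that summing these additive errors over all $T$ outer iterations, over all recursion levels, and over all self-loop calls yields a total additive error at most $\delta s^{\calG}$ as required by the theorem statement; I would verify this by multiplying the per-level error bound by the product of contraction factors encountered going back up the recursion. A high-probability union bound over the randomness of all ultrasparsifier invocations (at most $\poly(m)$ of them) handles the ``with high probability'' qualifier in Theorem~\ref{thm:ultrasparsify}.

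Finally, for runtime, let $T(m)$ denote the time for the procedure on an $m$-edge instance. Each iteration does $\Otil(m)$ work outside of the recursive call, plus the recursive call on $\calH_4$, which by Theorem~\ref{thm:ultrasparsify} and Theorem~\ref{thm:elimination} has $\Otil(m/\kappa)$ edges. Thus
\[
T(m) \le T \cdot \bigl( T(\Otil(m/\kappa)) + \Otil(m) \bigr).
\]
With $\kappa = \widetilde{\Theta}(m^{1/\sqrt{p-1}})$, the recursion bottoms out in depth $d = O(\sqrt{p-1})$, and expanding gives $T(m) \le T^d \cdot \Otil(m) = 2^{O(p^{3/2})} m^{1+O(1/\sqrt{p-1})}$, since $T = 2^{O(p)} \cdot m^{O(1/(p-1))} \cdot \kappa$ and $T^d = 2^{O(p^{3/2})} m^{O(1/\sqrt{p-1})}$. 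The hardest bookkeeping in this runtime step will be checking that the polylogarithmic bounds on parameters are preserved across recursive calls (so that the base case and Theorem~\ref{thm:ultrasparsify}'s hypotheses apply at every level), which again follows from the norm bounds established for the additive-error analysis above.
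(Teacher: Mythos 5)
Your overall plan follows the paper's proof closely: normalize/scale, analyze one iteration of the for loop by tracking optima forward through the transformation chain and pulling approximate solutions back, show each iteration shrinks the gap by a multiplicative $(1-1/\poly)$ factor, then iterate $T$ times and set up the runtime recursion with depth $O(\sqrt{p-1})$. However, there is a concrete miscount in your composite contraction factor $K$. You write $K = 2^{3p}\cdot n^{1/(p-1)}\cdot\kap{\calH_2}{\calH_1}$, accounting only for the \emph{pullback} distortion of ultrasparsification, $\kap{\calH_2}{\calH_1}=\Otil(m^{2/(p-1)})$. But to establish your claim that an optimum of $\calH_4$ pulls back to a circulation of $\calH_1$ with objective at least $\opt(\calH_1)/K$, you also need the \emph{forward} direction: $\obj^{\star}(\calH_2)\ge\kap{\calH_1}{\calH_2}^{-1}\obj^{\star}(\calH_1)-\text{(additive)}$, which by Theorem~\ref{thm:ultrasparsify} costs a factor $\kap{\calH_1}{\calH_2}=\Otil(\kappa m^{3/(p-1)})$. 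This term is the dominant one — it is the only place $\kappa$ enters the per-iteration contraction rate — so with your stated $K$ the claimed $T=\Otil(2^{3p}\kappa m^{6/(p-1)})$ does not follow (you would instead conclude $T=\Otil(2^{3p}m^{3/(p-1)})$ suffices, which is inconsistent with the algorithm). The correct composite factor is $K=\Otil\bigl(2^{3p}\cdot\kappa_{\text{elim}}\cdot\kap{\calH_2}{\calH_1}\cdot\kap{\calH_1}{\calH_2}\bigr)=\Otil(2^{3p}\kappa m^{6/(p-1)})$.

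Two smaller points. First, in the runtime analysis the unrolled recurrence $T(m)\le T\cdot\bigl(T(m/\kappa)+\Otil(m)\bigr)$ does not collapse to $T^{d}\cdot\Otil(m)$ — you need to account for the geometric sum over levels, $\sum_{i}T^{i+1}\cdot\Otil(m/\kappa^{i})$, which works out because each level's problem size shrinks by $\kappa$ while the branching factor is $T$; the bottom level dominates, giving the claimed $2^{O(p^{3/2})}m^{1+O(1/\sqrt{p-1})}$ (the paper glosses over this too, but it is worth stating). Second, your proposal to bound the intermediate norms $\norm{\DDelta^{\calH_i}}$ via Fact~\ref{fact:Projection} and Lemma~\ref{lem:EnergyDecrease} is not quite how the paper does it: the paper's bound $\norm{\ff^{\star\calH_1}}\le\sqrt{m}\norm{\gg^{\calH_1}}^{1/(p-1)}$ comes directly from $\obj^{\calH_1}(\ff^{\star\calH_1})\ge 0$ combined with the norm comparison $m^{1-p/2}\norm{\cdot}_2^p\le\norm{\cdot}_p^p$ and Cauchy--Schwarz, and the bound $\norm{\DDelta^{\calH_2}}\le\sqrt{m}\norm{\DDelta^{\calH_1}}$ comes from the reverse tree-routing map. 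Your route via energy decrease of projections would need to be fleshed out; the direct argument is cleaner.
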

\begin{proof} (of Theorem~\ref{thm:RecPrecon})
  By scaling $\gg^{\calG}, \rr^{\calG},$ we can assume that $s^{\calG}
  = 1$ without loss of generality.
  
  Let us consider iteration $t$ of the for loop in
  $\textsc{RecursivePreconditioning}$. First, let us prove guarantees on the
  optimal solutions of all the relevant instances.  By the guarantees
  of Corollary~\ref{cor:iterative-refinement:flow}, we know that
  $\calH_1$ is a smoothed $p$-norm circulation problem with the same
  underlying graph
  $(V^{\calH_1}, E^{\calH_1}) = (V^{\calG}, E^{\calG}),$ such that
  $\obj^{\star}(\calH_1) \ge 2^{p} (\obj^{\star}(\calG) -
  \obj^{\calG}(\ff^{(t)})).$

  From Theorem~\ref{thm:ultrasparsify}, we know that {\textsc{UltraSparsify}}
  returns a smoothed $p$-norm circulation instance $\calH_2$ on the
  same set of vertices such that
  $ \obj^{\star}(\calH_2) \ge {\kappa^{-1}_{{\calH_1} \to \calH_2}}
  \obj^{\star}(\calH_1) - \delta' \norm{\gg^{\calH_1}}
  \norm{\ff^{\star}{}^{\calH_1}}.$
  
  From Theorem~\ref{thm:elimination}, we know that the instance
  $\calH_3$ returned by ${\Eliminate}(\calH_2)$ satisfies
  $\calH_2 \circapprox_1 \calH_3,$ and hence
  $\obj^{\star}(\calH_2) \le \obj^{\star}(\calH_3).$ From
  Lemma~\ref{lem:elimination:selfLoops}, we know that
  $\obj^{\star}(\calH_4) + \obj^{\star}(\calH_{\text{loops}}) =
  \obj^{\star}(\calH_3).$ Combining these guarantees, we obtain,
  \[ \obj^{\star}(\calH_3) = 
    \obj^{\star}(\calH_4) + \obj^{\star}(\calH_{\text{loops}}) \ge
    2^{p} {\kappa^{-1}_{{\calH_1} \to \calH_2}} (\obj^{\star}(\calG) -
    \obj^{\calG}(\ff^{(t)})) - \delta \norm{\gg^{\calH_1}}  \norm{\ff^{\star}{}^{\calH_1}}.
  \]

  Now, we analyze the approximation guarantee provided by the
  solutions to these instances.  From
  Lemma~\ref{lem:elimination:selfLoops},
  $\SolveLoops(\calH_{\text{loops}})$ returns a
  $\DDelta^{\calH_{\text{loops}}}$ that satisfies
  $\obj^{\calH_{\text{loops}}}(\DDelta^{\calH_{\text{loops}}}) \ge
  \frac{1}{2} \alpha^{\star}(\calH_{\text{loops}}).$ By induction,
  $\textsc{RecursivePreconditioning}(\calH_4, \vzero, \kappa, \delta T^{-1}),$
  upon starting with the initial solution $\vzero,$
  returns a $\DDelta^{\calH_4}$ that satisfies,
  $\obj^{\calH_4}(\DDelta^{\calH_4}) \ge \frac{1}{2}
  \obj^{\star}(\calH_4) - \delta T^{-1}$ .  Combining these
  guarantees, we have,
  \[\obj^{\calH_3}(\DDelta^{\calH_3}) =
    \obj^{\calH_4}(\DDelta^{\calH_4}) +
    \obj^{\calH_{\text{loops}}}(\DDelta^{\calH_{\text{loops}}}) \ge
    \frac{1}{2} \obj^{\star}(\calH_3) - \delta T^{-1}.
  \]
  From
  Theorem~\ref{thm:elimination}, we also have
  $\calH_3 \circapprox_{\kappa_{\text{elim}}} \calH_2,$ for
  $\kappa_{\text{elim}} = \abs{V^{\calH_2}}^\frac{1}{p-1},$ and hence
  \[ \kappa^{-1}_{\text{elim}} \obj^{\calH_3}(\DDelta^{\calH_3}) \le
    \obj^{\calH_2}( \kappa^{-1}_{\text{elim}}
    \map{\calH_3}{\calH_2}(\DDelta^{\calH_3})) = \obj^{\calH_2}(\DDelta^{\calH_2}).
  \]
  Finally, from Theorem~\ref{thm:ultrasparsify}, we have,
  \[\obj^{\calH_1}(\DDelta^{\calH_1})
    = \obj^{\calH_1}(\kappa^{-1}_{\calH_2 \to \calH_1}
    \map{\calH_2}{\calH_1}(\DDelta^{\calH_2}))
    \ge \kappa^{-1}_{\calH_2 \to \calH_1}
    \obj^{\calH_2}(\DDelta^{\calH_2}) - \delta' \norm{\gg^{\calH_1}}
    \norm{\DDelta^{\calH_2}} - \delta'  \norm{\DDelta^{\calH_2}}^{2} \]
  Combining these guarantees, we obtain,
  \begin{align*}
    \obj^{\calH_1}(\DDelta^{\calH_1})
    & \ge \kappa^{-1}_{\calH_2 \to \calH_1} \kappa^{-1}_{\text{elim}}
      \left( \frac{1}{2} 2^{p} {\kappa^{-1}_{{\calH_1} \to \calH_2}}
      (\obj^{\star}(\calG) - \obj^{\calG}(\ff^{(t)}))  
      - \frac{1}{2} \delta' \norm{\gg^{\calH_1}}
      \norm{\ff^{\star}{}^{\calH_1}} - \delta T^{-1} \right) \\
    & \qquad 
      - \delta' \norm{\gg^{\calH_1}}
      \norm{\DDelta^{\calH_2}} - \delta'  \norm{\DDelta^{\calH_2}}^{2} \\
    & \ge \widetilde{\Omega}( 2^{p} \kappa^{-1} m^{-\frac{6}{p-1}} )
      (\obj^{\star}(\calG) - \obj^{\calG}(\ff^{(t)})) \\
    & \qquad - \delta' \norm{\gg^{\calH_1}}
      \norm{\ff^{\star}{}^{\calH_1}} - \delta' \norm{\gg^{\calH_1}}
      \norm{\DDelta^{\calH_2}} - \delta'  \norm{\DDelta^{\calH_2}}^{2}
      - \delta T^{-1}. \displaybreak[3]\\
    & \ge \widetilde{\Omega}( 2^{p} \kappa^{-1} m^{-\frac{6}{p-1}} )
      (\obj^{\star}(\calG) - \obj^{\calG}(\ff^{(t)})) - 2 \delta T^{-1},
  \end{align*}
    $\norm{\ff^{\star}{}^{\calH_1}} \le \sqrt{m}
    \norm{\gg^{\calH_1}}^{\frac{1}{p-1}},$ since
    $\obj^{\calH_1}(\ff^{\star}{}^{\calH_1}) \ge 0$ implies that
    $m^{1-\nfrac{p}{2}}\norm{\ff^{\star}{}^{\calH_1}}_{2}^{p} \le
    \norm{\ff^{\star}{}^{\calH_1}}_{p}^{p} \le \gg^{\calH_1 \top}
    \ff^{\star}{}^{\calH_1} \le
    \norm{\gg^{\calH_1}}\norm{\ff^{\star}{}^{\calH_1}}.$ Similarly,
    $\norm{\DDelta^{\calH_1}} \le \sqrt{m}
    \norm{\gg^{\calH_1}}^{\frac{1}{p-1}},$ and
    $\norm{\DDelta^{\calH_2}} \le \sqrt{m} \norm{\DDelta^{\calH_1}},$
    by the reverse tree routing map.
  
  Thus, by Theorem~\ref{cor:iterative-refinement:flow}, $\ff^{(t+1)}$
  satisfies
  \begin{align*}
    \obj^{\star}(\calG) - \obj^{\calG}(\ff^{(t+1)})
    & \le  \obj^{\star}(\calG) - \obj^{\calG}(\ff^{(t)}) -  2^{-4p}
      \obj^{\calH_1}(\DDelta^{\calH_1}) \\
    & \le \left( 1- \widetilde{\Omega}( 2^{-3p} \kappa^{-1} m^{-\frac{6}{p-1}}
      ) \right)
      (\obj^{\star}(\calG) - \obj^{\calG}(\ff^{(t)})) + 2^{-4p} \cdot 2 \delta T^{-1}.
  \end{align*}
  Thus, repeating for loop $\Otil(2^{3p} \kappa m^{\frac{6}{p-1}})$
  times gives us a solution $\ff^{(T)}$ such that
  \[
    \obj^{\star}(\calG) - \obj^{\calG}(\ff^{(T)})
    \le \frac{1}{2}
    (\obj^{\star}(\calG) - \obj^{\calG}(\ff^{(0)}))
    + \delta.
  \]
  
  Now, we analyze the running time. In a single iteration, the total
  cost of all the operations other than the recursive call to
  $\textsc{RecursivePreconditioning}$ is $\Otil(m).$ Note that $\calH_1$ has
  $m$ edges.  Theorem~\ref{thm:ultrasparsify} tells us that $\calH_2$
  consists of a tree ($n-1$ edges), at most $\Otil(m/\kappa)$
  non-selfloop edges, plus many self-loops. After invoking
  $\Eliminate,$ and $\RemoveLoops,$ the instance $\calH_4$ has no
  self-loops left, and after dropping vertices with degree 0, only has
  vertices with degree at least 3. Every edge removed in $\Eliminate$
  decreases the number of edges and vertices with non-zero
  non-selfloop degree by 1. Suppose $n'$ is the number of vertices in
  $\calH_4$ with non-zero degree. Then, {\Eliminate} must have removed
  $n-n'$ edges from $\calH_2.$ Since $\calH_4$ must have at least
  $\frac{3}{2}n'$ vertices, we have
  $n-1 + \Otil(m/\kappa) - (n-n')\ge \frac{3}{2} n'.$ Hence,
  $n' \le \Otil(m/\kappa).$ Thus $\calH_4$ is an instance with at most
  $\Otil(m/\kappa)$ vertices and edges.

  Thus, the total running
  time recurrence is
  \[
    T(m) \le \Otil(2^{3p} \kappa m^{\frac{6}{p-1}}) \left( T(m/\kappa)
      + \Otil(m) \right).
  \]
  Note that $\kappa$ is fixed throughout the
  recursion. By picking $\kappa = \widetilde{\Theta}(m^{\frac{1}{\sqrt{p-1}}}),$ we
  can fix the depth of the recursion to be $O(\sqrt{p-1}).$ The total
  cost is dominated by the cost at the bottom level of the recursion,
  which adds up to a total running time of
  $2^{O(p^{\nfrac{3}{2}})} m^{1+O(\frac{1}{\sqrt{p-1}})}.$

  The above discussion does not take into account the reduction in
  $\delta$ as we go down the recursion. Observe that $\delta$ is lower
  bounded by
  \[\delta (mT \max \norm{\gg}^{\frac{p}{p-1}})^{-O(\sqrt{p-1})} =
  \delta (m2^{p} \max\{\{\norm{\ff_0}, \norm{\gg^{\calG}},
  \norm{\rr^{\calG}}\} ^{\frac{p}{p-1}}\})^{-O(\sqrt{p-1})} .\] Thus,
  we always satisfy $\log \delta = \Otil(1).$
\end{proof}

We can now prove the central collaries regardling smoothed $\ell_{p}$-norm
flows and $\ell_{p}$-norm flows.

\smoothedpflows*

\begin{proof}
  First note that the Problem~\eqref{eq:intro:problem} is a smoothed
  $p$-norm instance after flipping the sign of $\gg,$ and the sign of
  the objective function. We can solve this smoothed $p$-norm instance
  $\calG$ by using Theorem~\ref{thm:RecPrecon} to compute the desired
  approximate solution to the residual problems. We start with
  $\ff^{(0)}$ as our initial solution. At iteration $t,$ we invoke
  Theorem~\ref{thm:RecPrecon} using $\ff^{(t-1)}$ as the initial
  solution,
  \[\ff{(t)} \leftarrow \textsc{RecursivePreconditioning}\left(\calG, \bb, \ff^{(t-1)}, \kappa,
      \frac{1}{\poly(n)} \right),\]
  where $\kappa$ is given by Theorem~\ref{thm:RecPrecon}.
  
  We know that $\ff^{(t)}$
  satisfies,
  \[     \obj^{\star}(\calG) - \obj^{\calG}(\ff^{(t)}) \le \frac{1}{2}
    (\obj^{\star}(\calG) - \obj^{\calG}(\ff^{(t-1)})) + 
    \frac{1}{\poly(n)}.
  \] Iterating $O(\log n)$ times, we obtain,
  \[ \obj^{\star}(\calG) - \obj^{\calG}(\ff^{(T)}) \le \frac{1}{\poly(n)}
    (\obj^{\star}(\calG) - \obj^{\calG}(\ff^{(0)})) +
    \frac{1}{\poly(n)}.
  \]
  Finally, noting that we had flipped the sign of the objective
  function in Problem~\eqref{eq:intro:problem}, we obtain our claim.
\end{proof}

\pflows*

\begin{proof}
   The pseudocode for our procedure $\textsc{pFlows}(\calG, \bb)$ for this problem is given in Algorithm~\ref{alg:pFlows}.
 Our goal is to compute a flow $\fftilde$ satisfying
  $\BB^{G\top} \fftilde = \bb,$ such that
  \[\norm{\fftilde}^{p}_p \le \left( 1+ \frac{1}{\poly(m)} \right)
    \norm{\ff^{\star}}^{p}_p,
    % \min_{\ff : \BB^{G\top} \ff = \bb}
    % \norm{\ff}^{p}_{p}
  \]
    where $\ff^{\star}$ is the flow minimizing the $\ell_p$-norm with
    residue $\bb$.
  For concreteness, we take this to mean $\norm{\fftilde}^{p}_p \le \left( 1+3m^{-c} \right)
  \norm{\ff^{\star}}^{p}_p $, for some constant $c$.
  We construct a smoothed $p$-norm instance
  $\calG = (V, E, \vzero, \vzero, 1).$ Note that the smoothed $p$-norm
  flow problem $(\calG, \bb)$ finds a flow satisfying residues $\bb,$
  and maximizing $\obj^{\calG}(\ff) = - \norm{\ff}^{p}_{p}.$ We can
  solve this smoothed $p$-norm instance by iteratively refining using
  Corollary~\ref{cor:iterative-refinement:flow}, and using
  Theorem~\ref{thm:RecPrecon} to compute the desired approximate
  solution to the residual problems.

  Formally, we use Laplacian solvers to compute in $\Otil(m)$ time
  $\ff^{(0)}$ as a 2-approximation to
  $ \min_{\ff : \BB^{\top} \ff = \bb} \norm{\ff}.$ We
  have,
  \[ \norm{\ff^{(0)}}_p \le \norm{\ff^{(0)}}_2 \le 2 \min_{\BB^{G\top}
      \ff = \bb} \norm{\ff}_2 \le 2 \norm{\ff^{\star}}_2 \le 2
    m^{\frac{1}{2} - \frac{1}{p}} \norm{\ff^{\star}}_{p}.\]

  At each iteration
  $t,$ we construct the residual smoothed $p$-norm circulation problem
  $\calH_t$ for $(\calG, \bb)$ with the current solution $\ff^{(t)},$
  given by Corollary~\ref{cor:iterative-refinement:flow}. We then
  invoke
  \[\DDelta^{(t)} \leftarrow \textsc{RecursivePreconditioning}(\calH_t, \vzero, \vzero, \kappa,
  \frac{1}{2^{p}m^{\nfrac{p}{2}}m^c} \cdot \norm{\ff^{(0)}}^{p}),\] where
  $\kappa = \widetilde{\Theta}(m^{\frac{1}{\sqrt{p-1}}})$ is given by
  Theorem~\ref{thm:RecPrecon}. Let $\DDelta^{(t)}$ be the flow
  returned.
  We know
  \[\obj^{\star}(\calH_t) - \obj^{\calH_t}(\DDelta^{(t)}) \le \frac{1}{2}
    \obj^{\star}(\calH_t) + \frac{1}{2^{p}m^{\nfrac{p}{2}}m^c} \norm{\ff^{(0)}}^{p}_{p}\]
  We let $\ff^{(t+1)} \leftarrow \ff^{(t)} + 2^{-3p} \DDelta^{(t)}.$ Thus,
  by Corollary~\ref{cor:iterative-refinement:flow}, at every
  iteration, we have
  \begin{align*}
    -\norm{\ff^{(t+1)}}^{p}_{p} = \obj^{\calG}(\ff^{(t+1)})
    & \ge
      \obj^{\calG}(\ff^{(t)}) + 2^{-4p} \obj^{\calH_t}(\DDelta^{(t)}) \\
    & \ge - \norm{\ff^{(t)}}_p^p + 2^{-4p} \left( \frac{1}{2}
      \obj^{\star}(\calH_t)- \frac{1}{2^{p}m^{\nfrac{p}{2}}m^c}
      \norm{\ff^{(0)}}_p^{p} \right) \\
    & \ge - \norm{\ff^{(t)}}_p^p + 2^{-4p} \left( \frac{1}{2}
      2^{p}(\obj^{\star}(\calG)-\obj^{\calG}(\ff^{(t)}))- 
      \frac{1}{m^c}\norm{\ff^{\star}}_p^{p} \right) \\
    & \ge - \norm{\ff^{(t)}}_p^p + 2^{-4p} \left (\norm{\ff^{(t)}}_p^p
      - \norm{\ff^{\star}}_p^p \right) -
      \frac{2^{-4p}}{m^c}\norm{\ff^{\star}}_p^{p}
      .
  \end{align*}
  where we have used,
  $%\norm{\ff^{(t)}}_p \le
  \norm{\ff^{(0)}}_p \le 2 m^{\frac{1}{2} -
    \frac{1}{p}} \norm{\ff^{\star}}_{p}$,
  \[ \norm{\ff^{(t+1)}}_p^p - \norm{\ff^{\star}}_p^p \le
    (1-2^{-4p})\left(  \norm{\ff^{(t)}}_p^p - \norm{\ff^{\star}}_p^p
    \right)
    + \frac{2^{-4p}}{m^c}\norm{\ff^{\star}}_p^{p}.
  \]
  Thus
  \begin{align*}
    \norm{\ff^{(t)}}_p^p - \norm{\ff^{\star}}_p^p
    &\le
    (1-2^{-4p})\left(  \norm{\ff^{(t)}}_p^p - \norm{\ff^{\star}}_p^p
    \right)
    + \frac{2^{-4p}}{m^c}\norm{\ff^{\star}}_p^{p}
      \\
    &\le
    \max\left((1-2^{-(4p+1)})\left(
        \norm{\ff^{(t)}}_p^p -
        \norm{\ff^{\star}}_p^p
        \right), 
      3m^{-c} \norm{\ff^{\star}}_p^p\right)
    .
  \end{align*}
Where first step follows by rearranging terms.
To establish the second inequality, first consider
the case when $\left(
        \norm{\ff^{(t)}}_p^p -
        \norm{\ff^{\star}}_p^p
      \right) \ge 2m^{-c}\norm{\ff^{\star}}_p^{p}$ and
      hence
      $\norm{\ff^{(t)}}_p^p - \norm{\ff^{\star}}_p^p \le 
       (1-2^{-4p}/2)\left(  \norm{\ff^{(t)}}_p^p -
         \norm{\ff^{\star}}_p^p \right)$,
meanwhile, when $\left(
        \norm{\ff^{(t)}}_p^p -
        \norm{\ff^{\star}}_p^p
      \right) < 2m^{-c}\norm{\ff^{\star}}_p^{p}$, the inequality is immediate.
   
  Iterating $T=\Theta((c+p)2^{4p} \log m)$ times gives us
  \[ \norm{\ff^{(T)}}_p^p - \norm{\ff^{\star}}_p^p 
\le
    \max\left((1-2^{-(4p+1)})^T\left(
        \norm{\ff^{(0)}}_p^p -
        \norm{\ff^{\star}}_p^p
        \right), 
      3m^{-c} \norm{\ff^{\star}}_p^p\right)
  \le
3m^{-c} \norm{\ff^{\star}}_p^p
.
\]

\end{proof}

\begin{algorithm}
  \caption{Computing $p$-norm minimizing flows.
    Given constant $p$ and $c$,
    the routine computes  $\fftilde$ with residues $\bb$ and $p$-norm
    that is within a factor $(1+3m^{-c})$ of the minimum $p$-norm
    achievable for these residues.
  }
  \label{alg:pFlows}
  \begin{algorithmic}[1]
    \Procedure{pFlows}{$\calG, \bb$}

    \State Use Laplacian solvers to compute $\ff^{(0)}$ as a
    2-approximation to
    $ \min_{\ff : \BB^{\top} \ff = \bb} \norm{\ff}$.

    % \State $m \leftarrow \abs{E^{\calG}}.$ If $m \le \Otil(\kappa),$
    % solve $\calG$ using the algorithm from~\cite{AdilKPS19}

    % \todo{Need to add enough details and a lemma?}
 % \State $\cc^{(0)} \leftarrow  \min_{\BB \vzero=\dd} \norm{\cc}^2_2.$
 % \todo{fix init problem}
  
    \State
    $\delta \leftarrow \min \left\{ 1,
      \frac{1}{2^{p}m^{\nfrac{p}{2}}m^c} \cdot \norm{\ff^{(0)}}_p^{p}
    \right\}$
    \State $\kappa \leftarrow \widetilde{\Theta}(m^{\frac{1}{\sqrt{p-1}}})$ 
    \State $T\leftarrow\Theta((c+p)2^{4p} \log m)$
    % O_p(\kappa_1 \kappa_2 \log \left(\frac{1}{\eps}\right))$
    % \todo{Not sure of the number of iterations}
    % \State $\lambda \leftarrow 2^{-4p}\kappa_{2}$ \todo{Not sure of the step size}
    \For{$t = 0$ to $T-1$} 
    % \State $\DDelta \leftarrow \kappa$-approximate solution to the
    % residual problem
    % returned by $\textsc{CrudeHStep}(??)$.

    \State Construct the residual smoothed $p$-norm circulation
    problem $\calH_t$ for $(\calG, \bb)$ with the current solution
    $\ff^{(t)},$ given by
    Corollary~\ref{cor:iterative-refinement:flow}.

    \State $\DDelta^{(t)} \leftarrow \textsc{RecursivePreconditioning}(\calH_t,
    \vzero, \vzero, \kappa, \delta)$
    \State $\ff^{(t+1)} \leftarrow \ff^{(t)} + 2^{-3p} \DDelta^{(t)}$
    
    \EndFor

    \State \Return $\fftilde \leftarrow \ff^{(T)}$

    \EndProcedure
 \end{algorithmic}
\end{algorithm}

%%% Local Variables:
%%% mode: latex
%%% TeX-master: "main"
%%% End:

%!TEX root = main.tex

\section{Graph Theoretic Preconditioners}
\label{sec:graph}
In this section, we discuss at a high level of the construction of
ultra-sparsifiers for a smooth $\ell_{p}$-norm instance.
We start by restating the main theorem of our ultra-sparsifier. After
establishing the necessary tools, we prove this theorem at the end of
this section.

\UltraSparsify*

Our high-level approach is the same as Spielman-Teng~\cite{SpielmanTengSolver:journal}, where we utilize a low-stretch spanning tree, and move off-tree edges to a small set of portal nodes. Once most off-tree edges are only between a small set of portal nodes, we sparsify the graph over the portal nodes to reduce the number of edges. As we need to map flow solutions between the original instance and the sparsified instance, our main concern is to carry out these step without incurring too much error on the objective function values. In our case we have $\ell_p^{p}$ resistances and gradients on edges in addition to $\ell_2^{2}$ resistances, and thus the main challenge is to simultaneously preserve their respective terms. 

%!TEX root = main.tex

\subsection{Tree-Portal Routing}
\label{subsec:treerouting}

Faced with a sparse graph or dense graph, we wish to move most edges
onto a few vertices of a tree, so that many of the remaining vertices
are low degree and can be eliminated.
Our high-level approach is the same as
Spielman-Teng~\cite{SpielmanTengSolver:journal}, where we utilize a
low-stretch spanning tree, and move off-tree edges to a small set of
portal nodes. 
However, when we reroute flow on the graph where most edges are moved to be among a few vertices using the tree, we need to (approximately) preserve three different
properties of the flow on the original graph, namely the inner product between
gradients and flows $\sum_e \gg_e \ff_e$, the 2-norm energy 
$\rr_e \ff_e^2$, and the $\ell_p$-norm energy $\sum_e \ff_e^p$.
It turns out that we can move edges around on our graphs to produce a
new graph while exactly preserving the linear term $\sum_e \gg_e
\ff_e$ for flows mapped between one graph and the other. This means
any tree is acceptable from the point of preserving the linear
term.
To move edges around and bound distortion of solutions w.r.t. the
quadratic $\rr_e \ff_e^2$ term, we use a low stretch tree w.r.t. the
$\rr$ weights as resistances.
This leaves us with little flexibility for the  $\sum_e \ff_e^p$
term.
However, for large $p$, provided \emph{every} $p$-th order term
is weighted the same (i.e. we have $s\sum_e \ff_e^p$ instead of $\sum_e \ss_e \ff_e^p$), it
turns out that, moving edges along any tree will result in bounded
distortion of the solution, provided we are careful about how we move
those edges.
Thus, we can move edges around carefully to be among a small subset of the portal nodes while simultaneously controlling all linear, $2$-nd order and $p$-th order terms.
But, this only works if all the $p$-th order terms are weighted
the same.
This leads us to maintain uniform-weighted $p$-th order terms 
as an invariant throughout the algorithm. 
The iterative refinement steps naturally weigh all $p$-th order terms
the same provided the original function does. 
However, our sparsification procedures do not immediately achieve
this, but we show we can enforce this uniform-weight invariant with
only a manageable additional distortion of our solutions.
Elimination also does not naturally weigh all $p$-th order terms
the same even in our case when the original function does, but we can bound the
distortion incurred by explicitly making the weights uniform.
Our tree-based edge re-routing naturally creates maps between
solutions on the old and new graphs.

We first formalize what we mean by moving off-tree edges. Suppose we have a spanning tree $T$ of a graph $(V,E)$ and a subset set of nodes $\Vhat\subset V$ designated as {\em portal nodes}, for any off-tree edge $e=\{u,v\}\in E\setminus T$, there is a unique tree path $\Pcal_{T}(u,v)$ in $T$ from $u$ to $v$. We define a {\em tree-portal path} $\Pcal_{T,\Vhat}(u,v)$, which is not necessarily a simple path.
\begin{definition}[Tree-portal path and edge moving]
\label{defn:tree-portal}
Given spanning tree $T$ and set of portal nodes $\Vhat$, let $e=\{u,v\}$ be any edge not in $T$, and $P_{T}(u,v)$ the unique tree path in $T$ from $u$ to $v$. We define $e$'s {\em tree-portal path} $P_{T,\Vhat}(u,v)$ and $e$'s image under {\em tree-portal edge moving} as follows
\begin{enumerate}
\item If $P_{T}(u,v)$ doesn't go through any portal vertex. In this case, we replace $\{u,v\}$ with a distinct self-loop of $v$. We let $P_{T,\Vhat}(u,v)$ be the path $P_T(u,v)$ followed by the self-loop at $v$.
\item If $P_{T}(u,v)$ goes through exactly one portal vertex $\vhat$. In this case, we replace $\{u,v\}$ with a distinct self-loop at $\vhat$. We let $P_{T,\Vhat}(u,v)$ be the tree path $P_T(u,\vhat)$ followed by the self-loop at $\vhat$ and then the tree path $P_T(\vhat,v)$.
\item If $P_{uv}$ goes through at least two portal vertices. In this case, let $\uhat$ (and $\vhat$) be closest the portal vertex to $u$ (and $v$) on $P_{uv}$, we replace $\{u,v\}$ with a distinct edge\footnote{We will keep multi-edges explicitly between portal nodes.} $\{\uhat,\vhat\}$. We let $P_{T,\Vhat}(u,v)$ be the tree path $P_T(u,\uhat)$ followed by the new edge from $\uhat$ to $\vhat$ and then the tree path $P_T(\vhat,v)$.
\end{enumerate}
This maps any off-tree edge $e$ to a unique (edge or self-loop) $\ehat$ given any $T,\Vhat$. %As we have gradient as well as $\ell_2^2,\ell_p^p$ resistance on edges, we also need to specify how these values are changed when we move an off-tree edge $e$. We will keep $\rr_{\ehat}=\rr_{e}$ and $\ss_{\ehat}=\ss_{e}$ on $\ehat$. As to the gradient $\gg_{\ehat}$, we want to preserve the value of the linear term in the objective function (i.e. $\gg^T\ff$) when we map any flow along $e$ to a flow along $P_{T,\Vhat}(u,v)$, so that routing $1$ unit of flow from $u$ to $v$ directly along $e$ should make the same contribution to the linear term as routing $1$ unit of flow from $u$ to $v$ along the path $P_{T,\Vhat}(u,v)$, and this uniquely determines\footnote{We will keep the gradients of the tree edges the same, so there is one variable and one equation.} $\gg_{\ehat}$. 
 We denote the tree-portal edge moving with the map $\ehat=\mbox{\textsc{Move}}_{T,\Vhat}(e)$.%\footnote{Note the mapping also specifies the $\rr_{\ehat},\ss_{\ehat},\gg_{\ehat}$, and we keep these implicitly in the notation}. 
\end{definition}
 Although we will get self-loops in tree-portal routing, to keep the discussion simple, we ignore the possibility of getting self-loops. This still captures all the main ideas, and the algorithm/analysis extends to self-loops in a very straightforward but slightly tedious way. We discuss self-loops briefly at the end of the section.
 
Tree-portal routing is a mapping from flow solutions on the original off-tree edges to a flow solution (with the same residue) using the edges they are mapped to. Any flow along off-tree edge $(u,v)$ in the original graph is rerouted (again from $u$ to $v$) using the tree-portal path $P_{T,\Vhat}(u,v)$ instead. Rerouting the flow of any off-tree edge along its tree-portal path increases the congestion on tree edges, which in turn incurs error in the $\ell^2_2$ and $\ell^p_p$ terms in the objective function. We need to pick the tree and portal nodes carefully to bound the error. 
\begin{definition} Given any graph $(V,E)$, resistance $\rr$ on edges, a spanning tree $T$, and a set of portals $\Vhat\subset V$, for any $e=\{u,v\} \in E$, let $\ehat=\mbox{\textsc{Move}}_{T,\Vhat}(e)$ and $P_{T,\Vhat}(u,v)$ be as specified above. The {\em stretch} of $e=\{u,v\}\in E\setminus T$ in the tree-portal routing is
\[
\str_{T,\Vhat} \left( e \right) \defeq \frac{1}{\rr_e}
    \sum_{e' \in P_{T,\Vhat}\left(e\right)\setminus\{\ehat\}} \rr_{e'},
\]
and the stretch of a tree edge $e\in T$ is $\str_{T,\Vhat} \left( e \right)=1$. Note with our definition $\str_{T,\emptyset}(e)$ gives the standard stretch. 
\end{definition}

The starting point is low stretch spanning trees~\cite{AbrahamN12}, which provide good bounds on the total $\ell_2^2$ stretch. 
%\begin{definition}The {\em stretch} of an edge $e=\{u,v\}$ with respect to a spanning tree $T$ and $\rr$ is 
%\[
%\str_{T} \left( e \right) \defeq \frac{1}{\rr_e}
    %\sum_{e' \in P_{T}\left(e\right)} \rr_{e'}.
%\]
%where $P_{T}\left(e\right)$ is the unique tree path between $u,v$ in  $T$.
%\end{definition}
\begin{lemma}[Low-Stretch Trees~\cite{AbrahamN12}]
\label{lem:LSST}
Given any graph $G=(V,E)$ of $m$ edges and $n$ nodes, as well as resistance $\rr$,
$\textsc{LSST}(\rr)$ finds a spanning tree in $O\left( m \log{n} \log\log{n} \right)$ time such that
\[
\sum_{e\in E} \str_{T,\emptyset} \left( e \right)
\leq
O\left( m \log{n} \log\log{n} \right).
\]
\end{lemma}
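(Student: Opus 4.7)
The plan is to follow the hierarchical star decomposition template pioneered by Alon--Karp--Peleg--West, sharpened by Elkin--Emek--Spielman--Teng, and culminating in Abraham--Neiman. Interpret the resistances $\rr_e$ as edge lengths. First I would reduce to the case where all edge lengths lie in a polynomial range: edges much shorter than the weighted diameter can be contracted and handled by recursion on the quotient, while edges much longer than it can be pre-assigned to the output tree, since they already dominate their own stretch. After this normalization the weighted diameter is $\poly(n)$.

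The main combinatorial primitive is a randomized \emph{star decomposition}. Given a subgraph $H$ of weighted diameter $D$, pick a center $x_0$ and grow concentric balls $B_0 \subsetneq B_1 \subsetneq \cdots \subsetneq B_k = V(H)$ with radii drawn from truncated exponential distributions at scale $\Theta(D / \log n)$. The central ball has radius at most $D/4$; each shell $V_i = B_i \setminus B_{i-1}$ is attached to the ball by a single ``bridge'' edge and has internal diameter at most $D/2$. The memorylessness of the exponential then gives the standard guarantee: each edge $e$ is cut between two pieces with probability at most $O(\rr_e \log n / D)$.

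Next I would recurse on each piece and attach shells to the center through their bridge edges to form a spanning tree of $H$. A cut edge $e$ has stretch at most $O(D / \rr_e)$, since its tree path traverses distance at most $O(D)$. Letting $\Phi(m, D)$ denote the expected total stretch, the per-level cost is $\sum_e \E[\str_{T,\emptyset}(e)] = O(m \log n)$, and the recursion has depth $O(\log n)$ thanks to the diameter-halving combined with the polynomial length range; this yields the weaker bound $\Phi(m, D) = O(m \log^2 n)$. Derandomization via a pessimistic estimator on the cut probabilities produces a deterministic algorithm of comparable running time, which already suffices for our applications up to an extra $\log n$ factor.

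The hard part is sharpening this to $O(m \log n \log \log n)$ as stated. For this I would invoke the refined \emph{petal decomposition} of Abraham and Neiman, which replaces uniform diameter-halving with slicing off petals of geometrically spaced radii within each ball, so that the per-level cost is amortized against a recursion depth that is shorter by a factor of $\log \log n$ in aggregate. Because the present paper uses the lemma purely as a black box and imports its statement from~\cite{AbrahamN12}, I would cite that paper for the final $\log\log n$ improvement rather than reproduce the full petal construction, whose delicate radii-selection argument is the principal technical obstacle.
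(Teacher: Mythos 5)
The paper does not prove this lemma; it imports it verbatim as a black-box citation to~\cite{AbrahamN12}, exactly as you ultimately do by deferring to that reference for the petal decomposition and the sharp $O(m\log n\log\log n)$ bound. Your background sketch of the AKPW/EEST star-decomposition route and the weaker $O(m\log^2 n)$ bound is reasonable context but is not needed here, since the paper treats the statement purely as an external primitive.
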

We will construct a low-stretch spanning tree $T$ of $(V^{\calG},E^{\calG},\rr^{\calG})$ using the above result. Still, the error will be too large if we only use tree edges to reroute the flow of all the off-tree edges, since the low average stretch doesn't prevent one tree edge to be on the tree path for many off-tree edges. Thus, we need to add portal nodes so we can shortcut between them to reduce the extra congestion on tree edges. 

%to construct a new instance via tree-portal routing of off-tree edges, and the new instance has small error comparing to the original instance. 

%!TEX root = main.tex

\subsection{Partitioning Trees into Subtrees and Portals}
\label{subsec:treepartition}
Next, we show how to find a small set of good portal nodes so that rerouting flow on
off-tree edges using their tree-portal paths incurs small error in the objective function.
Pseudocode of this routine is in Algorithm~\ref{alg:findportal},
and its guarantees are stated in Lemma~\ref{lem:FindPortal} below. 

\begin{algorithm}
	\caption{Find portal nodes for tree-portal routing}
	\label{alg:findportal}
	\begin{algorithmic}[1]
		\Procedure{FindPortal}{$T$,$E$,$\nhat$}
		\State $\forall e\in E: \eta(e)\leftarrow \max \left(\str_{T,\emptyset}(e), \frac{\sum_{e'\in E}\str_{T,\emptyset}(e')}{|E|}\right)$
		\State Call {\em decompose} in \cite{SpielmanTengSolver:journal} (page $881$ of journal version) with $(T,E,\eta,\nhat)$. 
		\State The subroutine breaks $T$ into at most $\nhat$ edge-disjoint induced tree pieces to divide up the $\eta(e)$'s roughly evenly so that the sum of $\eta(e)$ for all $e$ attached to each non-singleton piece is not too big.%at most $\frac{4\sum_e\eta(e)}{\nhat}$. Note the average is $\frac{2\sum_e\eta(e)}{\nhat}$. 
		\State The subroutine works by recursively cut off sub-trees from $T$ whenever the sum of $\eta(e)$ of all $e$ attached to a sub-tree is above $\frac{2\sum_e\eta(e)}{\nhat}$.
		\State Let $\Vhat$ be the set of nodes where the tree pieces intersect. 
		\EndProcedure 
	\end{algorithmic}
\end{algorithm}

%We now show that we can always find good set of portals that divide the off-tree edges %evenly among them in terms of stretch and load with \textsc{FindPortal}.
%We have the following result on \textsc{FindPortals}
\begin{lemma}
\label{lem:FindPortal}
	There is a linear-time routine \textsc{FindPortals} that given any graph $G$,
	a spanning tree $T$, with $\mhat$ off-tree edges, and a portal count
	$\nhat \leq \mhat$, returns a subset of $\Vhat$ of $\nhat$ vertices
	so that for all edges $\ehat \in T$, we have
	\begin{align*}
	\sum_{e: \ehat \in P_{T, \Vhat}\left( e \right)}
	\str_{T, \Vhat}\left( e \right)
	& \leq \frac{10}{\nhat}
	\sum_{e} \str_{T, \emptyset}\left( e \right) \\\
	\abs{e: \ehat \in P_{T, \Vhat}\left( e \right)}
	& \leq \frac{10 \mhat}{\nhat}.
	\end{align*}
\end{lemma}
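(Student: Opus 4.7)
My plan is to derive both inequalities from a single guarantee of the Spielman--Teng \emph{decompose} subroutine: it partitions $T$ into at most $\nhat$ edge-disjoint induced subtree pieces, whose boundary vertices form the portal set $\Vhat$, and for each non-singleton piece $\tau$ the total weight $\sum_{e \in N_\tau} \eta(e)$ is at most $\frac{2}{\nhat}\sum_e \eta(e)$, where $N_\tau$ denotes the set of edges with at least one endpoint in $\tau$. The proof then rests on two ingredients: elementary numerical control over $\eta$, and a structural claim that bounds the set of edges whose tree-portal path crosses a fixed tree edge $\ehat$.

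The numerical control is immediate from the definition $\eta(e) = \max(\str_{T,\emptyset}(e),\, S/|E|)$ with $S \defeq \sum_{e} \str_{T,\emptyset}(e)$: we have $\sum_e \eta(e) \le 2S$ (since $\max \le $ sum), and $\eta(e) \ge S/|E|$ for every $e$. Combined with the decompose guarantee, these give $\sum_{e \in N_\tau} \eta(e) \le 4S/\nhat$ and $|N_\tau| \le 4|E|/\nhat$ for each piece $\tau$, where I take $E$ to be the off-tree edge set so that $|E| = \mhat$.

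The structural claim I need is: for every tree edge $\ehat$, the set $\{e : \ehat \in P_{T,\Vhat}(e)\}$ is contained in $N_{\tau(\ehat)}$, where $\tau(\ehat)$ is the unique piece containing $\ehat$. This follows by inspecting the three cases of Definition~\ref{defn:tree-portal}: the only tree edges appearing in $P_{T,\Vhat}(\{u,v\})$ lie on one of the subpaths $P_T(u,\uhat)$ or $P_T(\vhat, v)$, where $\uhat, \vhat$ are the portals on $P_T(u,v)$ closest to $u$ and $v$ (or, in the case with zero or one portal, on a subpath of $P_T(u,v)$ with at most one portal endpoint). Because $\uhat, \vhat$ are chosen as the \emph{nearest} portals, these subpaths have no portal in their interior and therefore stay inside a single piece, namely the piece containing the relevant endpoint $u$ or $v$. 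Consequently any $e$ with $\ehat \in P_{T,\Vhat}(e)$ has an endpoint in $\tau(\ehat)$, i.e., $e \in N_{\tau(\ehat)}$.

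Combining the two ingredients yields both inequalities. Since tree-portal routing only shortens paths, we have $\str_{T,\Vhat}(e) \le \str_{T,\emptyset}(e) \le \eta(e)$, so
\[
\sum_{e\,:\,\ehat \in P_{T,\Vhat}(e)} \str_{T,\Vhat}(e) \;\le\; \sum_{e \in N_{\tau(\ehat)}} \eta(e) \;\le\; \frac{4S}{\nhat},
\]
and using $\eta(e) \ge S/|E|$,
\[
|\{e : \ehat \in P_{T,\Vhat}(e)\}| \;\le\; |N_{\tau(\ehat)}| \;\le\; \frac{\sum_{e \in N_{\tau(\ehat)}} \eta(e)}{\min_e \eta(e)} \;\le\; \frac{4\mhat}{\nhat},
\]
both well inside the claimed factor $10$. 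The linear running time is inherited from \emph{decompose} together with the linear-time computation of $\eta$ (low-stretch stretches can be read off once $T$ is in place). The main potential obstacle is the structural claim: one must be careful that since pieces are edge-disjoint induced subtrees joined at portal vertices, the subpath to the nearest portal on $P_T(u,v)$ really cannot leak through another portal into a neighboring piece---a property that is standard for the Spielman--Teng partitioning but needs to be checked against the precise definition of $P_{T,\Vhat}$ above.
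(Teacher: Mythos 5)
Your overall approach---apply the Spielman--Teng tree decomposition with the $\eta(e) = \max(\str_{T,\emptyset}(e),\ S/\mhat)$ weights, then argue structurally about which pieces an edge's tree-portal path can touch---is exactly the paper's approach, and your structural claim (that the subpaths $P_T(u,\uhat)$ and $P_T(\vhat,v)$ are portal-free in their interiors and hence stay inside the piece containing the non-portal endpoint) is correct and is the crux of the argument.

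There is, however, one imprecision worth flagging. You state the decompose guarantee as a bound on $\sum_{e \in N_\tau} \eta(e)$, where $N_\tau$ is the set of all edges with at least one endpoint in the piece $\tau$. That is not what the Spielman--Teng decomposition (Theorem~\ref{thm:STtreedecomp}) gives: it bounds $\sum_{e : W_i \in \rho(e)} \eta(e)$, where $\rho$ assigns each off-tree edge to one or two pieces, one per endpoint. These are not the same set---a portal vertex belongs to several pieces $W_i$, so an edge incident to a portal lies in $N_{W_i}$ for all of those pieces while $\rho$ assigns it to only one or two of them. Consequently $\sum_{e \in N_\tau}\eta(e)$ can in principle exceed the Spielman--Teng bound, and the inequality as you wrote it does not follow from the theorem. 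The fix is already contained in your structural argument: you observe that the relevant endpoint $u$ (or $v$) is \emph{not} a portal and so lives in a \emph{unique} piece $W_i$, which forces $W_i \in \rho(u,v)$, not merely $e \in N_{W_i}$. So the inclusion you actually need is $\{e : \ehat \in P_{T,\Vhat}(e)\} \subseteq \{e : W_{\tau(\ehat)} \in \rho(e)\}$, at which point the Spielman--Teng bound (with its factor $4/t$, not your $2/\nhat$---another small slip, though harmless inside the slack of $10$) applies directly. Tightening the write-up to route through $\rho$ rather than $N_\tau$ closes the argument and brings it into alignment with the paper's proof.
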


This lemma will be a fairly straightforward using the tree decomposition subroutine (page $881$ of journal version) from Spielman and Teng~\cite{SpielmanTengSolver:journal}, which we include below for completeness.
\begin{definition}[\cite{SpielmanTengSolver:journal} Definition $10.2$]
Given a tree $T$ that spans a set of vertices $V$, a $T$-decomposition is a decomposition of $V$ into sets $W_1,\ldots,W_h$ such that $V=\bigcup W_i$, the graph induced by $T$ on each $W_i$ is a tree, possibly with just one vertex, and for all $i\neq j$, $|W_i\bigcap W_j|\leq 1$.

Given an additional set of edges $E$ on $V$, a $(T,E)$-decomposition is a pair $(\{W_1,\ldots,W_h\},\rho)$ where $\{W_1,\ldots,W_h\}$ is a $T$-decomposition and $\rho$ is a map that sends each edge of $E$ to a set or pair of sets in $\{W_1,\ldots,W_h\}$ so that for each edge in $(u,v)\in E$,
\begin{enumerate}
\item if $\rho(u,v)=\{W_i\}$, then $\{u,v\}\subset W_i$, and
\item if $\rho(u,v)=\{W_i,W_j\}$, then either $u\in W_i$ and $v\in W_j$ or vice versa.
\end{enumerate}
\end{definition}
\begin{theorem}[\cite{SpielmanTengSolver:journal} Theorem $10.3$]
\label{thm:STtreedecomp}
There exists a linear-time algorithm such that on input a set of edges $E$ on vertex set $V$, a spanning tree $T$ on $V$, a function $\eta:E\rightarrow \rea^{+}$, and an integer $1<t\leq \sum_{e\in E}\eta(e)$, outputs a $(T,E)$-decomposition $(\{W_1,\ldots,W_h\},\rho)$, such that
\begin{enumerate}
\item $h\leq t$
\item for all $W_i$ such that $|W_i|>1$,
\[
\sum_{e\in E:W_i\in \rho(e)}\eta(e)\leq \frac{4}{t}\sum_{e\in E}\eta(e)
\]
\end{enumerate}
\end{theorem}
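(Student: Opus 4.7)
The plan is to root $T$ at an arbitrary vertex $r$, set a threshold $\tau \defeq \frac{2}{t}\sum_{e\in E}\eta(e)$, and greedily grow pieces bottom-up via a postorder DFS, cutting off a piece whenever its accumulated edge-weight (sum of $\eta(e)$ over edges of $E$ with at least one endpoint inside the piece) first strictly exceeds $\tau$. Since $t > 1$ by hypothesis, $\tau \geq 2$ when $\sum \eta(e) \geq t$, which is exactly the hypothesis we are given. Each cut piece $W_i$ is then removed from the working tree, but its topmost vertex is retained and reinserted as a new leaf of the parent, so that pieces overlap only at such single ``portal'' vertices and so that $\{W_1,\dots,W_h\}$ is a bona fide $T$-decomposition. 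After the DFS completes, the residual piece containing the root becomes the final $W_h$. The map $\rho$ is defined in the obvious way: for an edge $e=(u,v)\in E$ with $u\in W_i$ and $v\in W_j$, set $\rho(e)=\{W_i,W_j\}$ if $i\neq j$ and $\{W_i\}$ otherwise (if $u$ or $v$ is a boundary vertex shared by several pieces, choose any one of the pieces containing it, which will not affect the bounds below).

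Concretely, I would process vertices in postorder and maintain, for each vertex $v$ after its subtree is processed, an ``active'' subtree $A(v)$ consisting of descendants not yet placed into a cut piece, together with a running weight $\eta(A(v))=\sum_{e\in E:\,e\cap A(v)\neq\emptyset}\eta(e)$. When visiting $v$, I merge $v$ together with the active subtrees $A(c_1),\dots,A(c_k)$ of its children one by one; between merges, whenever the running weight first exceeds $\tau$, the current set is declared a piece $W_i$ and removed. If a single newly-added child subtree already pushes the weight above $\tau$ alone, it becomes a piece in its own right (possibly a singleton, if the single offending vertex $v$ has $\sum_{e\ni v}\eta(e)>\tau$; singletons are explicitly exempted from the weight bound in condition~(2) and do not need to be controlled). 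Each merge step can be charged to a vertex or edge, and with standard bookkeeping — computing $\sum_{e\ni v}\eta(e)$ once per vertex in $O(|E|+|V|)$ preprocessing and then maintaining $\eta(A(v))$ incrementally via ``added-edges counters'' that remove double-counted edges as subtrees are merged — the entire procedure runs in $O(|V|+|E|)$ time.

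For the two guarantees: condition~(2) follows because just before a cut the running weight was at most $\tau$, the last merge added either a single new vertex (contributing at most $\sum_{e\ni v}\eta(e)\le\tau$, else the vertex becomes a singleton piece) or a previously-active child subtree of weight at most $\tau$ (otherwise it would itself have been cut earlier); hence at the moment of cutting the weight is at most $2\tau = \frac{4}{t}\sum_{e}\eta(e)$. The residual root piece, if not already cut by the overflow rule, can be forcibly cut at the end with the same bound. Condition~(1) follows by a charging argument: every cut piece except possibly the last has running weight strictly exceeding $\tau$ when cut, and since each edge $e\in E$ is double-counted across at most two pieces (its two endpoints), the sum of weights over all pieces is at most $2\sum_e\eta(e)$. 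The number of pieces with weight $>\tau$ is therefore at most $2\sum_e\eta(e)/\tau = t$, plus at most one small residual piece that can be merged with its parent; a slight refinement of the threshold (e.g.\ $\tau = \frac{2}{t-1}\sum\eta$ for $t\geq 2$) closes the off-by-one gap to yield $h\leq t$ exactly.

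The main obstacle is not the conceptual argument but the linear-time bookkeeping: maintaining $\eta(A(v))$ as subtrees are merged requires efficiently avoiding double-counting edges whose two endpoints lie in different active subtrees being joined at $v$, and also efficiently discounting edges whose other endpoint now lies in an already-cut piece. I would handle this by (i) precomputing for each vertex $v$ the quantity $w(v)=\sum_{e\ni v}\eta(e)$, (ii) maintaining, during the DFS, a per-active-subtree ``adjustment'' that subtracts the $\eta(e)$ contribution of each edge whose other endpoint has already been folded into the same active subtree, using an LCA-style or union-find-style structure on the current forest of active subtrees. The total work done on any one edge across all merges is $O(1)$ amortised, giving the required linear total running time.
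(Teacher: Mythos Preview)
The paper does not give a proof of this theorem: it is cited as Theorem~10.3 of Spielman--Teng~\cite{SpielmanTengSolver:journal} and used as a black box. The only hint the paper offers toward a proof is the informal description inside Algorithm~\ref{alg:findportal}: ``recursively cut off sub-trees from $T$ whenever the sum of $\eta(e)$ of all $e$ attached to a sub-tree is above $\frac{2\sum_e\eta(e)}{t}$.'' Your postorder greedy with threshold $\tau=\frac{2}{t}\sum_e\eta(e)$ is exactly this scheme, so at the level of approach you are aligned with the original.

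There is, however, a genuine gap in your argument for condition~(1). You write that ``each edge $e\in E$ is double-counted across at most two pieces (its two endpoints)'' and conclude $\sum_i\mathrm{weight}(W_i)\le 2\sum_e\eta(e)$. But your weight function counts every edge with \emph{any} endpoint in $W_i$, and a boundary vertex $v$ can lie in arbitrarily many pieces: if several of $v$'s children each trigger an overflow while you are processing $v$, then $v$ --- and every edge incident to it --- belongs to every one of those cut pieces. The total weight is then not bounded by $2\sum_e\eta(e)$, and your charging for the piece count fails. The clean fix is to switch to an additive running weight: charge $w(v)=\sum_{e\ni v}\eta(e)$ to exactly one piece (the first to absorb $v$) and reset it to zero for any later piece sharing $v$ as a boundary vertex. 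Running weights are now disjoint and total $\sum_v w(v)=2\sum_e\eta(e)$, so at most $t$ pieces can have running weight exceeding $\tau$. Defining $\rho(e)$ to be the pair of pieces to which its endpoints were charged, one has $\sum_{e:W_i\in\rho(e)}\eta(e)\le\sum_{v\text{ charged to }W_i}w(v)$, and your two-step overflow bound then yields condition~(2). This additive weight also merges in $O(1)$ per child, removing the need for the union--find machinery you invoke and making the linear-time claim honest.
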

We can use the above theorem to show Lemma~\ref{lem:FindPortal}.
\begin{proof}[Proof of Lemma~\ref{lem:FindPortal}]
We will apply Theorem~\ref{thm:STtreedecomp} with $t=\nhat$, and the function $\eta$ will be
\[
\eta(e)=\max \left(\str_{T,\emptyset}(e), \frac{\sum_{e'\in E}\str_{T,\emptyset}(e')}{\mhat}\right)
\]
Note by construction $\sum_e \eta(e) \leq 2\sum_{e'\in E}\str_{T,\emptyset}(e')$.
We get $\{W_1,\ldots,W_{\nhat}\}$ back, and let $T_i$ be the tree induced by $T$ on $W_i$. Note the $T_i$'s will be edge disjoint, and cover all tree edges of $T$. Our set of portals will be the set of nodes that are in more than one of the $W_i$'s, i.e. the nodes where different $T_i$'s overlap. Note the number of portals is at most the number of $T_i$'s by an inductive argument from any $T_i$ that is a sub-tree in $T$. Such $T_i$ have exactly one portal, and we can remove $T_i$ from $T$ and continue the argument until all that remain in $T$ is one sub-tree. 

Consider any tree edge $\ehat\in T$, suppose it in $T_i$ for some $i$. $\ehat$ can only be on the tree-portal routing for some edge $\{u,v\}$ when $W_i\in \rho(u,v)$. Note as $T_i$ contains at least one tree edge, we know $|W_i|>1$, the second guarantee in Theorem~\ref{thm:STtreedecomp} gives 
\[
\sum_{e:W_i\in\rho(e)} \max \left(\str_{T,\emptyset}(e), \frac{\sum_{e'\in E}\str_{T,\emptyset}(e')}{\mhat}\right) \leq \frac{4}{t} \left(2\sum_{e'\in E}\str_{T,\emptyset}(e')\right)
\]
which directly gives the bounds we want in the lemma.
\end{proof}

%\begin{lemma}
  %\label{lem:PerturbResistances}
  %Given two smooth $p$-norm instances, $\calG, \calH,$ on the same
  %underlying graphs, \emph{i.e.}, $V^{\calG} = V^{\calH}$ and
  %$E^{\calG} = E^{\calH},$ such that,
  %\begin{tight_enumerate}
  %\item the gradients are identical, $\gg^{\calG} = \gg^{\calH},$
  %\item the $\ell_{2}^2$ resistance off by at most $\kappa$:
    %$\rr_e^{\calG} \leq \kappa \rr_{e}^{\calH}$ for all edges $e,$ and
  %\item the $\ell_{p}^p$ resistance off by at most $\kappa^{p - 1}$:
    %$\ss_e^{\calG} \leq \kappa^{p - 1} \ss_{e}^{\calH}$ for all edges,
  %\end{tight_enumerate}
  %then $\calH \preceq_{\kappa} \calG$ with the identity mapping function.
%\end{lemma}
%
%\begin{proof}
%Entry-wise verification with $\ff_e^{H} / \kappa$.
%\todo{fill this in}
%\end{proof}

%%% Local Variables:
%%% mode: latex
%%% TeX-master: "main"
%%% End:

%TEX root=main.tex

\subsection{Graph Sparsification}
\label{subsec:sparsification}

Once we are able to move most of the edges onto a small subset of vertices, we wish
to sparsify the resulting dense graph over those vertices.
This sparsification has to simultaneously preserve properties of 
$1$-st, $2$-nd and
$p$-th order terms, as well as the interactions between them, which turns out to be challenging. 
We resort to expander decomposition which allows us to partition the
vertex set s.t. the edges internal to each subset form an expander and
not too many edges cross the partitions. Just having an expander graph is not enough to allow us to sample the
graph due to the need of preserving the linear terms. Thus, we also require that on each expander the
orthogonal projection of the gradient to the cycle space of the sub-graph has its maximum squared entry not much larger than
the the average squared entry.
We refer to this as a uniform (projected) gradient.
We discuss how to obtain an expander decomposition that guarantees
the projected gradients are uniform in the expanders later in this overview. Given the uniform projected gradient condition, we show that we can uniformly sample edges
of these expanders to create sparsified versions of them.
We construct maps between the flows on an original expander and its
sampled version that work for \emph{any}
flow, not only a circulation.
These maps preserve the linear term $\sum_e \gg_e \ff_e$ exactly,
while bounding the cost of the $2$-norm and $\ell_p$-norm terms by relating them to the cost of the
optimal routing of a flow with the same demands and same gradient
inner product, and showing that optimal solutions are similar on the
original expander and its sampled version.
This strategy resembles the flow maps developed in \cite{KelnerLOS14},
and like their maps, we route demands using electrical flows on
individual expanders, but additionally we need create a flow in the
cycle space that depends on projection of the gradient onto
the cycle space.

Tree-portal routing will give us an instance where all the off-tree edges
are between portal nodes.
We can look at the sub-graph restricted to the portal nodes and
the off-tree edges between them.
This graph has many fewer nodes comparing to the original graph but
roughly the same number of edges, and thus is much denser. We can then
sparsify this graph to reduce the number of off-tree edges similar to
the construction of spectral sparsifiers. The main technical
difficulty is that in the sparsified graph, we still want the
$\ell_p^p$ terms in our objective function to have a same scalar $s$
for every edge, but similar to the case of how resistances are scaled
in spectral sparsification, to preserve the total value of the
$\ell_p^p$ terms, we would naturally want to scale $s$ according to
the probability we sample an edge $e$. Thus, to get a same scalar $s$
for all sampled edges, we are limited to uniform sampling. We know
uniform sampling works in expanders
(c.f. \cite{SpielmanTengSolver:journal, SpielmanS08:journal} and
\cite{KelnerLOS14,Sherman13} for $\ell_2$ and $\ell_{\infty}$ respectively),
so the natural approach is to first decompose the graph into expanders, and sampling uniformly inside each expander. However, because of the presence of a gradient, we need to be a bit more careful than even expanderdecomposition-based sparsification steps.
Thus, we work with {\em uniform expanders}.
\begin{definition}
\label{defn:uniformexpander}
  A graph~\footnote{We use an instance and its underlying graph interchangeably in our discussion.} $G$ is a $\alpha$-uniform $\phi$-expander (or {\em uniform expander} when parameters not spelled out explicitly) if
  \begin{tight_enumerate}
  \item $\rr$ on all edges are the same.
  \item $\ss$ on all edges are the same.
  \item $G$ has {\em conductance}\footnote{$\rr$ are uniform, so conductance is defined as in unweighted graphs. We use the standard definition of conductance. For graph $G=(V,E)$, the conductance of any $\emptyset\neq S\subsetneq V$ is $\phi(S)=\frac{\delta(S)}{\min\left(vol(S),vol(V\setminus S)\right)}$ where $\delta(S)$ is the number of edges on the cut $(S,V\setminus S)$ and $vol(S)$ is the sum of the degree of nodes in $S$. The conductance of a graph is $\phi_G=\min_{S\neq \emptyset,V}\phi(S)$.} at least $\phi$.
  \item The projection of $\gg$ onto the cycle space of $G$,
    $\gghat^G = (I-\BB \LL^{\dag} \BB^{\top})\gg$, is {\em $\alpha$-uniform} (see next definition), where $\BB$ is the edge-vertex incidence matrix of $G$, and $\LL=\BB^{\top}\BB$ is the Laplacian.
  \end{tight_enumerate}
\end{definition}
\begin{definition}
\label{defn:alphauniform}
  A vector $\yy \in \rea^m$ is said to be {\em $\alpha$-uniform} if
  \[\norm{\yy}_{\infty}^{2} \le \frac{\alpha}{m}
    \norm{\yy}^{2}_{2}.\]
 We abuse the notation to also let the all zero vector $\vzero$ be $1$-uniform.
\end{definition}
In Section~\ref{sec:decomposition} we show how to decompose the graph consisting of portals and the off-tree edges between them into vertex disjoint uniform expanders such that more than half of the edges are inside the expanders.\footnote{Some of the expanders we find actually won't satisfy the projected gradient being $\alpha$-uniform constraint (case $3(b)$ in Theorem~\ref{thm:Decompose}).
For those expanders, the projection of the gradient in the cycle space is tiny so we make it $0$. This leads to the additive error in Theorem~\ref{thm:ultrasparsify}.}

 	\begin{restatable}[Decomposition into Uniform Expanders]{theorem}{Decompose}
 		\label{thm:Decompose}
 		Given any graph/gradient/resistance instance $\calG$
 		with $n$ vertices, $m$ edges, unit resistances, and gradient $\gg^{\calG}$,
 		along with a parameter $\delta$, $\textsc{Decompose}(\calG, \delta)$
 		returns vertex disjoint subgraphs $\calG_1, \calG_2, \ldots$
 		in $O(m \log^{7}n \log^2(n / \delta))$ time such that
 		at least $m/2$ edges are contained in these subgraphs,
 		and each $\calG_i$ satisfies (for some absolute constant
 		$c_{partition}$):
 		\begin{enumerate}
 			\item The graph $(V^{\calG_i}, E^{\calG_i})$ has conductance
 			at least
 			\[
 			\phi = 
 			\frac{1}{c_{partition}
 				\cdot \log^{3}n
 				\cdot \log\left(n / \delta\right)},
 			\]
 			and degrees at least $\phi \cdot \frac{m}{3n}$,
 			where $c_{partition}$ is an absolute constant.
 			\item The projection of its gradient $\gg^{\calG_i}$ into the cycle space
 			of $\calG_i$, $\gghat^{\calG_i}$ satisfies one of:
 			\begin{enumerate}
 				\item $\gghat^{\calG_i}$ is $O(\log^{8}{n} \log^{3}(n / \delta))$-uniform,
 				\[
 				\left(\gghat_{e}^{\calG_i}\right)^2
 				\leq
 				\frac{O\left( \log^{14}n \log^{5}\left( n / \delta\right)\right)}
 				{m_{i}} \norm{\gghat^{\calG_i}}_2^2
 				\qquad
 				\forall e \in E\left( \calG_i \right).
 				\]
 				Here $m_i$ is the number of edges in $\calG^{\calG_i}$.
 				\label{case:Uniform}
 				\item The $\ell_2^2$ norm of $\gghat^{\calG_i}$ is smaller by a factor
 				of $\delta$ than the unprojected gradient:
 				\[
 				\norm{\gghat^{\calG_i}}^2_2
 				\leq
 				\delta \cdot \norm{\gg^{\calG}}_2^2.
 				\]
 				\label{case:Tiny}
 			\end{enumerate}
 		\end{enumerate}
 	\end{restatable}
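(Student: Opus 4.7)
The plan is to build the decomposition by interleaving two subroutines: a standard near-linear time expander decomposition to enforce the conductance guarantee, and an energy-shedding step that enforces the uniform-gradient condition. Starting from $\calG$, first invoke a polylog-approximation expander decomposition (along the lines of the Spielman-Teng partition routine or a Saranurak-Wang style cut-matching game) which partitions the vertex set into pieces whose induced subgraphs have conductance $\Omega(\phi)$ while cutting at most a $1/\poly\log$ fraction of edges. Alongside this, iteratively peel off any vertex whose degree in its candidate piece falls below $\phi \cdot \frac{m}{3n}$, charging the incident edges to the cut budget; since the total mass of such vertices is bounded by $n \cdot \phi \cdot \frac{m}{3n} = \phi m/3$, this contributes at most a small additive loss to the fraction of retained edges.

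For each expander piece $\calG_i$ produced in this way, compute $\gghat^{\calG_i} = (\II - \BB^{\calG_i}(\LL^{\calG_i})^{\dag}(\BB^{\calG_i})^{\top}) \gg^{\calG_i}$ with a Laplacian solve and test the uniformity inequality $(\gghat^{\calG_i}_e)^2 \le \frac{\alpha}{m_i}\norm{\gghat^{\calG_i}}_2^2$ for the required $\alpha$. If it holds, finalize the piece (case~\ref{case:Uniform}). Otherwise, let $E_{\mathrm{heavy}} \subset E^{\calG_i}$ be the edges violating this bound; they must collectively carry at least a constant fraction of $\norm{\gghat^{\calG_i}}_2^2$, since otherwise the remaining edges would already satisfy the uniformity bound up to constants. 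Remove $E_{\mathrm{heavy}}$ from the piece (adding them to the set of cut edges) and recurse on the reduced subgraph, re-running the expander decomposition step on what remains.

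The termination argument relies on Lemma~\ref{lem:EnergyDecrease}: restricting to a subgraph can only decrease the cycle-space projection norm. So after each energy-shedding step, $\norm{\gghat}_2^2$ on the as-yet unprocessed portion drops by at least a constant factor. After $O(\log(\norm{\gg^{\calG}}_2^2 / \delta))$ such contractions, either every piece has been finalized under case~\ref{case:Uniform}, or else the residual subgraph satisfies $\norm{\gghat^{\calG_i}}_2^2 \le \delta \norm{\gg^{\calG}}_2^2$ and its pieces can be emitted directly under case~\ref{case:Tiny}. The conductance and uniformity polylogs in the theorem arise from composing the per-round polylog factors of the expander decomposition with this $\log(n/\delta)$ outer loop.

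The main obstacle, and the source of the large polylog constants, is coordinating the two loops so that $\ge m/2$ edges survive. Each round of expander decomposition sheds at most a $1/\poly\log$ fraction of its input edges, and each energy-shedding round sheds heavy edges whose total count must be bounded by a budget argument (a heavy edge contributes at least $\frac{\alpha}{m_i}\norm{\gghat^{\calG_i}}_2^2$ to the squared $\ell_2$ projection, so there are at most $m_i/\alpha$ of them, i.e., a $1/\alpha$ fraction). With at most $O(\log(n/\delta))$ outer rounds, the total fraction of cut edges is $O(\frac{\log(n/\delta)}{\poly\log n})$, which is less than $1/2$ once the polylog factors are tuned. Care must be taken that the degree-peeling step, the expander guarantee, and the gradient shedding are all compatible on the shrinking residual graph, which is where the specific exponents in $\phi$ and in the uniformity bound come from.
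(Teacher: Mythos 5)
Your plan is close in spirit to the paper's, but the termination argument has a genuine gap, and it is precisely the point where the paper's proof takes a different route. You claim that when the per-piece uniformity test on $\gghat^{\calG_i}$ fails, the violating heavy edges ``must collectively carry at least a constant fraction of $\norm{\gghat^{\calG_i}}_2^2$, since otherwise the remaining edges would already satisfy the uniformity bound up to constants.'' This does not follow: uniformity is a per-edge property, and a \emph{single} edge with $(\gghat^{\calG_i}_e)^2 \approx \frac{\alpha}{m_i}\norm{\gghat^{\calG_i}}_2^2$ already causes the test to fail while carrying only a $\Theta(\alpha/m_i)$ fraction of the squared $\ell_2$ norm --- vanishingly small when $\alpha$ is polylogarithmic and $m_i$ is large. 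Lemma~\ref{lem:EnergyDecrease} gives only that $\norm{\gghat}_2^2$ is non-increasing when restricting to a subgraph, not geometrically decaying, so after a shedding round the energy need not drop by any fixed factor. Without a geometric decay, the $O(\log(n/\delta))$ bound on the number of outer rounds is unsupported, and so is the conclusion that the leftover pieces end up in Case~\ref{case:Tiny}.

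The paper's proof avoids this by changing both the shedding rule and the termination condition. In \textsc{DecomposeRecursive}, edges are trimmed \emph{before} each expander decomposition, and heaviness is measured against the \emph{parent's} projected gradient $\gghat^{\calG}$: the trim removes $e$ with $(\gghat^{\calG}_e)^2 \geq \frac{10L}{m^{\calG}}\norm{\gghat^{\calG}}_2^2$, so at most $m^{\calG}/(10L)$ edges per level and hence at most $m/10$ over all $L$ levels; no energy-decay argument is needed for the edge budget. Termination is then handled by outputting a piece $\calG_i$ whenever both $\norm{\gghat^{\calG_i}}_2^2 \geq \tfrac12\norm{\gghat^{\calG}}_2^2$ and $m^{\calG_i} \geq m^{\calG}/2$; otherwise the algorithm recurses, so each recursion step halves either the edge count (at most $\log n$ times) or the projection energy. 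At depth $L = O(\log(n/\delta))$ the energy has therefore halved at least $\log(1/\delta)$ times, giving Case~\ref{case:Tiny}. In the early-termination branch, the parent's trim gives a pre-projection $\ell_\infty$ bound on $\gghat^{\calG}_{|E^{\calG_i}}$, the ``energy did not halve'' clause converts its right-hand side into $\norm{\gghat^{\calG_i}}_2^2$, and Lemma~\ref{lem:electricalOblInfRoute} (the cycle-space projection on a $\phi$-expander has $\ell_\infty\!\to\!\ell_\infty$ norm $O(\phi^{-2}\log n)$) converts the pre-projection $\ell_\infty$ bound into the post-projection bound on $\gghat^{\calG_i}$, which is what the theorem requires. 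Your proposal tests post-projection uniformity directly, so it never notices that this conversion is needed --- but in exchange it is stuck with an unbounded shedding loop. Either adopt the paper's ``halve-or-output'' termination rule together with Lemma~\ref{lem:electricalOblInfRoute}, or find a different argument that bounds the number of shedding rounds; the constant-fraction-energy claim as written is false.
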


Moreover, the min degree of any node in the expanders is up to a polylog factor close to the average degree. For the off-tree edges not included in these uniform expanders, we work on the pre-image\footnote{By pre-image of $\ehat$ we mean the original off-tree edge $e$ that gets moved to $\ehat$ in the tree-portal routing, i.e. $e=\textsc{Move}^{-1}_{T,\Vhat}(\ehat)$.} of them in the next iteration. That is, for any edge $\ehat$ inside one of the expanders, we remove its pre-image from the instance $\calG$, and work on the remaining off-tree edges in $\calG$ in the next iteration. This iterative process terminates when the number of remaining off-tree edges is small enough (i.e. $\Otil(|E^{\calG}|/\kappa)$). This takes $O(\log |E^{\calG}|)$ iterations as a constant fraction of off-tree edges are moved to be inside the expanders each iteration.
 
\paragraph{Sparsify Uniform Expanders} If we append a column containing the gradient of edges to the edge-vertex incidence matrix $\BB$, the conditions of a $\alpha$-uniform $\phi$-expander is equivalent to each row of $\BB$ having leverage
score at most $\frac{n \alpha \phi^{-1}}{m}$ where $n,m$ are number of nodes and edges. An underlying connection with the $\ell_p$-norm row sampling
result by Cohen and Peng~\cite{CohenP15} is that this
is also a setting under which $\ell_q$-norm functionals
are preserved under uniform sampling. We refrain from developing a more complete picture of
such machinery here, and will utilize ideas closer to
routing on expanders~\cite{KelnerM09,KelnerLOS14} to show
a cruder approximation in Section~\ref{sec:expander}.

\begin{restatable}[Sampling Uniform Expanders]{theorem}{ExpanderSparsify}
	
	\label{thm:sampAndFixGrad}
	Given an $\alpha$-uniform $\phi$-expander
	$\calG = (V^{\calG} E^{\calG} , r^{\calG},s^{\calG},\gg^{\calG})$
	with $m$ edges and vertex degrees at least $d_{\min}$,
	for any sampling probability $\tau$ satisfying
	\[
	\tau \geq c_{sample} \cdot \log{n} \cdot
	\left( \frac{\alpha}{m}
	+ \frac{1}{\phi^2 d_{\min}} \right),
	\]
	where $c_{sample}$ is some absolute constant,
	$\textsc{SampleAndFixGradient}(\calG, \tau)$ 
	w.h.p. returns 
	a partial instance $\calH = (H,r^{\calH},s^{\calH},\gg^{\calH})$ and maps
	$\map{\calG}{\calH}$ and $\map{\calH}{\calG}$.
	The graph $H$ has the same vertex
	set as $G$, and $H$ has at most $2 \tau m$ edges.
	Furthermore, $r^{\calH} = \tau \cdot r^{\calG}$ and $s^{\calH} = \tau^p \cdot s^{\calG}$.
	The maps
	$\map{\calG}{\calH}$ and $\map{\calH}{\calG}$
	certify 
	\[
	\calH \preceq_{\kappa} \calG \text{ and } \calG \preceq_{\kappa}
	\calH
	,
	\]
	where $\kappa = m^{1/(p-1)} \phi^{-9} \log^{3} n  $.
	% \begin{enumerate}
	% \item $\gg^{\calH}$ is $\alpha \todo{logs?}$ uniform.
	% \item H has expansion at least $\phi \todo{logs?}$.
	% \item $H$ has $\todo{}$ edges.
	% \end{enumerate}
\end{restatable}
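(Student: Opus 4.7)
The plan is to construct $\calH$ by independent Bernoulli sampling: include each edge of $\calG$ in $\calH$ with probability $\tau$, set $r^{\calH} = \tau \cdot r^{\calG}$ and $s^{\calH} = \tau^p \cdot s^{\calG}$ (these scalings ensure that the $\ell_2^2$ and $\ell_p^p$ terms are preserved in expectation under uniform sampling). A Chernoff bound handles the edge count $|E^{\calH}| \leq 2\tau m$ w.h.p. The gradient $\gg^{\calH}$ is chosen so that its restriction to the cycle space of $\calH$ approximates the projection $\gghat^{\calG}$, possibly with a small additive correction potential term on the kept edges, so that $\BB^\top \gg^{\calH} = \BB^\top \gg^{\calG}$ (exact preservation of the potential part) and the cycle-space interaction is approximately preserved. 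This is where the \textsc{FixGradient} portion of the name comes from.

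The concentration step uses the two expander hypotheses. Because all resistances are equal and the conductance is at least $\phi$, Cheeger's inequality gives a spectral gap of at least $\phi^2/2$, which in turn bounds the effective resistance of every edge by roughly $1/(\phi^2 d_{\min})$; combined with the $\alpha$-uniformity of $\gghat^{\calG}$, the rows of the augmented incidence matrix $[\BB \mid \gghat^{\calG}]$ have leverage scores bounded by $O((\alpha/m + 1/(\phi^2 d_{\min})))$. The sampling rate $\tau$ is chosen exactly so that matrix Chernoff preserves both the Laplacian and the inner product structure with $\gghat^{\calG}$ up to constants.

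The flow maps are defined by splitting any flow $\ff$ on either graph into its potential part (an electrical flow routing the residue) and its cycle part (a circulation). For $\map{\calG}{\calH}$, given $\ff^{\calG}$ with residue $\bb$: route $\bb$ on $\calH$ using the electrical flow w.r.t. $r^{\calH}$, and route the cycle component using an expander-routing embedding of $\calG$ into $\calH$. The reverse map $\map{\calH}{\calG}$ is analogous with roles reversed, now using the electrical flow on $\calG$ (which, because $\calG$ is an expander, has small $\ell_\infty$ congestion relative to $\ell_2$ energy, roughly $\phi^{-1}$ per unit of $\ell_2$ energy). The linear term $\gg^\top \ff$ decomposes as $\gg^\top \ff^*_{\bb} + \gghat^\top \cc$, and the two terms are preserved by the potential/cycle split in the construction. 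The $\ell_2^2$ term is preserved up to $\phi^{-O(1)}\log n$ factors by the spectral sparsifier guarantee on the expander.

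The main obstacle, and where most of the loss goes, is bounding the $\ell_p^p$ term. A uniform expander has routing congestion $\tilde{O}(\phi^{-1})$ per unit of $\ell_2$ mass, so the mapped flow is bounded coordinate-wise by roughly $\phi^{-O(1)}$ times its $\ell_2$ average. Converting such a uniform-ish $\ell_\infty$ bound into a $p$-norm comparison costs a factor of $m^{1/(p-1)}$: if the mapped flow has $\ell_p^p$ value inflated by a factor $K$ relative to the original, rescaling by $\kappa$ shrinks the $p$-term by $\kappa^p$ while the approximation relation $\preceq_\kappa$ only demands dividing objective values by $\kappa$, so we need $\kappa^{p-1} \geq K$. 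Setting $\kappa = m^{1/(p-1)} \phi^{-9} \log^{3} n$ absorbs the combined losses from the electrical routing ($\phi^{-O(1)}$), the gradient correction, and the $\ell_2$-to-$\ell_p$ conversion ($m^{1/(p-1)}$). The delicate part is making sure all three terms (linear, quadratic, $p$-th order) are controlled simultaneously for every flow (not just circulations) — the $\alpha$-uniform projected-gradient assumption is what makes the non-circulation case go through, by ensuring the gradient doesn't concentrate on a few edges that sampling might miss.
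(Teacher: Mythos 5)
Your proposal captures the broad outlines of the paper's approach---uniform sampling with the same $r/s$ rescaling, matrix Chernoff on the augmented incidence matrix $[\BB \mid \gghat^{\calG}]$ via leverage scores bounded by $\alpha/m + 1/(\phi^2 d_{\min})$, the gradient re-projection preserving the potential part of $\gg$, and the $m^{1/(p-1)}$ factor from converting an $\ell_\infty$-vs-$\ell_2$ congestion bound into a $p$-norm comparison. But there is a genuine gap in how you map the cycle component of the flow, and it matters for the proof to close.

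You say the cycle component of $\ff^{\calG}$ should be mapped into $\calH$ ``using an expander-routing embedding of $\calG$ into $\calH$.'' The paper does not embed the cycle component at all. Its map $\map{\calG}{\calH}$ is rank one on the cycle space: it sends the whole of $\ff^{\calG}$ to
$\BB^{\calH}(\BB^{\calH\top}\BB^{\calH})^{\dag}\BB^{\calG\top}\ff^{\calG} \;+\; \frac{\gghat^{\calG\top}\ff^{\calG}}{\|\gghat^{\calH}\|_2^2}\,\gghat^{\calH}$,
discarding all structure of $\ff^{\calG}$ beyond its residue $\bb$ and the single scalar $\theta = \gghat^{\calG\top}\ff^{\calG}$. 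This is essential: the proof then proceeds by proving a \emph{lower} bound on the objective of \emph{any} flow with residue $\bb$ and $\gghat$-dot-product $\theta$ (Lemma~\ref{lem:ExpanderLower}: the four-term bound involving $\|\bb\|_{\DD^{-1}}^2$, $\theta^2/\|\gghat\|_2^2$, $\|\DD^{-1}\bb\|_\infty^p$, and $(\theta/\|\gghat\|_1)^p$), and a matching \emph{upper} bound on the rank-one mapped flow (Lemma~\ref{lem:ExpanderElectricalUpper}), then comparing the two through the shared parameters $(\bb,\theta)$. Because the map is ``forgetful'' of the cycle structure, there is no way to bound the mapped flow's $\ell_p^p$ energy directly against the original flow's $\ell_p^p$ energy; the comparison \emph{must} route through the lower bound lemma. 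An embedding of an arbitrary circulation of $\calG$ into the sparser graph $\calH$, as your proposal suggests, is not obviously realizable with bounded $\ell_p$ congestion ($\calH$ has only $\tau m$ edges, and a cycle concentrated on discarded edges would have to be re-routed), and even if it were, you would lack the matching lower bound needed to certify $\preceq_\kappa$. The scalar-and-electrical-flow parameterization, together with $\alpha$-uniformity ensuring $\gghat$ is spread out enough that the $\ell_\infty$ of the rank-one cycle term is under control, is the key missing structural idea.
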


%!TEX root=main.tex
\subsection{Ultra-sparsification Algorithm and Error Analysis}
\label{subsec:graph_ultrasparsify}

Now we put all the pieces together.
We need to show that adding together our individual sparsifiers results in a sparsifier of the overall graph.
This is fairly immediate given the strong guarantees we
established on the individual graphs.
We also need to be able to repeatedly
decompose and sparsify enough times that the overall graph becomes
sparse.
To address this issue, we use ideas from \cite{KoutisMP11} that suggest scaling up the tree from the tree routing section limits the
error incurred during sampling.
Here it again becomes important that because we rely on
\cite{SaranurakW18:prelim}, we know exactly which edges
belong to a sparsifier.
This guarantee limits the interaction 
between sparsification of different
expanders.

After constructing a low-stretch spanning tree $T$, we round each $\rr^{\calG}_e$ of off-tree edges $e\in E^{\calG}\setminus T$ down to the nearest power of $2$ (can be less than $1$) if $\rr^{\calG}_e\geq \delta$, and round to $0$ otherwise. This gives of $\nbucket$ bucket of edges with uniform resistances, and we work with one bucket of edges at a time, since the edges in a uniform expander need to have uniform $\rr_e$. If $\calG'$ is the instance of $\calG$ after rounding the resistance of off-tree edges, it is easy to see the following error guarantee.
\begin{lemma}
\label{lemma:rounding}
$\calG \preceq_{1} \calG'$ with the identity mapping, 
%For any flow solution $\ff^{\calG}$ of $\calG$, there exists a flow solution $\ff^{\calG'}$ of $\calG'$ where
%\[
%\obj_{\calG'}(\ff^{\calG'})\geq \obj_{\calG}(\ff^{\calG}),
%\]
, and for any flow solution $\ff^{\calG'}$ of $\calG'$, again using the identity mapping, we have 
\[
\obj_{\calG}(\frac{1}{2}\ff^{\calG'})\geq \frac{1}{2}\obj_{\calG'}(\ff^{\calG'})-\delta\norm{\ff^{\calG'}}^2_2.
\]
\end{lemma}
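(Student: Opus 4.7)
Both claims use the identity map, so the residue condition $(\BB^{\calG'})^\top \ff = (\BB^{\calG})^\top \ff$ is automatic. The key observation is what the rounding does edge-by-edge: $\gg$ and $s$ are unchanged; on tree edges, $\rr^{\calG'}_e = \rr^{\calG}_e$; on off-tree edges with $\rr^{\calG}_e \ge \delta$, we have $\rr^{\calG'}_e \le \rr^{\calG}_e \le 2\rr^{\calG'}_e$ by the powers-of-$2$ rounding down; and on off-tree edges with $\rr^{\calG}_e < \delta$, we have $\rr^{\calG'}_e = 0$ and $\rr^{\calG}_e \le \delta$.

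For the first inequality $\calG \preceq_1 \calG'$, since the objective is of the form $\gg^\top \ff - \sum_e(\rr_e \ff_e^2 + s\abs{\ff_e}^p)$ to be maximized, and $\rr^{\calG'} \le \rr^{\calG}$ coordinate-wise,
\[
\obj^{\calG'}(\ff) - \obj^{\calG}(\ff) = \sum_e (\rr^{\calG}_e - \rr^{\calG'}_e) \ff_e^2 \ge 0,
\]
which is exactly the $\kappa = 1$ approximation condition.

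For the second inequality, expand directly:
\[
\obj^{\calG}\!\left(\tfrac{1}{2}\ff\right) - \tfrac{1}{2}\obj^{\calG'}(\ff) = \left(\tfrac{1}{2} - 2^{-p}\right) s \sum_e \abs{\ff_e}^p + \sum_e \left(\tfrac{1}{2} \rr^{\calG'}_e - \tfrac{1}{4} \rr^{\calG}_e\right) \ff_e^2,
\]
where the linear $\gg^\top \ff$ terms cancel. The coefficient $\tfrac{1}{2} - 2^{-p} \ge 0$ for $p \ge 1$, so the $p$-norm part only helps. For the quadratic part, split by the three cases above: tree edges contribute $\tfrac{1}{4} \rr^{\calG}_e \ff_e^2 \ge 0$; off-tree edges with $\rr^{\calG}_e \ge \delta$ contribute $(\tfrac{1}{2}\rr^{\calG'}_e - \tfrac{1}{4}\rr^{\calG}_e)\ff_e^2 \ge (\tfrac{1}{2} \rr^{\calG'}_e - \tfrac{1}{2} \rr^{\calG'}_e) \ff_e^2 = 0$ by $\rr^{\calG}_e \le 2\rr^{\calG'}_e$; and off-tree edges with $\rr^{\calG}_e < \delta$ contribute at least $-\tfrac{1}{4}\rr^{\calG}_e \ff_e^2 \ge -\tfrac{\delta}{4}\ff_e^2$. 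Summing yields
\[
\obj^{\calG}\!\left(\tfrac{1}{2}\ff\right) - \tfrac{1}{2}\obj^{\calG'}(\ff) \ge -\tfrac{\delta}{4}\norm{\ff}_2^2 \ge -\delta\norm{\ff}_2^2,
\]
as required.

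The computation is entirely routine once the case analysis on off-tree edges is in place; the only thing to be careful about is that the factor $\tfrac{1}{2}$ in the flow scaling $\tfrac{1}{2}\ff$ is chosen precisely so that $\tfrac{1}{4}\rr^{\calG}_e \le \tfrac{1}{2}\rr^{\calG'}_e$ whenever $\rr^{\calG}_e \le 2\rr^{\calG'}_e$, which is exactly the guarantee of rounding down to the nearest power of $2$. Thus the $\tfrac{1}{2}$ scaling is the minimal scaling compatible with factor-$2$ resistance rounding, and the additive $\delta \norm{\ff}_2^2$ error soaks up precisely the contribution of the tiny-resistance edges that were zeroed out.
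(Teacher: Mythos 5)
Your proof is correct. The paper states this lemma with the remark ``it is easy to see'' and omits a proof, so there is nothing to compare against; your argument supplies the routine verification that the paper elides. Both parts are handled cleanly: the $\calG \preceq_1 \calG'$ direction is an immediate consequence of $\rr^{\calG'}_e \le \rr^{\calG}_e$ entrywise, and the reverse direction follows by expanding the objectives, noting the linear terms cancel, the $\ell_p^p$ terms contribute a nonnegative $(\tfrac{1}{2}-2^{-p})$ coefficient, and the quadratic terms are handled by the three-case split (tree edges, rounded off-tree edges with the factor-2 bound from rounding down to a power of 2, and zeroed-out off-tree edges with $\rr^{\calG}_e < \delta$), with the last case absorbed by the $\delta\norm{\ff}_2^2$ additive slack.
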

To avoid using too many symbols, we reuse $\calG$ to refer to the original instance after the resistance rounding (i.e. the $\calG'$ above). Denote $E^r$ the subset of edges in $E^{\calG}\setminus T$ containing edges with $\rr_e=r$ for some particular $r$, note there are at most $\nbucket$ possible value of $r$. We work iteratively on the set $E^r$, starting with $E^r_0=E^r$. In the $i$-th iteration, we use $\textsc{FindPortal}(T,E^r_i,m/\kappa)$ (Lemma~\ref{lem:FindPortal}) to find a set of $m/\kappa$ portal vertices for the edges remaining in $E^r_i$, note the low-stretch spanning tree is fixed through the process, but each iteration we find a new set of portals using \textsc{FinalPortal} as introduced in
Section~\ref{subsec:treepartition}.

We then move edges in $E^r_i$ using the tree-portal routing. %For the edges in $E^r_i$ that become self-loops after the tree-portal routing, we put these self-loops in a set $\Ehat^r_{i,loop}$, and remove their pre-images %\footnote{For any off-tree edge or self-loop $\ehat$ in the graph after tree-portal routing, its pre-image is $e=\textsc{Move}^{-1}_{T,\Vhat}(\ehat)$.} 
 %from $E^r_i$. If the removal of self-loops reduces the size of $E^r_i$ by more than $1/2$, we let $E^r_{i+1}$ be the remaining off-tree edges in $E^r_i$ and proceed to iteration $i+1$. Otherwise, more than $1/2$ of the edges in $E^r_{i}$ after tree-portal routing become simple edges (i.e. not self-loops) between the $m/\kappa$ portal nodes, and
 We let $\Ghat^{r}_i$ to be the graph of the $m/\kappa$ portal nodes and the off-tree edges between them. Note the number of edges in $\Ghat^{r}_i$ is $|E^r_i|$ and the number of nodes is $m/\kappa$. 
 
So far we haven't specified the $\rr,s,\gg$ values on the edges in $\Ghat^{r}_i$, and these values will depend on the tree-portal routing as well as the average degree in $\Ghat^{r}_i$. For now we focus on discuss the edge set in our final sparsified instance, and assume we have $\rr,s,\gg$ values for $\Ghat^{r}_i$. We will come back to specify these quantities later. 

We use \textsc{Decompose} (Theorem~\ref{thm:Decompose}) on the graph $\Ghat^{r}_i$ to compute a collection of vertex disjoint sub-graphs $\{\Ghat^{r}_{i,1},\Ghat^{r}_{i,2},\ldots\}$, and at least half of the edges in $\Ghat^{r}_i$ are inside these sub-graphs. We let $\Ehat^r_i$ to be the edges contained in these sub-graphs, and 
$\Etil^r_i$ be the set of pre-images of edges in $\Ehat^r_i$ (in terms of tree-portal off-tree edge moving).
We remove $\Etil^r_i$ from $E^r_i$ and proceed to iteration $i+1$. If at the beginning of some iteration $i$, the size of $E^r_i$ is at most $\Otil(m/\kappa)$, we leave them as off-tree edges, and denote $E^{r}_{last}$ as the set containing them. Note for any $r$, the iterative process must finish in $O(\log \kappa)$ iterations as we start with $|E^r|<m$ edges. We do this for all $r$.   

So far any edge in the original instance $\calG$ we get either $(1)$ a tree edge in $T$, or %$(2)$ a self-loop in a $\Ehat^r_{i,loop}$ for some $i,r$, or 
$(2)$ an off-tree edge in a $\Ghat^{r}_{i,j}$ for some resistance value $r$, iteration $i$, and $j$-th expander computed in that iteration, or $(3)$ an off-tree edge remaining in $\E^{r}_{last}$ for some resistance value $r$.
There are $n-1$ edges in group $(1)$, and $\Otil(m/\kappa)$ edges in group $(3)$, so we can keep all these edges in the ultra-sparsifier $\calH$. %As we can optimize any self-loop independently from the rest of the instance, we don't worry about the size of group $(2)$ (a trivial upperbound is $m-n+1$), and leave them in the ultra-sparsifier. In Section~\ref{sec:Elimination} we may get more self-loops when we carry out partial Gaussian elimination on the ultra-sparsifier, and we handle all the self-loops there. 
 For the off-tree edges in group $(2)$, we uniformly sample the edges in each $\Ghat^{r}_{i,j}$ to get a sparsified graph $\Hbar^{r}_{i,j}$. Technically our sampling result only applies to an $\alpha$-uniform $\phi$-expander $\Ghat^{r}_{i,j}$ (i.e. case $3(a)$ in Theorem~\ref{thm:Decompose}). If the $\Ghat^{r}_{i,j}$ we get back from \textsc{Decompose} is in case $3(b)$ of, we perturb the gradient on edges so that the projection of the gradient to the cycle space of the expander is $0$, i.e. project the gradient to the space orthogonal to the cycle space. Then we have $\Ghat^{r}_{i,j}$ after perturbation is a $1$-uniform $\phi$-expander. 

The edges in our final ultra-sparsifier $\calH$ will be the tree edges in $T$, %the self-loops in all the $\Ehat^r_{i,loop}$'s, 
 the off-tree edges in the $\E^{r}_{last}$'s over all resistance bucket value $r$, and the off-tree edges in the $\Hbar^{r}_{i,j}$'s over all resistance $r$, iteration $i$ and expander $j$. We argued about the size of all but the edges in the $\Hbar^{r}_{i,j}$'s, which we will do now. 
\paragraph{Sampling Probability} We first specify the probability we sample each edge in $\Ghat^r_{i,j}$ to get $\Hbar^r_{i,j}$ which we denote by $\tau_{r,i}$ (same across all the expanders, i.e. $j$'s, for any resistance $r$ and iteration $i$). By Theorem \ref{thm:sampAndFixGrad} we need the probability to be at least $c_{sample}\log{n}(d_{\min}^{-1}\phi^{-2}+ \alpha m^{-1})$. Here $c_{sample}$ is a fixed constant across all $r,i,j$'s, and the guarantees on $\Ghat^r_{i,j}$ from Theorem \ref{thm:Decompose} allow us to use some fixed polylog$n$ as $\phi^{-2}$ and $\alpha$ across all $r,i,j$'s. The only parameter that varies across different $r,i$'s is $d_{\min}$, a lower bound on the minimum vertex degree in $\Ghat^r_{i,j}$, which by Theorem~\ref{thm:Decompose} is within a (fixed) polylog factor of the average degree in $\Ghat^{r}_i$. As there are $m_{r,i}$ edges and $m/\kappa$ nodes in $\Ghat^{r}_i$, the average degree is $m_{r,i}\kappa/m$. Thus, we can write $\tau_{r,i}=\frac{c_1 m \log^{c_2} n}{\kappa m_{r,i}}$ for some global constants $c_1,c_2$, and since both $m_{r,i}$ and $\kappa$ is at most $m$, $\tau_{r,i}$ satisfies the requirement on $\tau$ in Theorem \ref{thm:sampAndFixGrad}. With this particular choice of $\tau_{r,i}$ we can use \textsc{SampleAndFixGradient} to sample $\Ghat^r_{i,j}$ and the the guarantees from Theorem~\ref{thm:sampAndFixGrad}. Now we can prove the statement about the number of off-tree edges in $\calH$.
\begin{lemma}
\label{lem:size}
The total number of edges over all $\Hbar^{r}_{i,j}$'s is $\Otil(\frac{m}{\kappa})$ with high probability.
\end{lemma}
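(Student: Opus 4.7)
The plan is to bound the total sampled edge count by multiplying through the natural product structure of the construction: buckets $r$ times iterations $i$ times expanders $j$. First I would observe that the number of resistance buckets is $O(\nbucket) = \Otil(1)$ by the assumption in Theorem~\ref{thm:ultrasparsify} that $\log\norm{\rr^{\calG}}_\infty$ and $\log(1/\delta)$ are polylog in $n$, and that for each fixed bucket the iterative process terminates in $O(\log\kappa)=\Otil(1)$ rounds since \textsc{Decompose} absorbs at least half of the remaining off-tree edges into its expanders each iteration (by Theorem~\ref{thm:Decompose}). This reduces the claim to bounding the edges produced in a single iteration $i$ of a single bucket $r$.

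Within one such iteration, the expanders $\{\Ghat^{r}_{i,j}\}_j$ returned by \textsc{Decompose} are vertex-disjoint, hence edge-disjoint, and their union $\Ehat^{r}_i \subseteq E(\Ghat^{r}_i)$ so $\sum_j |E(\Ghat^{r}_{i,j})| \le m_{r,i} \defeq |E^{r}_i|$. By Theorem~\ref{thm:sampAndFixGrad} applied with probability $\tau_{r,i}$, each call $\textsc{SampleAndFixGradient}(\Ghat^{r}_{i,j},\tau_{r,i})$ w.h.p.\ returns a graph with at most $2\tau_{r,i} |E(\Ghat^{r}_{i,j})|$ edges. Summing over $j$ yields at most $2\tau_{r,i} m_{r,i}$ sampled edges in this iteration/bucket. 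Substituting the chosen sampling rate $\tau_{r,i} = \frac{c_1 m \log^{c_2} n}{\kappa m_{r,i}}$, the $m_{r,i}$ cancels and we get exactly $2 c_1 m \log^{c_2} n / \kappa$ edges, independent of $m_{r,i}$.

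Multiplying by the $\Otil(1)$ buckets and $\Otil(1)$ iterations per bucket then gives a total of $\Otil(m/\kappa)$ sampled edges. The high-probability qualifier transfers via a union bound over the at most $O(m)$ invocations of \textsc{SampleAndFixGradient} across the entire construction (Theorem~\ref{thm:sampAndFixGrad} already yields its edge bound w.h.p.\ per call, so polynomially many independent failure events can be unioned out).

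The one delicate corner is the boundary case where $\tau_{r,i}$ as defined exceeds $1$; the expression for $\tau_{r,i}$ shows this happens only when $m_{r,i} \le c_1 m \log^{c_2} n / \kappa$, in which case retaining every edge of $\Ghat^{r}_i$ already fits within the $\Otil(m/\kappa)$ budget for that iteration, so the bound is preserved regardless of how the algorithm handles this regime. I do not expect any serious obstacle: the calculation is a direct telescoping of the edge-count guarantee of Theorem~\ref{thm:sampAndFixGrad} against the specific choice of $\tau_{r,i}$, and the only nontrivial structural input used is the vertex-disjointness of expanders inside a single iteration.
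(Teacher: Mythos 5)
Your proposal is correct and follows essentially the same route as the paper: bound the per-iteration, per-bucket sampled edge count at $\Otil(m/\kappa)$ by plugging the chosen $\tau_{r,i}$ into the edge-count guarantee of Theorem~\ref{thm:sampAndFixGrad}, then multiply by the $\Otil(1)$ bound on resistance buckets and iterations per bucket. Your version is slightly more careful than the paper's in spelling out the edge-disjointness of the $\Ghat^r_{i,j}$'s, the union bound over invocations for the w.h.p.\ claim, and the degenerate regime $\tau_{r,i}>1$, but none of these is a different idea --- just details the paper leaves implicit.
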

\begin{proof}
Pick any $r,i$, recall when we call \textsc{Decompose} in that iteration, we have $\Ghat^{r}_i$ with uniform $\rr$, $m_{r,i}= |E^r_i|$ edges and $n_i=m/\kappa$ nodes. From the previous discussion of the sampling probability, we know it is sufficient to call \textsc{SampleAndFixGradient} on $\Ghat^{r}_{i,j}$ with probability 
\[
\tau_{r,i}=\frac{c_1 m \log^{c_2} n}{\kappa m_{r,i}}
\]
for some constants $c_1,c_2$. By Theorem~\ref{thm:sampAndFixGrad}, the number of edges in $\Hbar^{r}_{i,j}$'s over all $j$ is at most $\widetilde{\Theta}(\frac{m}{\kappa})$ with high probability since over all $j$ the $\Ghat^{r}_{i,j}$'s contain $\Theta(m_{r,i})$ edges. 

Since for each $r$ the number of iterations is at most $i\leq \log\kappa$, and there are $\nbucket$ possible $r$ values, the final bound in the lemma follows from summing over all $r,i$. Note we can hide all $\log$ factors as $\log n$ factors by our assumption in Theorem~\ref{thm:ultrasparsify} that $\log\norm{\rr^{\calG}}_{\infty}$ and $\log \frac{1}{\delta}$ are both polylog in $n$.
\end{proof}

\begin{algorithm}
\caption{Producing Ultra-Sparsifier $\calH$
with unit $s^{\calH}=s^{\calG}$}
\label{alg:UltraSparsify}
 \begin{algorithmic}[1]
 \Procedure{UltraSparsify}{$\calG$, $\kappa$,$\delta$}
 \State{$T \leftarrow \textsc{LSST}(\rr^{\calG)}$. (low-stretch spanning tree)}
 \State{Initiate $\calH$ with $T$, and the identity flow mapping.}
 \State{Round $\rr^{\calG}$ down to nearest power of $2$, or $0$ if less than $\delta$}
 \State{$\nhat \leftarrow m / \kappa \quad$ (number of portal nodes per batch)}
 %\todo{missing polylogs}}
	\For{Each bucket of resistance value $r$}
	 \State Let $i\leftarrow 0$, $E^r\leftarrow\{e|e\in E^{\calG}\setminus T, \rr^{\calG}_e=r\}$
	 \While{$E^r$ has more than $\Otil(m / \kappa)$ off-tree edges}
		 \State Let $m_{r,i}$ be the number of edges in $E^r$.
		 \State{Find $\nhat$ portal nodes to short-cut tree routing:
            \[
                \Vhat \leftarrow \textsc{FindPortal}\left(T,E^r,\nhat\right).
            \]
%\todo{aren't we getting rid of this?}
        }
		 \State{Route edges in $E^r$ along $T$, using portal nodes to short-cut tree-portal routing:}
		 \State $\Ghat^r_i\leftarrow \textsc{TreePortalRoute}(E^r,T,\Vhat)$\label{ln:TreeRoute},,
		 \State{Decompose the graph after tree-portal routing into uniform expanders:
		 \[\left\{\Ghat^r_{i,1},\Ghat^r_{i,2},\ldots\right\}\leftarrow \textsc{Decompose}\left(\Ghat^r_i,\delta/m^5\right).
        \]
        }
		 \State Remove the pre-image of edges in $\Ghat^r_{i,1},\Ghat^r_{i,2},\ldots$ from $E^r$.
		 \State{Set $\tau_{r,i}\leftarrow \frac{c_{1} m \log^{c_2}n}{m_{r,i} \kappa}$
		 (for sampling $\Ghat^r_{i,j}$'s in \textsc{SampleAndFixGradient})}
		 %\State{Note $\tau_{r,i}$ is the probability we will use in \textsc{SampleAndFixGradient} next.}
		 \For{ each $\Ghat^r_{i,j}$}
				\State Rescale the gradients and $\ell^p_p$ scalar as
				\begin{align}
				\rr^{\Ghat^r_{i,j}} &= r\kappa\log^2 n\\
				s^{\Ghat^r_{i,j}} & = \tau_{r,i}^{-p}\cdot s^{\calG}
				\end{align}
				\State Let $\Hbar^r_{i,j}\leftarrow \textsc{SampleAndFixGradient}\left(\Ghat^r_{i,j}, \tau_{r,i}\right)$. 
				%\State{Note the scaling of the $\rr,s$ for edges in $\Hbar^r_{i,j}$ is
				%\begin{align*}
				%\rr^{\Hbar^r_{i,j}} &=\tau \rr^{\Ghat^r_{i,j}}\\
				%s^{\Hbar^r_{i,j}} & = \tau_{r,i}^{p}s^{\Ghat^r_{i,j}}=1
				%\end{align*}
				
				%}%Note this also gives $\gg,\rr,s$ values for $\Hbar^r_{i,j}$, as well as flow mappings between $\Hbar^r_{i,j}$ and $\Ghat^r_{i,j}$
				\State {Add $\Hbar^r_{i,j}$ to $H$, and incorporate the flow mappings between $\Ghat^r_{i,j}$ and $\Hbar^r_{i,j}$
                \\ \qquad \qquad \qquad \qquad \qquad
                (composed with the tree-portal routing between $\Ghat^r_{i,j}$ and
                \\ \qquad \qquad \qquad \qquad  \qquad
                its pre-image) to the mapping between $\calG$ and $\calH$.}
		 \EndFor
	 \State $i\leftarrow i+1$
	 \EndWhile
	 \State{Add all remaining edges of $E^r$
	 to $H$ with the identity flow mapping on them}
	\EndFor
\State \Return $\calH$, $\map{\calH}{\calG}$, and $\kap{\calH}{\calG}=\Otil(\kappa m^{3/(p-1)})$.
 \EndProcedure 
 \end{algorithmic}
\end{algorithm}
\begin{algorithm}
\caption{Tree-Portal Routing of Edges}
\label{alg:treeportalroute}
 \begin{algorithmic}[1]
 \Procedure{TreePortalRoute}{$E$,$T$,$\Vhat$}
 \State Initialize $\Ehat\leftarrow \emptyset$ 
 \For{each $e=\{u,v\}\in E$}
 \State Let $\ehat\leftarrow \textsc{Move}_{T,\Vhat}(e)$, and $P_{T,\Vhat}(u,v)$ be its tree-portal path.
 \State (See Definition~\ref{defn:tree-portal})
 \State Let $\rr_{\ehat}$,$\ss_{\ehat}$ be the same as $\rr_e$ and $\ss_e$.
 \State Set $\gg_{\ehat}$ so that sending $1$ unit of flow from $u$ to $v$ along $P_{T,\Vhat}(u,v)$ has the same flow dot gradient as $\gg_e$, i.e. the flow dot gradient of sending directly along $e$. Note all edges on $P_{T,\Vhat}(u,v)$ other than $\ehat$ have known gradients.
 \State Add $\ehat$ to $\Ehat$ with $\gg_{\ehat},\rr_{\ehat},\ss_{\ehat}$ as specified. Note $\Ehat$ may contain multi-edges.
 \EndFor
 \Return $\Ehat$
 \EndProcedure 
 \end{algorithmic}
\end{algorithm}
Now we discuss the $\gg,\rr,s$ values we put on the edges in all the steps. Note we need the final instance $\calH$ to have a uniform scalar $s^{\calH}$ for every $|f_e|^p$ term, so we can recursively optimize the instance. However, in the intermediate steps, we will divide the instance into sub-instances induced by the different subsets of edges, e.g. $\Ghat^r_{i,j}$'s, and later combine sub-instances induced by the sampled sub-graphs $\Hbar^r_{i,j}$'s to get $\calH$. Each of these sub-instances will have its own scalar, e.g. $s^G_{r,i},s^H_{r,i}$, but in general they won't necessarily have the same value across different sub-instances. Notation-wise, in the following discussion, we assume each edge has its own scalar $\ss_e$ associated with the term $|f_e|^p$ in the intermediate instances. Eventually, the different scaling we do to $s$ in the intermediate steps will cancel so that in $\calH$ we have the scalar $s^{\calH}$. The input $\calG$ has a uniform scalar $s^{\calG}$, and we will make $s^{\calH}=s^{\calG}$.

Now we specify the $\gg$, $\rr$ and $s$ values of the edges in the final instance $\calH$ as well as in some of the key intermediate sub-instances we consider.
\begin{enumerate}
\item $e\in T$: The gradient, resistance and $\ss_e$ on these edges in $\calH$ remain the same as in $\calG$, that is $\gg^{\calH}_e=\gg^{\calG}_e$, $\rr^{\calH}_e=\rr^{\calG}_e$, and $\ss_e=s^{\calG}$.
\item $e\in E^r_{last}$: These off-tree edges remain at the end for each bucket $E^r$. We keep their gradient, resistance, and $\ss_e=s^{\calG}$ as in the original instance. 
\item $\ehat \in \Ghat^r_{i,j}$, the $j$-th expander computed in iteration $i$ for resistance $r$: In the intermediate sub-instance induced by $\Ghat^r_{i,j}$, we have $\rr_{\ehat}=r\kappa\log^2 n$, $\ss_{\ehat}=\tau_{r,i}^{-p}s^{\calG}$. For the gradient on $\ehat$, recall $\ehat$ is the image of some off-tree edge $e$ under the mapping $\textsc{Move}_{T,\Vhat^r_i}$ where $\Vhat^r_i$ is the set of portals in the $i$-th iteration for resistance $r$. Under the tree-portal routing, any flow along $e=(u,v)$ will be rerouted along the tree-portal path $P_{T,\Vhat^r_i}(u,v)$. We want the linear term (i.e. gradient times flow) in the objective function to remain the same under this rerouting, so routing $1$ unit of flow from $u$ to $v$ along $P_{T,\Vhat^r_i}(u,v)$ should give the same dot product with the gradients as routing $1$ unit of flow from $u$ to $v$ along $e$ in the original instance (i.e. $\gg^{\calG}_e$). As the only off-tree edge on the tree-portal path is $\ehat$, and we are keeping the original gradients on all the tree edges, this uniquely determines $\gg_{\ehat}$.
\item $\ebar \in \Hbar^r_{i,j}$ for some $r,i,j$: As specified in Theorem~\ref{thm:sampAndFixGrad}, if the edge $\ebar$ is sampled (with uniform probability $\tau_{r,i}$), and $\rr_{\ebar}$,$\ss_{\ebar}$ are their corresponding values in $\Ghat^r_{i,j}$ scaled up by $\tau_{r,i}$ and $\tau_{r,i}^p$ respectively. In particular we get back $\ss_{\ebar}=s^{\calG}$ as the $\tau_{r,i}^p$ scaling cancels the $\tau_{r,i}^{-p}$ scaling in $\Ghat^r_{i,j}$. 
%\item $\ehat \in \Ghat^r_{i,loop}$ for some $r,i$: The gradient, resistance and scalar for edges in this group is set the same way as an edge in $\Ghat^r_{i,j}$ for the same $r,i$.
\end{enumerate}
Note all the edges in our $\calH$ (i.e. group $1,2,4$ above) end up with the same scalar $s^{\calH}=s^{\calG}$.%, and for the self-loops, as they are optimized individualy instead of going through the recursion, we don't need their scalars to be $1$ in $H$. 
\begin{table}                   %
\centering
\begin{threeparttable}
 \caption{Glossary of Notations in Algorithm and Analysis.}
\label{table:notations}
\begin{tabular}{|p{0.1\columnwidth} p{0.85\columnwidth}|} 
 \hline
 \multicolumn{2}{|c|}{Notations in \textsc{Ultrasparsify}}\\[0.5ex]
 \hline
 $\calG$ & Input instance with $\left(V^{\calG},E^{\calG},\gg^{\calG},\rr^{\calG},s^{\calG}\right)$.\\
$T$ & Low stretch spanning tree of $\calG$ (stretch with respect to $\rr^{\calG}$).\\
$E^r$ & All in $E^{\calG}\setminus T$ whose resistance after rounding is $r$.\\
$E^r_i$ & The remaining edges in $E^r$ at the $i$-th iteration of tree-portal routing $E^r$. \\
$\Ghat^r_i$ & The image of edges in $E^r_i$ by the mapping $\textsc{Move}_{T,\Vhat}$, i.e. moving off-tree edges along tree-portal path. The gradients of edges in $\Ghat^r_i$ are set to preserve the linear flow dot gradient term under tree-portal routing.\\
$m_{r,i}$ & The number of edges in $E^r_i$ (also the size of $\Ghat^r_i$).\\
$\Ghat^r_{i,j}$ & The $j$-th expander we get from decomposing $\Ghat^r_i$. Edges keep their gradients from $\Ghat^r_i$, and $\rr,s$ are scaled.\\
$\Ehat^r_i$ & The union of edges contained in the expanders $\Ghat^r_{i,j}$ (i.e. over all $j$'s).\\
$\Etil^r_i$ & The pre-image of edges in $\Ehat^r_i$. \\
$\E^r_{last}$ & The set of edges remaining in $E^r$ after the last iteration for $r$.\\
$\tau_{r,i}$ & The probability we use in \textsc{SampleAndFixGradient} to uniformly sample $\Ghat^r_{i,j}$.\\
$\Hbar^r_{i,j}$ & The sparsified graph of $\Ghat^r_{i,j}$ computed by \textsc{SampleAndFixGradient}. $\gg,\rr,s$ on edges are computed by the subroutine.\\
\hline
\multicolumn{2}{|c|}{Additional notations in the analysis}\\[0.5ex]
\hline
$\Gbar$ & The instance with the same edge set as $\calG$. Note{\tnote{$\dagger$}}\, $E^{\calG}=T + \sum_r E^r_{last} + \sum_{r,i}\Etil^r_i$. Edges in $\Gbar$ has the same $\gg,\rr,s$ as in $\calG$ except for those in $\sum_{r,i}\Etil^r_i$. For any resistance $r$ and iteration $i$, $e\in \Etil^r_i$ has the same gradient as in $\calG$, but $\rr_e=r\kappa\log^2n$, $s_e=\tau_{r,i}^{-p}s^{\calG}$ are scaled.\\
$\Gbar^r_i$ & The instance $\Gbar$ restricted to the set of edges in $\Etil^r_i$.\\
$\Gbar_{rest}$ & The instance $\Gbar$ restricted to the set of edges in $\sum_r E^r_{last}$.\\
\hline
\end{tabular}
\begin{tablenotes}
\item[$\dagger$] We use addition on sets as union but signify that the sets are disjoint.
\end{tablenotes}
\end{threeparttable}
\end{table}

Now we bound the approximation error. For simplicity, we carry out the analysis ignoring the additive errors in the bound, and defer the discussion of them to the end. In particular, additive errors come in at two cases. The first is when we round an original resistance to $0$ when it is less than $\delta$, and the second is in \textsc{Decompose}, we may get an expander whose projected gradient is not $\Otil(1)$-uniform but has tiny norm (i.e. case $3(b)$), and we zero out its projection to the cycle space before sampling. For now we assume we don't have these cases.

We summarize the notations in our algorithm and analysis in Table~\ref{table:notations}. We explicitly point out whenever we change the gradient, resistance or $s$ value on an edge. We will use instances and their underlying graphs interchangeably, and when we refer to a subgraph as an instance, it will be clear what are the $\gg,\rr,s$ values for the instance. 

First we let $\Gbar$ be the instance on the same nodes and edges as $\calG$, but for any $e\in \Etil^r_i$ (i.e. $e$ will be mapped to some $\ehat$ in $\Ghat^r_{i,j}$), we rescale the resistance and $s$ to be $\rr_e=r\kappa\log^2n$, and $\ss_e=\tau_{r,i}^{-p}s^{\calG}$ . Note the gradient of $e$ in $\Gbar$ stays the same as in $\calG$. We first bound the approximation error between $\calG$ and this rescaled instance $\Gbar$.
\begin{lemma}
\label{lem:scaling}
$\calG\preceq_{\Otil(m^{1/(p-1)}\kappa)} \Gbar \preceq_1 \calG$ with the identity mapping in both directions. 
\end{lemma}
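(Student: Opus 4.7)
The plan is to invoke Lemma~\ref{lem:PerturbResistances} in both directions. Observe that $\Gbar$ and $\calG$ share the same underlying graph and, by construction, the same gradient $\gg^{\Gbar} = \gg^{\calG}$ on every edge (the tree--portal routing only reroutes through $\Ghat^r_i$, while $\Gbar$ is defined on $\calG$'s original edge set). So the hypothesis about matching gradients in Lemma~\ref{lem:PerturbResistances} is automatic, and only the resistances and the per-edge $p$-norm scalars differ. Tree edges and edges in each $E^r_{\text{last}}$ are untouched, so only edges $e \in \Etil^r_i$ need comparison: there $\rr^{\Gbar}_e = r \kappa \log^2 n$ versus $\rr^{\calG}_e = r$, and $s^{\Gbar}_e = \tau_{r,i}^{-p} s^{\calG}$ versus $s^{\calG}_e = s^{\calG}$.

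For the easy direction $\Gbar \preceq_1 \calG$, I note that $\kappa \ge 1$, $\log^2 n \ge 1$, and $\tau_{r,i} \le 1$ (it is a sampling probability), so $\rr^{\Gbar}_e \ge \rr^{\calG}_e$ and $s^{\Gbar}_e \ge s^{\calG}_e$ pointwise. Applying Lemma~\ref{lem:PerturbResistances} with $\kappa = 1$ (now with $\Gbar$ playing the role of the more ``resistive'' instance) yields $\Gbar \preceq_1 \calG$ via the identity map. This is really just the observation $\obj^{\Gbar}(\ff) \le \obj^{\calG}(\ff)$ for all flows $\ff$.

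For the harder direction $\calG \preceq_{\kappa'} \Gbar$ with $\kappa' = \Otil(m^{1/(p-1)}\kappa)$, I verify both hypotheses of Lemma~\ref{lem:PerturbResistances}. The resistance ratio is
\[
\frac{\rr^{\Gbar}_e}{\rr^{\calG}_e} \le \kappa \log^2 n \le \kappa',
\]
which is fine. For the $s$-ratio I need $\tau_{r,i}^{-p} \le (\kappa')^{p-1} = \Otil(m \kappa^{p-1})$. The critical fact is that the while-loop in \textsc{UltraSparsify} only invokes the sparsification step while $m_{r,i} = |E^r_i| \ge \Otil(m/\kappa)$; combined with $\tau_{r,i} = c_1 m \log^{c_2} n / (\kappa m_{r,i})$, this gives
\[
\tau_{r,i} \ge \frac{c_1 \log^{c_2} n}{\kappa},
\qquad
\tau_{r,i}^{-p} \le \Otil(\kappa^p).
\]
Since $\kappa < m$ by assumption, $\kappa^p \le m \cdot \kappa^{p-1}$, matching the required bound up to polylog factors absorbed in $\Otil$. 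Thus Lemma~\ref{lem:PerturbResistances} applies with $\kappa' = \Otil(m^{1/(p-1)}\kappa)$, again via the identity map.

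The main (mild) obstacle is just bookkeeping: confirming that every modified edge is of the form $e \in \Etil^r_i$ for some $(r,i)$, and that the termination guard $|E^r_i| > \Otil(m/\kappa)$ is what keeps $\tau_{r,i}$ bounded below (hence $\tau_{r,i}^{-p}$ bounded above). Everything else is a direct citation of Lemma~\ref{lem:PerturbResistances}, since we have intentionally designed $\Gbar$ so that all changes relative to $\calG$ are monotone increases in $\rr$ and $s$ with the gradient left untouched.
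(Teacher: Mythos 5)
Your proposal follows the same approach as the paper — compute the per-edge ratios of $\rr$ and $\ss$ between $\Gbar$ and $\calG$, note the gradients coincide, and invoke Lemma~\ref{lem:PerturbResistances} — and the conclusion is correct. There is, however, a direction slip in your justification. From $\tau_{r,i} = c_1 m \log^{c_2}n / (\kappa m_{r,i})$, the while-loop guard $m_{r,i} \geq \Otil(m/\kappa)$ gives an \emph{upper} bound on $\tau_{r,i}$ (namely $\tau_{r,i} \leq \mathrm{polylog}$, which is exactly what ensures $\tau_{r,i}\le 1$ and $\ss^{\Gbar}_e \geq \ss^{\calG}_e$ in the easy direction — you assert ``it is a sampling probability'' without establishing this). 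The \emph{lower} bound $\tau_{r,i} \geq c_1\log^{c_2}n/\kappa$, which is what controls $\tau_{r,i}^{-p} \le \Otil(\kappa^p)$ for the hard direction, comes instead from the trivial fact $m_{r,i} \leq m$, not from the termination guard. The paper cites $m_{r,i}\le m$ for precisely this step and the guard for the other, whereas you have the two attributions reversed both in the display and in the closing ``mild obstacle'' paragraph. Since both facts hold, the lemma still follows, but as written your chain $m_{r,i}\geq\Otil(m/\kappa)\Rightarrow\tau_{r,i}\geq c_1\log^{c_2}n/\kappa$ is a false implication and should be corrected.
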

\begin{proof}
For any edge $e$, we have $\gg^{\Gbar}_e=\gg^{\calG}_e$. As to the $\ell^p_p$ scalar, we have either $\ss^{\Gbar}_e= \ss^{\calG}_e$, or if $e$ is eventually moved to some $\Ghat^r_{i,j}$ then 
\[
\ss^{\Gbar}_e
=
\tau_{r,i}^{-p}s^{\calG}
=
\left(\frac{m_{r,i}\kappa}{c_1m\log^{c_2}n}\right)^p s^{\calG}
\]
as $m_{i,r}\geq \Otil{m/\kappa}$ or otherwise we would have stopped for resistance value $r$, we can assume $m_{r,i}\kappa\geq c_1m\log^{c_2}n$ so
\[
 \ss^{\calG}_e\leq  \ss^{\Gbar}_e \leq (\kappa^{p/(p-1)})^{p-1}\ss^{\calG}_e\leq (m^{1/(p-1)}\kappa)^{p-1}\ss^{\calG}_e\]
where the second inequality is by $m_{r,i}\leq m$, and the third inequality is by $\kappa<m$. Similar calculation gives $\rr^{\calG}_e\leq \rr^{\Gbar}_e \leq \kappa\log^2n\cdot\rr^{\calG}_e$. Our result directly follow by Lemma~\ref{lem:PerturbResistances}.
\end{proof} 
 Now we break $\Gbar$ into sub-instances induced on the disjoint edge sets. Let $\Ebar^r_i$ be the instance of $\Gbar$ restricted to edges in $\Etil^r_i$, $T$ the instance restricted to the tree edges, and $\Gbar_{rest}$ the instance restricted to edges in any of the $E^r_{last}$'s. When use addition as union on sets when the sets are disjoint. The objective of the sum of two instances is simply the sum of the individual instances objectives.
\begin{lemma}
\label{lem:treeroute}
For any resistance value $r$, round $i$, we have 
\[
T + \Ebar^r_i \preceq_{\Otil(m^{1/(p-1)})} T+\Ehat^r_i \preceq_{\Otil(m^{1/(p-1)})} T + \Ebar^r_i 
\] where the flow mapping is the tree-portal routing and its reverse. 
\end{lemma}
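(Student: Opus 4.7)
The plan is to verify both inequalities using tree-portal routing as the forward map and its natural inverse as the reverse map. For the forward direction $T + \Ebar^r_i \preceq_{\Otil(m^{1/(p-1)})} T + \Ehat^r_i$, given $\ff^\Ebar = (\ff^T, \ff^\Etil)$ on $T + \Ebar^r_i$ I send it to $\ff^\Ehat$ on $T + \Ehat^r_i$ by setting $\ff^\Ehat_{\ehat} = \ff^\Etil_e$ for each off-tree edge $e$ with image $\ehat = \mbox{\textsc{Move}}_{T,\Vhat}(e)$, and adding $\sigma_e \ff^\Etil_e$ (with sign per orientation) to every tree edge on $P_{T,\Vhat}(u,v) \setminus \{\ehat\}$. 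Residue preservation is immediate since we are routing along $s$-$t$ paths, and the linear term $\gg^\top \ff$ is preserved \emph{exactly} because $\gg^\Ehat_{\ehat}$ was constructed in \textsc{TreePortalRoute} precisely to compensate for the contribution of the tree-path gradients.

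After the linear terms cancel and the identical off-tree energies ($\rr^\Ehat_{\ehat} = \rr^\Ebar_e$, $s^\Ehat_{\ehat} = s^\Ebar_e$) transfer unchanged, the $\preceq_\kappa$ inequality reduces to bounding the extra tree-edge energies produced by the rerouting. Writing $S_{e'} := \sum_{e : e' \in P_{T,\Vhat}(e) \setminus \{\ehat\}} \sigma_e \ff^\Etil_e$, I must show
\[
\sum_{e' \in T} \rr^\calG_{e'} S_{e'}^2 \le \kappa \sum_e \rr^\Ebar_e (\ff^\Etil_e)^2, \qquad \sum_{e' \in T} s^\calG |S_{e'}|^p \le \kappa^{p-1} \sum_e s^\Ebar_e |\ff^\Etil_e|^p.
\]
For the quadratic, I apply Cauchy--Schwarz $S_{e'}^2 \le N_{e'} \sum_{e : e' \in P(e)} (\ff^\Etil_e)^2$ where the congestion $N_{e'} \le O(\kappa)$ by Lemma~\ref{lem:FindPortal} with $\nhat = m/\kappa$; exchanging the order of summation and using the stretch identity $\sum_{e' \in P(e) \setminus \{\ehat\}} \rr^\calG_{e'} = \str_{T,\Vhat}(e) \cdot r$ yields a total of $O(\kappa) \cdot r \sum_e \str_{T,\Vhat}(e) (\ff^\Etil_e)^2$, which the aggregate stretch bound of Lemma~\ref{lem:FindPortal} together with $\sum_e \str_T(e) \le \Otil(m)$ then controls. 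Finally the resistance scaling $\rr^\Ebar_e = r\kappa \log^2 n$ from Lemma~\ref{lem:scaling} absorbs the remaining factor, comfortably within $\kappa = \Otil(m^{1/(p-1)})$.

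For the $p$-th power, the power-mean inequality gives $|S_{e'}|^p \le N_{e'}^{p-1} \sum_{e : e' \in P(e)} |\ff^\Etil_e|^p$, and reordering summation bounds the total by $N_{\max}^{p-1}$ times the maximum tree-portal path length times $\sum_e |\ff^\Etil_e|^p$. Both $N_{\max}$ and the tree-portal path length are $\Otil(\kappa)$: the former by the congestion bound of Lemma~\ref{lem:FindPortal}, the latter because \textsc{FindPortal} subdivides $T$ into subtrees of size $O(n/\nhat) = \Otil(\kappa)$. The resulting $\Otil(\kappa^p)$ factor matches the required $\kappa^{p-1}$ up to the polylogarithmic slack in $\Otil(m^{1/(p-1)})^{p-1} = \Otil(m)$. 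The reverse inequality $T + \Ehat^r_i \preceq_\kappa T + \Ebar^r_i$ is symmetric: the inverse map sets $\ff^\Ebar_e = \ff^\Ehat_{\ehat}$ and \emph{subtracts} $\sigma_e \ff^\Ehat_{\ehat}$ from the same tree edges, preserving residues and the linear term, and the identical combinatorial estimate of $S_{e'}$ controls the tree-edge distortion in the other direction.

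The main obstacle is obtaining both norm bounds with a single uniform $\kappa$: individual stretches $\str_{T,\Vhat}(e)$ are unbounded pointwise, so the quadratic estimate must be aggregated over off-tree edges and rescued by the $\rr^\Ebar_e = r\kappa \log^2 n$ scaling, while the $p$-th power estimate cannot benefit from that scaling (the $s$ values agree on both sides) and instead must exploit the portal subdivision to bound individual tree-portal path lengths. The portal lemma is therefore invoked in two complementary modes (aggregate stretch plus congestion for $\ell_2^2$; subtree size plus congestion for $\ell_p^p$), and fitting both simultaneously into $\kappa = \Otil(m^{1/(p-1)})$ is the crux of the proof.
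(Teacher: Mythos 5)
Your overall framing (route along tree-portal paths, linear term cancels exactly, only tree-edge distortion remains) is the same as the paper's, but both of your quantitative estimates have gaps.

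\textbf{The $\ell_2^2$ bound.} You split $S_{e'}^2 \le N_{e'} \sum_e \ff_e^2$ by \emph{plain} Cauchy--Schwarz (uniform weights) and, after reordering, arrive at $O(\kappa)\,r\sum_e \str_{T,\Vhat}(e)\,\ff_e^2$, which you then try to control by the \emph{aggregate} stretch $\sum_e \str(e) \le \Otil(m)$. This does not work: the average stretch being $\Otil(1)$ says nothing about $\sum_e \str(e)\ff_e^2$, because the flow can be concentrated entirely on a single edge with $\str(e) = \Theta(m)$, and then the quantity is $\Theta(m)\norm{\ff}_2^2$ — far larger than the $\Otil(\kappa\log^2 n)\norm{\ff}_2^2$ the off-tree resistances absorb. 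The fix is a stretch-\emph{weighted} Cauchy--Schwarz: write
\[
S_{e'}^2 = \left(\sum_{e:\,e'\in P(e)} \str_{T,\Vhat}(e)^{1/2}\cdot\str_{T,\Vhat}(e)^{-1/2}\ff_e\right)^2
\le W_{e'} \sum_{e:\,e'\in P(e)} \frac{\ff_e^2}{\str_{T,\Vhat}(e)},
\]
where $W_{e'} := \sum_{e:\,e'\in P(e)} \str_{T,\Vhat}(e)$ is exactly the quantity Lemma~\ref{lem:FindPortal} bounds by $\Otil(\kappa)$ per tree edge. Reordering and using $\sum_{e'\in P(e)}\rr^{\calG}_{e'} = r\,\str_{T,\Vhat}(e)$ then makes the per-edge stretch cancel, yielding $\sum_{e'}\rr^{\calG}_{e'}S_{e'}^2 \le \Otil(\kappa)\,r\sum_e\ff_e^2 = O(1)\sum_e \rr^{\Gbar}_e\ff_e^2$. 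Your uniform Cauchy--Schwarz throws away precisely the cancellation that $W_{e'}$ is designed to exploit.

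\textbf{The $\ell_p^p$ bound.} Two problems here. First, you claim the tree-portal path length is $\Otil(\kappa)$ "because \textsc{FindPortal} subdivides $T$ into subtrees of size $O(n/\nhat)$"; but Theorem~\ref{thm:STtreedecomp} (and hence Lemma~\ref{lem:FindPortal}) partitions $T$ by $\eta$-\emph{weight}, not by vertex count, so individual subtrees can be arbitrarily large. The paper simply uses the trivial bound that a tree-portal path has $< m$ edges and lets the resulting factor $\Otil(1)^{p-1}m$ be absorbed by $\kappa^{p-1} = \Otil(m)$. Second, and more importantly, you assert "the $p$-th power estimate cannot benefit from that scaling (the $s$ values agree on both sides)." They don't: tree edges carry $s^{\calG}$, while the off-tree edges in $\Ebar^r_i$ (and their images in $\Ehat^r_i$) carry $\ss_{\ebar} = \tau_{r,i}^{-p}\,s^{\calG} \gg s^{\calG}$ after the rescaling — this is exactly the ratio that appears on the right-hand side when Jensen gives $K_{e'}^{p-1}$ on the left, and the paper's last step relies on $s^{\calG} K_{e'}^{p-1} \lesssim (\log n)^{p-1}\ss_{\ebar}$. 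Ignoring this scaling, your bookkeeping ends with a factor $\Otil(\kappa^p)$ that you say "matches $\kappa^{p-1}$"; it does not, and the gap is precisely what the $s$-scaling (and the $m$-versus-$\kappa$ path-length bound) is needed to close. The reverse direction inherits the same issues.
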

\begin{proof}
Fix any resistance value $r$ and iteration $i$, the set of remaining off-tree edges of resistance $r$ in iteration $i$ is $E^r_i$, and these edges have a total stretch at most $O(m\log n\log\log n)$ with $T$ by Lemma~\ref{lem:LSST}, and $E^r_i=m_{r,i}$. As we use \textsc{FindPortal} to get a set of $m/\kappa$ portal nodes $\Vhat$ in that iteration, by Lemma~\ref{lem:FindPortal}, for any edge $e'$ on $T$, we have in $T+E^r_i$
\[
W_{e'}\defeq \sum_{e\in E^r_i: e' \in P_{T, \Vhat}\left( e \right)}
    \str_{T, \Vhat}\left( e \right)  \leq \frac{10}{\nhat}\sum_{e\in E^r_i} \str_{T, \Vhat}\left( e \right)\leq 10\kappa\log n \log\log n\leq \kappa\log^2 n
\] and 
\[
K_{e'} \defeq \abs{e\in E^r_i: e' \in P_{T, \Vhat}\left( e \right)} \leq \frac{10 \mhat}{\nhat}\leq \frac{10\kappa m_{r,i}}{m}
\]
We first look at the direction from from $T + \Ebar^r_i$ to $T+\Ehat^r_i$. 
Let $\ff$ be the flow in $T + \Ebar^r_i$, and $\ffhat$ be the tree-portal routing of 
$\ff$. In the tree-portal routing, flow on tree edges is mapped to the same 
flow, while any flow along an off-tree edge $\ebar=(u,v)\in \Ebar^r_i$ is 
rerouted along the tree-portal path $P_{T,\Vhat}(u,v)$. This rerouting 
clearly preserves the residue between $\ff,\ffhat$, and if $\ehat\in \Ehat^r_i$ is 
the image of $(u,v)$, its gradient $\gg_{\ehat}$ in $\Ehat^r_i$ is by 
construction set to be the value which preserves the linear term in the 
objective function for $\ff$ and $\ffhat$. The cost of $\ell^2_2$ and $\ell^p_p$ terms 
for $f_{\ebar}$ is the same as the corresponding costs for $\ffhat_{\ehat}$, since 
$\ehat$ is only used for the rerouting of $\ebar$ (so $|\ffhat_{\ehat}|=|f_{\ebar}|
$), and they have the same $\rr,\ss$ values. Thus, the contribution to the $
\ell^2_2,\ell^p_p$ terms in objective function from the off-tree edges are the 
same for $\ff$ and $\ffhat$. The only extra cost comes from the $\ell^2_2$ and $\ell^p_p$
 terms of tree edges for $\ffhat$ since we put additional flow through them.  
First consider the sum of the $\ell^p_p$ terms over all tree edges for $\ffhat$ in $T
+\Ehat^r_i$. Recall we don't scale the $s$ value for tree edges, so the 
scalar is still $s^{\calG}$ on tree edges, while for off-tree edges in $\Ebar^
r_i$, the value $s$ is scaled to be $\left(\frac{\kappa m_{r,i}}{c_1 m \log^{c
_2}n}\right)^ps^{\calG}$
\begin{align*}
\sum_{e'\in T} s^{\calG}\abs{\ffhat_{e'}}^p =&\sum_{e'\in T}s^{\calG}\abs{\sum_{\ebar
:e'\in P_{T,\Vhat}(\ebar)} \ff_{\ebar}}^p\\
=& \sum_{e'\in T}s^{\calG}K^p_{e'}\abs{\sum_{\ebar:e'\in P_{T,\Vhat}(\ebar)}
\frac{1}{K_{e'}}  \ff_{\ebar}}^p\\
\leq & \sum_{e'\in T}s^{\calG}K^p_{e'}\sum_{\ebar:e'\in P_{T,\Vhat}(\ebar)}\frac{1
}{K_{e'}} \abs{\ff_{\ebar}}^p && (\textrm{Using Jensen's
     inequality}) \\
= & \sum_{e'\in T}s^{\calG}K^{p-1}_{e'}\sum_{\ebar:e'\in P_{T,\Vhat}(\ebar)
} \abs{\ff_{\ebar}}^p\\
\leq & \sum_{\ebar}\abs{\ff_{\ebar}}^p\sum_{e'\in P_{T,\Vhat}(\ebar)}s^{\calG}K^{p-1}_
{e'} \\
\leq & \sum_{\ebar}\abs{\ff_{\ebar}}^p m \cdot s^{\calG}K^{p-1}_{e'} && (\textrm{Tree-
portal path's length $<m$}) \\
\leq & \sum_{\ebar} (10c_1\log^{c_2} n)^{p-1} m\cdot \ss_{\ebar} \abs{\ff_{\ebar}}^p
\end{align*}
So the $\ell_p^p$ term goes up by at most a factor $(10c_1\log^{c_2} n)^{p-1}
 m$. Similar calculation shows that the $\ell_2^2$ term goes up by at most a 
constant factor by the tree-portal routing. Thus, we get $T + \Ebar^r_i 
\preceq_{\Otil(m^{1/(p-1)})} T+\Ehat^r_i$. The other direction is symmetric 
using the reverse tree-portal routing, and the calculation stays the same 
since the tree-portal routing in reverse incurs the same load/congestion on 
tree edges.
\end{proof}
If we put the $\Gbar^r_i$ over all resistance $r$'s and round $i$'s together, we get
\begin{lemma}
\label{lem:treeroute2}
\[T + \sum_{r,i}\Gbar^r_i \preceq_{\Otil(m^{1/(p-1)})} T+\sum_{r,i}\Ehat^r_i
\preceq_{\Otil(m^{1/(p-1)})} T + \sum_{r,i}\Gbar^r_i \]
The sum is over all possible resistance value $r$'s, and over all iterations $
i$ for $r$.
\end{lemma}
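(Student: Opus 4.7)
My plan is to reduce the combined statement to a per-$(r,i)$ application of Lemma~\ref{lem:treeroute} by exploiting the linearity of tree-portal routing and a power-mean inequality to bound the combined load on shared tree edges. The mapping from $T + \sum_{r,i}\Gbar^r_i$ to $T + \sum_{r,i}\Ehat^r_i$ reroutes each off-tree edge $\ebar \in \Etil^r_i$ along its tree-portal path $P_{T,\Vhat^r_i}(\ebar)$ just as in Lemma~\ref{lem:treeroute}; off-tree edges in different $\Etil^r_i$ are edge-disjoint, so the only overlap between the different $(r,i)$ mappings happens on the shared tree $T$. On off-tree edges the $\ell_2^2$ and $\ell_p^p$ cost is preserved pointwise (since each $\ebar$ is mapped to a unique $\ehat$ with $\rr_{\ehat}=\rr_{\ebar}$ and $\ss_{\ehat}=\ss_{\ebar}$), and by construction the gradient on $\ehat$ is set so that the linear term $\gg^\top \ff$ is preserved exactly; so the only task is to bound the extra cost that accumulates on tree edges.

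For a tree edge $e' \in T$, the mapped flow decomposes as
\[
\ffhat_{e'} \;=\; \ff_{e'} \;+\; \sum_{r,i}\, \ffhat_{e'}^{(r,i)},
\qquad
\ffhat_{e'}^{(r,i)} \;\defeq\; \sum_{\ebar \in \Etil^r_i\,:\,e' \in P_{T,\Vhat^r_i}(\ebar)} \pm\, \ff_{\ebar}.
\]
Letting $N$ be the total number of $(r,i)$ pairs, the power-mean inequality gives $|\ffhat_{e'}|^q \le (N+1)^{q-1}\bigl(|\ff_{e'}|^q + \sum_{r,i} |\ffhat_{e'}^{(r,i)}|^q\bigr)$ for $q \in \{2,p\}$. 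By the algorithm's setup, $N \le \nbucket \cdot O(\log \kappa)$, which is polylog in $n$ under the hypotheses of Theorem~\ref{thm:ultrasparsify}, so $(N+1)^{p-1} = \Otil(1)$.

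Next, the calculation inside the proof of Lemma~\ref{lem:treeroute} already shows that for each individual $(r,i)$,
\[
\sum_{e' \in T} s^{\calG}\, |\ffhat_{e'}^{(r,i)}|^p
\;\le\; \Otil(m)\, \sum_{\ebar \in \Etil^r_i} \ss_{\ebar}\, |\ff_{\ebar}|^p,
\]
and similarly for the $\ell_2^2$ term (with no $m^{1/(p-1)}$ factor). Summing these bounds over all $(r,i)$, multiplying by the $\Otil(1)$ power-mean factor, and adding back the $|\ff_{e'}|^q$ term that is common to both sides of the comparison, the $\ell_p^p$ cost on tree edges in $T + \sum_{r,i}\Ehat^r_i$ is at most $\Otil(m)$ times the total $\ell_p^p$ cost of the corresponding off-tree edges in $T + \sum_{r,i}\Gbar^r_i$. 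Converted into the approximation-factor language via the $\kappa^{p-1}$-to-multiplier correspondence used earlier, this yields the claim $T + \sum_{r,i}\Gbar^r_i \preceq_{\Otil(m^{1/(p-1)})} T + \sum_{r,i}\Ehat^r_i$. The reverse inclusion is symmetric: the reverse tree-portal routing sends each $\ehat \in \Ehat^r_i$ back to its pre-image $\ebar \in \Etil^r_i$ while routing any flow on $\ehat$ along the same tree-portal path, so the same load/congestion analysis applies verbatim.

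The main obstacle I anticipate is ensuring that bounding the $p$-th power of a sum on each tree edge does not introduce a factor that grows with $\kappa$ or with the number of total off-tree edges. The key observation making this work is that the number of $(r,i)$ pairs is polylog rather than polynomial, so the power-mean factor $(N+1)^{p-1}$ is swallowed into $\Otil$; one must also be careful that the per-$(r,i)$ bound uses the \emph{rescaled} $\ss_{\ebar} = \tau_{r,i}^{-p} s^{\calG}$, which is exactly what absorbs the single-iteration $(10\kappa m_{r,i}/m)^{p-1}$ stretch-load factor into the final $\Otil(m)$ multiplier, independent of $(r,i)$.
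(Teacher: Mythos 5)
Your proof is correct and follows essentially the same route as the paper: combine the per-$(r,i)$ guarantee of Lemma~\ref{lem:treeroute} and control the overlap on the shared tree $T$ by a power-mean factor that scales with the number of $(r,i)$ pairs, which is polylogarithmic. The paper packages this as an application of the union lemma (Lemma~\ref{lem:Composition}) followed by the split/merge bound $T \preceq_1 \bigcup_{r,i} T \preceq_{\log^2 n} T$, whereas you carry out the tree-edge load calculation directly, but the underlying mechanism (and the resulting $\Otil(m^{1/(p-1)})$ factor) is the same.
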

\begin{proof}
By Lemma~\ref{lem:treeroute} and Lemma~\ref{lem:Composition} we have 
\[
\bigcup_{r,i} T + \Gbar^r_i\preceq_{\Otil(m^{1/(p-1)})} \bigcup_{r,i}T+\Ehat^r_i
\preceq_{\Otil(m^{1/(p-1)})} \bigcup_{r,i} T + \Gbar^r_i
\]
Note the $\Gbar^r_i$'s (and the $\Ehat^r_i$'s) are disjoint for different resistance values or different iterations,
thus these edges contribution to the objective function value simply adds up. 
For the tree edges, since there are at most $\log^2 n$ different pairs of resistance and iteration pairs, we have 
\[
T \preceq_{1} \bigcup_{r,i} T +  \preceq_{\log^2 n} T 
\]
by considering the mapping that split flow on one tree edge to $\log^2n$ copies of it and the reverse mapping of merging. Note $\abs{a_1}^x+\ldots+\abs{a_1}^x \leq (\abs{a_1}+\ldots+\abs{a_k})^x\leq k(\abs{a_1}^x+\ldots+\abs{a_1}^x)$. This gives the final result we want.
\end{proof}
Note that $\Gbar$ is the disjoint union of $T+\sum_{r,i}\Gbar^r_i+\Gbar_
{rest}$, while $\calH$ is the disjoint union of $T+\sum_{r,i}\Hbar^r_i+
\Gbar_{rest}$. Thus, we can show the following
\begin{lemma}
\label{lem:sampling}
$\Gbar \preceq_{\Otil(m^{2/(p-1)})} \calH \preceq_{\Otil(m^{2/(p-1)})} \Gbar$.
\end{lemma}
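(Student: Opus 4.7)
The plan is to chain together the two preceding approximation lemmas (tree-portal routing from Lemma~\ref{lem:treeroute2} and expander sampling from Theorem~\ref{thm:sampAndFixGrad}) through the composition and union rules (Lemma~\ref{lem:approximations:composition} and Lemma~\ref{lem:Composition}). Both $\Gbar$ and $\calH$ share the tree $T$ and the ``leftover'' subinstance $\Gbar_{rest}$, so the argument will only need to relate the middle pieces $\sum_{r,i}\Gbar^r_i$ and $\sum_{r,i}\Hbar^r_i$ through the intermediate graph $\sum_{r,i}\Ehat^r_i$.

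The first step is to apply Lemma~\ref{lem:treeroute2}, which gives
\[
T + \sum_{r,i}\Gbar^r_i \;\preceq_{\Otil(m^{1/(p-1)})}\; T + \sum_{r,i}\Ehat^r_i \;\preceq_{\Otil(m^{1/(p-1)})}\; T + \sum_{r,i}\Gbar^r_i,
\]
with the forward map being tree-portal routing of flows along $T$ using the portals $\Vhat$ chosen in each iteration, and the reverse map being the inverse routing. The second step is to sparsify each expander individually. Each $\Ghat^r_{i,j}$ is (after the gradient perturbation in case $3(b)$ of Theorem~\ref{thm:Decompose}) an $\alpha$-uniform $\phi$-expander with $\alpha,\phi^{-1}$ both polylog in $n$, and the sampling probability $\tau_{r,i}$ is chosen large enough to meet the hypothesis of Theorem~\ref{thm:sampAndFixGrad}. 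So for each $r,i,j$,
\[
\Ghat^r_{i,j} \;\preceq_{\Otil(m^{1/(p-1)})}\; \Hbar^r_{i,j} \;\preceq_{\Otil(m^{1/(p-1)})}\; \Ghat^r_{i,j},
\]
with the maps $\map{\Ghat^r_{i,j}}{\Hbar^r_{i,j}}$ and $\map{\Hbar^r_{i,j}}{\Ghat^r_{i,j}}$ supplied by \textsc{SampleAndFixGradient}. Note the scaling of $\rr$ and $s$ in the construction is precisely arranged so that the sampled edges carry $s^{\calG}$, which makes the union of $\Hbar^r_{i,j}$'s a well-defined subinstance of $\calH$.

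Next I would combine these per-expander approximations using Lemma~\ref{lem:Composition}: since $\Ehat^r_i = \bigcup_j \Ghat^r_{i,j}$ on pairwise vertex-disjoint supports, and analogously $\Hbar^r_i = \bigcup_j \Hbar^r_{i,j}$, summing over all $r,i,j$ and using the identity map on $T$ and on $\Gbar_{rest}$ gives
\[
T + \sum_{r,i}\Ehat^r_i + \Gbar_{rest} \;\preceq_{\Otil(m^{1/(p-1)})}\; T + \sum_{r,i}\Hbar^r_i + \Gbar_{rest} \;\preceq_{\Otil(m^{1/(p-1)})}\; T + \sum_{r,i}\Ehat^r_i + \Gbar_{rest}.
\]
Finally, chaining this with the tree-portal comparison via Lemma~\ref{lem:approximations:composition}, together with the identity relation on $T + \Gbar_{rest}$, yields
\[
\Gbar \;\preceq_{\Otil(m^{2/(p-1)})}\; \calH \;\preceq_{\Otil(m^{2/(p-1)})}\; \Gbar,
\]
with the composite maps being (tree-portal routing)$\circ$(expander sampling) and its reverse.

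The main subtlety I expect is bookkeeping rather than a single hard calculation: I need to verify that the union rule (Lemma~\ref{lem:Composition}) really applies even though $\calH$'s edge set is a disjoint sum of sparsified expanders plus a common tree $T$ plus $\Gbar_{rest}$, and that the residue-preservation of the per-expander sampling maps (which hold for all flows, not just circulations) lifts cleanly to $\preceq$ rather than merely $\circapprox$. Since Theorem~\ref{thm:sampAndFixGrad} explicitly provides residue-preserving maps for arbitrary flows and Lemma~\ref{lem:Composition} is stated for $\preceq$, this goes through; the two $\Otil(m^{1/(p-1)})$ distortions simply multiply to give $\Otil(m^{2/(p-1)})$, absorbing all polylog factors into the $\Otil(\cdot)$ notation.
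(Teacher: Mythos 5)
Your proposal is correct and follows essentially the same route as the paper's own proof: chain the tree-portal comparison of Lemma~\ref{lem:treeroute2} with the per-expander sampling guarantee of Theorem~\ref{thm:sampAndFixGrad}, use the union rule (Lemma~\ref{lem:Composition}) to aggregate over the disjoint pieces while keeping the identity map on $T$ and $\Gbar_{rest}$, and compose via Lemma~\ref{lem:approximations:composition} to multiply the two $\Otil(m^{1/(p-1)})$ distortions into $\Otil(m^{2/(p-1)})$. Your closing remarks about $\preceq$ versus $\circapprox$ and residue preservation are sound and simply make explicit what the paper leaves implicit.
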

\begin{proof}
Recall for each resistance value $r$, in the $i$-th round, $\Ghat^r_{i,j}$ is 
the $j$-th uniform expander we find, and $\Hbar^r_{i,j}$ is the sparsified graph of $\Ghat^r_{i,j}$.
\begin{align*}
\Gbar & = T+\Gbar_{rest}+\sum_{r,i}\Gbar^r_i && (\textrm{valid as the sets 
are disjoint})\\
   & \preceq_{\Otil(m^{1/(p-1)})} T+\Gbar_{rest}+\sum_{r,i}\Ehat^r_i && (
\textrm{Lemma~\ref{lem:treeroute2}})\\
	 & = T+\Gbar_{rest}+\sum_{r,i,j}\Ghat^r_{i,j}  && (\textrm{$\Ehat^r_i$ is 
the disjoint union of $\Ghat^r_{i,j}$ over all $j$})\\
	 & \preceq_{\Otil(m^{1/(p-1)})} T+\Gbar_{rest}+\sum_{r,i,j}\Hhat^r_{i,j} && 
(\textrm{By Theorem~\ref{thm:sampAndFixGrad}, and sets being disjoint})\\
	&= \calH
\end{align*}
$\Gbar \preceq_{\Otil(m^{2/(
p-1)})} \calH $ follows by taking the composition of all the intermediate steps, and multiplying the approximation error by Lemma~\ref{lem:approximations:composition}. The other direction is similar.
\end{proof}
Now we can prove the main ultra-sparsification theorem.
\begin{proof}[Proof of Theorem~\ref{thm:ultrasparsify}]
Other than the additive error terms and the self-loops, everything in the theorem statement follow directly from Lemma~\ref{lem:size} (the number of off-tree edges), and composition of Lemma~\ref{lem:scaling} with Lemma~\ref{lem:sampling} (the approximation error). We explicitly spell out the flow mappings between $\calG$ and $\calH$. We start with the $\calG$ to $\calH$ direction. We break the flow in $\calG$ as the sum of flow on disjoint edge subsets $T$,$\Gbar_{rest}$, and $\Etil^r_i$, specify the mapping from each piece to $\calH$, and later take the sum of the mappings. The mapping from $T$ and $\Gbar_{rest}$ to $\calH$ is just the identity. For flow on $\Etil^r_i$, we get a flow on $T+\Ehat^r_i$ by tree-portal routing. As $\Ehat^r_i$ is the sum of $\Ghat^r_{i,j}$'s, for the flow mapped to $\Ghat^r_{i,j}$, we map it to a flow on $\Hbar^r_{i,j}$ using the flow mapping in \textsc{SampleAndFixGradient}. We add these mapping over all $j$'s to get a mapping from the flow on $T+\Ehat^r_i$ to a flow on $\calH$, and take the composition with the tree-portal routing to get a mapping from $\Etil^r_i$ to $\calH$. Summing over all $r,i$ (together with the identity on $T$ and $\Gbar_{rest}$ gives the mapping from $\calG$ to $\calH$. The mapping from $\calH$ to $\calG$ is symmetric, and in the part from $\Ehat^r_i$ to $T+ \Etil^r_i$ we use the reverse of tree-portal routing.

All the subroutines take nearly linear time, and we have at most $\log n$ different $r$, and for each $r$ there are at most $\log m$ iterations, so the overall running time is $\Otil(m)$. The flow mappings can also be applied in $\Otil(m)$ time, and they are linear maps.

Now we look at the additive error terms. In particular, additive errors come in at two places. The first is when we round an original resistance to $0$ when it is less than $\delta$, and we have Lemma~\ref{lemma:rounding} to bound the error (at that step). The second place is in \textsc{Decompose} (Algorithm~\ref{alg:Decompose}), we may get an expander $\Ghat^r_{i,j}$ whose projected gradient is not $\alpha$-uniform but has tiny norm (i.e. case $3(b)$), and we zero out its projection to the cycle space before sampling to make it $1$-uniform. If we have a flow $f$ on such an $\Ghat^r_{i,j}$, the additive error is in the linear term, and is equal to the dot product of $f$ with the removed gradient. We let $\gg^r_i$, $\gg^r_{i,j}$ be the gradient on edges in $\Ghat^r_i$,$\Ghat^r_{i,j}$ respectively, and $\gghat^r_i$, $\gghat^r_{i,j}$ as the projection of $\gg^r_i$ (and $\gg^r_{i,j}$) to the cycle space of $\Ghat^r_i$ (and $\Ghat^r_{i,j}$). We remove $\gghat^r_{i,j}$ from the gradient $\gg^r_{i,j}$ when $\gghat^r_{i,j}\leq \delta'\gghat^r_i$ for some parameter $\delta'$, so the additive error we introduce is $f^T\gghat^r_{i,j}$, which is at most $\norm{f}_2\norm{\gghat^r_{i,j}}_2$, which is in turn at most $\delta'\norm{f}_2\norm{\gg^r_i}_2$ as $\gghat^r_i$ is a projection of $\gg^r_i$. Now we look at how this additive error propagates in terms of the overall approximation error between $\calG$ and $\calH$. We will get an additional factor $m$ when we combine the additive errors over all the individual expanders where we carry out this perturbation. Note we are not really introducing more error here, but simply because $\sqrt{m}\norm{\sum_i f_i}\geq \sum_i\norm{f_i}\geq \norm{\sum_i f_i}$ when $f_i$'s have disjoint support and total size $m$. The additive error is also amplified through the intermediate steps, but since the multiplicative approximation errors are $m^O{1/p}$, we lose at most another polynomial factor. Additional polynomial factor comes in because the norm of the gradient vector after tree-routing can be off by a polynomial factor comparing to the norm of the original gradient. However, overall the blowup is at most polynomial, and we use a polynomially smaller $\delta'$ in \textsc{Decompose} to accommodate these factors to get the additive error in our final result. The same argument applies to the additive error introduced by resistance rounding (e.g. round to $0$ when the gradient is at most $\delta/m^c$ for some large enough $c$). 
\end{proof}
We brief go over the case when tree-portal routing gives self-loops. We treat self-loops the same way as the edges that are in the uniform expanders except they don't go through the expander decomposition and sampling steps. Once we get a self-loop $\ehat$ from tree-portal routing of some edge $e\in E^{\calG}$, we add $\ehat$ to $\calH$, where the gradient on $\ehat$ is set (the same way as non self-loops) to preserve the flow dot gradient term under tree-portal routing. We remove its pre-image $e$ from $E^r_i$, but if in some iteration, more than half of the edges in $E^r_i$ are mapped to self-loops by tree-portal routing, we skip the decomposition and sampling steps also for other edges, as we don't have a dense enough graph between the portal nodes to sparsify. We still have the size of $E^r_i$ drop by at least $1/2$ across each iteration as before. The final caveat is that since self-loops don't go through \textsc{SampleAndFixGradient}, and thus their $s$ values are not scaled to be the same as the rest of the edges in $\calH$. This is not an issue because we will remove them from the instance and optimize them individually (see Lemma~\ref{lem:elimination:selfLoops}), so they won't exist in the instance that we recursively solve, so uniform $s$ scalar is not required for them.
%Now we put everything together. Given an instance $\calG$, and a subset $E\subset E^{\calG}$ of the edges in the instance, we denote $\restr{\calG}{E}$ to denote the sub-instance of $\calG$ restricted to edges of $E$.  is iterative, where 
%
%
%We will reduce this congestion by decomposing the spanning tree into edge disjoint sub-trees, 
%where two sub-trees can have at most one vertex overlap. The nodes that belong to more than one sub-tree will 
%be our portal nodes. Here is where we crucially use the property that
%all $\ss_e$ are uniform: $\ell_{p}^{p}$ part comes almost for free by
%introducing more portals. We use the following result, which is a direct application of the tree decomposition theorem in Speilman-Teng \cite{ST}.
%\begin{lemma}
  %\label{lem:FindPortal}
  %There is a linear time routine \textsc{FindPortals} that given any graph $G$,
  %tree $T$, with $\mhat$ off-tree edges, and a portal count
  %$\nhat \leq \mhat$, returns a subset of $\Vhat$ of $\nhat$ vertices
  %so that for all edges $\ehat \in T$, we have
  %\begin{align*}
    %\sum_{e: \ehat \in P_{T, \Vhat}\left( e \right)}
    %\str_{T, \Vhat}\left( e \right)
    %& \leq \frac{10}{\nhat}
      %\sum_{e} \str_{T, \Vhat}\left( e \right) \\\
    %\abs{e: \ehat \in P_{T, \Vhat}\left( e \right)}
    %& \leq \frac{10 \mhat}{\nhat}.
  %\end{align*}
%\end{lemma}

%%% Local Variables:
%%% mode: latex
%%% TeX-master: "main"
%%% End:

%!TEX root = main.tex

\section{Decomposing into Uniform Expanders}
\label{sec:decomposition}

In this section we prove our decomposition result necessary
for finding large portions of edges that can be sampled.
This and the subsequent sampling step in Appendix~\ref{sec:expander}
are critical for reducing the number of edges between portal vertices,
after they were routed there in Line~\ref{ln:TreeRoute}
of \textsc{UltraSparsify} (Algorithm~\ref{alg:UltraSparsify}).
The main algorithmic guarantees can be summarized as below in
Theorem~\ref{thm:Decompose}.

\Decompose*

We will obtain the expansion properties via expander
decompositions.
Specifically we will invoke the following result
from~\cite{SaranurakW18:prelim} as a black box.

\begin{lemma}
\label{lem:ExpanderDecompose}
There is a routine \textsc{ExpanderDecompose} that when
given any graph $G$ and any degrees $\dd$ such that
$\dd_u \geq deg_{G}(u)$ for all $u$, along with a parameter
$0 < \phi < 1$,
$\textsc{ExpanderDecompose}(G, \dd, \phi)$
returns a partition of the vertices of $G$
into $V_1, V_2, \ldots$ in $O(m \phi^{-1} \log^{4}n)$ time
such that $G[V_i]$ has conductance at least $\phi$ w.r.t. $\dd_u$,
and the number of edges between the $V_i$s is at most
$O(\sum_{u} \dd_{u} \phi \log^{3}n)$. 
\end{lemma}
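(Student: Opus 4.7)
The plan is to reduce to the standard Saranurak--Wang expander decomposition~\cite{SaranurakW18:prelim}, which produces an analogous decomposition for an unweighted graph where conductance is measured against the actual degrees $\deg_G(u)$. The only two differences between our statement and the Saranurak--Wang guarantee are (i) conductance is measured against a user-supplied degree vector $\dd$ with $\dd_u \ge \deg_G(u)$ rather than against $\deg_G(u)$ itself, and (ii) the inter-cluster edge count scales as $\phi \sum_u \dd_u \cdot \polylog(n)$ rather than $\phi m \cdot \polylog(n)$. So the entire proof reduces to a simple book-keeping adaptation.

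The reduction works by augmenting $G$ with a weighted self-loop of weight $\dd_u - \deg_G(u)$ at each vertex $u$. Call the augmented graph $G'$. Self-loops contribute to volume but not to any cut, so for every $S \subseteq V$ the edge boundary $\partial_{G'}(S)$ equals $\partial_G(S)$, while the $G'$-volume of $S$ is exactly $\sum_{u \in S} \dd_u$. Hence conductance in $G'$ with respect to actual degrees is precisely conductance in $G$ with respect to $\dd$, and a Saranurak--Wang decomposition of $G'$ with parameter $\phi$ yields the desired partition: each piece has $\phi$-conductance w.r.t.~$\dd$, and the number of inter-cluster edges, which lives entirely in the original edge set, is at most $O(\phi \cdot \mathrm{vol}_{G'}(V) \cdot \log^3 n) = O(\phi \sum_u \dd_u \cdot \log^3 n)$.

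The main obstacle is preserving the running time $O(m \phi^{-1} \log^4 n)$, because naively materializing the self-loops would push the ``edge count'' of $G'$ up to $\sum_u \dd_u$, which can be polynomially larger than $m$. The fix is that the Saranurak--Wang algorithm is driven by local cut/flow and random-walk subroutines whose work scales with the number of non-loop edges actually traversed; self-loops only enter through volume accounting. We therefore never construct the self-loops explicitly but instead pass $\dd_u$ to every place in the algorithm where a degree is read (volume thresholds, balanced-cut targets, termination conditions, trimming volumes), keeping the data structure over the original edge set. Under this implementation, every step of the Saranurak--Wang procedure runs on a graph with $m$ real edges, and all charging arguments in their analysis that previously charged to $\deg_G(u)$ now charge to $\dd_u$; since the inequalities $\dd_u \ge \deg_G(u)$ only help the cut-size bounds and do not affect the termination potential, the guarantees carry through with the stated asymptotics.

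Finally, we verify the output bounds: the conductance lower bound is inherited directly from the invariant maintained by the decomposition on $G'$, and the inter-cluster bound follows by summing the per-level charges against $\dd$ across the $O(\log n)$ recursion levels of Saranurak--Wang, producing the $O(\phi \sum_u \dd_u \log^3 n)$ bound. The running time is $O(m \phi^{-1} \log^4 n)$ as claimed, one $\log n$ factor worse than the single-step cost due to the recursive rebalancing, exactly as in the original analysis.
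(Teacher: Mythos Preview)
The paper does not prove this lemma at all: it is stated as a black-box citation of \cite{SaranurakW18:prelim} (``Specifically we will invoke the following result from~\cite{SaranurakW18:prelim} as a black box''). So there is nothing to compare against in the paper's own text.

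Your reduction via self-loops is the standard way to pass from the ordinary Saranurak--Wang guarantee (conductance against $\deg_G$) to the $\dd$-weighted version, and the correctness part of your argument is sound: self-loops affect volume but not cut size, so conductance in $G'$ equals $\dd$-conductance in $G$, and the inter-cluster bound becomes $O(\phi\sum_u \dd_u \log^3 n)$. The one place where your argument is somewhat informal is the running-time claim: asserting that ``the Saranurak--Wang algorithm is driven by local cut/flow and random-walk subroutines whose work scales with the number of non-loop edges'' is plausible but requires actually inspecting their algorithm to check that no step iterates over incident edges in a way that would touch the self-loops (e.g.\ in push-relabel or nibble steps). This is a white-box modification of their algorithm rather than a pure black-box reduction, and to make it rigorous you would need to point to the specific places in their analysis where degree is read and confirm that substituting $\dd_u$ preserves all invariants without changing the work done. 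That said, in the context of this paper the lemma is simply asserted by citation, so your sketch already goes further than what the paper provides.
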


Note that we explicitly introduce the $\dd$ vector containing
the degrees of the initial graph because we will repeatedly invoke
this partition routine.
This is due to our other half of the routine, which is to
repeatedly project $\gg$ among the remaining edges, and removing
the ones that contribute to too much of its $\ell_2^2$-norm
in order to ensure uniformity as given in Case~\ref{case:Uniform}
of Theorem~\ref{thm:Decompose}.
To see that this process makes progress, we need the key observation
from Lemma~\ref{lem:EnergyDecrease} that projections can only decrease
the $\ell_2^2$ norm of $\gghat$, the projection of the gradient.

This leads to an approach where we alternate between
dropping the edges with high energy, and repartitioning
the remaining edges into expanders.
Pseudocode of this routine is in Algorithm~\ref{alg:Decompose},
which calls a recursive routine, \textsc{DecomposeRecursive}
shown in Algorithm~\ref{alg:DecomposeRecursive} with a suitable
value of $\phi$ and number of layers.
Note that we also need to trim the initial graph so
that we only work with large degree vertices.

\begin{algorithm}[t]
\caption{Decomposition into Uniform Expanders}
\label{alg:Decompose}
 \begin{algorithmic}[1]
  \Procedure{Decompose}{$\calG$, $\delta$}
  \State Set $\phi \leftarrow c_{partition} \log^{3}n \log(1 / \delta)$
    for some absolute constant $c_{partition}$.
  \State Iteratively remove all vertices with degree less than $\frac{m}{10n}$
  to form $\calG_{large}$.
  \State Compute $\overline{\dd}$, the degrees of $\calG_{large}$
  \State Return $\textsc{RecursiveDecompose}(\calG_{large}, \phi, 1, log(n / \delta))$.
\EndProcedure 
\end{algorithmic}
\end{algorithm}

\begin{algorithm}[t]
\caption{Recursive Helper for Decomposition}
\label{alg:DecomposeRecursive}
 \begin{algorithmic}[1]
  \State{Compute the projection of $\gg^{\calG}$ into its cycle space, $\gghat^{\calG}$.}
  \Procedure{DecomposeRecursive}{$\calG$, $\ddbar$, $\phi$, i, L}
  \State{Form $\calG_{trimmed}$ by removing all edges
    $e \in E^{\calG}$ such that $(\gghat^{\calG}_e)^2 \geq \frac{10 L}{m^{\calG}} \cdot \norm{\gghat^{\calG}}_2^2$.}
  \label{ln:Trim}
  \State $(G_1, G_2, \ldots, G_t) \leftarrow \textsc{ExpanderDecompose}((V^{\calG_{trimmed}}, E^{\calG_{trimmed}}), \ddbar_{V^{\calG_{trimmed}}}, \phi)$.
  \State Initialize collection of results, $\calP^{\calG} \leftarrow \emptyset$.
  \For{$i = 1 \ldots t$}
    \State{Form $\calG_i$ from the edges in $\calG$ corresponding to $G_i$}
    \State{Compute $\gghat^{\calG_i}$, the projection of $\gg(\calG_{i})$ onto its cycle space.}
    \If{$i = L
      \text{~or~}
        (\|\gghat^{\calG_i}\|_2^2 \geq \frac{1}{2} \|\gghat^{\calG}\|_2^2
          \text{~and~}
          m^{\calG_{i}} \geq m^{\calG}/2)$}
      \State{Add $\calG_{i}$ to the results, $\calP^{\calG} \leftarrow \calP^{\calG} + \calG_{i}$.}
    \Else
      \State{Recurse on $\calP^{\calG}$: $\calP^{\calG} \leftarrow \calP^{\calG} + \textsc{DecomposeRecursive}(\calG_{i}, \ddbar, \phi, i + 1, L)$.} \label{ln:Recurse}
    \EndIf
  \EndFor
  \State Return $\calP^{\calG}$.
\EndProcedure 
\end{algorithmic}
\end{algorithm}

We will also need the following result (Lemma 28 of
\cite{KelnerLOS14}, see also \cite{KelnerM11}).
\begin{lemma}
\label{lem:electricalOblInfRoute}
Suppose $G$ is a unit weight graph with conductance $\phi$.
Then the projection operations into cycle and potential flow
spaces both have $\ell_{\infty}$ norms bounded by $O(\phi^{-2} \log{n})$:
\[
\norm{\BB^{\calG} \left( {\BB^{\calG}}^{\top} \BB^{\calG}
  \right)^{\dag} {\BB^{\calG}}^{\top}}_{\infty}
\leq
O(\phi^{-2} \log n)
\]
and
\[
\norm{I-\BB^{\calG} \left( {\BB^{\calG}}^{\top} \BB^{\calG}
  \right)^{\dag} {\BB^{\calG}}^{\top}}_{\infty}
\leq
O(\phi^{-2} \log n).
\]
\end{lemma}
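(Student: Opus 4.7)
The plan is to reduce both bounds to a single statement about the competitiveness of electrical flow, which Kelner--Madry-style analyses of oblivious routing in expanders provide. Note that $P \defeq \BB^{\calG} ({\BB^{\calG}}^{\top} \BB^{\calG})^{\dag} {\BB^{\calG}}^{\top}$ is the orthogonal projection onto the image of $\BB^{\calG}$ (the potential-flow subspace), so for any edge vector $\xx \in \rea^{E^{\calG}}$ the vector $P\xx$ is the unique electrical flow meeting the demand ${\BB^{\calG}}^{\top} \xx$; equivalently, $P\xx$ minimizes $\sum_e \ff_e^2$ over all flows $\ff$ with ${\BB^{\calG}}^{\top} \ff = {\BB^{\calG}}^{\top} \xx$. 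The second bound in the lemma follows from the first by the triangle inequality, $\norm{I - P}_\infty \le 1 + \norm{P}_\infty$, so the real task is to show $\norm{P}_\infty = O(\phi^{-2}\log n)$.

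Since $\norm{P}_\infty = \sup_{\xx \neq \vzero} \norm{P\xx}_\infty / \norm{\xx}_\infty$ and $\xx$ itself is a flow meeting the demand ${\BB^{\calG}}^{\top} \xx$, this operator norm is upper bounded by
\[
  \sup_{\bb} \ \frac{\norm{\text{electrical flow for demand }\bb}_{\infty}}{\min_{{\BB^{\calG}}^{\top}\ff = \bb} \norm{\ff}_{\infty}},
\]
i.e., by the worst-case $\ell_\infty$-congestion competitive ratio of electrical flow against the congestion-optimal flow, over all demands $\bb$. I would then invoke the result (Lemma~28 of \cite{KelnerLOS14}, see also \cite{KelnerM11}) that in any unit-weight graph with conductance $\phi$ this competitive ratio is $O(\phi^{-2}\log n)$, which together with the reformulation above yields the desired bound on $\norm{P}_\infty$, and then the bound on $\norm{I-P}_\infty$ by triangle inequality.

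The main obstacle is the competitive-ratio statement itself, which I would use as a black box given how the lemma is framed. If I had to reprove it, the route is via Cheeger's inequality, turning the conductance $\phi$ into a normalized-Laplacian spectral gap $\Omega(\phi^2)$, and then relating electrical flows to lazy random walks with mixing time $O(\phi^{-2}\log n)$; the per-edge $\ell_\infty$-congestion of the electrical flow for any demand can then be bounded in terms of how much mass a walk starting from a unit demand at a single vertex places on a fixed edge before mixing, which is where the factor $\phi^{-2}\log n$ appears.
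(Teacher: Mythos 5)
Your proposal is correct and takes essentially the same approach as the paper: the paper gives no proof of this lemma and simply imports it as Lemma~28 of \cite{KelnerLOS14} (see also \cite{KelnerM11}), which is the competitive-ratio result you invoke as a black box. Your observation that the operator-norm formulation is exactly the $\ell_\infty$-congestion competitiveness of electrical routing (together with $\norm{I-P}_\infty \le 1 + \norm{P}_\infty$) is a correct and clean bridge between the two phrasings.
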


\begin{proof} (of Theorem~\ref{thm:Decompose})

We start by bounding the qualities of the $\calG$
pieces returned.
As we only return pieces that are the outputs of
$\textsc{ExpanderDecompose}$, all of them have conductance
at least $\phi$ by Lemma~\ref{lem:ExpanderDecompose}.
Also, since we only keep the non-trivial pieces containing edges,
taking the singleton cuts gives that the degrees in these pieces
are at least
\[
\phi \cdot \dd_{u}
\geq
\frac{m}{2n}
\cdot \frac{1}{c_{partition} \log^3{n}
\cdot \log\left( n / \delta \right)}
=
\frac{m}{10 n \log^3{n} \log\left( n / \delta \right)}.
\]

Now consider the quality of each $\gghat^{\calG_i}$:
if it was returned due to $i = L$, then the energy of the
projected gradient must have been halved
at least $L - \log{n}$ times, or by a factor of
$2^{L - \log{n}} = 2^{\log(1 / \delta)} = 1/\delta$.
Thus we would have 
\[
\norm{\gghat^{\calG_i}}_2^2
\leq
\delta \norm{\gghat^{\calG}}_2^2
\leq
\delta \norm{\gg^{\calG}}_2^2.
\]

Otherwise, we must have terminated because both the energy
and edge count did not decrease too much.
An edge $e$ was kept in the trimmed set only if
\[
\left(\gghat^{\calG}_e\right)^2
\leq
\frac{10 L}{m^{\calG}} \norm{\gghat^{\calG}}_2^2
\leq
\frac{20 L}{m^{\calG_{i}}} \norm{\gghat^{\calG}}_2^2.
\]
Combining this with the termination requirement
of $\norm{\gghat^{\calG_{i}}}_2^2 \geq \frac{1}{2}
\norm{\gghat^{\calG}}_2^2$ gives that
the $\ell_{\infty}$ norm of the pre-projection
gradient on $\calG_i$, $\gghat^{\calG}_{E^{\calG_{i}}}$
satisfies
\[
\norm{\gghat^{\calG}_{E^{\calG_{i}}}}_{\infty}^2
\leq
\frac{40 L}{m^{\calG_{i}}} \norm{\gghat^{\calG_i}}_2^2.
\]
On the other hand, because $\calG_{i}$ has expansion $\phi$,
doing an orthogonal cycle projection on it can only increase
the $\ell_{\infty}$-norm of a vector by a factor of
$O(\phi^{-2} \log{n})$ by Lemma~\ref{lem:electricalOblInfRoute}.
Thus we have
\begin{multline*}
\norm{\gghat^{\calG_{i}}}_{\infty}^2
=
\norm{\left( \II -
\BB^{\calG} \left( {\BB^{\calG}}^{\top} \BB^{\calG} \right)^{\dag}
{\BB^{\calG}}^{\top} \right) \gghat^{\calG}_{E^{\calG_{i}}}}_{\infty}^2
\leq
\norm{\II -
\BB^{\calG} \left( {\BB^{\calG}}^{\top} \BB^{\calG} \right)^{\dag} {\BB^{\calG}}^{\top}}_{\infty}^{2}
\norm{\gghat^{\calG}_{E^{\calG_{i}}}}_{\infty}^2\\
\leq
O\left( \phi^{-4} \log^{2}n \right)
\cdot
\norm{\gghat^{\calG}_{E^{\calG_{i}}}}_{\infty}^2
\leq
O\left( \phi^{-4} \log^{2}n \right)
\cdot 
O\left( \frac{L}{m^{\calG_{i}}} \right) \norm{\gghat^{\calG_{i}}}_2^2
=
\frac{O\left( \log^{14}n \log^{5}\left( n / \delta \right)\right)}
{m^{\calG_{i}}} \norm{\gghat^{\calG_{i}}}_2^2,
\end{multline*}
which is the desired (post-projection) uniformity bound.

We now bound the number of edges removed during all
the recursive calls.
The bound on $L$ means this recursion has
at most $O(\log(n / \delta))$ levels.
Lemma~\ref{lem:ExpanderDecompose} gives that the number
of edges between the expander pieces is
\[
O\left( \sum_{u} \dd_{u} \phi \log^{3}n \right) 
\cdot \log\left( n / \delta \right)
=
O\left( m \phi \log^{3}n \log\left( n / \delta \right) \right),
\]
so the setting of $\phi = \frac{1}{c_{partition} \log^{3}n \log(n / \delta))}$
gives at most $m / 10$ edges between the pieces
for an appropriate choice of $c_{partition}$.

Furthermore, as each edge's contribution to $\gghat$ is
non-negative, the number of edges whose relative contribution
exceed $\frac{10 L}{m}$ is at most $\frac{m}{10 L}$.
Summing this over all levels gives at most $m / 10$
edges removed from the trimming step on Line~\ref{ln:Trim}
of \textsc{DecomposeRecursive} in Algorithm~\ref{alg:DecomposeRecursive}.

Finally, the running time is dominated by the expander
decomposition calls.
As there are $O(\log(n / \delta))$ levels of recursion and
each level deals with edge-disjoint subsets, we obtain
the total running time by substituting the value of $\phi$
into the runtime of expander decompositions as given in 
Lemma~\ref{lem:ExpanderDecompose}.
\end{proof}

%%% Local Variables:
%%% mode: latex
%%% TeX-master: "main"
%%% End:

\section*{Acknowledgements}
\label{sec:acks}
\addcontentsline{toc}{section}{\nameref{sec:acks}}

This project would not have been possible without Dan Spielman's
optimism about the existence of analogs of numerical methods
for $\ell_{p}$-norms, which he has expressed to us on multiple occasions over the past six years.
We also thank Ainesh Bakshi, Jelani Nelson, Aaron Schild, and Junxing Wang
for comments and suggestions on earlier drafts and presentations of these ideas.

As with many recent works in optimization algorithms on graphs,
this project has its large share of influence by the late Michael B. Cohen.
In fact, Michael's first papers on recursive
preconditioning~\cite{CohenKMPPRX14} and $\ell_{p}$-norm preserving
sampling of matrices~\cite{CohenP15} directly influenced the
constructions of preconditioners (Section~\ref{subsec:graph_ultrasparsify})
and uniform expanders (Section~\ref{sec:decomposition} and
Appendix~\ref{sec:expander}) respectively.
While our overall algorithm falls short of what Michael would consider `snazzy',
it's also striking how many aspects of it he predicted, including:
the use of expander decompositions;
the $p \rightarrow \infty$ case being different than the $p \rightarrow 1$
case;
and the large initial dependence on $p$ that's also eventually
fixable (see Section~\ref{subsec:open}).

% Not being able to convince Michael to systematically investigate
% preconditioning for $\ell_{p}$-norm flows
% will remain one of Richard's biggest regrets.
% Here Richard is grateful to Aleksander M\c{a}dry, Jon Kelner,
% Ian Munro, Tom Cohen, Marie Cohen, Sebastian Bubeck,
% and Ilya Razenshteyn for many helpful conversations.

Richard regrets not being able to convince Michael to systematically investigate
preconditioning for $\ell_{p}$-norm flows.
He is deeply grateful to
Aleksander M\c{a}dry, Jon Kelner,
Ian Munro, Tom Cohen, Marie Cohen, Sebastian Bubeck,
and Ilya Razenshteyn for many helpful conversations following Michael's passing.

\newcommand{\etalchar}[1]{$^{#1}$}

%\bibliographystyle{alpha}
%\bibliography{refs}

\appendix
% \section{Missing Proofs}
\section{Deferred Proofs from Prelims, Section~\ref{sec:overview}}
\identity*
\begin{proof}
  Consider the map $\map{\calG}{\calG}$ such that for every flow
  $\ff^{\calG}$ on $\calG,$ we have $\map{\calG}{\calG}(\ff^{\calG}) =
  \ff^{\calG}.$ Thus,
  \begin{align*}
    \obj^{\calG}\left( {\kappa}^{-1}
    \map{\calG}{\calG}(\ff^{\calG}) \right)
    & = \obj^{\calG} \left( {\kappa}^{-1} \ff^{\calG} \right) \\
    & = \left( \gg^{\calG}  \right)^{\top} \left( {\kappa}^{-1}
      \ff^{\calG} \right) - h_p(\rr, \kappa^{-1} \ff^{\calG}) \\
    & \ge \kappa^{-1}  \left( \gg^{\calG}  \right)^{\top} \ff^{\calG}
      - \kappa^{-2} h_p( \rr, \ff^{\calG})
    && \text{(Using
       Lemma \ref{lem:iterative-refinement:rescaling})} \\
    & \ge \kappa^{-1}  \left( \gg^{\calG}  \right)^{\top} \ff^{\calG}
      - \kappa^{-1} h_p( \rr, \ff^{\calG}) = \kappa^{-1} \obj^{\calG}(\ff^{\calG}).
  \end{align*}
  Moreover $(\BB^{\calG})^{\top} \map{\calG}{\calG}(\ff^{\calG}) = \BB^{\calG}
  \ff^{\calG}.$ Thus, the claims follow.
\end{proof}

\composition*
\begin{proof}
  It is easy to observe that the given mapping is linear.  Given a
  flow $\ff^{\calG_1}$ on $\calG_1,$ we have,
  \begin{align*}    
    (\BB^{\calG_3})^{\top} \left( \map{\calG_2}{\calG_3} \circ
    \map{\calG_1}{\calG_2} (\ff^{\calG_1}) \right)
    & =
      (\BB^{\calG_3})^{\top} \left( \map{\calG_2}{\calG_1} \left(
      \map{\calG_3}{\calG_2} (\ff^{\calG_1}) \right) \right)  \\
    & =
      (\BB^{\calG_2})^{\top} \left( \map{\calG_3}{\calG_2} (\ff^{\calG_1})
      \right)
      = (\BB^{\calG_1})^{\top} (\ff^{\calG_1}).
  \end{align*}
  Moreover,
  \begin{align*}
    \obj_{\calG_3} \left( (\kappa_1 \kappa_2)^{-1} 
    \map{\calG_2}{\calG_3} \left( \map{\calG_1}{\calG_2}
    (\ff^{\calG_1}) \right) \right)
    & \ge  \obj_{\calG_3} \left( {\kappa_2}^{-1}
      \map{\calG_2}{\calG_3} \left( {\kappa_1}^{-1} \map{\calG_1}{\calG_2}
      (\ff^{\calG_1}) \right) \right)
    && \text{(Using linearity)} \\
    & \ge  {\kappa_2}^{-1} \obj_{\calG_2} \left( {\kappa_1}^{-1}
      \map{\calG_1} {\calG_2}
      (\ff^{\calG_1}) \right)
    && \text{(Using $\calG_2 \preceq_{\kappa_2} \calG_3$)} \\
    & \ge  (\kappa_2\kappa_1)^{-1} \obj_{\calG_1} \left(
      \ff^{\calG_1} \right)
    && \text{(Using $\calG_1 \preceq_{\kappa_1} \calG_2$)}
  \end{align*}
  The same proof works for $\circapprox.$
\end{proof}

\union*
\begin{proof}
  Let $\ff^{\calH}$ be a flow on $\calH.$ We write $\ff^{\calH} =
  (\ff^{\calH_1}, \ff^{\calH_2}).$ Let $\ff^{\calG} \defeq
  \map{\calH}{\calG}(\ff^{\calH}).$ If $\ff^{\calG_i}$ denotes
  $\map{\calH_i}{\calG_i}(\ff^{\calH_i})$ for $i=1,2,$ then we know
  that $\ff^{\calG} = (\ff^{\calG_1}, \ff^{\calG_2}).$ Thus, the
  objectives satisfy
  \begin{align*}
    \obj^{\calG}(\kappa^{-1} \ff^{\calG})
    & = \obj^{\calG_1}(\kappa^{-1} \ff^{\calG_1}) +
      \obj^{\calG_2}(\kappa^{-1} \ff^{\calG_2}) \\
    & \ge 
      \kappa^{-1} \obj^{\calH_1}(\ff^{\calH_1}) +
      \kappa^{-1} \obj^{\calH_2}(\ff^{\calH_2})  
      = \kappa^{-1} \obj^{\calH}(\ff^{\calH})
  \end{align*}
  For the residues, we have,
  \begin{align*}
    (\BB^{\calG})^{\top}(\ff^{\calG})
    & = (\BB^{\calG_1})^{\top}(\ff^{\calG_1}) +
      (\BB^{\calG_2})^{\top}(\ff^{\calG_2}) \\
    & = (\BB^{\calH_1})^{\top}(\ff^{\calH_1}) +
      (\BB^{\calH_2})^{\top}(\ff^{\calH_2}) = (\BB^{\calH})^{\top}(\ff^{\calH}).
  \end{align*}
  Thus, $\calH \preceq_{\kappa} \calG.$
\end{proof}

\PerturbResistances*
\begin{proof}
  Consider the map $\map{\calH}{\calG}(\ff) = \ff.$ Thus, since the
  underlying graphs are the same, we immediately have
  $(\BB^{\calG})^{\top} \ff = (\BB^{\calH})^{\top} \ff.$ For the
  objective, we have
  \begin{align*}
    \obj^{\calG}(\kappa^{-1} \ff)
    & = \sum_{e} \left( \kappa^{-1} \gg^{\calG}_e \ff_e -
      \kappa^{-2} \rr^{\calG}_e \ff_e^{2} - \kappa^{-p} s^{\calG} \abs{\ff_e}^{p} \right) \\
    & \ge \kappa^{-1} \sum_{e} \left( \gg^{\calH}_e \ff_e -
      \rr^{\calH}_e \ff_e^{2} - s^{\calH} \abs{\ff_e}^{p} \right) =
      \kappa^{-1} \obj^{\calH}(\ff).
  \end{align*}
\end{proof}

%%% Local Variables:
%%% mode: latex
%%% TeX-master: "main"
%%% End:

\section{Deferred Proofs for Numerical Methods from 
  Section~\ref{sec:numerical}}
\label{sec:numerical-proofs}
The following simple lemma characterizes the change in smoothed
$\ell_p$-norms under rescaling of the input vector.
\begin{restatable}{lemma}{Rescaling}
  \label{lem:iterative-refinement:rescaling}
  For all
  $\xx \in \rea^{m}, \rr \in \rea^{m}_{\ge 0},s \in \rea_{\ge 0},$ and
  $\lambda \in \rea$, we have,
  \begin{equation*}
    \min\{\abs{\lambda}^2,\abs{\lambda}^p\} h_p(\rr, s, \xx)  \leq
    h_p(\rr, s, \lambda \xx) \leq \max\{\abs{\lambda}^2,
    \abs{\lambda}^p\} h_p(\rr , s, \xx).
  \end{equation*}
\end{restatable}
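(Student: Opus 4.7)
The plan is to unfold the definition of $h_p$ and then read off the bound termwise from a trivial inequality for non-negative reals.

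Step 1: Expand using the definition of the smoothed $p$\textsuperscript{th}-power. Pulling the scalar $\lambda$ through the square and the $p$\textsuperscript{th} power of the absolute value gives
\[
h_p(\rr, s, \lambda \xx) = \sum_{i=1}^{m} \left( \rr_i (\lambda \xx_i)^2 + s\abs{\lambda \xx_i}^p \right) = \sum_{i=1}^{m} \left( \lambda^2 \cdot \rr_i \xx_i^2 + \abs{\lambda}^p \cdot s\abs{\xx_i}^p \right).
\]
Each summand is then a non-negative linear combination of the two quantities $\rr_i \xx_i^2 \ge 0$ and $s\abs{\xx_i}^p \ge 0$ (non-negative because $\rr_i, s \ge 0$), with coefficients $\lambda^2 \ge 0$ and $\abs{\lambda}^p \ge 0$.

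Step 2: Apply the elementary fact that for non-negative reals $a, b, \alpha, \beta$, we have $\min\{\alpha,\beta\}(a+b) \le \alpha a + \beta b \le \max\{\alpha,\beta\}(a+b)$. Specializing to $\alpha = \lambda^2$, $\beta = \abs{\lambda}^p$, $a = \rr_i \xx_i^2$, $b = s\abs{\xx_i}^p$ for each $i$, and then summing, I obtain the two-sided bound on $h_p(\rr,s,\lambda\xx)$ in terms of $\sum_i (\rr_i \xx_i^2 + s\abs{\xx_i}^p) = h_p(\rr, s, \xx)$, which is exactly the claim.

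The main obstacle here is essentially negligible: the proof reduces to a direct computation from the definition, requiring only non-negativity of $\rr_i$ and $s$. The one minor subtlety to flag is that $\lambda$ enters only through $\lambda^2 = \abs{\lambda}^2$ and $\abs{\lambda}^p$, so the statement and proof are insensitive to the sign of $\lambda$ and no case analysis on $\sgn{\lambda}$ is required.
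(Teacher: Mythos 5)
Your proof is correct and follows essentially the same route as the paper's: expand the definition to write $h_p(\rr,s,\lambda\xx)$ as a non-negative linear combination with coefficients $\lambda^2$ and $\abs{\lambda}^p$, then bound termwise using non-negativity of $\rr_i$ and $s$. The only cosmetic difference is that the paper first reduces to the scalar case before expanding, whereas you carry the sum throughout; the argument is the same.
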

\begin{proof}
  It suffices to prove the claim for $x \in \rea, r \in \rea_{\ge 0},
  s \in \rea_{\ge 0}$
  \begin{align*}
    h_p(r, s, \lambda x)
    & = r(\lambda x)^{2} + s \abs{\lambda x}^{p} \\
    & = \abs{\lambda}^{2} \cdot r x^{2} + \abs{\lambda}^{p} \cdot 
      s \abs{x}^{p}
  \end{align*}
  Since all terms are non-negative, we get,
  \begin{align*}
    h_p(r, s, \lambda x )
    & \ge \min\{\abs{\lambda}^{2},
      \abs{\lambda}^{p}\} \cdot (r x^{2} + s \abs{x}^{p}),  \\
    \text{and} \quad h_p(r, s, \lambda x )
    & \le \max\{\abs{\lambda}^{2},
      \abs{\lambda}^{p}\} \cdot (r x^{2} + s \abs{x}^{p}).
  \end{align*}
\end{proof}

\RestateIterativeRefinementApprox*
\begin{proof}
  Note that all the terms are a sum over the coordinates. Thus, it
  suffices to prove the inequality for $x, \delta \in \rea,$ and $r, s
  \in \rea_{\ge 0}.$
  We have,
  \begin{align*}
    h_p(r, s, x+\delta) - h_p(r, s, x) - \delta \frac{\partial}{\partial
    x}h_{p}(r, s, x)
    & = r (x+\delta)^2 + s |x+\delta|^p - rx^2 - s|x|^p - \delta (2rx +
      ps|x|^{p-2}x) \\
    & = r \delta^2 + s \abs{x+\delta}^{p} - s |x|^{p} - ps \delta |x|^{p-2}x
    \\
    & = r \delta^2 + s \abs{x}^{p}\left( \abs{1+\delta'}^{p} - 1 -
      p\delta' \right),
  \end{align*}
  where $\delta' = \delta / x.$

  Lemma~\ref{lem:iterative-refinement:p-norm-upper-bound}, proved
  later, proves that for all $\delta',$ and $p \ge 2,$ we have,
  \[ \abs{1+\delta'}^{p} - 1 - p\delta' \le {p2^{p-1}} \left( \delta'^2
      + |\delta'|^{p} \right).\]
  Thus,
  \begin{align*}
    h_p(r, s, x+\delta) - h_p(r, s, x) - \delta \frac{\partial}{\partial
    x}h_{p}(r, s, x)
    & \le  r \delta^2 + s \abs{x}^{p}  {p2^{p-1}} \left(  \delta'^2
      + |\delta'|^{p}  \right) \\
    & =  r \delta^2 + s p2^{p-1}\abs{x}^{p-2}\delta^{2} + s p2^{p-1}
      \abs{\delta}^{p} \\
    & \le p2^{p-1} ((r + s \abs{x}^{p-2}) \delta^{2} + s \abs{\delta}^{p})
    \\
    & = p2^{p-1} h_p(r + s \abs{x}^{p-2}, s, \delta) \\
    & \le 2^{2p} h_p(r + s \abs{x}^{p-2}, s, \delta).
  \end{align*}

  Lemma~\ref{lem:iterative-refinement:p-norm-lower-bound}, proved
  later, shows that for all $\delta',$ and $p \ge 2,$ we have,
  \[ \abs{1+\delta'}^{p} - 1 - p\delta' \ge {2^{-p}} \left( \delta'^2
      + |\delta'|^{p} \right).\]
  \begin{align*}
    h_p(r, s, x+\delta) - h_p(r, s, x) - \delta \frac{\partial}{\partial
    x}h_{p}(r, s, x)
    & \ge  r \delta^2 + s \abs{x}^{p}  {2^{-p}} \left(  \delta'^2
      + |\delta'|^{p}  \right) \\
    & =  r \delta^2 + 2^{-p}s \abs{x}^{p-2}\delta^{2} + 2^{-p}s
      \abs{\delta}^{p} \\
    & \ge 2^{-p} ((r + s \abs{x}^{p-2}) \delta^{2} + s \abs{\delta}^{p})
    \\
    & = 2^{-p} h_p(r + s \abs{x}^{p-2}, s, \delta).
  \end{align*}

\end{proof}

\begin{lemma}
  \label{lem:iterative-refinement:p-norm-upper-bound}
  For all $\delta \in \rea,$ $p \ge 1,$ we have,
  \[ \abs{1+\delta}^{p} - 1 - p\delta
    \le {p2^{p-1}} \left(  \delta^2
      + |\delta|^{p}  \right).
  \]
\end{lemma}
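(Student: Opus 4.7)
My plan is to prove the inequality by case analysis on $\delta$, splitting on sign and on whether $|\delta| \le 1$. For $|\delta| \le 1$ I will use the integral form of the Taylor remainder,
\[
(1+\delta)^p - 1 - p\delta \;=\; p \int_0^\delta \bigl((1+t)^{p-1} - 1\bigr)\, dt
\]
(and its mirror image for $\delta < 0$), reducing the problem to a pointwise linear estimate on the integrand. For $|\delta| \ge 1$ I will use elementary upper bounds such as $1 + \delta \le 2\delta$ and $|\delta| - 1 \le |\delta|$, and then distribute the factor $p 2^{p-1}$ across the $\delta^2$ and $|\delta|^p$ pieces of the target.

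The heart of the proof is the pointwise bound $(1+t)^{p-1} - 1 \le 2^{p-1} t$ on $[0, 1]$, valid for all $p \ge 1$. When $p \ge 2$, the function $(1+t)^{p-1}$ is convex and hence lies below the chord from $(0, 1)$ to $(1, 2^{p-1})$, yielding slope $2^{p-1} - 1$. When $1 \le p \le 2$ it is concave and lies below its tangent at $t = 0$, yielding slope $p - 1$. Since $\max(1, p-1) \le 2^{p-1}$ for all $p \ge 1$, both cases are absorbed into a single linear bound, and integrating gives $(1+\delta)^p - 1 - p\delta \le \tfrac{p \cdot 2^{p-1}}{2}\,\delta^2$ for $\delta \in [0, 1]$. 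The range $\delta \in [-1, 0]$ mirrors this via the substitution $s = -t$, where I bound $1 - (1-s)^{p-1} \le 2^{p-1} s$; the roles of tangent and chord between the two $p$-regimes swap, but the same $\max(1,p-1) \le 2^{p-1}$ closes the gap.

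For $\delta \ge 1$ with $p \ge 2$ I estimate $(1+\delta)^p \le (2\delta)^p \le p 2^{p-1}\delta^p$ directly. For $\delta \ge 1$ with $1 \le p \le 2$ I instead use monotonicity $(1+\delta)^p \le (1+\delta)^2$, expand, and bound the residual linear-in-$\delta$ term by $\delta^p$ (since $\delta \ge 1$), absorbing into $p 2^{p-1}(\delta^2 + \delta^p)$ via $p 2^{p-1} \ge 1$. For $\delta \le -1$, writing $|1+\delta| = |\delta| - 1 \le |\delta|$ gives $|1+\delta|^p - 1 - p\delta \le |\delta|^p + p|\delta|$, and I split: $|\delta|^p \le p 2^{p-1}|\delta|^p$ since $p 2^{p-1} \ge 1$, and $p|\delta| \le p 2^{p-1}\delta^2$ since $2^{p-1}|\delta| \ge 1$ whenever $p \ge 1$ and $|\delta| \ge 1$. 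The main obstacle is that the sign of the second derivative of $(1+t)^{p-1}$ flips at $p = 2$, so no single smoothness argument works uniformly; the saving observation is simply $\max(1, p-1) \le 2^{p-1}$, which is exactly what produces the exponential-in-$p$ constant on the right side of the lemma.
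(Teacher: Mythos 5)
Your proof is correct and follows essentially the same strategy as the paper's: a case split by $|\delta| \lessgtr 1$, convexity/concavity arguments (with the chord/tangent roles depending on whether $p\lessgtr 2$) for the core $|\delta|\le 1$ range, and elementary estimates of $(1+\delta)^p$ or $(|\delta|-1)^p$ for $|\delta|\ge 1$. The paper states the remainder via the first-order mean-value theorem rather than your integral form, but that distinction is cosmetic; the only other difference is that for $\delta\ge 1$ the paper's bound $p\delta\,(2\delta)^{p-1}$ works for all $p\ge 1$ at once, so your additional $p\le 2$ versus $p\ge 2$ split there is more work than needed, though still valid.
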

\begin{proof}
  The proof has to consider several cases.
  \paragraph{$\boldsymbol{\delta \ge 1}$.}
  Using mean-value theorem, we know there is some $z \in [0,\delta]$
  such that
  \begin{align*}
    \abs{1+\delta}^{p} - 1 - p\delta
    & = (1+\delta)^{p} - 1 - p\delta
    \\ 
    & = p\delta\left( (1+z)^{p-1} - 1 \right) \\
    & \le p\delta(1+\delta)^{p-1}  \\
    & \le p \delta (2\delta)^{p-1}.
  \end{align*}

  \paragraph{$\boldsymbol{0 \le \delta \le 1}$.}
  Using mean-value theorem, we know there is some
  $z \in [0,\delta]$ such that
  \begin{align*}
    \abs{1+\delta}^{p} - 1 - p\delta
    & = (1+\delta)^{p} - 1 - p\delta
    \\
    & = p\delta\left( (1+z)^{p-1} - 1 \right) \\
    & \le p\delta\left( (1+\delta)^{p-1} - 1 \right).
  \end{align*}
  If $p \le 2,$ we have $(1+\delta)^{p-1}$ is a concave function, and
  hence $(1+\delta)^{p-1} \le 1 + (p-1)\delta.$ If $p \ge 2,$  we have
  $(1+\delta)^{p-1}$ is a convex function, and hence for $\delta \in
  [0,1],$ we have $(1+\delta)^{p-1} \le 1 + (2^{p-1} - 1)\delta.$
  Thus,
  \[ \abs{1+\delta}^{p} - 1 - p\delta \le p\max\{(p-1)\delta^2,
    (2^{p-1}-1) \delta^2 \}.
  \]

  \paragraph{$\boldsymbol{-1 \le \delta \le 0}$.}
  Using mean-value theorem, we know there is some
  $z \in [-\abs{\delta},0]$ such that
  \begin{align*}
    \abs{1+\delta}^{p} - 1 - p\delta
    & = (1+\delta)^{p} - 1 - p\delta
    \\
    & = p\delta\left( (1+z)^{p-1} - 1 \right) \\
    & \le p\abs{\delta}\left( 1 - (1+\delta)^{p-1} \right).
  \end{align*}
  If $p \le 2,$ we have $(1+\delta)^{p-1}$ is a concave function, and
  hence for $\delta \in [-1,1],$ we have
  $(1+\delta)^{p-1} \ge 1 + \delta.$ If $p \ge 2,$ we have
  $(1+\delta)^{p-1}$ is a convex function, and hence
  $(1+\delta)^{p-1} \ge 1 + (p-1)\delta.$ Thus,
  \[ \abs{1+\delta}^{p} - 1 - p\delta \le p\abs{\delta}
    \max\{\abs{\delta}, (p-1) \abs{\delta} \}.\]
 
  \paragraph{$\boldsymbol{\delta \le -1}$.}
  We have,
  \begin{align*}
    \abs{1+\delta}^{p} - 1 - p\delta
    & = (\abs{\delta}-1)^{p} - 1 + p\abs{\delta}
    \\
    & \le \abs{\delta}^p + p\abs{\delta} \\
    & \le \abs{\delta}^{p} + p \abs{\delta}^{2},  \end{align*}
  since $\abs{\delta} \ge 1.$
\end{proof}

\begin{lemma}
  \label{lem:iterative-refinement:p-norm-lower-bound}
  For all $\delta \in \rea, p \ge 2,$ we have
  \[\abs{1+\delta}^{p} - 1 - p\delta \ge 2^{-p}(\delta^2 + \abs{\delta}^{p}).\]
\end{lemma}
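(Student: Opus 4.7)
\medskip
\noindent\textbf{Proof proposal.} The plan is a case analysis on the sign and magnitude of $\delta$, since the quadratic term in $2^{-p}(\delta^2+|\delta|^p)$ dominates near $0$, the $p$-th power term dominates for large $|\delta|$, and the left-hand side $\phi(\delta) := |1+\delta|^p - 1 - p\delta$ behaves very differently for $\delta \geq 0$, $\delta\in[-1,0]$, and $\delta\leq -1$ (with a potential singularity of $|1+\delta|^p$ at $\delta=-1$ when $p>2$). In all cases we use $\phi(0)=\phi'(0)=0$.

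\medskip
\noindent\emph{Case 1: $\delta \ge 0$.} Set $g(\delta)=\phi(\delta) - 2^{-p}(\delta^2+\delta^p)$, so that $g(0)=g'(0)=0$. I would compute
\[
g''(\delta) = p(p-1)(1+\delta)^{p-2} - 2^{1-p} - p(p-1)\,2^{-p}\delta^{p-2},
\]
and then use the pointwise lower bound $(1+\delta)^{p-2} \ge \tfrac12(1+\delta^{p-2})$ (which follows from $(1+\delta)^{p-2}\ge\max(1,\delta^{p-2})$) to obtain
\[
g''(\delta) \ge \tfrac{p(p-1)}{2}(1-2^{1-p})\delta^{p-2} + \tfrac{p(p-1)}{2}-2^{1-p},
\]
which is non-negative for $p\ge 2$. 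Convexity together with $g(0)=g'(0)=0$ gives $g(\delta)\ge 0$.

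\medskip
\noindent\emph{Case 2: $-1\le \delta\le 0$.} Let $\epsilon=-\delta\in[0,1]$. Taylor's theorem with Lagrange remainder gives
\[
\phi(\delta) = (1-\epsilon)^p - 1 + p\epsilon = \tfrac{p(p-1)}{2}(1-\xi)^{p-2}\epsilon^2 \text{ for some } \xi\in[0,\epsilon],
\]
hence $\phi(\delta)\ge \tfrac{p(p-1)}{2}(1-\epsilon)^{p-2}\epsilon^2$. For $\epsilon\le 1/2$, this is $\ge 2^{2-p}\epsilon^2$, which beats $2^{-p}(\epsilon^2+\epsilon^p)\le 2^{1-p}\epsilon^2$. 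For $1/2\le\epsilon\le 1$ and $p\ge 3$, the crude bound $\phi(\delta)\ge p\epsilon-1\ge p/2-1\ge 1/2$ exceeds $2^{-p}\cdot 2 \le 1/4$; for $p=2$ a direct computation gives $\phi(-\epsilon)=\epsilon^2 \ge \epsilon^2/2$, which is RHS exactly.

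\medskip
\noindent\emph{Case 3: $\delta \le -1$.} Let $\epsilon=|\delta|\ge 1$. Apply Jensen's inequality to the convex function $x\mapsto x^p$ at the two points $\epsilon-1$ and $1$:
\[
(\epsilon-1)^p+1 \ge 2\Big(\tfrac{\epsilon}{2}\Big)^p = 2^{1-p}\epsilon^p.
\]
Therefore $\phi(\delta)=(\epsilon-1)^p-1+p\epsilon \ge 2^{1-p}\epsilon^p + (p\epsilon-2) \ge 2^{1-p}\epsilon^p$, using $p\epsilon \ge 2$. Since $\epsilon\ge 1$ makes $\epsilon^2\le\epsilon^p$, we have $2^{-p}(\epsilon^2+\epsilon^p)\le 2^{1-p}\epsilon^p$, which closes the case.

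\medskip
\noindent The main obstacle is the intermediate regime $-1\le\delta\le 0$, because the Jensen trick of Case~3 breaks down (one of the two interpolation points would be negative), and the simple convexity argument of Case~1 does not go through: for $p>2$, $g''$ can in fact be negative near $\delta=-1$ (the second derivative of $\psi$ involves $|\delta|^{p-2}$ while $\phi''$ vanishes as $(1+\delta)^{p-2}$). The Taylor-with-remainder approach combined with splitting at $\epsilon=1/2$ sidesteps this obstruction cleanly, and in particular lets me handle $p=2$ by exact computation rather than by lower-bounding a vanishing $p/2-1$.
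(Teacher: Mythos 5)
Your Cases 1 and 3 are fine: Case 1 is essentially the paper's argument verbatim, and Case 3 replaces the paper's derivative bound with a clean Jensen estimate $(\epsilon-1)^p + 1 \ge 2^{1-p}\epsilon^p$, which is arguably tidier. But Case 2 has a genuine gap. For $\epsilon \in [1/2,1]$ you handle $p=2$ by direct computation and $p\ge 3$ via the chain $\phi(\delta) \ge p\epsilon - 1 \ge p/2-1 \ge 1/2 > 2^{1-p}$, but nothing is said for $2 < p < 3$, and the chain genuinely fails there: at $p=2.1$ you'd need $p/2-1 = 0.05$ to dominate $2^{1-p}\approx 0.47$. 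The crude bound $\phi(\delta) \ge p\epsilon - 1$ (dropping $(1-\epsilon)^p$ entirely) is too lossy exactly when $\epsilon\approx 1/2$ and $p$ is near $2$, because the dropped term $(1-\epsilon)^p \approx 2^{-p}$ is precisely the scale of the right-hand side. One repair within your framework: use the integral form of the Taylor remainder, $\phi(\delta) = p(p-1)\int_0^\epsilon (\epsilon-t)(1-t)^{p-2}\,dt \ge p(p-1)\,2^{2-p}\int_0^{1/2}(\epsilon-t)\,dt = p(p-1)\,2^{2-p}(4\epsilon-1)/8$, and verify $p(p-1)(4\epsilon-1) \ge 4\epsilon^2$ on $\epsilon\in[1/2,1]$, $p\ge 2$, which reduces to $2\epsilon^2 - 4\epsilon + 1 \le 0$ and holds on that interval.

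For comparison, the paper avoids any split in $p$ or $\epsilon$ in the $[-1,0]$ regime by working with $h(\delta) = \phi(\delta) - 2^{-p}(\delta^2 + \abs{\delta}^p)$ and showing $h'(\delta)\le 0$ on all of $[-1,0]$: the two inequalities $(1+\delta)^{p-1}\le 1+\delta$ and $\abs{\delta}^{p-1}\le\abs{\delta}$ (both valid for $\abs{\delta}\le 1$, $p\ge 2$) reduce $h'(\delta)$ to $\abs{\delta}(-p + (p+2)2^{-p}) \le 0$ in one stroke. Your final paragraph correctly diagnoses that a naive second-derivative argument breaks near $\delta=-1$ for $p>2$; the paper's response is to drop to the first derivative (which stays tractable), not to Taylor-expand and split cases. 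If you repair Case 2 as above your proof stands as a valid alternative; as written it does not cover $p\in(2,3)$.
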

\begin{proof}
  Let $h(\delta)$ denote the function
  \[h(\delta)= \abs{1+\delta}^{p} - 1 - p\delta - 2^{-p}(\delta^2 +
    \abs{\delta}^{p}).\] Thus, $h(0)=0.$ As for the previous proof, we consider
  several cases:
  \paragraph{$\boldsymbol{\delta \ge 0}$.} We have $h(\delta) = (1+\delta)^p - 1 -
  p\delta - 2^{-p}(\delta^2 +\delta^{p}).$ Thus,
  \begin{align*}
    h'(\delta)
    & = p(1+\delta)^{p-1} -p -2^{-p+1}\delta
      -p2^{-p}\delta^{p-1} \\
    h{''}(\delta) & = p(p-1)(1+\delta)^{p-2} - 2^{-p+1} - p(p-1)2^{-p}\delta^{p-2}
  \end{align*}
  Observe that since $p \ge 2,$ we have
  $(1+\delta)^{p-2} \ge \max\{1,\delta^{p-2}\} \ge
  2^{-1}(1+\delta^{p-2}),$ and $p(p-1) \ge 2.$ Thus,
  \begin{align*}
    h''(\delta) & \ge 2^{-1}p(p-1) + 2^{-1}p(p-1)\delta^{p-2} -
                  2^{-p+1} - p(p-1)2^{-p}\delta^{p-2}  \ge 0.
  \end{align*}
  Since $h(0) = h'(0) = 0,$ and $h''(\delta) \ge 0,$ for all
  $\delta \ge 0,$ we must have $h(\delta) \ge 0$ for all
  $\delta \ge 0.$

  \paragraph{$\boldsymbol{-1 \le \delta \le 0}$.} We have,
  \begin{align*}
    h(\delta)
    & = (1+\delta)^p - 1 -
      p\delta - 2^{-p}\delta^2 - 2^{-p} \abs{\delta}^{p} \\
    h'(\delta)
    & = p(1+\delta)^{p-1} -p -2^{-p+1}\delta
      + p2^{-p}\abs{\delta}^{p-1}.    
  \end{align*}
  Since $0 \le 1+\delta \le 1,$ and $p-1 \ge 1,$ we have
  $(1+\delta)^{p-1} \le 1+\delta,$ and
  $\abs{\delta}^{p-1} \le \abs{\delta}^{1} = -\delta.$ Thus,
  \begin{align*}
    h'(\delta)
    & \le p(1+\delta) - p -2^{-p+1}\delta
      + p2^{-p}\abs{\delta} \\
    & = -p\abs{\delta} + (2+p)2^{-p}\abs{\delta} \\
    & \le -p\abs{\delta} + 2^{-2}(p+2) \abs{\delta} \le 0.
  \end{align*}

  \paragraph{$\boldsymbol{\delta \le -1}$.} We have,
  \begin{align*}
    h(\delta)
    & = (- 1 - \delta)^p - 1 -
      p\delta - 2^{-p}\delta^2 - 2^{-p} \abs{\delta}^{p} \\
    h'(\delta)
    & = -p(- 1-\delta)^{p-1} -p -2^{-p+1}\delta
      + p2^{-p}\abs{\delta}^{p-1}.    
  \end{align*}
  Since $\abs{\delta} \ge 1,$ and $p-1 \ge 1,$ we have $-\delta =
  \abs{\delta} \le \abs{\delta}^{p-1}.$ Thus,
  \begin{align*}
    h'(\delta)
    & \le  -p(- 1-\delta)^{p-1} -p + 2^{-p}(2+p)\abs{\delta}^{p-1}  \\
    & \le -p\left( (-1-\delta)^{p-1} + 1 - 2^{-p+1}\abs{\delta}^{p-1} \right).
  \end{align*}
  Now, observe that since $0 \le -1 -\delta,$ and $p-1 \ge 1,$
  \[ \abs{\delta}^{p-1} = (-\delta)^{p-1} = (1 + (-1 - \delta))^{p-1}
    \le 2^{p-1}(1^{p-1} + (-1-\delta)^{p-1}), \]
  Thus,
  \[(-1-\delta)^{p-1} + 1 - 2^{-p+1} \abs{\delta}^{p-1} \ge 0,\]
  and hence $h'(\delta) \le 0$ for $\delta \le -1.$

  \medskip \medskip
  \noindent For the last two cases, since $h(0)=0,$ and
  $h'(\delta) \le 0,$ for all $\delta \le 0.$ Thus, we must have
  $h(\delta) \ge 0,$ for $\delta \le 0.$
\end{proof}

\Residual*
\begin{proof}
  Let $\xx^{\star}$ to be an optimal solution to
  Problem~\ref{eq:iterative-refinement:original}. Consider
  $\widetilde{\ddelta} = \xx^{\star} - \xx_0.$ Thus,
  \[\AA\widetilde{\ddelta} = \AA\xx^{\star} - \AA\xx_0 = \bb - \bb =
    0.\]
  Thus, $\widetilde{\ddelta}$ is a feasible solution to
  Problem~\ref{eq:iterative-refinement:residual}. Moreover, it
  satisfies,
  \begin{align*}
    \obj_2(\widetilde{\ddelta})
    & = (\widetilde{\ddelta})^{\top }2^{p}
      \left( \gg - \nabla_{\xx}h(\rr, s,  \xx)|_{\xx = \xx_0}
      \right)  - h_p( \rr + s \abs{\xx_0}^{p-2}, s, \widetilde{\ddelta}) \\
    & = 2^{p} \gg^{\top }\widetilde{\ddelta}
      - 2^{p}\left( 2^{-p} h_p( \rr + s \abs{\xx_0}^{p-2}, s, \widetilde{\ddelta})
      + (\widetilde{\ddelta})^{\top}  \nabla_{\xx}h(\rr, s,
      \xx)|_{\xx = \xx_0}   \right) \\
    & \ge 2^{p} \gg^{\top }\widetilde{\ddelta}
      - 2^{p} \left( h_p(\rr, s, \xx_0 + \widetilde{\ddelta}) -
      h_{p}(\rr, s, \xx_0) \right) && \textrm{(Using
                                   Lemma~\ref{lem:iterative-refinement:approximation})}
    \\
    & = 2^{p} \gg^{\top} (\xx^{\star} - \xx_0) - 2^{p} \left( h_p(\rr, s,
      \xx^{\star} ) - h_p(\rr, s, \xx_0) \right) \\
    & = 2^{p} (\obj_{1}(\xx^{\star}) - \obj_{1}(\xx_0)).
  \end{align*}

  Now, given a feasible solution $\delta$ to
  Problem~\ref{eq:iterative-refinement:residual}, we must have
  $\AA\ddelta = \vzero.$ Thus,
  $\AA\xx_1 = \AA\xx_{0} + 2^{-3p}\AA\ddelta = b,$ and $\xx_{1}$ is a
  feasible solution to
  Problem~\ref{eq:iterative-refinement:original}. Moreover,
  \begin{align*}
    \obj_{1}(\xx_1)
    & = \gg^{\top}( \xx_0 + {2^{-3p}}\ddelta) - h_p(\rr, s,  \xx_0 +
      {2^{-3p}}\ddelta) \\
    & \ge \gg^{\top} \xx_0 + 2^{-3p}\gg^{\top}\ddelta - h_p(\rr, s,
      \xx_0) - 2^{-3p}\ddelta^{\top} \nabla_{\xx}h_p(\rr, s, \xx)|_{\xx =
      \xx_0} - 2^{2p}h_p(\rr + s \abs{\xx_0}^{p-2}, s, 2^{-3p}\ddelta)\\
    & \qquad \qquad \qquad \qquad \qquad \qquad \qquad \qquad \qquad
      \qquad \qquad \qquad \qquad \ \qquad \qquad \text{(Using
      Lemma~\ref{lem:iterative-refinement:approximation})} \\
    & \ge \obj_{1}(\xx_0) + 2^{-4p}\ddelta^{\top} \gg' - 2^{2p}\cdot
      2^{-6p} h_p(\rr + s \abs{\xx_0}^{p-2}, s, \ddelta)
      \qquad \qquad \qquad \text{(Using Lemma~\ref{lem:iterative-refinement:rescaling})}\\
    & = \obj_{1}(\xx_0) + 2^{-4p}\obj_{2}(\ddelta). 
  \end{align*}
\end{proof}

%%% Local Variables:
%%% mode: latex
%%% TeX-master: "main"
%%% End:

%!TEX root = main.tex
\section{Elimination of Low-Degree Vertices, and Loops}
\label{sec:Elimination}
In this section, we that the instance $\calH$ returned by
\textsc{UltraSparsify} can be reduced to a smaller graph by
repeatedly eliminating vertices of degree at most 2.
This step is analogous to the partial Cholesky factorization in the
Laplacian solver of Spielman and Teng~\cite{SpielmanTengSolver:journal}. A
slight technical issue is that if we run into a cycle where at most
$1$ vertex on the cycle has edge(s) to the rest of the graph, the
elimination of the degree $2$ nodes on the cycle essentially becomes
an optimization problem on only the cycle edges that can be solved
independently from the rest of the graph.

\begin{algorithm}
\caption{Elimination of Degree $1$ and $2$ vertices and Self-loops}
\label{alg:elimination}
 \begin{algorithmic}[1]
 \Procedure{Eliminate}{$\calH$}
 \State Initiate $\calH'\leftarrow \calH$
 \Repeat
 \State For every edge with non-selfloop degree 1, remove the only
 non-selfloop edge incident on it
 \Until No vertex has non-selfloop degree 1
 \For{every maximal path with all internal nodes having non-selfloop
   degree 2}
 \State Replace such a path with a single edge in $\calH'$ with the
 end points as the end points of the path, and,
\begin{itemize}
\item resistance is the sum of the resistances of the edges on the path
\item gradient is the sum of the gradients of the edges on the path
\item $s$ the same as before
\item Flow on the new edge is mapped to a flow along the original path (or cycle) in $\calH$.
\end{itemize}
\EndFor
\State Move all self-loops from $\calH'$ to $\calH_{loop}$
\State 
\Return $\calH',\calH_{loop},\map{(\calH'+\calH_{loop})}{\calH}$
 \EndProcedure 
 \end{algorithmic}
\end{algorithm}
%\todo{where is the algorithm $\Eliminate$???}

\elimination*

\begin{proof}
  We first observe that a self-loop $e \in E^{\calG}$ on a vertex
  $v \in V^{\calG}$ does not contribute to the residue at any vertex,
  including $v.$ Thus, the circulation constraint on a flow
  $\ff^{\calG}$ does not impose any constraint on $\ff^{\calG}_e.$
  Moreover, since the objective $\alpha^{\calG}$ can be written as a
  sum over the edges, for every self-loop $e,$ the variable
  $\ff^{\calG}_e$ is independent of all other variables. Thus, we can
  ignore the self-loops in remainder of the proof.
  
  We first prove that we can repeatedly eliminate vertices of
  non-selfloop degree 1 in $G$ while preserving $\obj$ exactly for a
  circulation. Consider one such vertex $v \in V^{\calG},$ and let
  $e = ({v,u}) \in E^{\calG}$ be the only non-selfloop edge incident
  on $v$ (the argument for the reverse direction is identical). Given
  any circulation $\ff^{\calG},$ since the only non-selfloop edge
  incident on $v$ is $e,$ we must have $\ff^{\calG}_{e} = 0.$ Thus, we
  can drop $e$ entirely from the instance. Formally, we define
  \[V^{\calG'} = V^{\calG}, \quad E^{\calG'} =
    E^{\calG}\setminus \{e\},\quad \gg^{\calG'} =
    \gg^{\calG}|_{E^{\calG'}}, \quad \rr^{\calG'} =
    \rr^{\calG}|_{E^{\calG'}}, \quad\text{and}\quad  s^{\calG'} = s^{\calG}.\] We let the mapping
  $\map{\calG}{\calG'}$ to be just the projection on to $E^{\calG'}.$
  Thus,
  $\map{\calG}{\calG'}(\ff^{\calG}) = \ff^{\calG}|_{E^{\calG'}}.$
  Since $\ff^{\calG}_e = 0,$ we immediately get
  $ \obj^{\calG'}(\map{\calG}{\calG'}(\ff^{\calG})) =
  \obj^{\calG}(\ff^{\calG}).$ Thus, $\calG \circapprox_{1} \calG'.$

  Now, consider the mapping $\map{\calG'}{\calG}$ that pads a
  circulation $\ff^{\calG'}$ on $\calG'$ with 0 on $e,$ \emph{i.e.},
  \[
    \left( \map{\calG'}{\calG}(\ff^{\calG'}) \right)_{e'} =
    \begin{cases}
      0 & \text{if } e'=e, \\
      \ff^{\calG'}_{e'} & \text{otherwise}.
    \end{cases}
  \]
  Again, it is immediate that
  $ \obj^{\calG}(\map{\calG'}{\calG}(\ff^{\calG'})) =
  \obj^{\calG'}(\ff^{\calG'}).$ Thus, $\calG' \circapprox_{1} \calG.$

  We can repeatedly apply the above transformation to eliminate all
  vertices of non-selfloop degree 1 in $G.$ For convenience, we let
  $\calG'$ denote the final instance obtained. Thus, we have,
  $\calG' \circapprox_{1} \calG \circapprox_{1} \calG'.$

  Now, we will replace maximal paths with all internal vertices of
  non-selfloop degree 2 with single edges. Consider such a path $P.$
  Formally, $P$ is a path of length $l$ in $G,$ say
  $P = (v_0, v_1, \ldots, v_{l-1}, v_{l}),$ with all of
  $v_1,\ldots,v_{l-1}$ having degree exactly 2, and $v_0,v_{l}$ have
  non-selfloop degree at least 3. For convenience, we assume that all
  edges $(v_{i-1}, v_{i})$ are oriented in the same direction. Observe
  that for a circulation $\ff^{\calG'},$ the flow on all the edges
  $(v_{i-1}, v_{i})$ must be the same, \emph{i.e.},
  $\ff^{\calG'}_{(v_{i-1},v_i)}$ must all be equal.
  % have the same magnitude and
  % either all in the direction of $P,$ or all of them in the direction
  % against $P.$
  Thus, we can replace $P$ with a single edge $e_P$ while
  preserving the amount of flow and the direction.

  Formally, let $\calP = \{P_1, \ldots, P_t\}$ denote the set of all
  maximal paths in $G'=(V^{\calG'}, E^{\calG'})$ such that all their
  internal vertices have non-selfloop degree exactly 2 in $G'.$ We
  replace each of these paths with a new edge connecting its
  endpoints. Let,
  \begin{align*}
    V^{\calG''}
    & = V^{\calG'}, \\
    E^{\calG''}
    & = E^{\calG'} \cup_{P \in \calP} \{e_p=(v_0,v_l) | P = (v_0,\ldots,v_l)
      \} \setminus
      \cup_{P \in \calP}\{e=(v_{i-1},v_{i}) \in P  | P =
      (v_0,\ldots,v_l)\}, \\
    \gg^{\calG''}_{e}
    & =
      \begin{cases}
        \gg^{\calG'}_{e} & \text{if } e \in E^{\calG'} \cap
        E^{\calG''}, \\
        \sum_{e' \in P} \gg^{\calG'}_{e'} & \text{if } e = e_P \text{
          for } P \in \calP,
      \end{cases} \\
    \rr^{\calG''}_{e}
    & =
      \begin{cases}
        \rr^{\calG'}_{e} & \text{if } e \in E^{\calG'} \cap
        E^{\calG''}, \\
        \sum_{e' \in P} \rr^{\calG'}_{e'} & \text{if } e = e_P \text{
          for } P \in \calP,
      \end{cases} \\
    s^{\calG''}
    & = s^{\calG}.
  \end{align*}

  We define the mapping $\map{\calG'}{\calG''}$ as follows
  \[
    \left( \map{\calG'}{\calG''}(\ff^{\calG'}) \right)_{e} =
    \begin{cases}
      \ff^{\calG'}_{e} & \text{if } e \in E^{\calG'} \cap E^{\calG''}, \\
      \ff^{\calG'}_{(v_0, v_1)} & \text{if } e = e_P, \textrm{ where }
      P=(v_0,\ldots,v_l), P \in \calP.
    \end{cases}
  \]
  We define $\map{\calG''}{\calG'}$ to be the inverse map of $\map{\calG'}{\calG''}.$
  \[
    \left( \map{\calG''}{\calG'}(\ff^{\calG''}) \right)_{e} =
    \begin{cases}
      \ff^{\calG''}_{e} & \text{if } e \in E^{\calG'} \cap E^{\calG''}, \\
      \ff^{\calG'}_{P} & \text{if } e = (v_{i-1}, v_{i}), i \in [l] , \textrm{ where }
      P=(v_0,\ldots,v_l) \in \calP.
    \end{cases}
  \]
  
  It follows from the definitions that for every circulation
  $\ff^{\calG'},$ letting $\ff^{\calG''}$ denote
  $\map{\calG'}{\calG''}(\ff^{\calG'}),$ we have,
  $\map{\calG''}{\calG'}(\ff^{\calG''}) = \ff^{\calG'}.$ Moreover,
  \begin{align*}
    \left( \gg^{\calG'} \right)^{\top} \ff^{\calG'}
    & = \left(
      \gg^{\calG'} \right)^{\top} \ff^{\calG''}  \\
    \sum_{e \in E^{\calG'}} \rr_e^{\calG'} \left( \ff^{\calG'}_e
    \right)^{2} 
    & =     \sum_{e \in E^{\calG''}} \rr_e^{\calG''} \left(
      \ff^{\calG''}_e
      \right)^{2} \\
    s^{\calG'} \sum_{e \in E^{\calG'}}  \abs{\ff^{\calG'}_e}^{p}
    & \ge     s^{\calG''} \sum_{e \in E^{\calG''}}  \abs{\ff^{\calG''}_e}^{p}
      \ge   \frac{1}{n}  s^{\calG'} \sum_{e \in E^{\calG'}}  \abs{\ff^{\calG'}_e}^{p},
  \end{align*}
  where the last inequality follows since for every path of length
  $l$, the contribution to the $\ell_p^{p}$ changes by a factor of
  $l^{-1},$ and since the paths must be vertex-disjoint,
  $l \le |V^{\calG}| \le n.$ The above inequalities imply,
  \[\obj^{\calG'}(\ff^{\calG'}) \le \obj^{\calG''}(\ff^{\calG''}),\]
  and hence $\calG' \circapprox_{1} \calG''.$ Combined with $\calG
  \circapprox_{1} \calG',$ we get $\calG \circapprox_{1} \calG''.$
  Moreover, we have, for $\kappa = n^{\frac{1}{p-1}},$
  \begin{align*}
    \obj^{\calG'} \left( \kappa^{-1} \ff^{\calG'} \right)
    & =     \left( \gg^{\calG'} \right)^{\top} \kappa^{-1} \ff^{\calG'}
      -  \kappa^{-2} \sum_{e \in E^{\calG'}} \rr_e^{\calG'} \left( \ff^{\calG'}_e
      \right)^{2}
      - \kappa^{-p}    s^{\calG'} \sum_{e \in E^{\calG'}}
      \abs{\ff^{\calG'}_e}^{p} \\
    & = \kappa^{-1} \left( \left( \gg^{\calG'} \right)^{\top} \ff^{\calG'}
      -  \kappa^{-1} \sum_{e \in E^{\calG'}} \rr_e^{\calG'} \left( \ff^{\calG'}_e
      \right)^{2}  - \frac{1}{n} s^{\calG'} \sum_{e \in E^{\calG'}}
      \abs{\ff^{\calG'}_e}^{p} \right) \\
    % & \ge \kappa^{-1} \left( \left( \gg^{\calG'} \right)^{\top} \ff^{\calG'}
    %   -  \sum_{e \in E^{\calG'}} \rr_e^{\calG'} \left( \ff^{\calG'}_e
    %   \right)^{2}  - \frac{1}{n} \sum_{e \in E^{\calG'}}
    %   \abs{\ff^{\calG'}_e}^{p} \right) \\
    & \ge \kappa^{-1} \left( \left( \gg^{\calG''} \right)^{\top} \ff^{\calG''}
      -  \sum_{e \in E^{\calG''}} \rr_e^{\calG''} \left( \ff^{\calG''}_e
      \right)^{2}  - s^{\calG''} \sum_{e \in E^{\calG''}}
      \abs{\ff^{\calG''}_e}^{p} \right) = \kappa^{-1} \obj^{\calG''}(\ff^{\calG''}).
  \end{align*}
  Thus, $\calG'' \circapprox_{\kappa} \calG'.$ Combining with
  $\calG' \circapprox_{1} \calG,$ we obtain
  $\calG'' \circapprox_{\kappa} \calG.$ The final instance returned is
  $\calG'',$ giving us our theorem.
\end{proof}

\selfLoops*

\begin{proof}
Let $E'$ denote the set of all self-loops in $E^{\calG}.$ Then, we
define $\calG_1$ to be the instance obtained by removing all edges in
$E'.$ Formally,
\[\calG_1 \defeq (V^{\calG}, E^{\calG}\setminus E',
  \gg^{\calG}|_{E^{\calG}\setminus E'},
  \rr^{\calG}|_{E^{\calG}\setminus E'}, s^{\calG}).\]
We define $\calG_2$ be the instance $\calG$ restricted to $E'.$ Thus,
\[\calG_2 \defeq (V^{\calG}, E',
  \gg^{\calG}|_{E'}, \rr^{\calG}|_{E'}, s^{\calG}).\] It is immediate
that $\calG = \calG_1 \cup \calG_2.$ Since $\calG_2$ only has
self-loops, we have that for every $\ff^{\calG_2},$ we have
$(\BB^{\calG_2})^{\top}\ff^{\calG_2} = \vzero.$ Thus, the constraint
$(\BB^{\calG_2})^{\top}\ff^{\calG_2} = \vzero$ is vacuous.

Now, observe that in the absence of linear constraints on $\ff^{\calG_2},$ the
variables $\ff^{\calG_2}_e$ are independent for all $e \in
E^{\calG_2}.$ Moreover, we have
\[\obj^{\calG_2} (\ff^{\calG_2}) = \sum_{e \in E^{\calG_2}}
  \obj^{\calG_2}_{e}(\ff^{\calG_2}_e).\] Thus, we can solve for each
$\ff^{\calG_2}_e$ independently. Now, consider a fixed
$e \in E^{\calG_2}.$ We write $f_e$ for $\ff^{\calG_2}_{e}.$ We wish
to solve
\[\max_{f_e} \obj_{e}^{\calG_2}(f_e) = \max_{f_e} \gg^{\calG_2}_ef_e -
  \rr_e^{\calG_2} f_e^2 - s^{\calG_2} \abs{f_e}^{p}.\]
Note that the objective function is concave. The gradient of
$\obj^{\calG_2}_e(f_e)$ with respect to $f_e$ is
\[ \left( \obj^{\calG_2}_e \right)'(f_e) =
  \frac{\mathsf{d}}{{\mathsf{d}}f_e} \obj^{\calG_2}_e(f_e) =
  \gg^{\calG_2}_e - (2 \rr_e^{\calG_2} + ps^{\calG_2}
  \abs{f_e}^{p-2})f_e. \] First observe that if $f^\star_e$ is the
optimal solution, it must have the same sign as $\gg_e^{\calG_2}.$
Without loss of generality, we assume that $\gg^{\calG_2}_e \ge 0.$
Observe that for $f_e \ge \frac{\gg^{\calG_2}_e}{2\rr_e^{\calG_2}},$
we have $\frac{\mathsf{d}}{{\mathsf{d}}f_e} \obj^{\calG_2}_e \le 0.$
Thus, $f^{\star}_e \le \frac{\gg^{\calG_2}_e}{2\rr_e^{\calG_2}}.$
Similarly, we have,
$f^{\star}_e \le \left( \frac{\gg^{\calG_2}_e}{ps^{\calG_2}}
\right)^{\frac{1}{p-1}},$ where $f^{\star}_e$ is the optimal
solution. Thus, if we define $z$ as
\[z \defeq \min\left\{ \frac{\gg^{\calG_2}_e}{2\rr_e^{\calG_2}},
  \left( \frac{\gg^{\calG_2}_e}{ps^{\calG_2}} \right)^{\frac{1}{p-1}}
\right\},\]
then $f^{\star}_e \le z.$

Moreover, for $f_e \le \frac{z}{2},$ we have,
\[ \left( \obj^{\calG_2}_e \right)'(f_e)
  % \frac{\mathsf{d}}{{\mathsf{d}}f_e} \obj^{\calG_2}_e(f_e)
  \ge   \gg^{\calG_2}_e - \frac{2 z \rr^{\calG_2}_e}{2} - \frac{p
    s^{\calG_2} z^{p-1}}{2^{p-1}}
  \ge \gg^{\calG_2}_e - \frac{\gg^{\calG_2}_e}{2} -
  \frac{\gg^{\calG_2}_e}{2^{p-1}} \ge 0.
\]
Thus, $f^{\star}_e \ge \frac{z}{2},$ and hence $z$ gives a
2-approximation to $f^{\star}$ that can be computed in $O(1)$
time. Now, applying binary search allows us to find
$f_e \in [(1-\nfrac{\delta}{p})f_e^{\star},
(1+\nfrac{\delta}{p})f^{\star}_e]$ in $O(\log \nfrac{1}{\delta})$
time. Now, we show that
such an estimate is good enough. Consider the point $\frac{3}{4}z.$
We have
\[ \max_{f_e}
  \obj^{\calG_2}_e(f_e) \ge \obj^{\calG_2}_e(\frac{3}{4}z) \ge
  \frac{3}{4} \gg_e^{\calG} z(1 - \frac{1}{2}\frac{3}{4} - \frac{1}{p}
  \frac{3^{p-1}}{4^{p-1}}) \ge \frac{1}{4}z \gg_e^{\calG}.
\]
Now,
\begin{align*}
  \obj^{\calG_2}_e(f_e) - \obj^{\calG_2}_e(f_e^{\star})
  & \le \delta f_e^{\star} \max\left\{
    \abs{\left( \obj^{\calG_2}_e \right)'((1-\delta)f^{\star}_e)},
    \abs{\left( \obj^{\calG_2}_e \right)'((1+\delta)f^{\star}_e)}
    \right\} \\
  & \qquad \qquad \text{(Using mean-value theorem and concavity)} \\
  & \le \delta f_e^{\star} \max\left\{
    \gg_e^{\calG}, 
    - \gg_e^{\calG} + (1+\delta)2\rr_e^{\calG}f^{\star}_e +
    (1+\delta)^{p-1} p s^{\calG} \abs{f^{\star}_e}^{p-1}
    \right\} \\
  & \le \delta f_e^{\star} \max\left\{
    \gg_e^{\calG}, 
    - \gg_e^{\calG} + (1+\delta) \gg_e^{\calG} +
    (1+\delta)^{p-1} \gg_e^{\calG} 
    \right\} \\
  & \le 4 \delta f_e^{\star} \gg_e^{\calG} \qquad \text{(Using $\delta
    \le \nfrac{1}{p}$)}
  \\
  &\le 4 \delta z
    \gg_e^{\calG} \le 16 \delta  \max_{f_e}
    \obj^{\calG_2}_e(f_e)
\end{align*}
Rewriting, we get
$\obj^{\calG_2}_e(f_e) \ge (1-16\delta) \max_{f_e}
\obj^{\calG_2}_e(f_e).$ Rescaling $\delta,$ we obtain our claim.

We can compute such an estimate for all the edges in
$O(\abs{E^{\calG_2}} \log \nfrac{1}{\delta})$ time. 
\end{proof}

%%% Local Variables:
%%% mode: latex
%%% TeX-master: "main"
%%% End:

\section{Sparsifying Uniform Expanders}
\label{sec:expander}

We now verify that sparsifying that sampling $\alpha$-uniform
expanders preserve the objectives of the optimizations.
Pseudocode of our routine and the flow maps constructed by it
are in Algorithm~\ref{alg:SampleAndFixGradient}.
We remark that the maps are identical to the ones used for
flow sparsifiers by Kelner et al.~\cite{KelnerLOS14}.

\begin{algorithm}[H]
\caption{Producing Sparsifier}
\label{alg:SampleAndFixGradient}
 \begin{algorithmic}[1]
 \Procedure{SampleAndFixGradient}{$\calG=(G,r^{\calG},s^{\calG},\gg^{\calG}),\tau$}
 \State{Initialize $\calH$ with $V^{\calH} = V^{\calG}$.}
 \State{\label{alg:SampleAndFixGradient:samp}
Sample each edge of $E^{\calG}$ independently
   w. probability $\tau$ to form $E^{\calH}$.}
 \State{\label{alg:SampleAndFixGradient:scaling}
Let $r^{\calH} \leftarrow \tau \cdot r^{\calG}$ and $s^{\calH} = \tau^p \cdot s^{\calG}$}
 \State{\label{alg:SampleAndFixGradient:projG}
Compute the decomposition $\gg^{\calG} =\gghat^{\calG}+\BB^{\calG} \ppsi$,
s.t. $\gghat^{\calG}$ is the cycle-space
 projection of $\gg^{\calG}$.} 
 \State{Let $\ggtil^{\calH} \leftarrow (\gghat^{\calG})_{|F}$, i.e $\ggtil^{\calH}$ is the
 restriction of $\gghat^{\calG}$ to $F$.}
 \State{\label{alg:SampleAndFixGradient:projH}
Let $\gghat^{\calH} \leftarrow\left(I -\BB^{\calH} \left(\BB^{H \top} \BB^{\calH}
   \right)^{\dag} \BB^{H\top} \right)\ggtil^{\calH}$ i.e. the cycle-space
 projection of $\ggtil^{\calH}$.} 
 \State{\label{alg:SampleAndFixGradient:pot}
Let $\gg^{\calH} \leftarrow \gghat^{\calH}+\BB^{\calH} \ppsi$}
 \State{\label{alg:SampleAndFixGradient:mapHG}
Let $\map{\calG}{\calH}$ be the map 
   $\ff \to \BB^{\calH} \left(
     \BB^{H \top} \BB^{\calH} \right)^{\dag} \BB^{\calG\top} \ff 
+ \frac{1}{\norm{\gghat^{\calH}}_2^2} \gghat^{\calH} \gghat^{\calG\top} \ff 
$ }
 \State{\label{alg:SampleAndFixGradient:mapGH}
Let $\map{\calH}{\calG}$ be the map 
   $\ff \to \BB^{\calG} \left(
     \BB^{\calG\top} \BB^{\calG} \right)^{\dag} \BB^{H\top} \ff 
+ \frac{1}{\norm{\gghat^{\calG}}_2^2} \gghat^{\calG} \gghat^{H\top} \ff 
$ }
\State \Return $\calH=(V^{\calH}, E^{\calH},r^{\calH},s^{\calH},\gg^{\calH}),  \map{\calG}{\calH},  \map{\calH}{\calG}$
 \EndProcedure 
 \end{algorithmic}
\end{algorithm}
Note that the decomposition in Line~\eqref{alg:SampleAndFixGradient:projG}
can be found by first computing $\ppsi = \left(\BB^{\calG\top} \BB^{\calG}
   \right)^{\dag} \BB^{\calG\top} \gg^{\calG}$.
The only randomness in the Algorithm~\ref{alg:SampleAndFixGradient} is
in the sampling in Line~\eqref{alg:SampleAndFixGradient:samp}.
In Lines~\eqref{alg:SampleAndFixGradient:projG}, and \eqref{alg:SampleAndFixGradient:projH}-\eqref{alg:SampleAndFixGradient:mapGH},
when the pseudo-inverse of a Laplacian is applied, we can rely
deterministically on a high-accuracy approximation based on the fact
that if the earlier sampling succeeded, both matrices are Laplacians
of expanders, and hence well-conditioned.
Alternatively, we can call a high-accuracy Laplacian solver.
This encurs another small failure probability. In either case, we can
ensure that an implicit representation of the operator is only
computed once, and succeeds with high probability.

%\todo{discuss numerics of calling Laplacian solvers? and that we can ensure
% lin op is created only once?}
%

\ExpanderSparsify*

\begin{remark}
  If in Algorithm~\ref{alg:SampleAndFixGradient}, the input gradient
  $\gg^{\calG}$ has zero cycle-space projection, i.e. $\gghat^{\calG} = \vzero$,
  then the cycle-space gradient terms in
  Lines~\eqref{alg:SampleAndFixGradient:mapHG} and
  \eqref{alg:SampleAndFixGradient:mapGH} should be set to zero, so
  that
  $\map{\calG}{\calH}$ is the map
   $\ff \to \BB^{\calH} \left(
     \BB^{H \top} \BB^{\calH} \right)^{\dag} \BB^{\calG\top} \ff $ and 
  $\map{\calH}{\calG}$ is the map
   $\ff \to \BB^{\calG} \left(
     \BB^{\calG\top} \BB^{\calG} \right)^{\dag} \BB^{H\top} \ff $. 
   The proof of this case is simpler, we omit all terms that deal with
   cycle-space projected gradients and everything else stays the same
   as in the proof given in this section.
\end{remark}
To prove this theorem, we first collect a number of
observations that will help us.
The most basic of these is that 
Line~\eqref{alg:SampleAndFixGradient:samp} succeeds
in producing a sparsifier in the spectral approximation sense,
and with edge set $F$
satisfying $0.5 \tau m \leq |F| \leq 2 \tau m$.
This is a direct consequence of matrix concentration bounds~\cite{Tropp12}.

\begin{lemma}
\label{lem:L2Operator}
Consider the edge-vertex incidence matrices with
gradients (projected via $\ppsi^{\calG}$)
appended as an extra column for both $\calG$ and $\calH$,
$[\BB^{\calG}, \gghat^{\calG}]$ and $[\BB^{\calH}, \ggtil^{\calH}]$.
With high probability we have that
for any vector $\xx$
% \begin{equation}
% \norm{
% \left[\left( r^{\calG} \right)^{-1/2} \BB^{\calG}, \gghat^{\calG}\right] \xx
% }_2^2
% \approx_{0.1}
% \norm{
% \left[\left( r^{\calH} \right)^{-1/2} \BB^{\calH},
% \tau^{-1/2} \ggtil^{\calH}\right] \xx
% }_2^2
% \label{eq:expanderSampConc}
% \end{equation}
\begin{equation}
\tau 
\norm{
\left[ \BB^{\calG}, \gghat^{\calG}\right] \xx
}_2^2
\approx_{0.1}
\norm{
\left[ \BB^{\calH},
\ggtil^{\calH}\right] \xx
}_2^2
\label{eq:expanderSampConc}
\end{equation}
and
the edge set $F$ of $H$ satisfies
$0.5 \tau m \leq |F| \leq 2 \tau m$.
\end{lemma}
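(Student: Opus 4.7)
The plan is to recognize the claim as a spectral concentration statement for the $m \times (n+1)$ matrix $\AA \defeq [\BB^{\calG}, \gghat^{\calG}]$ whose rows are independently retained with probability $\tau$. Writing $\bb_e$ for the row of $\AA$ indexed by edge $e$, the sampled matrix is exactly $[\BB^{\calH}, \ggtil^{\calH}]$ (since $\ggtil^{\calH}$ is by construction the restriction of $\gghat^{\calG}$ to $F$), so its Gram matrix is $\sum_{e \in F} \bb_e \bb_e^\top$. Matrix Chernoff applied to the independent PSD summands $\XX_e = \vone[e \in F]\, \bb_e \bb_e^\top / \tau$ will give the desired $\tau$-scaled spectral approximation provided every row has bounded leverage score $\ell_e = \bb_e^\top (\AA^\top \AA)^\dagger \bb_e$.

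The key simplification is that $\gghat^{\calG}$ lies in the cycle space, i.e. $(\BB^{\calG})^\top \gghat^{\calG} = \vzero$, which makes $\AA^\top \AA$ block-diagonal with blocks $\LL^{\calG} = (\BB^{\calG})^\top \BB^{\calG}$ and the scalar $\norm{\gghat^{\calG}}_2^2$. Consequently the leverage score splits cleanly:
\[
  \ell_e = (\BB^{\calG}_e)^\top (\LL^{\calG})^\dagger \BB^{\calG}_e
   + \frac{(\gghat^{\calG}_e)^2}{\norm{\gghat^{\calG}}_2^2}.
\]
The first summand is the effective resistance of $e$ in $\calG$; since $\calG$ is a $\phi$-expander with minimum degree $d_{\min}$ and unit edge resistances, Cheeger's inequality gives $\lambda_2(\LL^{\calG}) \geq \Omega(\phi^2 d_{\min})$, so $R_{\mathrm{eff}}(e) \leq O(1/(\phi^2 d_{\min}))$. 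The second summand is at most $\alpha/m$ by the definition of $\alpha$-uniformity of $\gghat^{\calG}$. Summing, $\max_e \ell_e \leq O(1/(\phi^2 d_{\min}) + \alpha/m)$, which by the hypothesis on $\tau$ is at most $\tau/(c \log n)$ for any desired constant $c$, provided $c_{sample}$ is chosen large enough.

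With this uniform leverage-score bound, a standard matrix Chernoff inequality~\cite{Tropp12} yields, with probability at least $1 - n^{-\Omega(1)}$, that $\tfrac{1}{\tau} \sum_{e \in F} \bb_e \bb_e^\top \approx_{0.1} \AA^\top \AA$. Quadratic-forming both sides against an arbitrary $\xx$ and multiplying through by $\tau$ gives exactly \eqref{eq:expanderSampConc}. The edge-count guarantee follows from scalar Chernoff on $|F| = \sum_e \vone[e \in F]$, which has mean $\tau m \geq c_{sample} \log n$; standard concentration puts $|F| \in [0.5 \tau m, 2 \tau m]$ with probability $1 - n^{-\Omega(1)}$. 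A union bound over the two high-probability events completes the proof.

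The step I expect to need the most care is the effective-resistance bound in a $\phi$-expander; while it is well known, spelling out the Cheeger-based chain from conductance to $\lambda_2(\LL^{\calG})$ to $R_{\mathrm{eff}}$ deserves an explicit reference (e.g.\ \cite{SpielmanCheegerLect18}). Everything else—the block-diagonal reduction, the uniformity bound, and the matrix Chernoff application—is quite mechanical once the two leverage-score contributions are disentangled.
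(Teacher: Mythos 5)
Your proposal is correct and matches the paper's approach: both invoke matrix Chernoff after bounding each row's leverage score, using the orthogonality of $\gghat^{\calG}$ to the column space of $\BB^{\calG}$ to split the leverage score into an effective-resistance term (bounded by $O(1/(\phi^2 d_{\min}))$ via expansion/Cheeger) and a gradient term (bounded by $\alpha/m$ via $\alpha$-uniformity), with the edge count handled by scalar Chernoff. Your write-up is in fact cleaner than the paper's, which contains a couple of apparent sign/inversion typos (``eigenvalue at least $\phi^{-2}$'' and ``leverage score \ldots is at least $\phi^{-2}d_{\min}$'') that you have stated correctly.
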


\begin{proof}
The bounds on $\abs{F}$ follow from a scalar Chernoff bound.
For the matrix approximation bound, we will invoke
matrix Chernoff bounds~\cite{Tropp12}, which give such
a bound as long as the rows of $[\BB^{\calG}, \gghat^{\calG}]$
are sampled with probaiblity exceeding $c_{sample} \log{n}$
times their leverage scores.

So it suffices to bound the leverage scores of the
rows of this matrix.
As $\BB^{\calG}$ and $\gghat^{\calG}$ are orthogonal to each other,
we can bound the leverage scores of the rows in these
two matrices and add them.

The fact that the graph $(V^{\calG}, E^{\calG})$ has
expansion $\phi$ means that
its normalized Laplacian has eigenvalue at least
$\phi^{-2}$.
So the leverage score of a row of $\BB^{\calG}$ is at least
$\phi^{-2} d_{\min}$.
The leverage score of $\gghat^{\calG}_e$ in $\gghat$ on
the other hand is at most $\alpha / m$ due to the
$\alpha$-uniform assumption.
Thus, the sampling probablity $\tau$ meets the requirements
of matrix Chernoff bounds, and we get the approximation with
high probability.

% so if kept, the row gets scaled up by
% $\tau^{-1/2}$.
% Then note that
% \[
% r^{\calH} = \tau\cdot r^{\calG},
% \]
% while if $e$ is kept, its gradient got moved over without
% any changes.
% So a factor of $\tau^{-1/2}$ is needed to keep the
% rescaling at $\tau^{-1/2}$ in the matrix sense.
% \todo{I think we're scaling not $r_G$ up as  $\tau^{-1/2}$, rather $\tau^{-1}$}.
\end{proof}

\begin{corollary}
Assuming Equation~\eqref{eq:expanderSampConc}, the graphs
underlying $\calG$ and $\calH$ (with resistances 
$r^{\calG}$ and $r^{\calH}$) are spectral approximations
of each other:
\begin{equation}
  \label{eq:expanderSampConc:graph}
\tau \BB^{\calG\top}\BB^{\calG} \approx_{0.1} \BB^{H\top}\BB^{\calH} 
\end{equation}
and the subset of gradient terms chosen after
rescaling, $\ggtil^{\calH}$, has $\ell_{2}^{2}$
norm that's bigger by a factor of about $\tau$:
\begin{equation}
\label{eq:expanderSampConc:grad}
\tau \norm{\gghat^{\calG}}_{2}^2
\approx_{0.1}
\norm{\ggtil^{\calH}}_{2}^2
.
\end{equation}
\end{corollary}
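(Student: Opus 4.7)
The plan is to derive both approximations in the corollary by substituting two particular choices of test vector $\xx$ into the concentration bound \eqref{eq:expanderSampConc} supplied by Lemma~\ref{lem:L2Operator}. The key observation is that the appended-column matrices $[\BB^{\calG}, \gghat^{\calG}]$ and $[\BB^{\calH}, \ggtil^{\calH}]$ have the same column dimension $|V^{\calG}|+1 = |V^{\calH}|+1$, so any vector $\xx$ of this length can be fed to both sides of \eqref{eq:expanderSampConc} simultaneously.

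First I would prove \eqref{eq:expanderSampConc:graph}. Take an arbitrary $\yy \in \rea^{V^{\calG}}$ and apply \eqref{eq:expanderSampConc} with $\xx = (\yy; 0)$, i.e.\ the appended gradient coordinate set to zero. Then $[\BB^{\calG}, \gghat^{\calG}] \xx = \BB^{\calG} \yy$ and $[\BB^{\calH}, \ggtil^{\calH}] \xx = \BB^{\calH} \yy$, so the inequality becomes $\tau \norm{\BB^{\calG}\yy}_2^2 \approx_{0.1} \norm{\BB^{\calH}\yy}_2^2$. Since this holds for every $\yy$, the quadratic-form characterization of Loewner order gives $\tau \BB^{\calG\top}\BB^{\calG} \approx_{0.1} \BB^{H\top}\BB^{\calH}$.

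Next I would prove \eqref{eq:expanderSampConc:grad} by plugging in the last standard basis vector $\xx = (\vzero; 1)$ into \eqref{eq:expanderSampConc}. Then $[\BB^{\calG}, \gghat^{\calG}] \xx = \gghat^{\calG}$ and $[\BB^{\calH}, \ggtil^{\calH}] \xx = \ggtil^{\calH}$, yielding exactly $\tau \norm{\gghat^{\calG}}_2^2 \approx_{0.1} \norm{\ggtil^{\calH}}_2^2$.

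There is no real obstacle here: the corollary is a direct specialization of the joint matrix concentration already proved. The only thing worth double-checking is that resistances $r^{\calG}$ and $r^{\calH}$, and the line \eqref{alg:SampleAndFixGradient:scaling} rescaling $r^{\calH} = \tau \cdot r^{\calG}$, are consistent with the statement of \eqref{eq:expanderSampConc:graph}: because $\BB^{\calG\top}\BB^{\calG}$ and $\BB^{\calH\top}\BB^{\calH}$ are the unweighted Laplacians of the underlying graphs, the factor of $\tau$ in \eqref{eq:expanderSampConc:graph} is precisely the resistance rescaling being absorbed, which is the sense in which the two graphs are spectral sparsifiers of each other. No additional work is required.
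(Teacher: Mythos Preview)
Your proposal is correct and takes essentially the same approach as the paper: the paper's proof also just plugs in vectors with $0$ in the last coordinate to get \eqref{eq:expanderSampConc:graph}, and the indicator vector $(\vzero;1)$ to get \eqref{eq:expanderSampConc:grad}. Your write-up is in fact more detailed than the paper's two-sentence proof.
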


\begin{proof}
The approximation of graphs follows from considering
vectors $\xx$ with $0$ in the last coordinate.

The approximation of $\ell_2^2$ norms of vectors follow
from considering the indicator vector with $1$ in the
last column and $0$ everywhere else.
% and removing the extra factor of $\tau^{1/2}$.
\end{proof}

From this spectral approximation, we can
also conclude that $(V^{\calH}, E^{\calH})$
must be an expander, as captured by the next corollary.
\begin{corollary}
\label{cor:Hconductance}
Assuming Equation~\eqref{eq:expanderSampConc}, $H$ has conductance at least $0.8 \phi$.
\end{corollary}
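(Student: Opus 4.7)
The plan is to read off the conductance of $H$ directly from the spectral approximation \eqref{eq:expanderSampConc:graph}, using the fact that for $\{0,1\}$-valued test vectors, the quadratic form $x^\top \BB^\top \BB x$ computes natural combinatorial quantities of the underlying graph. Specifically, for any vertex $v$ one has $\vone_v^\top \BB^\top \BB \vone_v = \deg(v)$, and for any subset $S\subseteq V$ one has $\vone_S^\top \BB^\top \BB \vone_S = \abs{\partial S}$, the number of edges crossing the cut $(S,V\setminus S)$.

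First I would plug $x=\vone_v$ into the spectral approximation $\tau \BB^{\calG\top}\BB^{\calG} \approx_{0.1} \BB^{H\top}\BB^{\calH}$ to conclude the pointwise degree bound
\[
0.9\,\tau\cdot \deg_{G}(v) \;\leq\; \deg_{H}(v)\;\leq\; 1.1\,\tau\cdot \deg_{G}(v)
\qquad\forall v\in V.
\]
Summing this over the vertices of an arbitrary nonempty proper subset $S\subsetneq V$ yields the corresponding volume approximation
\[
0.9\,\tau\cdot \mathrm{vol}_{G}(S) \;\leq\; \mathrm{vol}_{H}(S)\;\leq\; 1.1\,\tau\cdot \mathrm{vol}_{G}(S).
\]

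Next I would plug $x=\vone_S$ into the same spectral inequality to obtain the cut approximation
\[
0.9\,\tau\cdot \abs{\partial_{G}S} \;\leq\; \abs{\partial_{H}S}\;\leq\; 1.1\,\tau\cdot \abs{\partial_{G}S}.
\]
Dividing the lower bound on the cut by the upper bound on the volumes, the factors of $\tau$ cancel and I get
\[
\phi_{H}(S) \;=\; \frac{\abs{\partial_{H}S}}{\min(\mathrm{vol}_{H}(S),\mathrm{vol}_{H}(V\setminus S))}
\;\geq\; \frac{0.9}{1.1}\cdot \phi_{G}(S) \;\geq\; \frac{0.9}{1.1}\,\phi \;>\; 0.8\,\phi,
\]
where the second inequality uses the hypothesis that $G$ has conductance at least $\phi$. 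Taking the minimum over all nonempty proper $S$ then gives conductance at least $0.8\,\phi$ for $H$.

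There is no significant obstacle here; the only thing to notice is that the spectral approximation, though a statement about all quadratic forms, can be evaluated at $\{0,1\}$-indicator vectors to recover the combinatorial quantities (degrees, cuts) appearing in the definition of conductance, after which the bound is immediate. The factor $0.8 < 0.9/1.1$ leaves a comfortable slack so no care with constants is needed.
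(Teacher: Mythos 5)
Your proof is correct and follows essentially the same route as the paper's: both derive cut and degree preservation by evaluating the spectral approximation~\eqref{eq:expanderSampConc:graph} at indicator vectors of subsets and of single vertices, then take the ratio to conclude conductance is preserved up to a factor of $0.9/1.1 > 0.8$.
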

\begin{proof}
Let $C_{\calG}(S)$ and $C_{\calH}(S)$ denote the number of
edges of $E^{\calG}$ and $E^{\calH}$ respectively
crossing a cut $S \subseteq V^{\calG} = V^{\calH}$.
  Condition~\eqref{eq:expanderSampConc:graph} implies that cuts are
  preserved between $(V^{\calG}, E^{\calG})$
  and $(V^{\calH}, E^{\calH})$: For all $S \subseteq V$
  $\tau C_{\calG}(S) \approx_{0.1} C_{\calH}(S) $,
  by computing the quadratic form in an indicator vector of $S$.
  
  The degree of every vertex is also preserved, to up a
  scaling of $\tau$ and a multiplicative error
  $1\pm0.1$, i.e.
  $\tau \deg_{\calH}(v) \approx_{0.1} \deg_{\calG}(v)  $.
  This follows from considering the quadratic form of 
  Condition~\eqref{eq:expanderSampConc:graph} in the indicator vector
  of vertex $v$.
  This implies for any $S$,
\[
\frac{C_{\calH}(S)}{\sum_{v \in S} \deg_{\calH}(v)}
    \approx_{0.2} \frac{C_{\calG}(S)}{\sum_{v \in S} \deg_{\calG}(v)},
\]
from which we conclude the conductance is
preserved up to a factor of $0.8$.
\end{proof}

We can also conclude from this that $\gghat^{\calH}$ is
well-spread.

\begin{corollary}
\label{cor:gghatapx} 
% \label{lem:NewGrad}
Assuming Equation~\eqref{eq:expanderSampConc}, 
the projection of $\gg^{\calH}$ onto the cycle-space of $H$,
$\gghat^{\calH}$ has $\ell_{1}$ and $\ell_2$ norms
that are close to $\tau$ times the corresponding
terms in $\calG$:
\begin{equation}
  \label{eq:gradprojapx2norm}
  \norm{\gghat^{\calH}}_{2}^2 \approx_{0.2} \tau
  \norm{\gghat^{\calG}}_{2}^2
\end{equation}
\begin{equation}
  \label{eq:gradprojapx1norm}
  \norm{\gghat^{\calH}}_{1} \approx_{O( \alpha \phi^{-6} \log^2{n} )} \tau
  \norm{\gghat^{\calG}}_{1} 
\end{equation}
And $\gghat^{\calH}$ is $O(\alpha \phi^{-6} \log^2{n})$-well spread,
i.e. (as $|F|$ is the number of entries of $\gghat^{\calH}$)
\begin{equation}
\label{eq:gradprojapxwellspread}
\norm{\gghat^{\calH}}_{\infty}^2
\leq
\frac{O(\alpha \phi^{-6} \log^2{n})}{\abs{F}} 
\norm{\gghat^{\calH}}_{2}^2 
.
\end{equation}
% Also, both its $\ell_{1}$ and $\ell_2$ norms
% on $H$ are close to about $\tau$ times the corresponding
% terms in $G$.
\end{corollary}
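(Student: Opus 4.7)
The three assertions will be proven in the order (1), (3), (2), since the latter two build on the $\ell_2$ estimate.

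For (1), apply Lemma~\ref{lem:L2Operator} to test vectors of the form $\xx = (\yy, -1)$, yielding
\[
\tau \norm{\BB^{\calG}\yy - \gghat^{\calG}}_2^2 \Approx{0.1} \norm{\BB^{\calH}\yy - \ggtil^{\calH}}_2^2 \qquad \forall\, \yy.
\]
Minimize each side over $\yy$. On the left, the minimum is attained at $\yy = \vzero$ because $\gghat^{\calG}$ is orthogonal to the column space of $\BB^{\calG}$ by construction (Line~\ref{alg:SampleAndFixGradient:projG}), giving value $\tau\norm{\gghat^{\calG}}_2^2$. On the right, by Fact~\ref{fact:Projection} the minimum equals the squared norm of the cycle-space component of $\ggtil^{\calH}$ on $H$, which is exactly $\norm{\gghat^{\calH}}_2^2$. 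Transferring the two-sided bound through these two minimizers (noting that each minimizer is dominated on the other side by plugging it into the opposite quadratic form) yields part (1), with the factor of $0.2$ arising from composing the two one-sided comparisons.

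For (3), since $\ggtil^{\calH}$ is the restriction of $\gghat^{\calG}$ to $F$, the $\alpha$-uniformity of $\gghat^{\calG}$ gives $\norm{\ggtil^{\calH}}_\infty \leq \sqrt{\alpha/m}\, \norm{\gghat^{\calG}}_2$. Corollary~\ref{cor:Hconductance} shows $H$ has conductance at least $0.8\phi$, so Lemma~\ref{lem:electricalOblInfRoute} bounds the $\ell_\infty \to \ell_\infty$ operator norm of the cycle-space projection on $H$ by $O(\phi^{-2}\log n)$. Applying this operator to $\ggtil^{\calH}$ to produce $\gghat^{\calH}$ gives
\[
\norm{\gghat^{\calH}}_\infty^2 \leq O(\phi^{-4}\log^2 n)\cdot \frac{\alpha}{m}\, \norm{\gghat^{\calG}}_2^2.
\]
Now replace $\norm{\gghat^{\calG}}_2^2$ by $O(\tau^{-1})\norm{\gghat^{\calH}}_2^2$ using (1), and $1/m$ by $O(\tau / |F|)$ using the sandwich $0.5\tau m \leq |F| \leq 2\tau m$ from Lemma~\ref{lem:L2Operator}, to obtain the advertised well-spread bound (with the paper's $\phi^{-6}$ comfortably absorbing any additional polylog/$\phi$ slack from the composition).

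For (2), use the general observation that any $\beta$-well-spread vector $w$ on a ground set of size $N$ satisfies $\sqrt{N/\beta}\, \norm{w}_2 \leq \norm{w}_1 \leq \sqrt{N}\, \norm{w}_2$, where the upper bound is Cauchy--Schwarz and the lower bound follows from $\norm{w}_2^2 \leq \norm{w}_\infty \norm{w}_1$. Apply this to $\gghat^{\calG}$ with $(N, \beta) = (m, \alpha)$ and to $\gghat^{\calH}$ with $(N, \beta) = (|F|, O(\alpha\phi^{-6}\log^2 n))$ from (3); then substitute $\norm{\gghat^{\calH}}_2 \asymp \sqrt{\tau}\, \norm{\gghat^{\calG}}_2$ from (1) together with $|F| \asymp \tau m$. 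The ratio $\norm{\gghat^{\calH}}_1 / (\tau \norm{\gghat^{\calG}}_1)$ is then pinned in an interval whose endpoint ratio is $O(\sqrt{\alpha\phi^{-6}\log^2 n})$, well inside the claimed approximation factor. The main technical nuisance is simply bookkeeping the $\phi$ and $\log n$ factors as they compose through Lemma~\ref{lem:electricalOblInfRoute}, Corollary~\ref{cor:Hconductance}, and (1); the cleanest route is to absorb any excess constants into the polylog exponents stated in the corollary rather than chase a sharper exponent.
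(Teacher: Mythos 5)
Your proposal is correct and follows essentially the same route as the paper: (1) plug the minimizer of each side of the matrix-concentration inequality into the other quadratic form to sandwich the $\ell_2$ norms, (3) combine the $\alpha$-uniformity of $\gghat^{\calG}$ with Lemma~\ref{lem:electricalOblInfRoute} applied to the expander $H$ (via Corollary~\ref{cor:Hconductance}) and then substitute using (1) and $\abs{F} \asymp \tau m$, and (2) use the standard $\ell_1$-vs-$\ell_2$ sandwich for well-spread vectors on both graphs. The small exponent discrepancies you flag (e.g.\ quoting $\phi^{-2}$ where the paper writes $\phi^{-3}$, and the slightly optimistic endpoint ratio in (2)) are harmless bookkeeping slack that, as you note, is absorbed by the $\phi^{-6}\log^2 n$ stated in the corollary.
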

\begin{proof}
  We first show $ \norm{\gghat^{\calH}}_{2}^2 \approx_{0.5} \tau
  \norm{\gghat^{\calG}}_{2}^2$.
% Let $\ppsi = \arg\min_{\ppsi}
% \norm{{\BB^{\calG}}\ppsi + \gghat^{\calG}}_2^2$ so that 
Note that $\norm{\gghat^{\calG}}_2^2 = \norm{{\BB^{\calG}} \vzero + \gghat^{\calG}}_2^2$
Now, consider
$\xx =
\begin{pmatrix}
\vzero \\
1
\end{pmatrix}
$.
 By Equation~\eqref{eq:expanderSampConc},
\begin{align*}
\tau \norm{\gghat^{\calG}}_2^2 
=
\tau 
\norm{
\left[ \BB^{\calG}, \gghat^{\calG}\right] \xx
}_2^2
&
\approx_{0.1}
\norm{
\left[\BB^{\calH},
\ggtil^{\calH}\right] \xx
}_2^2
\\
&
\geq 
\min_{\yy} \norm{
\left[ \BB^{\calH},
\ggtil^{\calH}\right] 
  \begin{pmatrix}
    \yy \\
    1
  \end{pmatrix}
}_2^2
=
\norm{\gghat^{\calH}}_2^2.
\end{align*}
Thus $\tau \norm{\gghat^{\calG}}_2^2 \geq 0.9 \norm{\gghat^{\calH}}_2^2.$

The definition of $\gghat^{\calG}$ ensures
$\norm{\gghat^{\calG}}_2^2 = \min_{\yy} \norm{{\BB^{\calG}} \yy +
  \gghat^{\calG}}_2^2$.
Letting $ \yy^{\calH} \in \arg\min_{\yy} \norm{{\BB^{\calH}} \yy +
  \ggtil^{\calH}}_2^2$, we get from the definition of $\gghat^{\calH}$, that 
\begin{align*}
\norm{\gghat^{\calH}}_2^2
=
\norm{
\left[ \BB^{\calH},
 \ggtil^{\calH}\right]
\begin{pmatrix}
    \yy^{\calH} \\
    1
\end{pmatrix}
}_2^2
&
\approx_{0.1}
\tau
\norm{
\left[\BB^{\calG}, \gghat^{\calG}\right] 
\begin{pmatrix}
    \yy^{\calH} \\
    1
\end{pmatrix}
}_2^2
\\
&
\geq 
\tau
\min_{\yy} \norm{
\left[
\BB^{\calG},
\gghat^{\calG}\right] 
\begin{pmatrix}
    \yy \\
    1
\end{pmatrix}
}_2^2
=
\tau \norm{\gghat^{\calG}}_2^2.
\end{align*}
Thus we also have $\tau \norm{\gghat^{\calH}}_2^2 \geq 0.9 \norm{\gghat^{\calG}}_2^2$,
allowing us to conclude that Equation~\eqref{eq:gradprojapx2norm} is
satisfied.

Next, observe that by Lemma~\ref{lem:electricalOblInfRoute}, since $H$
has conductance at least $0.8 \phi$,
\begin{align*}
\norm{\gghat^{\calH}}_\infty
= 
\norm{\left(
I-\BB^{\calH} \left( {\BB^{\calH}}^{\top} \BB^{\calH}
  \right)^{\dag} {\BB^{\calH}}^{\top}
\right)\ggtil^{\calH}}_\infty
&\leq 
\norm{\left(
I-\BB^{\calH} \left( {\BB^{\calH}}^{\top} \BB^{\calH}
  \right)^{\dag} {\BB^{\calH}}^{\top}
\right)}_{\infty} \norm{\ggtil^{\calH}}_\infty
\\
&\leq 
O(\phi^{-3} \log n) \norm{\gghat^{\calG}}_\infty,
\end{align*}
where in the last step we also used $\norm{\ggtil^{\calH}}_\infty \leq
\norm{\gghat^{\calG}}_\infty$, since the former vector consists of a
subset of the entries of the latter.
Furthermore, by combining the above inequality with the assumption that
$\gghat^{\calG}$ is $\alpha$-well-spread, and
Equation~\eqref{eq:gradprojapx2norm} holds, we get
\[
\norm{\gghat^{\calH}}_\infty^2
\leq 
O(\phi^{-6} \log^2 n) \frac{\alpha}{m} \norm{\gghat^{\calG}}_2^2
\leq
\frac{
O(\alpha \phi^{-6} \log^2 n) 
}{\tau m} \norm{\gghat^{\calH}}_2^2.
\]
As $\gghat^{\calH}$ has $\abs{F} \leq 2 \tau m$ entries,
this shows that it is $O(\alpha \phi^{-6} \log^2 n)$-well-spread,
which establishes Equation~\eqref{eq:gradprojapxwellspread}.

Next, to prove that Equation~\eqref{eq:gradprojapx1norm} holds, we
first observe that for any $\gamma$-well-spread vector on $\xx$ with
$t$ coordinates, since $\norm{\xx}_1 \norm{\xx}_\infty \geq \norm{ \xx
}_2^2 $, 
\[
\norm{\xx}_1 \geq
\frac{ \norm{ \xx }_2^2 }{ \norm{\xx}_\infty } \geq \frac{t}{\gamma}
\norm{\xx}_\infty \geq \frac{t^{1/2}}{\gamma}
\norm{\xx}_2
.
\]
So $\frac{t^{1/2}}{\gamma} \norm{\xx}_2 \leq \norm{\xx}_1 \leq t^{1/2}
\norm{\xx}_2 $.
As $\gghat^{\calG}$ and $\gghat^{\calH}$ are $\alpha$ and $O(\alpha \phi^{-6} \log^2
n)$-well-spread respectively,
we then get
\[
\Omega(1)
\frac{
\frac{
\abs{F}
}{
(\alpha \phi^{-6} \log^2 n)^2
}
\norm{\gghat^{\calH}}_2^2
}
{
m \norm{\gghat^{\calG}}_2^2
}
\leq 
\frac{ \norm{\gghat^{\calH}}_1^2 }
{ \norm{\gghat^{\calG}}_1^2}
\leq 
O(1)
\frac{
\abs{F}
\norm{\gghat^{\calH}}_2^2
}
{
\frac{
m
}{
\alpha^2
} \norm{\gghat^{\calG}}_2^2
}
.
\]
Combining this with 
$\norm{\gghat^{\calH}}_{2}^2 \approx_{0.2} \tau \norm{\gghat^{\calG}}_{2}^2$,
and $0.5 \tau \leq \frac{|F|}{m} \leq 2 \tau $, 
we get 
\[
\Omega(1)
\frac{
\tau^2
}{
(\alpha \phi^{-6} \log^2 n)^2
}
\leq 
\frac{ \norm{\gghat^{\calH}}_1^2 }
{ \norm{\gghat^{\calG}}_1^2}
\leq 
O(1)
\alpha^2 \tau^2
.
\]
From this we can directly conclude that
Equation~\eqref{eq:gradprojapx1norm} holds.
\end{proof}

Also, we can show that for any $\bb$ and $\theta$,
the optimum $\ell_2$ as well as $\ell_{p}$
energies are close to the per degree lower bounds.
We start with the lower bounds.

\begin{lemma}
\label{lem:ExpanderLower}
Consider any graph with degrees $\DD$,
uniform $r$ and $s$, gradient $\gg$
decomposable into
$
\gg = \gghat + \BB \ppsi
$
where $\gghat$ is the cycle-space projection of $\gg$,
and any $\theta$,
any flow $\ff$ such that:
\begin{enumerate}
\item $\ff$ has residues $\bb$: $\BB^{\top} \ff = \bb$,  and
\item $\ff$ has dot product $\theta + \bb^{\top} \ppsi$
with $\gg$, i.e.
$\gg^{\top} \ff = \theta + \bb^{\top} \ppsi$
\end{enumerate}
must satisfy
\[
\sum_{e} \rr_e \ff_e^2 + \ss_{e} \abs{\ff_e}^{p}
\geq
\Omega\left(
r \cdot \norm{\bb}_{\DD^{-1}}^{2}
+
r \cdot \frac{\theta^2}{\norm{\gghat}_2^2}
+
s \cdot \norm{\DD^{-1} \bb}_{\infty}^{p}
+
s \cdot \left(\frac{\theta}{\norm{\gghat}_1} \right)^{p}
\right).
\]
% where $\xxhat$ is the potential shift used to create
% $\gghat$ form $\gg$: $\gghat = \gg + \BB \xxhat$.
% \todo{take vertex potential shift by $\gg$ into account}
% \todo{these are scalar values, is this anyhow sketchy?}
% \todo{$\gghat$ should be what is used in the bounds? look up notes
%   from meeting w Sushant?}
% \todo{sketchy: not sure about the splitting into multiple terms?}
\end{lemma}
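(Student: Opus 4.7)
The plan is to decompose the right-hand side into its four summands and to lower-bound each one separately, using either $r\,\norm{\ff}_2^2$ (for the two terms with prefactor $r$) or $s\,\norm{\ff}_p^p$ (for the two terms with prefactor $s$). Each bound comes from a single application of Cauchy--Schwarz or H\"older, and the final inequality follows by averaging the four bounds with coefficient $1/4$.

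A small preliminary simplification: using $\gg = \gghat + \BB\ppsi$ together with $\BB^{\top}\ff = \bb$, one gets $\gg^{\top}\ff = \gghat^{\top}\ff + \bb^{\top}\ppsi$, so hypothesis~2 is equivalent to the cleaner statement $\gghat^{\top}\ff = \theta$. All of the $\theta$-arguments below use only this form. Now I would produce the two bounds on $\norm{\ff}_2^2$. For the residue term: starting from $|\bb_v| \leq \sum_{e \ni v}|\ff_e|$ and Cauchy--Schwarz, $\bb_v^2 \leq \DD_v \sum_{e \ni v}\ff_e^2$; summing over $v$ and noting that each edge contributes to exactly two vertex incidences gives $\norm{\bb}_{\DD^{-1}}^2 \leq 2\,\norm{\ff}_2^2$. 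For the $\theta$ term, Cauchy--Schwarz applied to $\theta = \gghat^{\top}\ff$ gives $\theta^2 \leq \norm{\gghat}_2^2\,\norm{\ff}_2^2$.

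Next I would produce the two bounds on $\norm{\ff}_p^p$. H\"older's inequality vertex-wise gives $|\bb_v|^p \leq \DD_v^{p-1}\sum_{e \ni v}|\ff_e|^p$, hence $(|\bb_v|/\DD_v)^p \leq \DD_v^{-1}\norm{\ff}_p^p \leq \norm{\ff}_p^p$ when $\DD_v \geq 1$; taking the max over $v$ yields $\norm{\DD^{-1}\bb}_\infty^p \leq \norm{\ff}_p^p$. For the $\theta$ term, H\"older gives $|\theta| \leq \norm{\gghat}_q\norm{\ff}_p$ with $1/p+1/q = 1$; since $q \in [1,2]$ and $\ell_q$-norms of a fixed vector are non-increasing in $q$ on $[1,\infty]$, we have $\norm{\gghat}_q \leq \norm{\gghat}_1$, giving $\norm{\ff}_p^p \geq (\theta/\norm{\gghat}_1)^p$.

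Finally, I would multiply each of the four displayed bounds by $1/4$ and sum, using that the energy of interest is $r\,\norm{\ff}_2^2 + s\,\norm{\ff}_p^p$; the absolute constant $1/4$ is absorbed into the $\Omega(\cdot)$. There is no real main obstacle here --- the lemma is a collection of four one-line inequalities --- but the one subtlety worth flagging is the use of $\DD_v \geq 1$ in the $\ell_p^p$ residue bound, which is automatic in the combinatorial setting (integer edge multiplicities) in which this lemma will be invoked, in particular for the uniform expanders produced by \textsc{Decompose}.
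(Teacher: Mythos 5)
Your proof is correct and follows essentially the same strategy as the paper: bound the $\ell_2^2$ energy below by both $\Omega(\norm{\bb}_{\DD^{-1}}^2)$ and $\Omega(\theta^2/\norm{\gghat}_2^2)$, bound the $\ell_p^p$ energy below by both $\Omega(\norm{\DD^{-1}\bb}_\infty^p)$ and $\Omega((\theta/\norm{\gghat}_1)^p)$, and average. The only differences are cosmetic variants of the individual one-line bounds (the paper uses $\LL \preceq 2\DD$ for the first, a pigeonhole argument on a single max-flow edge rather than vertex-wise H\"older for the third --- which avoids the $\DD_v \geq 1$ caveat --- and the chain $\norm{\ff}_p^p \geq \norm{\ff}_\infty^p \geq (\theta/\norm{\gghat}_1)^p$ rather than $\norm{\gghat}_q \leq \norm{\gghat}_1$ for the fourth), none of which changes the substance.
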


\begin{proof}
First, note that because $\BB^{\top} \ff = \bb$,
we have
\[
\gghat^{\top} \ff
=
\left( \gg - \BB \ppsi \right)^{\top} \ff
= \gg^{\top} \ff
- \ppsi^{\top} \left( \BB^{\top} \ff \right)
= \gg^{\top} \ff - \xx^{\top} \bb
= \theta.
\]
That is, the dot of $\ff$ against $\gghat$
must be $\theta$.

The total energy is a sum of the $\ell_2^2$
and $\ell_{p}^{p}$ terms.
First, we will give two different lower bounds on the $\ell_2^2$, and
can hence also conclude that the average of the two lower bounds is
another lower bound.
We then do the same for the $\ell_{p}^{p}$ terms and add the lower
bounds together for a lower bound on the overall objective.
% then
% two lower bounds on the $\ell_{p}^{p}$ terms, finally we can conclude
% that averaging them all together 
% Because the total energy is a sum of the $\ell_2^2$
% and $\ell_{p}^{p}$ terms, it suffices to show that each
% of these four terms are at least either the $\ell_2^2$
% terms or the $\ell_{p}^{p}$ terms.
We will do so separately, by matching the $\ell_{2}^2$
terms to the electrical energy, and the $\ell_{p}^{p}$
terms to the minimum congestion.
\begin{enumerate}
\item $\norm{\ff}_2^2 \geq \norm{\bb}_{\DD^{-1}}^2$:
here we use the fact that the minimum energy of the
electrical flow is given by
\[
\bb^{\top} \LL^{\dag} \bb,
\]
and that the graph Laplacian is dominated by twice
its diagonal
\[
\LL \preceq 2 \DD,
\]
to get
\[
\norm{\bb}_{\LL^{\dag}}^2
\geq
\norm{\bb}_{\frac{1}{2} \DD^{-1}}^2.
\]
\item $\norm{\ff}_2^2
\geq \frac{\theta^2}{\norm{\gghat}_2^2}$ is by rearranging
Cauchy-Schwarz inequality, which in its simplest form gives
\[
\norm{\ff}_2 \cdot \norm{\gghat}_2
\geq
\abs{\ff^{\top} \gghat} = \abs{\theta} 
.
\]
\item $\norm{\ff}_{p}^{p}
\geq \norm{\DD^{-1} \bb}_{\infty}^{p}$
is because if we have a residue of $\bb_{u}$
at some vertex, then some edge incident to $u$
must have flow at least
\[
\frac{\bb_{u}}{\dd_{u}}
\]
on it.
The $p$-th power of that lower bounds the overall
$p$-norm energy.
\item $\norm{\ff}_{p}^{p}
\geq (\frac{\abs{\theta}}{\norm{\gghat}_1})^{p}$ uses
a similar lower bound on $\norm{\ff}_{\infty}$,
except using Holder's inequality on $\ell_{\infty}$
and $\ell_{1}$ norms to obtain
\[
\norm{\ff}_{\infty} \norm{\gghat}_{1}
\geq
\abs{\theta},
\]
which rearranges to give
\[
\norm{\ff}_{\infty}
\geq
\frac{\abs{\theta}}{\norm{\gghat}_{1}}.
\]
\end{enumerate}
\end{proof}

Before proving upper bounds on the energy required to route a flow
in the graph, we state a lemma that upper bounds the energy required
to route the ``electrical'' component of the flow, i.e. the projection
of the flow orthogonal to the cycle space.

\begin{lemma}
\label{lem:electricalFlow2normUB}
  Consider a graph $G$ with degrees $\DD$,
conductance $\phi$, and edge-vertex incidence matrix $\BB$,
and any demand  $\bb \bot \vone$.
Define the electrical flow 
$\ff = \BB^{} \left( \BB^{ \top} \BB^{} \right)^{\dag} \bb$.
Then 
$\sum_{e} \ff_e^2 \leq 2\phi^{-2} \norm{\bb}_{\DD^{-1}}^2$.
\end{lemma}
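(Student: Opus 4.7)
First I would rewrite the quantity of interest in purely Laplacian form. Since $\LL \defeq \BB^{\top}\BB$ is the graph Laplacian and $\bb \perp \vone$ lies in its range, we have
\[
\norm{\ff}_2^2
= (\LL^{\dag}\bb)^{\top} \BB^{\top}\BB (\LL^{\dag}\bb)
= \bb^{\top}\LL^{\dag}\LL\,\LL^{\dag}\bb
= \bb^{\top}\LL^{\dag}\bb,
\]
so the entire task reduces to proving $\bb^{\top}\LL^{\dag}\bb \leq 2\phi^{-2}\bb^{\top}\DD^{-1}\bb$.

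The central tool is Cheeger's inequality for the normalized Laplacian $\DD^{-1/2}\LL\DD^{-1/2}$, whose second eigenvalue is at least $\phi^2/2$. Translating back, this gives the quadratic-form bound
\[
\xx^{\top}\LL\xx \;\geq\; \tfrac{\phi^2}{2}\,\xx^{\top}\DD\xx \qquad
\text{for every }\xx\text{ with } \xx^{\top}\DD\vone = 0.
\]
The only subtlety is that the convention $\xx^{\star} \defeq \LL^{\dag}\bb$ gives $\xx^{\star} \perp \vone$, not $\xx^{\star}\perp \DD\vone$. I would handle this by shifting: set $\xx' \defeq \xx^{\star} - c\vone$ with $c = (\xx^{\star\top}\DD\vone)/(\vone^{\top}\DD\vone)$ chosen so that $\xx'^{\top}\DD\vone = 0$. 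Because $\vone \in \ker\LL$, shifting preserves $\LL\xx' = \bb$ and $\xx'^{\top}\LL\xx' = \xx^{\star\top}\LL\xx^{\star} = \bb^{\top}\LL^{\dag}\bb$. Moreover $\bb \perp \vone$ implies $\bb^{\top}\xx' = \bb^{\top}\xx^{\star} = \bb^{\top}\LL^{\dag}\bb$.

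Now I would combine two inequalities. From Cauchy--Schwarz in the $\DD$ inner product,
\[
\bb^{\top}\xx'
= (\DD^{-1/2}\bb)^{\top}(\DD^{1/2}\xx')
\leq \sqrt{\bb^{\top}\DD^{-1}\bb}\cdot\sqrt{\xx'^{\top}\DD\xx'},
\]
while Cheeger applied to $\xx'$ yields $\xx'^{\top}\DD\xx' \leq (2/\phi^2)\,\xx'^{\top}\LL\xx' = (2/\phi^2)\,\bb^{\top}\LL^{\dag}\bb$. Substituting and using $\bb^{\top}\xx' = \bb^{\top}\LL^{\dag}\bb$ gives
\[
\bb^{\top}\LL^{\dag}\bb
\leq \sqrt{\bb^{\top}\DD^{-1}\bb}\cdot\sqrt{(2/\phi^2)\,\bb^{\top}\LL^{\dag}\bb},
\]
and dividing both sides by $\sqrt{\bb^{\top}\LL^{\dag}\bb}$ (or squaring and cancelling) concludes that $\bb^{\top}\LL^{\dag}\bb \leq 2\phi^{-2}\,\bb^{\top}\DD^{-1}\bb$, as claimed.

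The only nontrivial step is the kernel bookkeeping in the shift from $\vone^{\perp}$ to $(\DD\vone)^{\perp}$; everything else is a direct application of Cheeger plus one Cauchy--Schwarz. If for any reason the explicit shift is awkward to typeset, an equivalent path is to set $\yy = \DD^{1/2}\xx^{\star}$, work with the normalized Laplacian $\mathcal{L} = \DD^{-1/2}\LL\DD^{-1/2}$ directly on $(\DD^{1/2}\vone)^{\perp}$, and use the identity $\bb^{\top}\LL^{\dag}\bb = (\DD^{-1/2}\bb)^{\top}\mathcal{L}^{\dag}(\DD^{-1/2}\bb)$ together with the spectral bound $\mathcal{L}^{\dag} \preceq (2/\phi^2)\,(\II - \Pi_{\DD^{1/2}\vone})$, which yields the same final inequality.
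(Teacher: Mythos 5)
Your proof is correct, and while it uses the same key ingredient as the paper (Cheeger's inequality, $\xx^{\top}\LL\xx \geq \tfrac{\phi^2}{2}\xx^{\top}\DD\xx$ on $(\DD\vone)^{\perp}$), the way you deploy it is genuinely different. The paper proceeds purely through Loewner-order manipulations: it symmetrizes Cheeger into $\DD^{-1/2}\LL\DD^{-1/2} \succeq \tfrac{\phi^2}{2}\QQ$ on the correct subspace, inverts to $(\DD^{-1/2}\LL\DD^{-1/2})^{\dag} \preceq 2\phi^{-2}\QQ$, and then uses a fact about pseudo-inverses of sandwiched products (Fact~\ref{fac:pinvproduct}) to unpack this into $\zz^{\top}\LL^{\dag}\zz \leq 2\phi^{-2}\zz^{\top}\DD^{-1}\zz$ for $\zz \perp \vone$. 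You instead take the concrete potential vector $\xx^{\star}=\LL^{\dag}\bb$, re-center it by a multiple of $\vone$ so that it lands in $(\DD\vone)^{\perp}$, and then combine Cheeger with a single Cauchy--Schwarz in the $\DD$-inner product; the identity $\bb^{\top}\xx' = \bb^{\top}\LL^{\dag}\bb$ (using $\bb\perp\vone$) closes the argument. Your route is more elementary --- it avoids reasoning about pseudo-inverses of products and the kernel bookkeeping happens at the level of a single vector rather than operators --- while the paper's route produces the stronger operator inequality $\LL^{\dag} \preceq 2\phi^{-2}\DD^{-1}$ on $\vone^{\perp}$ as a byproduct, which is occasionally convenient for reuse. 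Both are valid; your second suggested path (working directly with the normalized Laplacian and its spectral bound) is essentially the paper's argument.
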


The proof relies on first Cheeger's Inequality
(e.g. see \cite{SpielmanCheegerLect18}):
\begin{theorem}(Cheeger's Inequality)
\label{thm:cheeger}
  Consider a graph $G$ with degrees $\DD$,
conductance $\phi$, %and edge-vertex incidence matrix $\BB$,
and adjancency matrix $\AA$.
Then
\[
0.5 \phi^2
\leq
\min_{\xx \bot \DD \vone}
\frac{\xx^{\top} (\DD - \AA ) \xx}
{\xx^{\top} \DD \xx }
\leq 
2\phi
\]
\end{theorem}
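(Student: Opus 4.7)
The goal is to bound $\sum_e \ff_e^2 = \|\ff\|_2^2$ for the electrical flow $\ff = \BB(\BB^\top \BB)^\dag \bb$, so the plan is first to rewrite this as a quadratic form in $\bb$. Since $\ff = \BB \LL^\dag \bb$ (where $\LL = \BB^\top \BB$ is the graph Laplacian) and $\BB \LL^\dag \BB^\top$ is the projector onto $\mathrm{im}(\BB)$ applied to $\bb$ expressed through $\LL^\dag$, we get $\|\ff\|_2^2 = \bb^\top \LL^\dag \BB^\top \BB \LL^\dag \bb = \bb^\top \LL^\dag \LL \LL^\dag \bb = \bb^\top \LL^\dag \bb$. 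So the lemma reduces to the operator-level inequality $\bb^\top \LL^\dag \bb \leq 2\phi^{-2}\, \bb^\top \DD^{-1} \bb$ for all $\bb \bot \vone$.

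Next, I would reduce this to a statement about the normalized Laplacian $\NN = \DD^{-1/2} \LL \DD^{-1/2} = \II - \DD^{-1/2}\AA\DD^{-1/2}$ via the substitution $\cc = \DD^{-1/2}\bb$. A direct calculation gives $\bb^\top \LL^\dag \bb = \cc^\top \NN^\dag \cc$, and since $\bb \bot \vone$ one checks $\cc \bot \DD^{1/2}\vone$, which is precisely the nontrivial eigenspace of $\NN$ (the kernel of $\NN$ being spanned by $\DD^{1/2}\vone$ on a connected graph; the disconnected case is handled component by component, where $\bb$ must be orthogonal to each component indicator).

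Now I invoke Cheeger's inequality (Theorem~\ref{thm:cheeger}). It says $\xx^\top \LL \xx \geq 0.5\phi^2 \, \xx^\top \DD \xx$ for all $\xx \bot \DD\vone$, which under the substitution $\yy = \DD^{1/2}\xx$ becomes $\yy^\top \NN \yy \geq 0.5\phi^2 \yy^\top \yy$ for all $\yy \bot \DD^{1/2}\vone$. This says every nonzero eigenvalue of $\NN$ is at least $0.5\phi^2$, so on the complement of $\ker(\NN)$ we have $\NN^\dag \preceq 2\phi^{-2}\II$. Applied to $\cc$ (which lies in that complement), we obtain
\[
\cc^\top \NN^\dag \cc \;\leq\; 2\phi^{-2}\, \cc^\top \cc \;=\; 2\phi^{-2}\, \bb^\top \DD^{-1} \bb \;=\; 2\phi^{-2}\, \|\bb\|_{\DD^{-1}}^2,
\]
which combined with the reduction in the first paragraph yields the claim.

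The plan is essentially bookkeeping; there is no real obstacle other than being careful about which orthogonality condition lives on which side of the substitution $\cc = \DD^{-1/2}\bb$. The one subtlety worth double-checking is the disconnected case: if $G$ has multiple components then $\ker(\LL)$ is larger than $\mathrm{span}(\vone)$, but $\bb \bot \vone$ alone does not force $\bb$ to lie in $\mathrm{im}(\LL)$. In that case one applies the argument component-wise (Cheeger's hypothesis $\phi > 0$ implicitly asserts connectivity, since a disconnected graph has $\phi = 0$), so the lemma as stated implicitly assumes $G$ is connected, or equivalently that $\bb$ is supported component-wise with mean zero on each component.
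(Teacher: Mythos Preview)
Your proposal does not prove the stated theorem. Theorem~\ref{thm:cheeger} is Cheeger's inequality itself, but you are instead proving Lemma~\ref{lem:electricalFlow2normUB} (the bound $\sum_e \ff_e^2 \le 2\phi^{-2}\norm{\bb}_{\DD^{-1}}^2$ for the electrical flow), and you explicitly \emph{invoke} Cheeger's inequality as a black box in the middle of your argument. So as a proof of the statement you were given, this is circular: you assume the conclusion.

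That said, the paper does not prove Cheeger's inequality either; it simply cites it from~\cite{SpielmanCheegerLect18}. If the intended target was actually Lemma~\ref{lem:electricalFlow2normUB}, then your argument is correct and essentially the same as the paper's: both reduce to $\bb^\top \LL^\dag \bb$, pass to the normalized Laplacian $\NN = \DD^{-1/2}\LL\DD^{-1/2}$, use Cheeger to lower-bound the nonzero spectrum of $\NN$ by $0.5\phi^2$, and invert. The only cosmetic difference is that the paper carries the projection $\QQ$ onto $(\DD^{1/2}\vone)^\perp$ explicitly and appeals to Fact~\ref{fac:pinvproduct} to identify $(\DD^{-1/2}\LL\DD^{-1/2})^\dag$ with $\QQ\DD^{1/2}\LL^\dag\DD^{1/2}\QQ$, whereas you absorb this into the observation that $\cc = \DD^{-1/2}\bb \perp \DD^{1/2}\vone$ so the projection is the identity on $\cc$. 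Your remark about the disconnected case is also apt and matches the implicit connectivity assumption in the paper.
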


We will also need the following helpful fact.
\begin{fact}
\label{fac:pinvproduct}
  Suppose $\MM = \XX\AA\XX^\top$ where $\AA$ is symmetric and $\XX$ is
  non-singular, and that  $\PP$ is the projection orthogonal to the
  kernel of $\MM$, i.e. $\PP = \MM^\dag \MM = \MM\MM^\dag.$
 Then $\MM^\dag = \PP \XX^{-\top}\AA^\dag \XX^{-1}\PP$.
\end{fact}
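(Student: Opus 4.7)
The plan is to verify that $\NN := \PP\XX^{-\top}\AA^\dag\XX^{-1}\PP$ satisfies the four Moore--Penrose conditions characterizing $\MM^\dag$. First I would record the basic structural facts: because $\XX$ is non-singular and $\MM = \XX\AA\XX^\top$ is symmetric, $\ker(\MM) = \XX^{-\top}\ker(\AA)$ and hence, by symmetry of $\MM$, $\mathrm{range}(\MM) = \ker(\MM)^\perp$ while simultaneously $\mathrm{range}(\MM) = \XX\cdot\mathrm{range}(\AA)$. The projection $\PP$ onto $\ker(\MM)^\perp$ is symmetric and satisfies $\MM\PP = \PP\MM = \MM$. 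Also, $\NN$ is symmetric since $\AA^\dag$ and $\PP$ are.

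The central step is to show $\MM\NN = \PP$. Using $\MM\PP = \MM$ and $\MM\XX^{-\top} = \XX\AA$, I can simplify
\[
\MM\NN \;=\; \MM\,\PP\,\XX^{-\top}\AA^\dag\XX^{-1}\PP \;=\; \XX\AA\AA^\dag\XX^{-1}\PP.
\]
To see this equals $\PP$, take any $\vv$ and note $\PP\vv \in \mathrm{range}(\MM) = \XX\cdot\mathrm{range}(\AA)$, so $\PP\vv = \XX\uu$ for some $\uu \in \mathrm{range}(\AA)$. Since $\AA\AA^\dag$ is the orthogonal projection onto $\mathrm{range}(\AA)$, it fixes $\uu$, and therefore $\XX\AA\AA^\dag\XX^{-1}\PP\vv = \XX\AA\AA^\dag\uu = \XX\uu = \PP\vv$. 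This proves $\MM\NN = \PP$. Taking transposes and using symmetry of $\MM$, $\NN$, and $\PP$ immediately yields $\NN\MM = \PP$ as well.

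With $\MM\NN = \NN\MM = \PP$ in hand, the four Moore--Penrose conditions are routine: (1) $\MM\NN\MM = \PP\MM = \MM$; (2) $\NN\MM\NN = \PP\NN = \NN$, where the last equality uses $\PP\NN = \PP\PP\XX^{-\top}\AA^\dag\XX^{-1}\PP = \NN$ since $\PP^2 = \PP$; (3) and (4) follow from $\MM\NN = \PP = \PP^\top = (\MM\NN)^\top$ and the analogous identity for $\NN\MM$. By uniqueness of the Moore--Penrose pseudo-inverse, $\NN = \MM^\dag$.

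There is no serious obstacle here—the argument is bookkeeping once one observes the crucial point that $\mathrm{range}(\MM) = \XX\cdot\mathrm{range}(\AA)$, which lets the oblique projection $\XX\AA\AA^\dag\XX^{-1}$ act as the identity on $\mathrm{range}(\PP)$. The only care needed is to remember that $\AA\AA^\dag$ is \emph{not} the identity when $\AA$ is singular, but is instead the orthogonal projection onto $\mathrm{range}(\AA)$, so it must be composed on the right with something (here $\XX^{-1}\PP$) whose image already lies in $\mathrm{range}(\AA)$.
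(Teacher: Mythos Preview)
Your proof is correct. The paper states this as a \textbf{Fact} without proof (the \texttt{proof} environment that follows it is explicitly labeled as the proof of Lemma~\ref{lem:electricalFlow2normUB}, not of this fact), so there is nothing to compare against; your verification of the four Moore--Penrose conditions via the key identity $\MM\NN = \PP$ is a clean and complete argument.
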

\begin{proof}[Proof of Lemma~\ref{lem:electricalFlow2normUB}]
Note $\BB^{ \top} \BB^{} = \DD - \AA =\LL$, where $\AA$ is the adjacency
matrix of the graph, and $\LL$ is its Laplacian.
  Also 
$\sum_{e} \ff_e^2 = (\BB^{} \left( \BB^{ \top} \BB^{} \right)^{\dag}
  \bb)^{\top} \BB^{} \left( \BB^{ \top} \BB^{} \right)^{\dag} \bb
  = \bb^{\top} \left( \BB^{ \top} \BB^{} \right)^{\dag} \bb
  =\bb^{\top} \LL^{\dag} \bb
.
$
By Theorem~\ref{thm:cheeger}, we get that for $\xx \bot \DD \vone$
\[
\xx^\top \LL \xx \geq 0.5 \phi^2 \xx^\top \DD \xx. 
\]
Substituting $\yy = \DD^{1/2} \xx$ changes the constraint to
$\DD^{-1/2} \yy \bot \DD \vone$ i.e. $ \yy \bot \DD^{1/2} \vone$.
The inequality now states
\[
\yy^\top \DD^{-1/2} \LL \DD^{-1/2}  \yy \geq 0.5 \phi^2 \yy^\top \yy. 
\]
If we let $\QQ$ denote the projection orthogonal to $\DD^{1/2} \vone$, we
can summarize the inequality and orthogonality constraint in one
condition using the  Loewner order as 
\[
\QQ \DD^{-1/2} \LL \DD^{-1/2}  \QQ \succeq 0.5 \phi^2 \QQ. 
\]
Note that the null space of $\DD^{-1/2} \LL \DD^{-1/2}$ is spanned by
$\DD^{1/2} \vone$, as $\vone$ spans the null space of $\LL$. 
So in fact $\QQ \DD^{-1/2} \LL \DD^{-1/2}  \QQ = \DD^{-1/2} \LL
\DD^{-1/2}$, and we can conclude
\[
\DD^{-1/2} \LL \DD^{-1/2} \succeq 0.5 \phi^2 \QQ. 
\]
From this we conclude that, 
\[
(\DD^{-1/2} \LL \DD^{-1/2})^\dag
\preceq
 2\phi^{-2}\QQ^\dag 
\]
as $\AA \succeq \BB$ implies $\AA^\dag
\preceq \BB^\dag$ when $\AA$ and $\BB$ have the same null space.
Hence by Fact~\ref{fac:pinvproduct} and $\QQ = \QQ^\dag$, we then get 
\[
\QQ \DD^{1/2} \LL^\dag \DD^{1/2} \QQ \preceq 2\phi^{-2}\QQ
.
\]
This we can rewrite as for all $ \yy \bot \DD^{1/2} \vone$.
\[
\yy^\top \DD^{1/2} \LL^\dag \DD^{1/2} \yy \leq 2\phi^{-2}  \yy^\top \yy
.
\]
Substituting $\zz = \DD^{1/2} \yy$ changes the constraint to
$\DD^{-1/2} \zz \bot \DD^{1/2} \vone$ i.e. $ \zz \bot \vone$.
Thus we have that for all $ \zz \bot \vone$.
\[
\zz^\top\LL^\dag \zz \preceq 2\phi^{-2}  \zz^\top \DD^{-1}\zz
.
\]
Taking $\zz = \bb$, we then get $\bb^{\top} \left( \BB^{ \top}
  \BB^{} \right)^{\dag} \bb \leq 2\phi^{-2} \norm{\bb}_{\DD^{-1}}^2$.
\end{proof}

\begin{lemma}
\label{lem:electricalFlowInfnormUB}
  Consider a graph $G$ on $n$ vertices with degrees $\DD$,
conductance $\phi$, and edge-vertex incidence matrix $\BB$,
and any demand  $\bb \bot \vone$.
Define the electrical flow 
$\ff = \BB^{} \left( \BB^{ \top} \BB^{} \right)^{\dag} \bb$.
Then 
$\norm{\ff}_{\infty}  \leq O( \phi^{-3} \log(n)) \norm{\DD^{-1}\bb}_{\infty}$.
\end{lemma}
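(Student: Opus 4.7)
The plan is to decompose the bound into two steps: first routing the demand $\bb$ with a combinatorial flow of small $\ell_\infty$ norm, and second projecting this flow onto the potential flow space to recover $\ff$. The electrical flow $\ff = \BB\LL^\dag \bb$ is the orthogonal projection onto the potential flow space of \emph{any} flow $\ff_0$ satisfying $\BB^\top \ff_0 = \bb$, since $\ff = \BB \LL^\dag \BB^\top \ff_0$ whenever $\bb = \BB^\top \ff_0$. Because Lemma~\ref{lem:electricalOblInfRoute} already bounds $\|\BB \LL^\dag \BB^\top\|_\infty \leq O(\phi^{-2} \log n)$, we will be done as soon as we can exhibit \emph{some} flow $\ff_0$ routing $\bb$ with $\|\ff_0\|_\infty \leq O(\phi^{-1}) \|\DD^{-1} \bb\|_\infty$.

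To build such an $\ff_0$, I would appeal to single-commodity max-flow/min-cut duality. Setting $M = \|\DD^{-1} \bb\|_\infty$, one has $|\bb_v| \leq M\, d_v$ for every vertex, so for any $S \subseteq V$,
\[
|\bb(S)| \leq M \cdot d(S), \qquad |\bb(V \setminus S)| \leq M \cdot d(V \setminus S).
\]
Since $\bb \bot \vone$, we have $\bb(S) = -\bb(V \setminus S)$, and these two bounds combine to give $|\bb(S)| \leq M \min(d(S), d(V \setminus S))$. The $\phi$-conductance of $G$ meanwhile gives $|\delta(S)| \geq \phi \min(d(S), d(V \setminus S))$. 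Hence $|\bb(S)| / |\delta(S)| \leq M/\phi$ uniformly over all cuts, which (viewing each undirected edge as two anti-parallel edges of capacity $M/\phi$) is exactly the min-cut condition certifying the existence of a single-commodity flow $\ff_0$ with $\BB^\top \ff_0 = \bb$ and $\|\ff_0\|_\infty \leq M/\phi = \phi^{-1} \|\DD^{-1}\bb\|_\infty$.

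Composing the two steps yields
\[
\|\ff\|_\infty = \|\BB \LL^\dag \BB^\top \ff_0\|_\infty \leq O(\phi^{-2} \log n) \|\ff_0\|_\infty \leq O(\phi^{-3} \log n)\, \|\DD^{-1} \bb\|_\infty,
\]
which is precisely the claim. The only real content of the proof is the cut-based routing step; the main subtlety there is using $\bb \bot \vone$ to symmetrize the bound into $\min(d(S), d(V \setminus S))$ so that it can be matched against the conductance lower bound, while the invocation of Lemma~\ref{lem:electricalOblInfRoute} and the single-commodity max-flow/min-cut theorem are both off-the-shelf.
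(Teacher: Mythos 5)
Your proposal is correct and follows the same two-step structure as the paper's proof: project a small $\ell_\infty$-norm flow routing $\bb$ through $\BB\LL^\dag\BB^\top$ and invoke the $\ell_\infty\to\ell_\infty$ operator norm bound from Lemma~\ref{lem:electricalOblInfRoute}. The only difference is in establishing the existence of a flow $\ff_0$ with $\|\ff_0\|_\infty \le \phi^{-1}\|\DD^{-1}\bb\|_\infty$: the paper simply cites Example~1.4 of Sherman~\cite{Sherman13} (which states that the optimal $\ell_\infty$-congestion routing of $\bb$ on a $\phi$-conductance graph is within a $\phi^{-1}$ factor of $\|\DD^{-1}\bb\|_\infty$), whereas you derive it from scratch via the Hoffman/max-flow--min-cut cut condition, correctly using $\bb\perp\vone$ to symmetrize $|\bb(S)|$ against $\min(d(S),d(V\setminus S))$ so that the conductance bound applies. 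Your argument is a valid self-contained substitute for that citation, with identical constants and overall conclusion.
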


\begin{proof}
We first note that if $\ff^*$ is the optimal routing of $\bb$ in $G$,
then
\[
\norm{ \DD^{-1} \bb}_{\infty} \leq \norm{\ff^*}_{\infty} \leq \phi^{-1} \norm{ \DD^{-1} \bb}_{\infty}, 
\] 
as per Example 1.4 of \cite{Sherman13}.
Secondly, we note that by Lemma~\ref{lem:electricalOblInfRoute}, the
electrical flow
 $\ff^{\calE} 
= \BB^{} \left( {\BB^{}}^{\top} \BB^{}
  \right)^{\dag} \bb
=
\BB^{} \left( {\BB^{}}^{\top} \BB^{}
  \right)^{\dag} {\BB^{}}^{\top} \ff^*
$
satisfies
\[
\norm{\ff^{\calE}}_{\infty}
=
\norm{\BB^{} \left( {\BB^{}}^{\top} \BB^{}
  \right)^{\dag} {\BB^{}}^{\top} \ff^*}_{\infty}
\leq
\norm{\BB^{} \left( {\BB^{}}^{\top} \BB^{}
  \right)^{\dag} {\BB^{}}^{\top}}_{\infty\to\infty}
\norm{\ff^*}_{\infty} 
\leq
O(\phi^{-3} \log n ) \norm{ \DD^{-1} \bb}_{\infty}.
\]
\end{proof}

\begin{lemma}
\label{lem:ExpanderElectricalUpper}
On an expander $G$ with degrees $\DD$,
conductance $\phi$,
and gradient $\gg$ whose projection into
the cycle space of $G$, $\gghat$ is $\alpha$-well-spread,
% \begin{enumerate}
% \item is $\alpha$-well-spread,
% \item and satisfies $\norm{\gghat}_2^2 \geq 0.5 \norm{\gg}_2^2$
% \end{enumerate}
% for any demand $\bb$ and dot $\theta$ with $\ggtil$,
for any demand $\bb \bot \vone$ and dot $\theta$ with $\gghat$,
the flow given by
\begin{equation}
\ff = \BB^{} \left( \BB^{ \top} \BB^{} \right)^{\dag} \bb
+ \frac{\theta}{\norm{\gghat}_2^2} \gghat
\label{eq:ElectricalRoute}
\end{equation}
 % $\ff$ defined in Equation~\eqref{eq:ElectricalRoute} 
 satisfies
 \begin{align*}
\sum_{e} r \ff_e^2
+
\sum_{e} s \abs{\ff_e}^p
\leq &
\\
O_{p} \left( 
r \cdot \phi^{-2} \norm{\bb}_{\DD^{-1}}^{2}
\right.
&
\left.
+
r \cdot\left( \frac{\abs{\theta}}{\norm{\gghat}_2} \right)^2
+
s \cdot m \cdot \left( \phi^{-3} \log n 
    \norm{\DD^{-1} \bb}_{\infty} \right)^{p}
+
s \cdot m \cdot
\left(\frac{\abs{\theta} \alpha^{1/2}}{\norm{\gghat}_1} \right)^{p}
\right).
 \end{align*}
\end{lemma}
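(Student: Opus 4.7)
The plan is to decompose $\ff$ into an ``electrical'' routing component and a ``circulation'' component, bound each piece separately in both $\ell_2^2$ and $\ell_p^p$, and then combine them via the elementary inequality $(a+b)^q \leq 2^{q-1}(a^q + b^q)$ for $q \in \{2,p\}$. Concretely, write $\ff = \ff^{\calE} + \ff^{\calC}$, where $\ff^{\calE} = \BB(\BB^{\top}\BB)^{\dag}\bb$ routes the demand $\bb$ while $\ff^{\calC} = (\theta/\norm{\gghat}_2^2)\,\gghat$ lies in the cycle space and provides exactly the required dot product $\theta$ with $\gghat$.

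For the $\ell_2^2$ term I would first bound $\sum_e r(\ff^{\calE}_e)^2 = r \norm{\ff^{\calE}}_2^2 \leq 2r\phi^{-2}\norm{\bb}_{\DD^{-1}}^2$ by invoking Lemma~\ref{lem:electricalFlow2normUB} directly (this is where conductance enters). For $\ff^{\calC}$ a one-line computation gives $\norm{\ff^{\calC}}_2^2 = \theta^2/\norm{\gghat}_2^2$. Summing and applying $(a+b)^2 \leq 2(a^2+b^2)$ delivers the first two terms of the claimed bound.

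For the $\ell_p^p$ term I will pass through $\ell_\infty$, using the trivial bound $\norm{\ww}_p^p \leq m\norm{\ww}_\infty^p$ for any $\ww \in \rea^m$. For the electrical piece, Lemma~\ref{lem:electricalFlowInfnormUB} gives $\norm{\ff^{\calE}}_\infty \leq O(\phi^{-3}\log n)\norm{\DD^{-1}\bb}_\infty$, yielding the third term. For the circulation piece, I will use that $\gghat$ is $\alpha$-well-spread to write $\norm{\gghat}_\infty \leq (\alpha/m)^{1/2}\norm{\gghat}_2$, giving
\[
\norm{\ff^{\calC}}_p^p = \frac{\abs{\theta}^p}{\norm{\gghat}_2^{2p}}\norm{\gghat}_p^p \leq \frac{\abs{\theta}^p}{\norm{\gghat}_2^{2p}}\cdot m\cdot (\alpha/m)^{p/2}\norm{\gghat}_2^p = \frac{m^{1-p/2}\alpha^{p/2}\abs{\theta}^p}{\norm{\gghat}_2^p}.
\]
To convert this into the claimed expression involving $\norm{\gghat}_1$, I apply Cauchy--Schwarz in the reverse direction $\norm{\gghat}_1 \leq m^{1/2}\norm{\gghat}_2$, equivalently $1/\norm{\gghat}_2^p \leq m^{p/2}/\norm{\gghat}_1^p$, which converts the bound to $m(\abs{\theta}\alpha^{1/2}/\norm{\gghat}_1)^p$, matching the fourth term. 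Then $(a+b)^p \leq 2^{p-1}(a^p+b^p)$ combined with the weighting by $s$ yields the final $\ell_p^p$ contribution with the $O_p(\cdot)$ absorption.

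No step here presents a genuine obstacle; the whole proof is essentially an exercise in stringing together already-proved bounds on the two orthogonal components of $\ff$. The only subtle move is the translation from $\norm{\gghat}_2$ to $\norm{\gghat}_1$ in the last bound, which uses Cauchy--Schwarz rather than the well-spread hypothesis---the well-spread hypothesis enters instead through the $\ell_\infty$-versus-$\ell_2$ comparison that gives the $\alpha^{1/2}$ factor. I would therefore write the proof in two short parallel blocks (one for the $2$-norm, one for the $p$-norm), each splitting along $\ff = \ff^{\calE}+\ff^{\calC}$.
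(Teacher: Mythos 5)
Your proposal is correct and takes essentially the same route as the paper: split $\ff$ into the electrical piece and the cycle-space piece, bound the $\ell_2^2$ terms via Lemma~\ref{lem:electricalFlow2normUB} and a direct computation, and bound the $\ell_p^p$ terms by passing through $\ell_\infty$ via Lemma~\ref{lem:electricalFlowInfnormUB} for $\ff^{\calE}$ and via $\alpha$-well-spreadness plus Cauchy--Schwarz ($\norm{\gghat}_1 \le m^{1/2}\norm{\gghat}_2$) for $\ff^{\calC}$. The paper bounds $\norm{\ff^{\calC}}_\infty$ first and then takes $p$-th powers, whereas you go through $\norm{\gghat}_p^p$; this is the same calculation rearranged.
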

\begin{proof}

We first bound the quadratic term $\sum_{e} r \ff_e^2$.
Let us write $\ff^\calE = \BB^{} \left( \BB^{ \top} \BB^{} \right)^{\dag}
\bb$ and $\ff^\calC= \frac{\theta}{\norm{\gghat}_2^2} \gghat$,
and note (in fact, appealing to
 orthogonality would save an additional factor of 2)
\[
\sum_{e} \ff_e^2
= \sum_{e} (\ff^\calE_e + \ff^\calC_e )^2
\leq 
\sum_{e} 2(\ff^\calE_e)^2 +2 (\ff^\calC_e )^2.
\]
Then we observe by Lemma~\ref{lem:electricalFlow2normUB} that 
$
\sum_{e} (\ff^\calE_e)^2
\leq \phi^{-2} \norm{\bb}_{\DD^{-1}}^2.
$
Furthermore, 
\[
\sum_{e} (\ff^\calC_e)^2 = \ff^{\calC\top}\ff^\calC =
\left( \frac{\abs{\theta}}{\norm{\gghat}_2} \right)^2
.
\]
Combining these equations gives 
\[
\sum_{e} r \ff_e^2
\leq
2r \cdot \phi^{-2} \norm{\bb}_{\DD^{-1}}^{2} +2r \cdot \left(
  \frac{\abs{\theta}}{\norm{\gghat}_2} \right)^2
.
\]
We then bound the $p$-th power term,
\[
\sum_{e} s \abs{\ff_e}^{p}
=
\sum_{e} s \abs{\ff^\calE_e+ \ff^\calC_e}^{p}
\leq
\sum_{e} s 2^p (\abs{\ff^\calE_e}^p+ \abs{\ff^\calC_e}^{p})
\leq
m s 2^p \cdot ( \norm{\ff^\calE}_{\infty}^{p} +
\norm{\ff^\calC}_{\infty}^{p} ) 
\]
Now by Lemma~\ref{lem:electricalFlowInfnormUB}, we have 
$\norm{\ff^\calE}_{\infty} 
=  \norm{\BB^{} \left( \BB^{ \top} \BB^{} \right)^{\dag} \bb}_{\infty}  
\leq 
O(\phi^{-3} \log n) \norm{\DD^{-1} \bb}_{\infty}  
$
and by the $\alpha$-well-spreadness of $\gghat$
\[
\norm{\ff^\calC}_{\infty}
=
\frac{\abs{\theta}}{\norm{\gghat}_2^2}  \norm{\gghat}_{\infty}
\leq
\frac{\abs{\theta}}{\norm{\gghat}_2^2}
\left( \frac{\alpha}{m}\norm{\gghat}_2^2 \right)^{1/2}
\leq
\frac{\alpha^{1/2} \abs{\theta}}{m^{1/2} \norm{\gghat}_2}
\leq 
\frac{\alpha^{1/2} \abs{\theta}}{\norm{\gghat}_1}
,
\]
where in the last step we used $\norm{\gghat}_1 \leq m^{1/2} \norm{\gghat}_2$.

% For this we use the fact that the electrical routing
% produces a small congestion flow, along with the fact that
% the optimum $\ell_{\infty}$ congestion is close to the
% vertex-based bound.
% \todo{fill this one in}

\end{proof}

\begin{proof}[Proof of Theorem~\ref{thm:sampAndFixGrad}.]
 Refer to the pseudo-code in Algorithm~\ref{alg:SampleAndFixGradient}.
We first collect the facts that we have established about the sampling procedure.
In Line~\ref{alg:SampleAndFixGradient:samp}, $E^{\calH}$ is formed from $E^{\calG}$ by
sampling each edge independently with probability $\tau$. 
It follows that the expected number of edges
in $E^{\calH}$ is $\tau m$, and since $\tau > \log{n} / m$,
a standard scalar Chernoff bound shows that
$\abs{E^{\calH}} \leq 2 \tau m$ with high probability.
The parameters $r^{\calH} = \tau \cdot r^{\calG}$ and $s^{\calH} = \tau^p \cdot s^{\calG}$ are
set in Line~\ref{alg:SampleAndFixGradient:scaling}.
By Lemma~\ref{lem:L2Operator}, with high probability the sampling
in Line~\ref{alg:SampleAndFixGradient:samp} guarantees
Equation~\eqref{eq:expanderSampConc}.
Note also that
\begin{itemize}
\item Equation~\eqref{eq:expanderSampConc:graph} implies 
$\tau \DD^{\calH} \approx_{0.1} \DD^{\calG}$, by considering the quadratic form in
each of the standard basis vectors.
% \item Equation~\eqref{eq:expanderSampConc:grad} gives $\norm{\ggtil^{\calH}}_{2}^2
% \approx 
% \tau^{-1} \norm{\gghat^{\calG}}_{2}^2$.
\item By Corollary~\ref{cor:gghatapx}, $\gghat^{\calH}$ is
  $O(\alpha \phi^{-6} \log^2{n})$-well-spread, and  $\norm{\gghat^{\calH}}_{2}^2
\approx
\tau \norm{\gghat^{\calG}}_{2}^2$
and 
$\norm{\gghat^{\calH}}_{1}
 \approx_{O(\alpha \phi^{-6} \log^2{n})}
\tau
\norm{\gghat^{\calG}}_{1}
.
$
\item As $G$ has conductance at least $\phi$, by 
Corollary~\ref{cor:Hconductance}, $H$ has conductance at least $0.8 \phi$.
\end{itemize}
We can now establish $\calG \preceq_{\kappa} \calH$.
 Suppose $\ff^{\calG}$ is a flow in $G$ with $\BB^{\calG} \ff^{\calG} = \bb$ and
$\gghat^{\calG\top} \ff^{\calG} = \theta$.
Then $\gg^{\calG\top} \ff^{\calG} = \theta + \ppsi^{\top} \bb$.
By Lemma~\ref{lem:ExpanderLower}, we then get that 
\begin{align*}
\sum_{e} r^{\calG} (\ff^{\calG}_e)^2
+
\sum_{e} s^{\calG} \abs{\ff^{\calG}_e}^p
\\
\geq
\Omega\left(
r^{\calG} \cdot \norm{\bb}_{(\DD^{\calG})^{-1}}^{2}
+
r^{\calG} \cdot \frac{\theta^2}{\norm{\gghat^{\calG}}_2^2}
+
s^{\calG} \cdot \norm{(\DD^{\calG})^{-1} \bb}_{\infty}^{p}
+
s^{\calG} \cdot \left(\frac{\theta}{\norm{\gghat^{\calG}}_1} \right)^{p}
\right).
\end{align*}
Applying our flow map from $\calG$ to $\calH$ 
\begin{align*}
\ff^{\calH} = \map{\calG}{\calH}(\ff^{\calG}) 
&=\BB^{\calH} \left(
     \BB^{H \top} \BB^{\calH} \right)^{\dag} \BB^{\calG\top} \ff^{\calG} 
+ \frac{1}{\norm{\gghat^{\calH}}_2^2} \gghat^{\calH} \gghat^{\calG\top} \ff^{\calG} 
\\
&=\BB^{\calH} \left(
     \BB^{H \top} \BB^{\calH} \right)^{\dag} \bb
+ \frac{1}{\norm{\gghat^{\calH}}_2^2} \gghat^{\calH} \theta
.
\end{align*}
We note that by construction, we can readily verify  $\BB^{\calH} \ff^{\calH} =
\bb$, $\gghat^{H\top} \ff^{\calH} = \theta$, and $\gg^{H\top} \ff^{\calH} =
\theta+\ppsi^{\top} \bb$.
So by applying Lemma~\ref{lem:ExpanderElectricalUpper} to $\frac{1}{\kappa}
\ff^{\calH}$, we get
\begin{align*}
\sum_{e} r^{\calH} & \left(\frac{\abs{\ff^{\calH}_e}}{\kappa}\right)^2
+
\sum_{e} s^{\calH} \left(\frac{\abs{\ff^{\calH}_e}}{\kappa}\right)^p
\\ &
\leq
O_{p} \left( 
r^{\calH} \norm{\bb}_{(\DD^{\calH})^{-1}}^{2}
\left(\frac{\phi^{-1} }{\kappa} \right)^{2} 
\vphantom{\vrule height 20pt}
+
r^{\calH} \cdot\left( \frac{\abs{\theta}}{\norm{\gghat^{\calH}}_2} \right)^2
  \kappa^{-2}
 \right.
\\
&
\left.
+
s^{\calH}  \cdot m \cdot \left( \phi^{-2} 
    \norm{(\DD^{\calH})^{-1} \bb}_{\infty} \right)^{p}\kappa^{-p}
+
s^{\calH}  \cdot m \cdot
\left(\frac{\abs{\theta} (\alpha \phi^{-6} \log^2{n})^{1/2}}{\norm{\gghat^{\calH}}_1} \right)^{p}\kappa^{-p}
\right)
\\
&
\leq
O_{p} \left( 
(\tau r^{\calG}) \cdot \frac{1}{\tau} \norm{\bb}_{(\DD^{\calG})^{-1}}^{2} \left(\frac{\phi^{-1} }{\kappa} \right)^{2} 
\vphantom{\vrule height 20pt} 
+
(\tau r^{\calG}) \cdot \frac{1}{\tau}  \cdot\left( \frac{\abs{\theta}}{\norm{\gghat^{\calG}}_2} \right)^2 \kappa^{-2}
+
\right.
\\
&
\left.
(\tau^p s^{\calG}) \cdot \tau^{-p} \cdot 
    \norm{(\DD^{\calH})^{-1} \bb}_{\infty}^{p}
\left(\frac{m^{1/p}\phi^{-2} }{\kappa} \right)^{p} 
+
(\tau^p s^{\calG}) \cdot \tau^{-p} \cdot
\left(\frac{\abs{\theta}}{\norm{\gghat^{\calG}}_1} \right)^{p}
 % \phi^{-p} m  \alpha^{p/2}\log(n) \kappa^{-p}
\left(\frac{m^{1/p}(\alpha \phi^{-6} \log^2{n})^{3/2} }{\kappa} \right)^{p} 
\right)
.
\end{align*}
Our goal is to ensure 
\[
\gg^{H\top}  (\frac{1}{\kappa} \ff^{\calH})
-
\left(
\sum_{e} r^{\calH}  \left(\frac{\abs{\ff^{\calH}_e}}{\kappa}\right)^2
+
\sum_{e} s^{\calH} \left(\frac{\abs{\ff^{\calH}_e}}{\kappa}\right)^p
\right)
\leq 
\frac{1}{\kappa} 
\left(
\gg^{\calG\top} \ff^{\calG}
-
\left(
\sum_{e} r^{\calG} (\ff^{\calG}_e)^2
+
\sum_{e} s^{\calG} \abs{\ff^{\calG}_e}^p
\right)
\right)
.
\]
Because the linear terms cancel out, we can use the upper and lower
bounds established above to say that this inequality holds provided
the following conditions are satisfied (for a $C_p$ which is a constant
greater than 1 that depends on $p$):
\begin{itemize}
\item $C_p \left(\frac{\phi^{-1} }{\kappa} \right)^{2} \leq 1/\kappa$.
\item $C_p \kappa^{-2} \leq 1/\kappa$.
\item $C_p \left(\frac{m^{1/p}\phi^{-2} }{\kappa} \right)^{p} 
 \leq 1/\kappa$.
\item $C_p \left(\frac{m^{1/p}(\alpha \phi^{-6} \log^2{n})^{3/2}
    }{\kappa} \right)^{p} 
 \leq 1/\kappa$.
\end{itemize}
Recalling that $\alpha$ is a constant, it follows
that there exists a constant $C'_p$ (depending on p), s.t. all of the
above conditions are satisfied, provided 
\[
\kappa \geq C'_p \max\left( m^{1/(p-1)}\phi^{-2p/(p-1)},
  m^{1/(p-1)}\phi^{-9p/(p-1)} (\log{n})^{3p/(p-1)} ,  \phi^{-2} \right)
\]
And this in turn is implied by the stronger condition, 
$\kappa \geq C'_p ( m^{1/(p-1)} \phi^{-9} \log^{3}n )$, 
which is hence sufficient to ensure $\calG \preceq_{\kappa} \calH$.

We can then show $\calH \preceq_{\kappa} \calG$ with a very similar
calculation. We include it for completeness.
 Suppose $\ff^{\calH}$ is a flow in $H$ with $\BB^{\calH} \ff^{\calH} = \bb$ and
$\gghat^{H\top} \ff^{\calH} = \theta$.
Then $\gg^{H\top} \ff^{\calH} = \theta + \ppsi^{\top} \bb$.

By Lemma~\ref{lem:ExpanderLower}, we then get that 
\begin{align*}
\sum_{e} r^{\calH} (\ff^{\calH}_e)^2
+
\sum_{e} s^{\calH} \abs{\ff^{\calH}_e}^p
  \geq
\\
\Omega\left(
r^{\calH} \cdot \norm{\bb}_{(\DD^{\calH})^{-1}}^{2}
+
r^{\calH} \cdot \frac{\theta^2}{\norm{\gghat^{\calH}}_2^2}
+
s^{\calH} \cdot \norm{(\DD^{\calH})^{-1} \bb}_{\infty}^{p}
+
s^{\calH} \cdot \left(\frac{\theta}{\norm{\gghat^{\calH}}_1} \right)^{p}
\right).
\geq
\\
\Omega\left(
r^{\calH} \cdot \norm{\bb}_{(\DD^{\calH})^{-1}}^{2}
+
r^{\calH} \cdot \frac{\theta^2}{\norm{\gghat^{\calH}}_2^2}
+
s^{\calH} \cdot \norm{(\DD^{\calH})^{-1} \bb}_{\infty}^{p}
+
s^{\calH} \cdot \left(\frac{\theta}{\norm{\gghat^{\calH}}_1} \right)^{p}
\right).
\end{align*}
Applying our flow map from $\calH$ to $\calG$ 
\begin{align*}
\ff^{\calG} = \map{\calH}{\calG}(\ff^{\calH}) 
&=\BB^{\calG} \left(
     \BB^{\calG\top} \BB^{\calG} \right)^{\dag} \BB^{H\top} \ff^{\calH} 
+ \frac{1}{\norm{\gghat^{\calG}}_2^2} \gghat^{\calG} \gghat^{H\top} \ff^{\calH} 
\\
&=\BB^{\calG} \left(
     \BB^{\calG\top} \BB^{\calG} \right)^{\dag} \bb
+ \frac{1}{\norm{\gghat^{\calG}}_2^2} \gghat^{\calG} \theta
.
\end{align*}
Again, by construction, we have $\BB^{\calG} \ff^{\calG} =
\bb$, $\gghat^{\calG\top} \ff^{\calG} = \theta$, and $\gg^{\calG\top} \ff^{\calG} =
\theta+\ppsi^{\top} \bb$.
So by applying Lemma~\ref{lem:ExpanderElectricalUpper} to $\frac{1}{\kappa}
\ff^{\calG}$, we get
\begin{align*}
\sum_{e} r^{\calG} & \left(\frac{\abs{\ff^{\calG}_e}}{\kappa}\right)^2
+
\sum_{e} s^{\calG} \left(\frac{\abs{\ff^{\calG}_e}}{\kappa}\right)^p
\\ &
\leq
O_{p} \left( 
r^{\calG} \norm{\bb}_{(\DD^{\calG})^{-1}}^{2}
\left(\frac{\phi^{-1} }{\kappa} \right)^{2} 
\vphantom{\vrule height 20pt}
+
r^{\calG} \cdot\left( \frac{\abs{\theta}}{\norm{\gghat^{\calG}}_2} \right)^2
  \kappa^{-2}
 \right.
\\
&
\left.
+
s^{\calG}  \cdot m \cdot \left( \phi^{-2} 
    \norm{(\DD^{\calG})^{-1} \bb}_{\infty} \right)^{p}\kappa^{-p}
+
s^{\calG}  \cdot m \cdot
\left(\frac{\abs{\theta} (\alpha \phi^{-6} \log^2{n})^{1/2}}{\norm{\gghat^{\calG}}_1} \right)^{p}\kappa^{-p}
\right)
\\
&
\leq
O_{p} \left( 
(\tau^{-1} r^{\calH}) \cdot \tau \norm{\bb}_{(\DD^{\calH})^{-1}}^{2} \left(\frac{\phi^{-1} }{\kappa} \right)^{2} 
\vphantom{\vrule height 20pt} 
+
(\tau^{-1} r^{\calH}) \cdot \tau  \cdot\left( \frac{\abs{\theta}}{\norm{\gghat^{\calH}}_2} \right)^2 \kappa^{-2}
+
\right.
\\
&
\left.
(\tau^{-p} s^{\calH}) \cdot \tau^{p} \cdot 
    \norm{(\DD^{\calH})^{-1} \bb}_{\infty}^{p}
\left(\frac{m^{1/p}\phi^{-2} }{\kappa} \right)^{p} 
+
(\tau^{-p} s^{\calH}) \cdot \tau^{p} \cdot
\left(\frac{\abs{\theta}}{\norm{\gghat^{\calH}}_1} \right)^{p}
 % \phi^{-p} m  \alpha^{p/2}\log(n) \kappa^{-p}
\left(\frac{m^{1/p}(\alpha \phi^{-6} \log^2{n})^{3/2} }{\kappa} \right)^{p} 
\right)
.
\end{align*}
Now, we want to guarantee
\[
\gg^{\calG\top}  (\frac{1}{\kappa} \ff^{\calG})
-
\left(
\sum_{e} r^{\calG}  \left(\frac{\abs{\ff^{\calG}_e}}{\kappa}\right)^2
+
\sum_{e} s^{\calG} \left(\frac{\abs{\ff^{\calG}_e}}{\kappa}\right)^p
\right)
\leq 
\frac{1}{\kappa} 
\left(
\gg^{H\top} \ff^{\calH}
-
\left(
\sum_{e} r^{\calH} (\ff^{\calH}_e)^2
+
\sum_{e} s^{\calH} \abs{\ff^{\calH}_e}^p
\right)
\right)
.
\]
Again the linear terms agree, and termwise verification shows that
$\kappa \geq C'_p ( m^{1/(p-1)} \log^{3}(n) \phi^{-9} )$ is
sufficient to give $\calH \preceq_{\kappa} \calG$.

\end{proof}

%%% Local Variables:
%%% mode: latex
%%% TeX-master: "main"
%%% End:

%!TEX root = main.tex

\section{Using Approximate Projections}
\label{sec:ApproxProj}

Finally, we need to account for the errors in computing the cycle
projections $\gghat$ of the gradients $\gg$.  This error arise due to
the use of iterative methods in Laplacian solvers used to evaluate
$(\BB^{\top} \BB)^{\dag}$.  As we only perform such projections on
expanders, we can in fact use iterative methods.  However, a
dependence of $\log(1 / \epsilon)$ in the error $\epsilon$ still
remain.

We first formalize the exact form of this error.
Kelner et al.~\cite{KelnerOSZ13} showed that a Laplacian solver
can converge in error proportional to that of the electrical
flow.
That is, for a slightly higher overhead of
$O(\log(n / \epsilon))$,
we can obtain a vector $\ggtil$ such that
\[
\norm{\gghat - \ggtil}_2
\leq
\epsilon \norm{\gghat}_2
\leq
\epsilon \norm{\gg}_2.
\]
This was also generalized to a black-box reduction between
solvers for vertex solutions and flows
subsequently~\cite{CohenKMPPRX14}.
As a result, we will work this guarantee with errors
relative to $\gghat$.

For the partitioning stage, this error occurs in two places:
for computing the norm of the projection,
and for identifying edges with high contributions
(aka. non-uniform) for removal.

For the former, a constant factor error in the norm of
$\gghat$ will only lead to a constant factor increase in:
\begin{enumerate}
\item The uniformity of the true projected gradient,
\item The factor of decrease in the norm of the projected
gradient from one step to next.
\end{enumerate}
For both of these, such constant factor slowdowns can be
absorbed by an increase in the thresholds, which in turn
result in a higher uniformity parameter in decompositions
returned.
As this uniformity parameter only affects the number of
edges sampled in Theorem~\ref{thm:sampAndFixGrad}, they
only accumulate to a larger overhead in the $m^{O(\nfrac{1}{\sqrt{p}})}$ term in the overall running time.

The other invocation of projections is in the sparsification
of expanders in Algorithm~\ref{alg:SampleAndFixGradient}.
Here the decomposition of $\gg^{\calG}$
into a circulation and potential flows is necessary for
the construction of the gradient of the sampled graph, $\calH$.

While an approximate energy minimizing circulation $\ggtil$
will not have $\gg - \ggtil$ being a potential flow,
we can instead perturb $\gg$ slightly in this instance.
Specifically, we can also compute a set of approximate
potentials $\ppsitil$ so that
\[
\norm{\gg - \left( \ggtil + \BB \ppsitil \right)}_2
\leq
\epsilon \norm{\gg}_2.
\]
That is, we can perturb the initial $\gg$ based on the
result of this solve so that
we have an exact decomposition of it into a circulation
and a potential flow.
The error of this perturbation is then incorporated in
the same manner as terminating when $\norm{\gghat}_2$
is too small in Case~\ref{case:Tiny} of Theorem~\ref{thm:Decompose}.
Specifically, the additive error of this goes into
the additive trailing terms of the guarantees of the
ultra-sparsifier shown in Theorem~\ref{thm:ultrasparsify}.

Finally, the projection of the sampled gradient $\ggtil^{\calH}$
into $\gghat^{\calH}$ also carries such an error term.
By picking $\epsilon$ to be in the $1 / poly(n)$ range,
we ensure that both the $\ell_2$ and $\ell_{1}$ norms
of $\gghat^{\calH}$ is close to their true terms.
This in turn leads to constant factor errors
in the lower and upper bounds on objectives give in 
Lemmas~\ref{lem:ExpanderLower} and~\ref{lem:ExpanderElectricalUpper},
and thus a constant factor increase in the overall
approximation factors.

Therefore, it suffices to set $\epsilon$ in these approximate
projection algorithms to be within $poly(n)$ factors of the
$\delta$ by which \textsc{UltraSparsify} is invoked by the
overall recursive preconditioning scheme.
The choice of parameters in Theorem~\ref{thm:RecPrecon}
then gives that it suffices to have
$\log(1 / \epsilon) \leq \Otil(1)$ in all projection steps.
In other words, all the projections can be performed
in time nearly-liner in the sizes of the graphs.
\section{$\ell_p$-norm Semi-Supervised Learning on Graphs.}
\label{sec:semisupervised}

In this appendix, we briefly describe how to convert
Problem~\eqref{eq:semisupervised}, 
into a form that can be solved using our algorithm for smoothed $p$-norm flows
as stated in Theorem~\ref{cor:smoothedpflows}.
% \todo{refer to theorem}

Recall that formally, given a graph $G=(V,E)$ and a labelled subset of the nodes
$T \subset V$ with labels $\ss_{T} \in \rea^T$, we can write the
problem as 
\begin{equation*}
\min_{\substack{ \xx \in \Re^{V} \\ \text{ s.t. }\xx_{T} = \ss_{T}}}
\sum_{u \sim v} \abs{\xx_u - \xx_{v}}^{p}.
\end{equation*}
%
% We can rewrite this as 
% \[
% \min_{\substack{ \xx \in \Re^{V}, \, \zz \in  \Re^{E} \\\text{s.t. } \xx_{T} =
%     \ss_{T} \\ \text{ and } \BB \xx = \zz }
%  }
% \sum_{u \sim v} \abs{\zz_{uv}}^{p},
% \]
Taking a Lagrange dual now results in the problem
\[
\max_{\ff: \left( \BB^{\top} \ff \right)_{V \setminus T} = 0}
\gg^{\top} \ff
- \sum_{u \sim v} \abs{\ff_{uv}}^{q}
.
\]
where $q = \frac{1}{1-1/p}$,
and the gradient $\gg$ is given by $\gg = \BB_{:, T} \ss_{T}$.
We cannot directly solve this formulation, since the 
net incoming flow at vertices in $T$ is unknown.
However, notice that the flow is preserved at all other vertices, so
summed across all of $T$, the net flow must be zero.
Thus if we merge all the vertices in $T$ into one vertex, while
turning edges in $T \times T$ into self-loops, the problem is now a
circulation.
Note that the optimal flow on each self-loop can be computed exactly.
Now the resulting problem can be solved to high accuracy using
Theorem~\ref{cor:smoothedpflows}.
Meanwhile, mapping the flow back to the original flow, it can be shown
that the optimal flow arises as a simple non-linear function of some
voltages $\xx$: $\ff_e = (\BB\xx)_e^q$.
This means that if we have $\ff$ to high enough accuracy, we can get
an almost optimal set of voltages, e.g. by looking at flow along edges
of a tree to compute a set of voltages $\xx$ that are a
$(1+1/\poly(m))$
multiplicative accuracy 
solution to Problem~\eqref{eq:semisupervised}.
Since we call the algorithm of Theorem~\ref{cor:smoothedpflows} using
to solve a $\frac{p}{p-1}$-flow problem, where $p<2$, the running time will be on
the order of $2^{O((\frac{p}{p-1})^{\nfrac{3}{2}})}
m^{1+O(\sqrt{\frac{p-1}{p}})}$.
This in turn can be further simplified as
 $2^{O((\frac{1}{p-1})^{\nfrac{3}{2}})}
 m^{1+O(\sqrt{p-1})},$ since $p<2$.
 For $p = 1+\frac{1}{\sqrt{\log n}}$, this is time is bounded by $m^{1+o(1)}$.

% We can fix this by merging all those vertices into a single vertex:
% as the total net in/out is $0$, that vertex must also have $0$ residue.
% Then we're simply solving for a circulation now.

% Note that the boundary version with $\xx_{\Omega} = \ss_{\Omega}$
% is a special case of this where $\ww_{u}$ is taken to be very large values
% on the boundary, and $0$ everywhere else.

% Its dual is given by:
% \[
% \max_{\ff: \left( \BB^{\top} \ff \right)_{V \setminus S} = 0}
% \gg^{\top} \ff
% - \sum_{u \sim v} \abs{\ff_{uv}}^{q}
% \]
% where the gradient $\gg$ is given by $\bb = \BB_{:, T} \ss_{T}$.

%%% Local Variables:
%%% mode: latex
%%% TeX-master: "main"
%%% End:

\end{document}